\newif\ifdraft\draftfalse
\newif\iffull\fulltrue
\theoremstyle{plain}
\newtheorem{theorem}{Theorem}[section]
\newtheorem{lemma}[theorem]{Lemma}
\newtheorem{proposition}[theorem]{Proposition}
\theoremstyle{definition}
\newtheorem{definition}{Definition}[section]
\newtheorem{example}{Example}[section]
\newcommand{\R}[1]{\mathbf{Mem}[#1]}
\newcommand{\m}[1]{\mathbf{Mem}[#1]}
\newcommand{\empmem}{\langle\rangle}
\newcommand{\DR}[1]{\mathcal{D}(\R{#1})}
\newcommand{\PR}[1]{\mathcal{P}(\R{#1})}
\newcommand{\TR}[1]{\mathcal{T}(\R{#1})}
\newcommand{\T}{\mathcal{T}}
\newcommand{\dm}[1]{D_{#1}}
\newcommand{\rg}[1]{R_{#1}}
\newcommand{\dom}{\ensuremath{\mathbf{dom}}}
\newcommand{\range}{\ensuremath{\mathbf{range}}}
\newcommand{\DD}{\mathcal{D}}
\newcommand{\PP}{\mathcal{P}}
\newcommand{\defeq}{\mathrel{\mathop:}=}
\newcommand{\wh}[1]{\widehat{#1}}
\def\ci{\perp\!\!\!\perp}
\newcommand{\pearlindep}[3]{\ensuremath{#1 \ci #2 \mid #3}}
\newcommand{\I}[3]{\ensuremath{I(#1, #2, #3)}}
\newcommand{\jia}[1]{\ifdraft{\textcolor{brown}{[Jialu: {#1}]}}\fi}
\newcommand*{\myalign}[2]{\multicolumn{1}{#1}{#2}} %tabular formatting
\def\@acmplainindent{0pt}
\def\@acmdefinitionindent{0pt}
\def\@proofindent{\noindent}
\newcommand{\Var}{\ensuremath{\mathsf{Var}}}
\newcommand{\Exp}{\ensuremath{\mathsf{Exp}}}
\newcommand{\Com}{\ensuremath{\mathsf{Com}}}
\newcommand{\Val}{\ensuremath{\mathbf{Val}}}
\newcommand{\denot}[1]{\ensuremath{\llbracket #1 \rrbracket}}
\newcommand{\dbind}{\bind}
\newcommand{\dunit}{\unit}
\newcommand{\dcond}[2]{\ensuremath{{#1} \mid {#2}}}
\newcommand{\dconv}[3]{\ensuremath{{#2} \mathbin{\oplus_{#1}} {#3}}}
\newcommand{\kconv}[3]{\ensuremath{{#2} \mathbin{\overline{\oplus}_{#1}} {#3}}}
\newcommand{\ktt}{\ensuremath{\mathit{tt}}}
\newcommand{\kff}{\ensuremath{\mathit{ff}}}
\newcommand{\FV}{\ensuremath{\text{FV}}}
\newcommand{\RV}{\ensuremath{\text{RV}}}
\newcommand{\WV}{\ensuremath{\text{WV}}}
\newcommand{\MV}{\ensuremath{\text{MV}}}
\newcommand{\FFV}{\ensuremath{\text{FV}_{\text{D}}}}
\newcommand{\SFV}{\ensuremath{\text{FV}_{\text{R}}}}
\newcommand{\Skip}{\mathbf{skip}}
\newcommand{\Seq}[2]{{#1} \mathrel{;} {#2}}
\newcommand{\Assn}[2]{\ensuremath{{#1} \leftarrow {#2}}}
\newcommand{\Rand}[2]{{#1} \stackrel{\raisebox{-.25ex}[.25ex]%
{\tiny $\mathdollar$}}{\raisebox{-.2ex}[.2ex]{$\leftarrow$}} {#2}}
\newcommand{\Cond}[3]{\mathbf{if}\ #1\ \mathbf{then}\ #2\ \mathbf{else}\ #3}
\newcommand{\Condt}[2]{\mathbf{if}\ #1\ \mathbf{then}\ #2}
\newcommand\Field{\textsf{Field}}
\newcommand\Researcher{\textsf{Researcher}}
\newcommand\Conference{\textsf{Conference}}
\newcommand{\LOGIC}{DIBI\xspace}
\newcommand{\RLOGIC}{R\LOGIC}
\newcommand{\SYSTEM}{CPSL\xspace}
\newcommand{\ExONE}{\textsc{CommonCause}\xspace}
\newcommand{\ExTWO}{\textsc{CondSamples}\xspace}
\newcommand\sepid{\ensuremath{I}}
\newcommand\sepand{\mathrel{\ast}}
\newcommand\sepimp{\mathrel{-\mkern-6mu*}}
\newcommand\depand{\fatsemi}%\mathrel{;}}
\newcommand\depimpr{\mathrel{\multimap}}
\newcommand\depimpl{\mathrel{\multimapinv}}
\newcommand{\MD}{\ensuremath{M^{D}}}
\newcommand{\MP}{\ensuremath{M^{P}}}
\newcommand{\M}{M}
\newcommand{\rseq}{\mathbin{-\mkern-2mu\triangleright}} %right implication for sequential composition
\newcommand{\lseq}{\mathbin{\triangleright\mkern-2mu-}}
\newcommand{\PrimePredicate}[2]{P(#1) = \begin{cases} 1 & \text{if } #2 \\ %Prime Predicate
				0 & \text{otherwise} \end{cases}}
\newcommand{\Bern}{\ensuremath{\mathbf{B}}}
\newcommand{\Dist}{\ensuremath{}}
\newcommand{\Exact}[1]{#1}
\newcommand{\kp}[2]{\ensuremath{({#1} \mathrel{\triangleright} {#2})}}
\newcommand{\pair}[2]{\kp{\Exact{#1}}{\Dist{[#2]}}}
\newcommand{\empdom}[1]{\kp{\Exact{\emptyset}}{#1}}
\newcommand{\indep}[3]{ [#1] \depand ([#2] \sepand [#3])}
\newcommand{\psl}[3]{\{ #1 \}\ #2\ \{ #3 \}}
\newcommand\bind{\textsf{bind}}
\newcommand\unit{\textsf{unit}}
\newcommand\Kl{\mathcal{K}\ell}
\newcommand{\Frestrict}{\mathrm{Form_{\RLOGIC}}}
\begin{document}

\title{A Bunched Logic for Conditional Independence}

% Previous title (too long):
% \title{A Logic to Reason about Dependence and Independence: \\
% Bunched Implications for Conditional Independence and Join Dependency}

\IEEEoverridecommandlockouts
\IEEEpubid{\makebox[\columnwidth]{978-1-6654-4895-6/21/\$31.00~
\copyright2021 IEEE \hfill} \hspace{\columnsep}\makebox[\columnwidth]{ }}

\author{
\IEEEauthorblockN{Jialu Bao}
\IEEEauthorblockA{University of Wisconsin--Madison} \\
\IEEEauthorblockN{Justin Hsu}
\IEEEauthorblockA{University of Wisconsin--Madison} \and
\IEEEauthorblockN{Simon Docherty}
\IEEEauthorblockA{University College London} \\
\IEEEauthorblockN{Alexandra Silva}
\IEEEauthorblockA{University College London}
}

\maketitle
\thispagestyle{plain}
\pagestyle{plain}

\begin{abstract}
  \emph{Independence} and \emph{conditional independence} are fundamental
  concepts for reasoning about groups of random variables in probabilistic
  programs. Verification methods for independence are still nascent, and
  existing methods cannot handle conditional independence. We extend the logic
  of bunched implications (BI) with a non-commutative conjunction and provide a
  model based on Markov kernels; conditional independence can be directly
  captured as a logical formula in this model. Noting that Markov kernels are
  Kleisli arrows for the distribution monad, we then introduce a second model
  based on the powerset monad and show how it can capture \emph{join
  dependency}, a non-probabilistic analogue of conditional independence from
  database theory. Finally, we develop a program logic for verifying
  conditional independence in probabilistic programs.
\end{abstract}

\section{Introduction} \label{sec:intro}
The study of probabilistic programming languages and their semantics dates back
to the 1980s, starting from the seminal work of \citet{kozen81}. The last decade
has seen a surge of richer probabilistic
languages~\cite{tabular,church,anglican}, motivated by applications in machine
learning, and accompanying research into their semantics
\cite{tasson,kammar,dahlqvist}. This burst of activity has also created new
opportunities and challenges for formal verification.

Independence and conditional independence are two fundamental properties that
are poorly handled by existing verification methods. Intuitively, two random
variables are {\em probabilistically independent} if information about one gives
no information about the other (for example, the results of two coin flips).
{\em Conditional independence} is more subtle: two random variables $X$ and $Y$
are independent conditioned on a third variable $Z$ if for every fixed value of
$Z$, information about one of $X$ and $Y$ gives no information about the other.

Both forms of independence are useful for modelling and verification.
{\em Probabilistic independence} enables compositional reasoning about groups of
random variables: if a group of random variables are independent, then their
joint distribution is precisely described by the distribution of each variable
in isolation. It also captures the semantics of random sampling constructs in
probabilistic languages, which generate a fresh random quantity that is
independent of the program state.
{\em Conditional independence} often arises in programs with probabilistic
control flow, as conditioning models probabilistic branching. Bayesian networks
encode conditional independence statements in complex distributions, and
conditional independence captures useful properties in many applications.  For
instance, criteria ensuring that algorithms do not discriminate based on
sensitive characteristics (e.g., gender or race) can be formulated using
conditional independence~\citep{barocas-hardt-narayanan}.

Aiming to prove independence in probabilistic programs,
\citet{barthe2019probabilistic} recently introduced Probabilistic Separation
Logic (PSL) and applied it to formalize security for several well-known
constructions from cryptography. The key ingredient of PSL is a new model of the
logic of bunched implications (BI), in which separation is interpreted as
probabilistic independence. While PSL enables formal reasoning about
independence, it does not support conditional independence. The core issue is
that the model of BI underlying PSL provides no means to describe the
distribution of one set of variables obtained by fixing (\emph{conditioning})
another set of variables to take specific values. Accordingly, one cannot
capture the basic statement of conditional independence---$X$ and $Y$ are
independent random variables conditioned on any value of $Z$.

In this paper, we develop a logical framework for \emph{formal reasoning about
notions of dependence and independence}. Our approach is inspired by PSL but the
framework is more sophisticated: to express conditional independence, we develop
a novel assertion logic extending BI with new connectives---$\depand$ and its
adjoints. The key intuition is that conditional independence can be expressed as
independence plus composition of \emph{Markov kernels}; as our leading example,
we give a kernels model of our logic.

Then, we show how to adapt the probabilistic model to other settings. As is
well-known in category theory, Markov kernels are the arrows in the Kleisli
category of the distribution monad. By varying the monad, our logic smoothly
extends to analogues of conditional independence in other domains. To
demonstrate, we show how replacing the distribution monad by the {\em powerset}
monad gives a model where we can capture \emph{join}/\emph{multivalued
dependencies} in relational algebra and database theory. We also show that the
\emph{semi-graphoid laws}, introduced by \citet{pearl1985graphoids} in their
work axiomatizing conditional independence, can be translated into formulas that
are valid in both of our models.

The rest of the paper is organized as follows. We give a bird's-eye view in
\Cref{sec:overview}, providing intuitions on our design choices and highlighting
differences with existing work. \Cref{sec:logic} presents the main contribution:
the design of \LOGIC, a new bunched logic to reason about dependence and
independence. We show that the proof system of \LOGIC is sound and complete with
respect to its Kripke semantics. Then, we present two concrete models in
\Cref{sec:models}, based on probability distributions and relations. In
\Cref{sec:CI}, we consider how to express dependencies in \LOGIC: we show that
the same logical formula captures conditional independence and join dependency
in our two models, and our models validate the semi-graphoid laws. Finally, in
\Cref{sec:cpsl}, we design a program logic with \LOGIC assertions, and use it to
verify conditional independence in two probabilistic programs.

\section{Overview of the contributions}\label{sec:overview}

\paragraph*{The logic \LOGIC}
The starting point of our work is the logic of bunched implications
(BI)~\cite{o1999logic}. BI extends intuitionistic propositional logic with
substructural connectives to facilitate reasoning about sharing and separation
of resources, an idea most prominently realized in Separation Logic's handling
of heap-manipulating programs~\citep{DBLP:conf/csl/OHearnRY01}. The novel
connectives are a \emph{separating conjunction} $P \sepand Q$, intuitively
stating that $P$ and $Q$ hold in separate resources, and its adjoint $\sepimp$,
called \emph{magic wand}. We will extend BI with a non-commutative conjunction,
written $P \depand Q$.  Intuitively, $\depand$ expresses a {\em possible
dependency} of $Q$ on $P$. The end result is a logic with two conjunctive
connectives---$\sepand$ and $\depand$---capturing notions of {\em independence}
and {\em dependence}. We call the logic {\em Dependence and Independence Bunched
Implications (DIBI)}.

To give a semantics to our logic, we start from the semantics of BI. The
simplest BI models are {\em partial resource monoids}: Kripke structures $(M,
\sqsubseteq, \circ, e)$ in which $\circ$ is an order-preserving, partial,
commutative monoid operation with unit $e$. The operation $\circ$ allows
interpreting the separating conjunction $P \sepand Q$ and magic wand $P \sepimp
Q$. For example, the probabilistic model of BI underlying
PSL~\citep{barthe2019probabilistic} is a partial resource monoid: by taking $M$
to be the set of distributions over program memories and $\circ$ to be the
independent product of distributions over memories with disjoint variables, the
interpretation of $P\sepand Q$ gives the desired notion of {\em probabilistic
independence}.

This is the first point where we fundamentally differ from PSL. To capture both
dependence and independence, we change the structure in which formulas are
interpreted. In \Cref{sec:logic}, we will introduce a structure $\mathcal{X} =
(X, \sqsubseteq, \oplus, \odot, E)$, a \emph{\LOGIC frame}, with two operations
$\oplus \colon X^2 \to \mathcal P(X)$ and $\odot \colon X^2 \to \mathcal P(X)$,
and a set of units $E\subseteq X$. Three remarks are in order. First, the
preorder $\sqsubseteq$ makes \LOGIC an \emph{intuitionistic} logic. There are
many design trade-offs between intuitionistic and classical, but the most
important consideration is that intuitionistic formulas can describe proper
subsets of states (e.g., random variables), leaving the rest of the state
implicit. Second, \LOGIC frames contain an additional monoidal operation $\odot$
for interpreting $\depand$ ($\oplus$ will be used in interpreting $\sepand$).
Third, as the completeness of BI for its simple PCM models is an open
problem~\cite{Galmiche2019}, our models are examples of a broader notion of BI
model with non-deterministic operations
(following~\cite{docherty2019bunched,Galmiche2006}). These models subsume
partial resource monoids, and enable our completeness proof of \LOGIC. While the
conditions that \LOGIC frames must satisfy are somewhat cryptic at first sight,
they can be naturally understood as axioms defining monoidal operations in a
partial, non-deterministic setting. E.g., we will require:
	\[\small
	\begin{array}{ll}
		\text{($\oplus$ Comm.)} & z \in x \oplus y \rightarrow z \in y \oplus x; \\
		\text{($\oplus$ Assoc.)} & w \in t \oplus z \land t \in x \oplus y \rightarrow
		\exists s(s \in y \oplus z \land w \in x \oplus s); \\
		\text{($\odot$ Unit Exist.\textsubscript{L})} & \exists {e \in E}.\ (x \in e \odot x)
%				\text{($\odot$ Unit Existence\textsubscript{R})} & \exists e \in E.\ (x \in x \odot e). \\
	\end{array} \]
where unbound variables are universally quantified. Crucially, the operation
$\odot$ need \emph{not} be commutative: this operation interprets the dependence
conjunction $\depand$, where commutativity is undesirable. In a \LOGIC frame,
$\sepand$ and $\depand$ are interpreted as:
{\small
\begin{align*}
  x &\models P \sepand Q &&\text{  iff exists $x', y, z$ s.t. $x \sqsupseteq x' \in y \oplus z$, $y \models P$, and  $z \models Q$} \\
  x &\models P \depand Q &&\text{  iff exists $y, z$ s.t. $x \in y \odot z$, $y \models P$, and  $z \models Q$}
\end{align*}
}%
In \LOGIC, $\sepand$ has a similar reading as in PSL: it states that two parts
of a distribution can be combined because they are {\em independent}. In
contrast, the new conjunction $P \depand Q$ asserts that the $Q$ part of a
distribution {\em may depend} on the $P$ part. Combined with the separating
conjunction $\sepand$, the new conjunction $\depand$ can express more complex
dependencies: e.g.  $P \depand (Q \sepand R)$ asserts that $Q$ and $R$ both may
depend on $P$, and are independent conditioned on $P$.

\paragraph*{A sound and complete proof system for \LOGIC}
To reason about \LOGIC validity, in \Cref{sec:logic} we also provide a
Hilbert-style proof system for \LOGIC, and prove soundness and completeness.
The proof system extends BI with rules for the new connective $\depand$, e.g.
$\depand$ \textsc{Conj}, and for the interaction between $\depand$ and
$\sepand$, e.g., \textsc{RevEx}:
\begin{mathpar}\footnotesize
	\inferrule* [right=$\depand$ \textsc{Conj}]
	{ P \vdash R \\ Q \vdash S }
	{ P \depand Q \vdash R \depand S}
	\and
	\inferrule* [right= \textsc{RevEx}]
	{~}
	{ (P \depand Q) \sepand (R \depand S) \vdash (P \sepand R) \depand (Q \sepand S) }
\end{mathpar}
\textsc{RevEx}---{\em reverse-exchange}---captures the fundamental interaction
between the two conjunctions. Computations $T=P\depand Q$ and $U=R\depand S$ are
built from dependent components, yet $T$ and $U$ are independent and hence can
be combined with $\sepand$. We can then infer that the building blocks of $T$
and $U$ must also be pair-wise independent and can be combined, yielding
formulas $P \sepand R$ and $Q \sepand S$. These can then be combined with
$\depand$ as they retain the dependency of the original building blocks.

\paragraph*{Models and applications of \LOGIC}
Separation logics are based on a concrete BI model over program states, together
with a choice of atomic assertions. Before explaining the models of \LOGIC, we
recall two prior models of BI.

In the heap model, states are heaps: partial maps from memory addresses to
values. Atomic assertions of the form $x \mapsto v$ indicate that the location
to which $x$ points has value $v$. Then, $x \mapsto v \sepand y \mapsto u$
states that $x$ points to $v$ and $y$ points to $u$, and $x$ and $y$ do not
alias---they must point to different locations. In general, $P \sepand Q$ holds
when a heap can be split into two subheaps with disjoint domains, satisfying $P$
and $Q$ respectively.

% Pattern Info

\tikzset{
pattern size/.store in=\mcSize,
pattern size = 5pt,
pattern thickness/.store in=\mcThickness,
pattern thickness = 0.3pt,
pattern radius/.store in=\mcRadius,
pattern radius = 1pt}
\makeatletter
\pgfutil@ifundefined{pgf@pattern@name@_w9n4bpx7p}{
\pgfdeclarepatternformonly[\mcThickness,\mcSize]{_w9n4bpx7p}
{\pgfqpoint{0pt}{0pt}}
{\pgfpoint{\mcSize+\mcThickness}{\mcSize+\mcThickness}}
{\pgfpoint{\mcSize}{\mcSize}}
{
\pgfsetcolor{\tikz@pattern@color}
\pgfsetlinewidth{\mcThickness}
\pgfpathmoveto{\pgfqpoint{0pt}{0pt}}
\pgfpathlineto{\pgfpoint{\mcSize+\mcThickness}{\mcSize+\mcThickness}}
\pgfusepath{stroke}
}}
\makeatother

% Pattern Info

\tikzset{
pattern size/.store in=\mcSize,
pattern size = 5pt,
pattern thickness/.store in=\mcThickness,
pattern thickness = 0.3pt,
pattern radius/.store in=\mcRadius,
pattern radius = 1pt}
\makeatletter
\pgfutil@ifundefined{pgf@pattern@name@_p7pvh2bcc}{
\makeatletter
\pgfdeclarepatternformonly[\mcRadius,\mcThickness,\mcSize]{_p7pvh2bcc}
{\pgfpoint{-0.5*\mcSize}{-0.5*\mcSize}}
{\pgfpoint{0.5*\mcSize}{0.5*\mcSize}}
{\pgfpoint{\mcSize}{\mcSize}}
{
\pgfsetcolor{\tikz@pattern@color}
\pgfsetlinewidth{\mcThickness}
\pgfpathcircle\pgfpointorigin{\mcRadius}
\pgfusepath{stroke}
}}
\makeatother

% Pattern Info

\tikzset{
pattern size/.store in=\mcSize,
pattern size = 5pt,
pattern thickness/.store in=\mcThickness,
pattern thickness = 0.3pt,
pattern radius/.store in=\mcRadius,
pattern radius = 1pt}
\makeatletter
\pgfutil@ifundefined{pgf@pattern@name@_yaoys05py}{
\makeatletter
\pgfdeclarepatternformonly[\mcRadius,\mcThickness,\mcSize]{_yaoys05py}
{\pgfpoint{-0.5*\mcSize}{-0.5*\mcSize}}
{\pgfpoint{0.5*\mcSize}{0.5*\mcSize}}
{\pgfpoint{\mcSize}{\mcSize}}
{
\pgfsetcolor{\tikz@pattern@color}
\pgfsetlinewidth{\mcThickness}
\pgfpathcircle\pgfpointorigin{\mcRadius}
\pgfusepath{stroke}
}}
\makeatother

% Pattern Info
 %\begin{wrapfigure}{r}{0.1425\textwidth}
 %\vspace{-.45cm}
  \begin{center}
\tikzset{
pattern size/.store in=\mcSize,
pattern size = 5pt,
pattern thickness/.store in=\mcThickness,
pattern thickness = 0.3pt,
pattern radius/.store in=\mcRadius,
pattern radius = 1pt}
\makeatletter
\pgfutil@ifundefined{pgf@pattern@name@_aqts6sq02}{
\pgfdeclarepatternformonly[\mcThickness,\mcSize]{_aqts6sq02}
{\pgfqpoint{0pt}{0pt}}
{\pgfpoint{\mcSize+\mcThickness}{\mcSize+\mcThickness}}
{\pgfpoint{\mcSize}{\mcSize}}
{
\pgfsetcolor{\tikz@pattern@color}
\pgfsetlinewidth{\mcThickness}
\pgfpathmoveto{\pgfqpoint{0pt}{0pt}}
\pgfpathlineto{\pgfpoint{\mcSize+\mcThickness}{\mcSize+\mcThickness}}
\pgfusepath{stroke}
}}
\makeatother
\tikzset{every picture/.style={line width=0.75pt}} %set default line width to 0.75pt
 \scalebox{0.8}{$\!\!\!\!\!\!\!\!$
\begin{tikzpicture}[x=0.75pt,y=0.75pt,yscale=-1,xscale=1]
%uncomment if require: \path (0,300); %set diagram left start at 0, and has height of 300

%Shape: Square [id:dp8534896605812138]
\draw   (201,84) -- (219,84) -- (219,102) -- (201,102) -- cycle ;
%Shape: Square [id:dp22180609907398796]
\draw   (201,120) -- (219,120) -- (219,138) -- (201,138) -- cycle ;
%Shape: Square [id:dp9359207351042422]
\draw   (201,138) -- (219,138) -- (219,156) -- (201,156) -- cycle ;
%Shape: Square [id:dp6465931819766476]
\draw   (201,102) -- (219,102) -- (219,120) -- (201,120) -- cycle ;
%Shape: Square [id:dp1209559646935735]
\draw  [pattern=_w9n4bpx7p,pattern size=6pt,pattern thickness=0.75pt,pattern radius=0pt, pattern color={rgb, 255:red, 0; green, 0; blue, 0}] (310,84) -- (328,84) -- (328,102) -- (310,102) -- cycle ;
%Shape: Square [id:dp8293998009561487]
\draw  [pattern=_p7pvh2bcc,pattern size=6pt,pattern thickness=0.75pt,pattern radius=0.75pt, pattern color={rgb, 255:red, 0; green, 0; blue, 0}] (310,138) -- (328,138) -- (328,156) -- (310,156) -- cycle ;
%Shape: Square [id:dp4022229050472217]
\draw  [pattern=_yaoys05py,pattern size=6pt,pattern thickness=0.75pt,pattern radius=0.75pt, pattern color={rgb, 255:red, 0; green, 0; blue, 0}] (310,120) -- (328,120) -- (328,138) -- (310,138) -- cycle ;
%Shape: Square [id:dp9866467630036045]
\draw  [pattern=_aqts6sq02,pattern size=6pt,pattern thickness=0.75pt,pattern radius=0pt, pattern color={rgb, 255:red, 0; green, 0; blue, 0}](310,102) -- (328,102) -- (328,120) -- (310,120) -- cycle ;

\node at (195,94) {\footnotesize$w$};
\node at (195,112) {\footnotesize$x$};
\node at (195,130) {\footnotesize$y$};
\node at (195,148) {\footnotesize$z$};

% Text Node
\draw (193,69) node [anchor=north west][inner sep=0.75pt]    {\small $heap$:};
% Text Node
\draw (225,111) node [anchor=north west][inner sep=0.75pt]    {$\models P*Q  \iff $};
% Text Node
\draw (333,91) node [anchor=north west][inner sep=0.75pt]    {$\models P$};
% Text Node
\draw (333,141) node [anchor=north west][inner sep=0.75pt]    {$\models Q$};
\end{tikzpicture}}
 \end{center}

In PSL, states are distributions over program memories, basic assertions
$\mathbf D[x]$ indicate that $x$ is a random variable, and $P \sepand Q$ states
that a distribution $\mu$ can be factored into two \emph{independent}
distributions $\mu_1$ and $\mu_2$ satisfying $P$ and $Q$, respectively. Consider
the following simple program:
\begin{equation}\label{eq:simple}
	\Rand x {\Bern_{1/2}}; \Rand y {\Bern_{1/2}} ; z \gets x \vee y
\end{equation}
Here, $x$ and $y$ are Boolean variables storing the result of two fair coin
flips and $z$ stores the result of $x\vee y$. The output distribution $\mu$ is a
distribution over a memory with variables $x$, $y$ and $z$ (depicted below on
the right). In $\mu$, the variables $x$ and $y$ are independent and $\mathbf
D[x] \sepand \mathbf D[y]$ holds, since the marginal distribution of $\mu$ is a
product of $\mu_1$ and $\mu_2$, which satisfy $\mathbf D[x]$ and $\mathbf D[y]$
respectively:
\begin{center}
\tikzset{every picture/.style={line width=0.75pt}} %set default line width to 0.75pt
 \scalebox{0.65}{
\begin{tikzpicture}[x=0.75pt,y=0.75pt,yscale=-1,xscale=1]
%uncomment if require: \path (0,1085); %set diagram left start at 0, and has height of 1085

%Shape: Square [id:dp09020940360032792]
\draw   (489,418) -- (509,418) -- (509,438) -- (489,438) -- cycle ;
%Shape: Square [id:dp6514315486051404]
\draw   (489,458) -- (509,458) -- (509,478) -- (489,478) -- cycle ;
%Shape: Square [id:dp8609947191903877]
\draw   (489,438) -- (509,438) -- (509,458) -- (489,458) -- cycle ;
%Curve Lines [id:da9944210017968285]
\draw  [dash pattern={on 0.84pt off 2.51pt}]  (532.5,386) .. controls (505.62,384.08) and (503.62,399.67) .. (498.2,414.19) ;
\draw [shift={(497.5,416)}, rotate = 291.8] [color={rgb, 255:red, 0; green, 0; blue, 0 }  ][line width=0.75]    (10.93,-3.29) .. controls (6.95,-1.4) and (3.31,-0.3) .. (0,0) .. controls (3.31,0.3) and (6.95,1.4) .. (10.93,3.29)   ;
%Curve Lines [id:da32503808273503376]
\draw  [dash pattern={on 0.84pt off 2.51pt}]  (562.5,387) .. controls (582.87,389.91) and (589.13,393.76) .. (596.78,413.16) ;
\draw [shift={(597.5,415)}, rotate = 249.15] [color={rgb, 255:red, 0; green, 0; blue, 0 }  ][line width=0.75]    (10.93,-3.29) .. controls (6.95,-1.4) and (3.31,-0.3) .. (0,0) .. controls (3.31,0.3) and (6.95,1.4) .. (10.93,3.29)   ;
%Straight Lines [id:da5687821856057258]
\draw  [dash pattern={on 0.84pt off 2.51pt}]  (543.5,395) -- (535.3,427.05) ;
\draw [shift={(535.5,427.5)}, rotate = 282.8] [color={rgb, 255:red, 0; green, 0; blue, 0 }  ][line width=0.75]    (10.93,-3.29) .. controls (6.95,-1.4) and (3.31,-0.3) .. (0,0) .. controls (3.31,0.3) and (6.95,1.4) .. (10.93,3.29)   ;
%Shape: Square [id:dp1815978393961739]
\draw   (525,430) -- (545,430) -- (545,450) -- (525,450) -- cycle ;
%Shape: Square [id:dp40221768065316466]
\draw   (525,450) -- (545,450) -- (545,470) -- (525,470) -- cycle ;
%Shape: Square [id:dp2413181643159319]
\draw   (525,470) -- (545,470) -- (545,490) -- (525,490) -- cycle ;
%Shape: Square [id:dp8192354621719582]
\draw   (589,422) -- (609,422) -- (609,442) -- (589,442) -- cycle ;
%Shape: Square [id:dp06127333530643986]
\draw   (589,462) -- (609,462) -- (609,482) -- (589,482) -- cycle ;
%Shape: Square [id:dp5154849766433796]
\draw   (589,442) -- (609,442) -- (609,462) -- (589,462) -- cycle ;
%Straight Lines [id:da42912708230923036]
\draw  [dash pattern={on 0.84pt off 2.51pt}]  (551,391.5) -- (567.65,427.19) ;
\draw [shift={(568.5,429)}, rotate = 244.98000000000002] [color={rgb, 255:red, 0; green, 0; blue, 0 }  ][line width=0.75]    (10.93,-3.29) .. controls (6.95,-1.4) and (3.31,-0.3) .. (0,0) .. controls (3.31,0.3) and (6.95,1.4) .. (10.93,3.29)   ;
%Shape: Square [id:dp24851675141528862]
\draw   (560,432) -- (580,432) -- (580,452) -- (560,452) -- cycle ;
%Shape: Square [id:dp588793132940938]
\draw   (560,472) -- (580,472) -- (580,492) -- (560,492) -- cycle ;
%Shape: Square [id:dp8890112806730489]
\draw   (560,452) -- (580,452) -- (580,472) -- (560,472) -- cycle ;
%Shape: Square [id:dp08304648960295069]
\draw   (144,424) -- (164,424) -- (164,444) -- (144,444) -- cycle ;
%Curve Lines [id:da45780404457330004]
\draw  [dash pattern={on 0.84pt off 2.51pt}]  (187.5,392) .. controls (160.62,390.08) and (158.62,405.67) .. (153.2,420.19) ;
\draw [shift={(152.5,422)}, rotate = 291.8] [color={rgb, 255:red, 0; green, 0; blue, 0 }  ][line width=0.75]    (10.93,-3.29) .. controls (6.95,-1.4) and (3.31,-0.3) .. (0,0) .. controls (3.31,0.3) and (6.95,1.4) .. (10.93,3.29)   ;
%Shape: Square [id:dp9273713304597677]
\draw   (171,464) -- (191,464) -- (191,484) -- (171,484) -- cycle ;
%Curve Lines [id:da04913833301753501]
\draw  [dash pattern={on 0.84pt off 2.51pt}]  (191,397.5) .. controls (190.04,439.74) and (185.85,446.5) .. (180.21,460.24) ;
\draw [shift={(179.5,462)}, rotate = 291.8] [color={rgb, 255:red, 0; green, 0; blue, 0 }  ][line width=0.75]    (10.93,-3.29) .. controls (6.95,-1.4) and (3.31,-0.3) .. (0,0) .. controls (3.31,0.3) and (6.95,1.4) .. (10.93,3.29)   ;
%Shape: Square [id:dp6844107971936558]
\draw   (245,427) -- (265,427) -- (265,447) -- (245,447) -- cycle ;
%Curve Lines [id:da32230391259795266]
\draw  [dash pattern={on 0.84pt off 2.51pt}]  (288.5,395) .. controls (261.62,393.08) and (259.62,408.67) .. (254.2,423.19) ;
\draw [shift={(253.5,425)}, rotate = 291.8] [color={rgb, 255:red, 0; green, 0; blue, 0 }  ][line width=0.75]    (10.93,-3.29) .. controls (6.95,-1.4) and (3.31,-0.3) .. (0,0) .. controls (3.31,0.3) and (6.95,1.4) .. (10.93,3.29)   ;
%Shape: Square [id:dp6530255716819023]
\draw   (272,467) -- (292,467) -- (292,487) -- (272,487) -- cycle ;
%Curve Lines [id:da6464727128863177]
\draw  [dash pattern={on 0.84pt off 2.51pt}]  (293,399.5) .. controls (292.04,441.74) and (287.85,448.5) .. (282.21,462.24) ;
\draw [shift={(281.5,464)}, rotate = 291.8] [color={rgb, 255:red, 0; green, 0; blue, 0 }  ][line width=0.75]    (10.93,-3.29) .. controls (6.95,-1.4) and (3.31,-0.3) .. (0,0) .. controls (3.31,0.3) and (6.95,1.4) .. (10.93,3.29)   ;
%Straight Lines [id:da9240912533015893]
\draw    (361,430) -- (415,430)(361,433) -- (415,433) ;
\draw [shift={(423,431.5)}, rotate = 180] [color={rgb, 255:red, 0; green, 0; blue, 0 }  ][line width=0.75]    (10.93,-3.29) .. controls (6.95,-1.4) and (3.31,-0.3) .. (0,0) .. controls (3.31,0.3) and (6.95,1.4) .. (10.93,3.29)   ;

% Text Node
\draw (543,376) node [anchor=north west][inner sep=0.75pt]    {$\mu$};
% Text Node
\draw  [draw opacity=0][fill={rgb, 255:red, 255; green, 255; blue, 255 }  ,fill opacity=1 ]  (506.22,377) -- (524.22,377) -- (524.22,402) -- (506.22,402) -- cycle  ;
\draw (515.22,378) node [anchor=north] [inner sep=0.75pt]  [font=\scriptsize]  {$\frac{1}{4}$};
% Text Node
\draw (527,396) node [anchor=north west][inner sep=0.75pt]  [font=\scriptsize]  {$\frac{1}{4}$};
% Text Node
\draw (573,391) node [anchor=north west][inner sep=0.75pt]  [font=\scriptsize]  {$\frac{1}{4}$};
% Text Node
\draw (555,372) node [anchor=north west][inner sep=0.75pt]    {$\models  \mathbf D[x] \sepand  \mathbf D[y]$};
% Text Node
\draw (493,420) node [anchor=north west][inner sep=0.75pt]  [font=\small]  {$0$};
% Text Node
\draw (493,440) node [anchor=north west][inner sep=0.75pt]  [font=\small]  {$0$};
% Text Node
\draw (493,460) node [anchor=north west][inner sep=0.75pt]  [font=\small]  {$0$};
% Text Node
\draw (528.5,434) node [anchor=north west][inner sep=0.75pt]  [font=\small]  {$0$};
% Text Node
\draw (529,454) node [anchor=north west][inner sep=0.75pt]  [font=\small]  {$1$};
% Text Node
\draw (529,474) node [anchor=north west][inner sep=0.75pt]  [font=\small]  {$1$};
% Text Node
\draw (592.5,422) node [anchor=north west][inner sep=0.75pt]  [font=\small]  {$1$};
% Text Node
\draw (593,442) node [anchor=north west][inner sep=0.75pt]  [font=\small]  {$0$};
% Text Node
\draw (593,463) node [anchor=north west][inner sep=0.75pt]  [font=\small]  {$1$};
% Text Node
\draw (545,401) node [anchor=north west][inner sep=0.75pt]  [font=\scriptsize]  {$\frac{1}{4}$};
% Text Node
\draw (563.5,432) node [anchor=north west][inner sep=0.75pt]  [font=\small]  {$1$};
% Text Node
\draw (564,452) node [anchor=north west][inner sep=0.75pt]  [font=\small]  {$1$};
% Text Node
\draw (564,473) node [anchor=north west][inner sep=0.75pt]  [font=\small]  {$1$};
% Text Node
\draw  [draw opacity=0][fill={rgb, 255:red, 255; green, 255; blue, 255 }  ,fill opacity=1 ]  (161.22,383) -- (179.22,383) -- (179.22,408) -- (161.22,408) -- cycle  ;
\draw (170.22,384) node [anchor=north] [inner sep=0.75pt]  [font=\scriptsize]  {$\frac{1}{2}$};
% Text Node
\draw (148,428) node [anchor=north west][inner sep=0.75pt]  [font=\small]  {$0$};
% Text Node
\draw (476,425) node [anchor=north west][inner sep=0.75pt]  [font=\footnotesize]  {$x$};
% Text Node
\draw (476,445) node [anchor=north west][inner sep=0.75pt]  [font=\footnotesize]  {$y$};
% Text Node
\draw (476,465) node [anchor=north west][inner sep=0.75pt]  [font=\footnotesize]  {$z$};
% Text Node
\draw (130,430) node [anchor=north west][inner sep=0.75pt]  [font=\footnotesize]  {$x$};
% Text Node
\draw  [draw opacity=0][fill={rgb, 255:red, 255; green, 255; blue, 255 }  ,fill opacity=1 ]  (182.22,422) -- (200.22,422) -- (200.22,447) -- (182.22,447) -- cycle  ;
\draw (191.22,423) node [anchor=north] [inner sep=0.75pt]  [font=\scriptsize]  {$\frac{1}{2}$};
% Text Node
\draw (175,468) node [anchor=north west][inner sep=0.75pt]  [font=\small]  {$1$};
% Text Node
\draw (160,470) node [anchor=north west][inner sep=0.75pt]  [font=\footnotesize]  {$x$};
% Text Node
\draw  [draw opacity=0][fill={rgb, 255:red, 255; green, 255; blue, 255 }  ,fill opacity=1 ]  (260.22,388) -- (278.22,388) -- (278.22,413) -- (260.22,413) -- cycle  ;
\draw (269.22,389) node [anchor=north] [inner sep=0.75pt]  [font=\scriptsize]  {$\frac{1}{2}$};
% Text Node
\draw (249,430) node [anchor=north west][inner sep=0.75pt]  [font=\small]  {$0$};
% Text Node
\draw (235,432) node [anchor=north west][inner sep=0.75pt]  [font=\footnotesize]  {$y$};
% Text Node
\draw (276,470) node [anchor=north west][inner sep=0.75pt]  [font=\small]  {$1$};
% Text Node
\draw (262,472) node [anchor=north west][inner sep=0.75pt]  [font=\footnotesize]  {$y$};
% Text Node
\draw (188,378) node [anchor=north west][inner sep=0.75pt]    {$\mu _{1} \models \mathbf D[x]$};
% Text Node
\draw (290,378) node [anchor=north west][inner sep=0.75pt]    {$\mu _{2}\models \mathbf D[y]$};

% Text Node
\draw  [draw opacity=0][fill={rgb, 255:red, 255; green, 255; blue, 255 }  ,fill opacity=1 ]  (294.22,409) -- (312.22,409) -- (312.22,434) -- (294.22,434) -- cycle  ;
\draw (303.22,410) node [anchor=north] [inner sep=0.75pt]  [font=\scriptsize]  {$\frac{1}{2}$};
\end{tikzpicture}}
\vspace{-.2cm}
\end{center}

In \Cref{sec:models}, we develop two concrete models for \LOGIC: one based on
probability distributions, and one based on relations. Here we outline the
probabilistic model, as it generalizes the model of PSL. Let $\Val$ be a finite
set of values and $S$ a finite set of memory locations. We use $\R S$ to denote
functions $S\to \Val$, representing program memories. The states in the \LOGIC
probabilistic model, over which the formulas will be interpreted, are Markov
kernels on program memories. More precisely, given sets of memory locations $S
\subseteq U$, these are functions $f\colon \R{S} \rightarrow \DR{U}$ that
preserve their input. Regular distributions can be lifted to Markov kernels: the
distribution $\mu \colon \DR U$ corresponds to the kernel $f_\mu \colon
\R{\emptyset} \to \DR U$ that assigns $\mu$ to the only element in
$\R{\emptyset}$.
\begin{wrapfigure}{r}{1.8cm}
\scalebox{.6}{$\!\!\!\!\!\!\!\!\!$	\begin{tikzpicture}[scale=0.40]
			\coordinate (A1) at (-8,0) ;
			\coordinate (B1) at (-8,3) ;
			\coordinate (C1) at (-6,3.5) ;
			\coordinate (D1) at (-6,-1) ;
			\draw [thick,-,fill=blue,pattern=north west lines, pattern color=green] (A1) -- (B1)node[left, midway] {\small$\dom(f)$}   -- (C1) -- node[right, midway] {\small$\range(f)$}  (D1) -- cycle ;
		\end{tikzpicture}}
		\end{wrapfigure}
We depict input-preserving Markov kernels as trapezoids, where the smaller side
represents the domain and the larger side the range; our basic assertions will
track $\dom(f)$ and $\range(f)$, justifying this simplistic depiction.

Separating and dependent conjunction will be interpreted via $\oplus$ and
$\odot$ on Markov kernels. Intuitively, $\oplus$ is a parallel composition that
takes union on both domains and ranges, whereas $\odot$ composes the kernels
using Kleisli composition.
	\begin{multicols}{2}
		\begin{center}
		\scalebox{.6}{
			\begin{tikzpicture}[scale=0.40]
				\coordinate (A) at (0,0) ;
				\coordinate (B) at (0,2) ;
				\coordinate (C) at (2,2.5) ;
				\coordinate (D) at (2,0) ;
				\draw [thick,-,fill=blue,pattern=north west lines, pattern color=blue] (A) -- (B)  -- (C) -- (D) -- cycle ;
			%	\draw [thick,-,fill=blue,pattern=north west lines, pattern color=blue] (A) -- (B) node[left, midway] {$\dom(f_1)$}  -- (C) -- (D)  node[right, near start] {$\range(f_1)$}-- cycle ;
				\node[] at (2.7,0)  (otimes) {$\oplus$};
				\coordinate (E) at (3.5,-1) ;
				\coordinate (F) at (3.5,1.0) ;
				\coordinate (G) at (5.5,1.0) ;
				\coordinate (H) at (5.5,-1.5) ;
				%\draw [thick,-, fill=black, pattern=north west lines, pattern color=red] (E) -- (F)  node[left, near start] {$dom(f_2)$} -- (G) -- (H) node[right, near end] {$range(f_2)$}   -- cycle;
				\draw [thick,-, fill=black, pattern=north east lines, pattern color=red] (E) -- (F)  -- (G) -- (H)  -- cycle;
				\node[] at (7.5,0) {$\mapsto$};
		%		\draw[-{Latex[length=7mm, width=7mm]}, thick] (6,0.5)--(10.3,0.5) node[right]{};
				\coordinate (J) at (9,2) ;
				\coordinate (L) at (9,0) ;
				\coordinate (M) at (11,2.5) ;
				\coordinate (O) at (11, 0) ;
				\draw [thick,-,pattern=north west lines, pattern color=blue] (L) -- (J)   -- (M) -- (O)  -- cycle;
				\coordinate (J1) at (9,-1) ;
				\coordinate (M1) at (11,-1.5) ;
				\coordinate (O1) at (11,1) ;
				\coordinate (L1) at (9,1) ;
				\draw [thick,-,pattern=north east lines, pattern color=red] (L1) -- (J1)   -- (M1) -- (O1)  -- cycle;
				\node[] at (1,-2)  (a) {$f_1$};
				\node[] at (4.5,-2)  (b) {$f_2$};
				\node[] at (10,-2.1)  (b) {$f_1 \oplus f_2$};
			\end{tikzpicture}}
		\end{center}
		\columnbreak
		\begin{center}
				\scalebox{.6}{
			\begin{tikzpicture}[scale=0.4]
				\coordinate (A) at (0,2.5) ;
				\coordinate (B) at (0,4.5) ;
				\coordinate (C) at (2,5.5) ;
				\coordinate (D) at (2,2) ;
				\draw [thick,pattern=vertical lines,pattern color=blue] (A) -- (B) -- (C) -- (D) -- cycle ;
				\node[] at (2.7,3.75)  (circ) {$\odot$};
				\coordinate (E) at (3.5,2) ;
				\coordinate (F) at (3.5,5.5) ;
				\coordinate (G) at (5.5, 6) ;
				\coordinate (H) at (5.5, 1.5) ;
				\draw [thick,pattern=horizontal lines,pattern color=red] (E) -- (F)  -- (G) -- (H)  -- cycle;
				\node[] at (7.25,3.75) {$\mapsto$};
				\draw [dashed] (0,2.5) -- (9.6,2.5);
				\draw [dashed] (0,4.5) -- (9.6,4.5);
				\draw [dashed] (5.5,6) -- (11, 6);
				\draw [dashed] (5.5, 1.5) -- (11, 1.5);
				\draw [loosely dashed] (2,5.5) -- (3.5, 5.5);
				\draw [loosely dashed] (2, 2) -- (3.5, 2);
				\coordinate (I) at (9, 2.5) ;
				\coordinate (J) at (9, 4.5) ;
				\coordinate (M) at (11, 6) ;
				\coordinate (N) at (11, 1.5) ;
				\draw [thick,pattern=horizontal lines,pattern color=purple] (I) -- (J)  -- (M) -- (N) -- cycle;
				\draw [thick,pattern=vertical lines,pattern color=purple] (I) -- (J)  -- (M) -- (N) -- cycle;
				\node[] at (1,1)  (a) {$g_1$};
				\node[] at (4.5,1)  (b) {$g_2$};
				\node[] at (10,1)  (c) {$g_1 \odot g_2$};
			\end{tikzpicture}}
		\end{center}
	\end{multicols}
To demonstrate, recall the simple program \eqref{eq:simple}. In the output
distribution $\mu$, $z$ depends on $x$ and $y$ since $z$ stores $x \lor y$, and
$x$ and $y$ are independent. In our setting, this dependency structure can be
seen when decomposing $f_\mu = (f_{\mu_1} \oplus f_{\mu_2}) \odot f_z$, where
kernel $f_z \colon \R{\{x,y\}} \to \DR{\{x,y,z\}}$ captures how the value of $z$
depends on the values of $\{ x, y \}$:
\begin{center}
\tikzset{every picture/.style={line width=0.75pt}} %set default line width to 0.75pt
\scalebox{.75}{
\begin{tikzpicture}[x=0.75pt,y=0.75pt,yscale=-1,xscale=1]
\begin{scope}[shift={(100,-210)}]
%Shape: Square [id:dp9050276825899207]
\draw   (210,620) -- (225,620) -- (225,635) -- (210,635) -- cycle ;
%Shape: Square [id:dp7616355516618797]
\draw   (210,635) -- (225,635) -- (225,650) -- (210,650) -- cycle ;
%Shape: Square [id:dp6052849276787301]
\draw   (320,620) -- (335,620) -- (335,635) -- (320,635) -- cycle ;
%Shape: Square [id:dp5149066381378953]
\draw   (320,635) -- (335,635) -- (335,650) -- (320,650) -- cycle ;
%Shape: Rectangle [id:dp7949350574720149]
\draw   (320,650) -- (348,650) -- (348,665) -- (320,665) -- cycle ;

% Text Node
%\draw (234,626) node [anchor=north west][inner sep=0.75pt]  [font=\Large]  {$\lambda $};
% Text Node
\draw (212,623) node [anchor=north west][inner sep=0.75pt]  [font=\small]  {$a$};
% Text Node
\draw (212,637) node [anchor=north west][inner sep=0.75pt]  [font=\small]  {$b$};
% Text Node
\draw (202,622) node [anchor=north west][inner sep=0.75pt]  [font=\scriptsize,xslant=0.13]  {$x$};
% Text Node
\draw (202,636) node [anchor=north west][inner sep=0.75pt]  [font=\scriptsize]  {$y$};
% Text Node
\draw (230,623) node [anchor=north west][inner sep=0.75pt]  [font=\large]  {$\xmapsto{\quad f_z\quad} $};
% Text Node
\draw (294,620) node [anchor=north west][inner sep=0.75pt]  [font=\large]  {$\delta \Bigg($};
\draw (350,620) node [anchor=north west][inner sep=0.75pt]  [font=\large]  {$\Bigg)$ };
% Text Node
\draw (311,655) node [anchor=north west][inner sep=0.75pt]  [font=\scriptsize]  {$z$};
% Text Node
\draw (320,650) node [anchor=north west][inner sep=0.75pt]  [font=\small]  {$a\lor b$};
% Text Node
\draw (311,637) node [anchor=north west][inner sep=0.75pt]  [font=\scriptsize]  {$y$};
% Text Node
\draw (311,620) node [anchor=north west][inner sep=0.75pt]  [font=\scriptsize,xslant=0.13]  {$x$};
% Text Node
\draw (323,635) node [anchor=north west][inner sep=0.75pt]  [font=\small]  {$b$};
% Text Node
\draw (323,622) node [anchor=north west][inner sep=0.75pt]  [font=\small]  {$a$};
\draw (390,625) node [anchor=north west][font=\large] {$\delta\colon X\to \mathcal D(X)$ is the Dirac distribution};
\draw (390,645) node [anchor=north west][font=\large] {$\delta(v)(w) = 1$ if $v = w$, 0 otherwise.};
\end{scope}
\end{tikzpicture}}
\end{center}
We can then prove:
\begin{equation}\label{eq:dep_simple}
	f_{\mu_1} \oplus f_{\mu_2} \models P_{x\sepand y}\quad \text{and} \quad f_z \models Q_z
\quad\text{implies}\quad f_\mu \models  P_{x\sepand y} \depand Q_z
\end{equation}
When analyzing composition of Markov kernels, the domains and ranges provide key
information: the domain determines which variables a kernel may depend on, and
the range determines which variables a kernel describes. Accordingly, we use
basic assertions of the form $\pair{A}{B}$, where $A$ and $B$ are sets of memory
locations. A Markov kernel $f \colon \R S \to \DR T$ satisfies \pair{A}{B} if
there exists a $f'\sqsubseteq f$ with $\dom(f') = A$ and $\range(f') \supseteq
B$ (we will define $f' \sqsubseteq f$ formally later and for now read it as $f$
extends $f'$). For instance, the kernel $f_z$ above satisfies
$\pair{\{x,y\}}{x,y}$, $\pair{\{x,y\}}{x,y,z}$, and $\pair{\{x,y\}}{\emptyset
}$. One choice for $P_{x\sepand y}$ and $Q_z$ in \eqref{eq:dep_simple} can be:
\(
P_{x\sepand y} = \pair{\emptyset}{x} \sepand \pair{\emptyset}{y} \text{ and }
Q_z = \pair{\{x,y\}}{x,y,z})
\)
\begin{figure*}[!t]
  \begin{small}
	\[\begin{array}{llcl}
		\text{($\oplus$ Down-Closed)} &  z \in x \oplus y \land x \sqsupseteq x' \land y \sqsupseteq y'& \rightarrow
        & \exists z' (z \sqsupseteq z' \land z' \in x' \oplus y'); \\
		\text{($\odot$ Up-Closed)} &  z \in x \odot y \land z' \sqsupseteq z &\rightarrow
		    & \exists x', y' (x' \sqsupseteq x \land y' \sqsupseteq y \land z' \in x' \odot y') \\
		\text{($\oplus$ Commutativity)} & z \in x \oplus y &\rightarrow& z \in y \oplus x; \\
		\text{($\oplus$ Associativity)} & w \in t \oplus z \land t \in x \oplus y &\rightarrow
		    & \exists s(s \in y \oplus z \land w \in x \oplus s); \\
		\text{($\oplus$ Unit Existence)} & \exists e \in E(x \in e \oplus x); \\
		\text{($\oplus$ Unit Coherence)} & e \in E \land x \in y \oplus e &\rightarrow& x \sqsupseteq y; \\
		\text{($\odot$ Associativity)} & \exists t(w \in t \odot z \land t \in x \odot y) &\leftrightarrow
		    & \exists s (s \in y \odot z \land w  \in x \odot s); \\
		%\text{($\odot$ Associativity 2)} & w \in x \odot t \land t  \in y \odot z \rightarrow \exists s (s \in x \odot y \land w  \in s \odot z); \\
		\text{($\odot$ Unit Existence\textsubscript{L})} & \exists e \in E(x \in e \odot x); \\
		\text{($\odot$ Unit Existence\textsubscript{R})} & \exists e \in E(x \in x \odot e); \\
		%\text{(Seq Coherence\textsubscript{L})} & e \in E \land x \in e \triangleright y \rightarrow x = y; \\
		\text{($\odot$ Coherence\textsubscript{R})} & e \in E \land x \in y \odot e &\rightarrow &x \sqsupseteq y; \\
		\text{(Unit Closure)} & e \in E \land e' \sqsupseteq e& \rightarrow &e' \in E; \\
		\text{(Reverse Exchange)} & x\in y \oplus z \land y \in y_1 \odot y_2 \land z \in z_1 \odot z_2 &\rightarrow
				& \exists u, v(u \in y_1 \oplus z_1 \land v  \in y_2 \oplus z_2 \land x
 \in u \odot v ).
	\end{array} \] % \qed
  \end{small}
  \caption{DIBI frame requirements (with outermost universal quantification omitted for readability).}\label{fig:dibi_rules}
\end{figure*}

\paragraph*{Formalizing conditional independence}
The reader might wonder how to use such simple atomic propositions, which only
talk about the domain/range of a kernel and do not describe numeric
probabilities, to assert conditional independence. The key insight is that
conditional independence can be formulated using sequential ($\odot$) and
parallel ($\oplus$) composition of kernels. In \Cref{sec:CI}, we show that given
$\mu \in \DR{\Var}$, for any $X, Y, Z \subseteq \Var$, the satisfaction of
\begin{equation}\label{eq:ci}
  f_\mu \models \pair{\emptyset}{Z} \depand \pair{Z}{X} \sepand \pair{Z}{Y}
\end{equation}
captures conditional independence of $X,Y$ given $Z$ in $\mu$.

Moreover, the formula in \eqref{eq:ci} smoothly generalizes to other models. In
the relational model of \LOGIC---obtained by switching the distribution monad to
the powerset monad---the exact same formula encodes \emph{join dependency}, a
notion of conditional independence from the databases and relational algebra
literature. More generally, we also show that the semi-graphoid axioms
of~\citet{pearl1985graphoids} are valid in these two models, and two of the
axioms can be derived in the \LOGIC proof system.

\section{The Logic \LOGIC} \label{sec:logic}
\subsection{Syntax and semantics}\label{sec:logic-synt}

The syntax of \LOGIC extends the logic of bunched implications (BI)~\cite{o1999logic} with a non-commutative conjunctive connective~$\depand$ and its associated implications. Let $\mathcal{AP}$ be a set of propositional atoms. The set of \LOGIC formulas, $\mathrm{Form_{\LOGIC}}$, is generated by the following grammar:
\begin{align*}
	P, Q &::= p \in \mathcal{AP}
	\mid \top
	\mid \sepid
	\mid \bot
	\mid P \land Q
	\mid P \lor Q
	\mid P \rightarrow Q \\
  &\mid P \sepand Q
	\mid P \sepimp Q
	\mid P \depand Q
	\mid P \depimpr Q
	\mid P \depimpl Q.
\end{align*}
\LOGIC is interpreted on \LOGIC frames, which extend BI frames.

\begin{definition}[\LOGIC Frame]
  A \emph{\LOGIC frame} is a structure $\mathcal{X} = (X, \sqsubseteq, \oplus,
  \odot, E)$ such that $\sqsubseteq$ is a preorder, $E \subseteq X$, and
  $\oplus\colon X^2 \rightarrow \mathcal{P}(X)$ and $\odot\colon X^2 \rightarrow
  \mathcal{P}(X)$ are binary operations, satisfying the rules in \Cref{fig:dibi_rules}.
  \end{definition}

Intuitively, $X$ is a set of states, the preorder $\sqsubseteq$ describes when
a smaller state can be extended to a larger state, the binary operators $\odot$,
$\oplus$ offer two ways of combining states, and $E$ is the set of states that
act like units with respect to these operations. The binary operators return a
\emph{set} of states instead of a single state, and thus can be either
deterministic (at most one state returned) or non-deterministic, either partial
(empty set returned) or total. The operators in the concrete models below will
be deterministic, but the proof of completeness relies on the frame's admission
of non-deterministic models, as is standard for bunched
logics~\citep{docherty2019bunched}.

The frame conditions define properties that must hold for all models of \LOGIC.
Most of these properties can be viewed as generalizations of familiar algebraic
properties to non-deterministic operations, suitably interacting with the
preorder. The ``Closed'' properties give coherence conditions between the order
and the composition operators. It is known that having the Associativity frame
condition together with either the Up- or Down-Closed property for an operator
is sufficient to obtain the soundness of associativity for the conjunction
associated with the operator~\cite{Cao2017, docherty2019bunched}. The choices
of Closed conditions match the desired interpretations of $\oplus$ as
independence and $\odot$ as dependence: independence should drop down to
substates (which must necessarily be independent if the superstates were), while
dependence should be inherited by superstates (the source of dependence will
still be present in any extensions). Having $\odot$ non-commutative also splits
the $\odot$ analogues of $\oplus$ axioms into pairs of axioms, although we note
that we exclude the left version of ($\odot$ Coherence) for reasons we explain
in Section \ref{subsec:proofsystem}. Finally, the (Reverse Exchange) condition
defines the interaction between $\oplus$ and $\odot$.

We will give a Kripke-style semantics for \LOGIC, much like the semantics for
BI~\citep{Pym2004}. Given a \LOGIC frame, the semantics defines which states in
the frame \emph{satisfy} each formula. Since the definition is inductive on
formulas, we must specify which states satisfy the atomic propositions.

\begin{definition}[Valuation and model]
A \emph{persistent valuation} is an assignment $\mathcal{V}\colon \mathcal{AP} \rightarrow \mathcal{P}(X)$ of atomic propositions to subsets of states of a \LOGIC frame satisfying: if $x \in \mathcal{V}(p)$ and $y \sqsupseteq x$ then $y \in \mathcal{V}(p)$. A \emph{\LOGIC model} $(\mathcal{X}, \mathcal{V})$ is a \LOGIC frame $\mathcal{X}$ together with a persistent valuation $\mathcal{V}$.
\end{definition}
Since \LOGIC is an intuitionistic logic, persistence is necessary for soundness. We can now give a semantics to \LOGIC formulas in a \LOGIC model.

\begin{figure*}[!ht]\small
	\centering
%	\hrule
%	\vspace{1mm}
	%\setlength\tabcolsep{3pt}
	%\setlength\extrarowheight{2pt}
	\begin{tabular}{c c c c l r c c c c r r c}
		$x$ & $\models_{\mathcal{V}}$ & $\top$ & & \multicolumn{2}{l}{always $\qquad \qquad \qquad \quad \, x \quad \models_{\mathcal{V}} \quad \bot \qquad$ never}  \\
		%$x$ & $\models_{\mathcal{V}}$ & $\bot$ & & never & \\
		$x$ & $\models_{\mathcal{V}}$ & $\sepid$ & iff & \multicolumn{2}{r}{$x \in E\qquad \qquad\qquad \qquad x \quad \models_{\mathcal{V}} \quad \mathrm{p}\quad $ iff $x \in \mathcal{V}(\mathrm{p})$} \\
		%$x$ & $\models_{\mathcal{V}}$ & $p$ & iff & \myalign{l}{$x \in \mathcal{V}(\mathrm{p})$} \\
		$x$ & $\models_{\mathcal{V}}$ & $P \land Q$ & iff & \myalign{l}{$x \models_{\mathcal{V}} P$ and $x \models_{\mathcal{V}} Q$} \\
		$x$ & $\models_{\mathcal{V}}$ & $P \lor Q$ & iff & $x \models_{\mathcal{V}} P$ or $x \models_{\mathcal{V}} Q$ \\
		$x$ & $\models_{\mathcal{V}}$ & $P \rightarrow Q$ & iff & \myalign{l}{for all $y \sqsupseteq x$, $y \models_{\mathcal{V}} P$ implies $y$  $\models_{\mathcal{V}}$ $Q$} \\
		$x$ & $\models_{\mathcal{V}}$ & $P \sepand Q$ & iff & \multicolumn{8}{l}{there exist $x', y, z$ s.t. $x \sqsupseteq x' \in y \oplus z$, $y \models_{\mathcal{V}} P$ and  $z \models_{\mathcal{V}} Q$} \\
		$x$ & $\models_{\mathcal{V}}$ & $P \depand Q$ & iff & \multicolumn{8}{l}{there exist $y, z$ s.t. $x \in y \odot z$, $y \models_{\mathcal{V}} P$ and  $z \models_{\mathcal{V}} Q$} \\
		$x$ & $\models_{\mathcal{V}}$ & $P \sepimp Q$ & iff & \multicolumn{8}{l}{for all $y, z$ s.t. $z \in x \oplus y$: $y \models_{\mathcal{V}} P$ implies $z \models_{\mathcal{V}} Q$} \\
		$x$ & $\models_{\mathcal{V}}$ & $P \depimpr Q$ & iff & \multicolumn{8}{l}{for all $x', y, z$ s.t. $x' \sqsupseteq x$ and $z \in x' \odot y$: $y \models_{\mathcal{V}} P$ implies $z \models_{\mathcal{V}} Q$} \\
		$x$ & $\models_{\mathcal{V}}$ & $P \depimpl Q$ & iff & \multicolumn{8}{l}{for all $x', y, z$ s.t. $x' \sqsupseteq x$ and $z \in y \odot x'$:  $y \models_{\mathcal{V}} P$ implies $z \models_{\mathcal{V}} Q$}
	\end{tabular}
%	\vspace{-3mm}
	\caption{Satisfaction for \LOGIC}
%	\vspace{1mm}
%	\hrule	\vspace{-5mm}
	\label{fig:sat-dibi}
\end{figure*}

\begin{definition}[\LOGIC Satisfaction and Validity]
	Satisfaction at a state $x$ in a model is inductively defined by the clauses in \Cref{fig:sat-dibi}. $P$ is \emph{valid in a model}, $\mathcal{X} \models_{\mathcal{V}} P$, iff $x \models_{\mathcal{V}} P$ for all $x \in \mathcal{X}$. $P$ is \emph{valid}, $\models P$, iff $P$ is valid in all models. $P \models Q$ iff, for all models, $\mathcal{X} \models_{\mathcal{V}} P$ implies $\mathcal{X} \models_{\mathcal{V}} Q$.
%$P$ is \emph{valid in a frame}, $\mathcal{X} \models P$, iff, for any persistent valuation $\mathcal{V}$, $\mathcal{X} \models_{\mathcal{V}} P$. $P$ is \emph{valid} if $P$ is valid in all frames. Finally, $P \models Q$ iff, for all \LOGIC models $(\mathcal{X}, \mathcal{V})$, $x \models_{\mathcal{V}} P$ implies $x \models_{\mathcal{V}} Q$. $P$ is valid iff $\top \models P$.
\end{definition}

Where the context is clear, we omit the subscript $\mathcal{V}$ on the
satisfaction relation. With the semantics in \Cref{fig:sat-dibi},
persistence on propositional atoms extends to all formulas:
% this property is
%required for soundness of the proof system, to come.

\begin{lemma}[Persistence Lemma]
	For all $P \in \mathrm{Form_{\LOGIC}}$, if $x \models P$ and $y \sqsupseteq x$ then $y \models P$. %\qed
\end{lemma}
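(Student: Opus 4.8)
The plan is to prove the statement by structural induction on the formula $P$, establishing persistence for each syntactic construct under the assumption that it holds for the immediate subformulas. The base cases are immediate: $\top$ holds at every state and $\bot$ at none, the atomic case $p$ is exactly the persistence condition built into the definition of a persistent valuation $\mathcal{V}$, and the unit case $\sepid$ (where $x \models \sepid$ iff $x \in E$) follows directly from the (Unit Closure) frame condition $e \in E \land e' \sqsupseteq e \rightarrow e' \in E$.

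For the additive connectives $\land$ and $\lor$, persistence passes through componentwise by the induction hypothesis. For $\rightarrow$ and for the two dependent implications $\depimpr, \depimpl$, no induction hypothesis is even needed: their semantic clauses already quantify over \emph{all} extensions of the current state (e.g.\ $x \models P \rightarrow Q$ ranges over all $w \sqsupseteq x$, and $x \models P \depimpr Q$ over all $x' \sqsupseteq x$), so given $y \sqsupseteq x$ the desired property for $y$ follows from that for $x$ purely by transitivity of $\sqsubseteq$. The same holds for $\sepand$: since its clause contains a built-in $x \sqsupseteq x'$, the very witnesses making $x \models P \sepand Q$ also witness $y \models P \sepand Q$ whenever $y \sqsupseteq x$, again by transitivity.

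The two genuinely interesting cases are $\depand$ and $\sepimp$, and these are where the ``Closed'' frame conditions do the work. For $P \depand Q$ the state sits as an output $x \in u \odot v$ with no order-slack in the clause, so moving to $y \sqsupseteq x$ requires the ($\odot$ Up-Closed) condition to produce $u' \sqsupseteq u$ and $v' \sqsupseteq v$ with $y \in u' \odot v'$; the induction hypothesis then upgrades $u \models P$ to $u' \models P$ and $v \models Q$ to $v' \models Q$, yielding $y \models P \depand Q$. Dually, for $P \sepimp Q$ the state sits as an \emph{input} to $\oplus$, so given $y \sqsupseteq x$ and any $w \in y \oplus u$ with $u \models P$, I would apply the ($\oplus$ Down-Closed) condition to obtain $w'$ with $w \sqsupseteq w'$ and $w' \in x \oplus u$, invoke $x \models P \sepimp Q$ to get $w' \models Q$, and finally use the induction hypothesis (persistence of $Q$) to lift this to $w \models Q$.

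The main obstacle is precisely these last two cases: unlike the others, they cannot be discharged by transitivity alone, because the relevant state appears in a position of the semantic clause that carries no order-slack. The proof therefore hinges on the asymmetric pairing of closure conditions---($\odot$ Up-Closed) for the dependent conjunction and ($\oplus$ Down-Closed) for the separating implication---combined with the induction hypothesis to transport satisfaction of the subformulas across the preorder. This matches the informal reading that independence drops down to substates while dependence is inherited by superstates.
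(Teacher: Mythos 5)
Your proof is correct and follows exactly the route the paper intends: the paper omits the proof but its remark immediately after the lemma identifies the same mechanism you use, namely that the ($\odot$ Up-Closed) axiom yields persistence for the order-free $\depand$ clause and ($\oplus$ Down-Closed) for the order-free $\sepimp$ clause, while the remaining connectives carry order-slack in their clauses (or, for $\sepid$, follow from Unit Closure). All ten cases of your induction check out against the satisfaction clauses and frame conditions, including the correct use of reflexivity of $\sqsubseteq$ when instantiating Down-Closed with $u \sqsupseteq u$.
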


The reader may note the difference between the semantic clauses for $\depand$ and $\sepand$, and $\sepimp$ and $\depimpr$: the satisfaction of the Up-Closed (Down-Closed) frame axiom for $\odot$ ($\oplus$) leads to the persistence and thus the soundness of the simpler clause for $\depand$ ($\sepimp$) \cite{Cao2017}. Without the other Closed property, we must use a satisfaction clause which accounts for the order, as in BI. %and other intuitionistic bunched logics~\cite{Docherty2016}.

\subsection{Proof system} \label{subsec:proofsystem}

\begin{figure*}[!ht]
	\centering
\scalebox{.85}{
	\begin{tabular}{cccc}
		%%%%%%%%Basic Axioms
		%Identity
		$\inferrule* [right = \textsc{Ax}]
		{~}
		{P \vdash P}$ &
		%Top
		$\inferrule* [right = $\top$]
		{~}
		{P \vdash \top} $ &
		%Bottom
		$\inferrule* [right = $\bot$]
		{~}
		{\bot \vdash P}$ &
		%Disjunction Intro
		$\inferrule* [right=$\lor 1$]
		{P \vdash R \\ Q \vdash R}
		{P \lor Q \vdash R}$
		\quad
		%Disjunction Elim
		$\inferrule* [right=$\lor 2$]
		{P \vdash Q_i}
		{P \vdash Q_1 \lor Q_2}$  \\
		%%%%%%%%And and Or

		%Conjunction Intro Right
		$\inferrule* [right=$\land 1$]
		{P \vdash Q \\ P \vdash R}
		{P \vdash Q \land R}$ &
		%Conjunction Intro Left
		$\inferrule* [right=$\land 2$]
		{Q \vdash R}
		{P \land Q \vdash R}$ &

		$\inferrule* [right=$\land 3/ \land4$]
		{P \vdash Q_1 \land Q_2}
		{P \vdash Q_i}$ &

		%%%%%%%Intuitionistic and Separating Implications
		$\inferrule* [right=$\rightarrow$]
		{P \land Q \vdash R}
		{P \vdash Q \rightarrow R}$ \\

		$\inferrule* [right=\textsc{MP}]
		{P \vdash Q \rightarrow R \\ P \vdash Q}
		{P \vdash R}$ &

		$\inferrule* [right=$\sepimp$]
		{P \sepand Q \vdash R}
		{P \vdash Q \sepimp R}$ &

		$\inferrule* [right=$\sepimp$ \textsc{MP}]
		{ P \vdash Q \sepimp R \and S \vdash Q }
		{ P \sepand S \vdash R }$ &

		%%%%%%%%Depedence Implications
		%Dependent R-Implication I
		$\inferrule* [right=$\depimpr$]
		{ P \depand Q \vdash R }
		{ P \vdash Q \depimpr R }$ \\

		%Dependent R-Implication E
		$\inferrule* [right=$\depimpr$ \textsc{MP}]
		{ P \vdash Q \depimpr R \and S \vdash Q }
		{ P \depand S \vdash R }$ &

		%Dependent L-Implication I
		$\inferrule* [right=$\depimpl$]
		{ P \depand Q \vdash R }
		{ Q \vdash P \depimpl R }$ &

		%Dependent L-Implication E
		$\inferrule* [right=$\depimpl$ \textsc{MP}]
		{ P \vdash Q \depimpl R \and S \vdash Q }
		{ S \depand P \vdash R }$ &

		%%%%%%%Star Axioms
				%Star Unit
				$\inferrule* [right=$\sepand$-\textsc{Unit}]
				{~}
				{P \dashv \vdash P \sepand \sepid}$ \\

				%Star Conj
				$\inferrule* [right=$\sepand$-\textsc{Conj}]
				{P \vdash R \\ Q \vdash S}
				{P \sepand Q \vdash R \sepand S}$ &

				% Star Commutativity
				$\inferrule* [right=$\sepand$-\textsc{Comm}]
				{~}
				{P \sepand Q \vdash Q \sepand P}$ &
				%		%Star Associativity
			$\inferrule* [right= $\sepand$-\textsc{Assoc}]
			{~}
	{(P \sepand Q) \sepand R \dashv\vdash P \sepand (Q \sepand R)}$
				&
				%%%%%%%Semicolon Axioms
				%Left Unit
				$\inferrule* [right=$\depand$-\textsc{Left Unit}]
				{~}
				{P \vdash \sepid \depand P }$ \\

				%Semicolon Conj
				$\inferrule* [right=$\depand$-\textsc{Conj}]
				{ P \vdash R \\ Q \vdash S }
				{ P \depand Q \vdash R \depand S}$	&

				%Right Unit
				$\inferrule* [right=$\depand$-\textsc{Right Unit}]
				{~}
				{ P \dashv \vdash P \depand \sepid }$
				&
				$\inferrule* [right= $\depand$-\textsc{Assoc}]
		{~}
	{(P \depand Q) \depand R \dashv\vdash P \depand (Q \depand R)}$ &
%	%Reverse Exchange
	$\inferrule* [right= \textsc{RevEx}]
		{~}
	{(P \depand Q) \sepand (R \depand S) \vdash (P \sepand R) \depand (Q \sepand S)}$
\end{tabular}
}

\caption{Hilbert system for \LOGIC}
\label{fig:hilbert-dibi}
\end{figure*}

A Hilbert-style proof system for \LOGIC is given in \Cref{fig:hilbert-dibi}.
This calculus extends a system for BI with additional rules governing the new
connectives $\depand$, $\depimpr$ and $\depimpl$: in \Cref{sec:sc} we will prove
this calculus is sound and complete.  We briefly comment on two important
details in this proof system.

\paragraph*{Reverse exchange}
The proof system of \LOGIC shares many similarities with Concurrent Kleene
Bunched Logic (CKBI)~\cite{docherty2019bunched}, which also extends BI with a
non-commutative conjunction. Inspired by concurrent Kleene algebra
(CKA)~\cite{Hoare2011}, CKBI supports the following exchange axiom, derived from
CKA's exchange law:
\[ (P \sepand R) \depand (Q \sepand S) \vdash_{\text{CKBI}}  (P \depand Q) \sepand (R \depand S) \]
%
%\begin{mathpar}
%	\inferrule* [right=Exch]
%	{~}
%	{ (P \sepand R) \depand (Q \sepand S) \vdash_{\text{CKBI}}  (P \depand Q) \sepand (R \depand S) }
%\end{mathpar}
%
In models of CKBI, $\sepand$ describes interleaving concurrent composition,
while $\depand$ describes sequential composition. The exchange rule states that
the process on the left has \emph{fewer} behaviors than the process on the
right---e.g., $P \depand Q$ allows fewer behaviors than $P \sepand Q$, so $P
\depand Q \vdash_{\text{CKBI}} P \sepand Q$ is derivable.

In our models, $\sepand$ has a different reading: it states that two
computations can be combined because they are \emph{independent} (i.e.,
non-interfering). Accordingly, \LOGIC replaces \textsc{Exch} by the
\emph{reversed} version \textsc{RevEx}---the fact that the process on the left
is \emph{safe} to combine implies that the process on the right is also safe. $P
\sepand Q$ is now \emph{stronger} than $P \depand Q$, and $P \sepand Q \vdash P
\depand Q$ is derivable
\iffull
(\Cref{sep2dep}).
\else
(see the extended version~\cite{extended}).
\fi

\paragraph*{Left unit}
While $\depand$ has a \emph{right} unit in our logic, it does not have a proper
\emph{left} unit. Semantically, this corresponds to the lack of a frame
condition for $\odot$-Coherence\textsubscript{L} in our definition of \LOGIC
frames. This difference can also be seen in our proof rules: while
$\depand$-\textsc{Right Unit} gives entailment in both directions,
$\depand$-\textsc{Left Unit} only shows entailment in one direction---there is
no axiom stating $\sepid \depand P \vdash P$.

We make this relaxation to support our intended models, which we will see in
\Cref{sec:models}. In a nutshell, states in our models are Kleisli arrows that
\emph{preserve their input through to their output}---intuitively, in
conditional distributions, the variables that have we conditioned on will remain
fixed. Our models take $\odot$ to be Kleisli composition, which exhibits an
important asymmetry for such arrows: $f$ can always be recovered from $f \odot
g$, but not from $g \odot f$. As a result, the set of all arrows naturally
serves as the set of right units, but these arrows cannot all serve as left
units.

%Importantly, soundness and completeness of this proof system with respect to the
%Kripke semantics holds:

%\begin{theorem}[Soundness and Completeness]
%	%For all formulas $P, Q$: %\in \mathrm{Form_{\LOGIC}}$,
%	$P \vdash Q$ is derivable iff $P \models Q$.
%\end{theorem}

\subsection{Soundness and Completeness of \LOGIC} \label{sec:sc}
A methodology for proving the soundness and completeness of bunched logics is
given by \citet{docherty2019bunched}, inspired by the duality-theoretic approach
to modal logic \cite{Goldblatt1989}.  First, \LOGIC is proved sound and complete
with respect to an algebraic semantics obtained by interpreting the rules of the
proof system as algebraic axioms.  We then establish a representation theorem:
every \LOGIC algebra $\mathbb{A}$ embeds into a \LOGIC algebra generated by a
DIBI frame, that is in turn generated by $\mathbb{A}$. Soundness and
completeness of the algebraic semantics can then be transferred to the Kripke
semantics. Omitted details can be found in
\iffull
  \Cref{app:completeness}.
\else
  \cite{extended}.
\fi

\begin{definition}[\LOGIC Algebra]
	A \emph{\LOGIC algebra} is an algebra $\mathbb{A} = (A, \land, \lor, \rightarrow, \top, \bot, \sepand, \sepimp, \depand, \depimpr, \depimpl, \sepid)$ such that, for all $a, b, c, d \in A$:
	\begin{itemize}
		\item $(A, \land, \lor, \rightarrow, \top, \bot)$ is a Heyting algebra;
		\item $(A, \sepand, \sepid)$ is a commutative monoid;
		\item $(A, \depand, \sepid)$ is a \emph{weak monoid}: $\depand$ is an associative operation with right unit $\sepid$ and  $a \leq \sepid \depand a$;
		\item $a \sepand b \leq c$ iff $a \leq b \sepimp c$;
		\item $a \depand b \leq c$ iff $a \leq b \depimpr c$ iff $b \leq a \depimpl c$;
		\item $(a \depand b) \sepand (c \depand d) \leq (a \sepand c) \depand (b \sepand d)$.
	\end{itemize}
\end{definition}

An \emph{algebraic interpretation} of \LOGIC is specified by an assignment
$\llbracket - \rrbracket: \mathcal{AP} \rightarrow A$. The interpretation is
obtained as the unique homomorphic extension of this assignment, and so we use
the notation $\llbracket - \rrbracket$ interchangeably for both assignment and
interpretation. Soundness and completeness can be established by constructing a
term \LOGIC algebra by quotienting formulas by equiderivability.

\begin{restatable}{theorem}{algebraicsc}					      	    					\label{thm:algebraicsc}
$P \vdash Q$ is derivable iff
					  	$\llbracket P \rrbracket \leq \llbracket Q \rrbracket$
for all algebraic interpretations $\llbracket - \rrbracket$. %\qed
\end{restatable}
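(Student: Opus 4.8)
The plan is to prove the algebraic soundness and completeness theorem \Cref{thm:algebraicsc} by the standard Lindenbaum--Tarski construction, establishing a biconditional between syntactic derivability and semantic entailment over all algebraic interpretations. The forward direction (soundness) is the routine half: I would show that if $P \vdash Q$ is derivable, then $\llbracket P \rrbracket \leq \llbracket Q \rrbracket$ in every \LOGIC algebra under every interpretation. This proceeds by induction on the derivation, checking that each rule of the Hilbert system in \Cref{fig:hilbert-dibi} is validated by the corresponding axiom of a \LOGIC algebra. The Heyting-algebra rules, the commutative-monoid rules for $\sepand$, and the adjunction laws $a \sepand b \leq c \iff a \leq b \sepimp c$ (and similarly for $\depimpr, \depimpl$) all match definitional clauses directly; the weak-monoid conditions handle $\depand$-\textsc{Assoc}, $\depand$-\textsc{Right Unit}, and $\depand$-\textsc{Left Unit} (noting that only the one-directional $a \leq \sepid \depand a$ is assumed, mirroring the absence of a left unit); and \textsc{RevEx} matches the final inequality $(a \depand b) \sepand (c \depand d) \leq (a \sepand c) \depand (b \sepand d)$ verbatim.

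For the reverse direction (completeness), I would carry out the Lindenbaum--Tarski quotient. Define a preorder on $\mathrm{Form_{\LOGIC}}$ by $P \preceq Q$ iff $P \vdash Q$ is derivable, and let $\equiv$ be the induced equivalence (equiderivability): $P \equiv Q$ iff $P \dashv\vdash Q$. The key step is to verify that $\equiv$ is a congruence with respect to all the logical connectives, so that the quotient set $A \defeq \mathrm{Form_{\LOGIC}}/{\equiv}$ carries well-defined operations $\land, \lor, \rightarrow, \sepand, \sepimp, \depand, \depimpr, \depimpl$ and constants $\top, \bot, \sepid$. Congruence for each connective follows from the corresponding \textsc{Conj}-style and monotonicity rules (e.g., $\depand$-\textsc{Conj} gives that $\depand$ respects $\equiv$ in both arguments; the adjunction rules yield congruence for the implications). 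I would then check that $(A, \land, \lor, \ldots, \sepid)$ so defined satisfies every clause in the definition of a \LOGIC algebra, with the partial order given by $[P] \leq [Q] \iff P \vdash Q$; each algebra axiom is witnessed by the matching derivable entailment(s) in the proof system. This yields the \emph{term \LOGIC algebra}. Taking the canonical interpretation $\llbracket p \rrbracket \defeq [p]$, an easy induction on formula structure shows $\llbracket P \rrbracket = [P]$ for all $P$, so that $\llbracket P \rrbracket \leq \llbracket Q \rrbracket$ in this particular algebra immediately gives $P \vdash Q$ by definition of $\leq$. Contrapositively, if $P \vdash Q$ is not derivable, the term algebra provides an interpretation refuting $\llbracket P \rrbracket \leq \llbracket Q \rrbracket$, which completes the biconditional.

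The main obstacle I anticipate is verifying that the quotient genuinely satisfies the \emph{weak}-monoid axiom for $\depand$ and the \textsc{RevEx} inequality simultaneously with well-definedness, since $\depand$ is non-commutative and has only a right unit. In particular I would need to confirm that congruence with respect to $\depand$ holds even though we lack the left-unit direction $\sepid \depand P \vdash P$: the relevant closure is supplied by $\depand$-\textsc{Conj}, which requires no commutativity and monotonizes $\depand$ in both coordinates independently, so this works, but it is the step most likely to harbor a subtle gap if any congruence rule were missing from the system. A secondary point of care is establishing that $\leq$ on $A$ is antisymmetric (a genuine partial order rather than a preorder), which is exactly what passing to the equiderivability quotient $\equiv$ buys us. Once congruence and the algebra axioms are checked, the homomorphic-extension argument and the identity $\llbracket P \rrbracket = [P]$ are routine inductions, so I would state them briefly and relegate the case-by-case verification to the appendix referenced as \Cref{app:completeness}.
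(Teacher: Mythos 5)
Your proposal is correct and follows essentially the same route as the paper: soundness by induction on the Hilbert rules against the algebra axioms, and completeness via the Lindenbaum--Tarski quotient of $\mathrm{Form_{\LOGIC}}$ by equiderivability, with the canonical interpretation $\llbracket p \rrbracket = [p]_{\equiv}$ refuting $\llbracket P \rrbracket \leq \llbracket Q \rrbracket$ whenever $P \vdash Q$ is underivable. The only cosmetic difference is that the paper characterizes the order via the Heyting structure ($[P] \leq [Q]$ iff $\top \vdash P \rightarrow Q$) rather than defining it directly by derivability, but these coincide, so the arguments are the same.
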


We now connect these algebras to \LOGIC frames. A \emph{filter} on a bounded
distributive lattice $\mathbb{A}$ is a non-empty set $F \subseteq A$ such that,
for all $x, y \in A$, (1) $x \in F$ and $x \leq y$ implies $y \in F$; and (2)
$x, y \in F$ implies $x \land y \in F$. It is a \emph{proper} filter if it
additionally satisfies (3) $\bot \not\in F$, and a \emph{prime} filter if it
also satisfies (4) $x \lor y \in F$ implies $x \in F$ or $y \in F$. We denote
the set of prime filters of $\mathbb{A}$ by $\mathbb{PF}_{\mathbb{A}}$.

\begin{definition}[Prime Filter Frame]
	Given a \LOGIC algebra $\mathbb{A}$, the \emph{prime filter frame} of $\mathbb{A}$ is defined as $Pr(\mathbb{A}) = (\mathbb{PF}_{\mathbb{A}}, \subseteq, \oplus_{\mathbb{A}}, \odot_{\mathbb{A}}, E_{\mathbb{A}})$, where
$F \oplus_{\mathbb{A}} G := \{ H \in \mathbb{PF}_{\mathbb{A}}  \mid \forall a \in F, b \in G(a \sepand b \in H)\}$,
$F \odot_{\mathbb{A}} G := \{ H \in \mathbb{PF}_{\mathbb{A}} \mid \forall a \in F, b \in G(a \depand b \in H)\}$ and
$E_{\mathbb{A}} := \{ F \in \mathbb{PF}_{\mathbb{A}} \mid \sepid \in F \}$.
%
%	\[ \begin{array}{ll}
%		F \oplus_{\mathbb{A}} G = \{ H \in \mathbb{PF}_{\mathbb{A}}  \mid \forall a \in F, b \in G(a \sepand b \in H)\} &
%		F \odot_{\mathbb{A}} G = \{ H \in \mathbb{PF}_{\mathbb{A}} \mid \forall a \in F, b \in G(a \depand b \in H)\} \\
%\multicolumn{	2}{c}{E_{\mathbb{A}} = \{ F \in \mathbb{PF}_{\mathbb{A}} \mid \sepid \in F \}.}
%	\end{array} \]
%	\[ \begin{array}{lcl}
%		F \oplus_{\mathbb{A}} G &= \{ H \in \mathbb{PF}_{\mathbb{A}}  \mid \forall a \in F, b \in G(a \sepand b \in H)\} \\
%		F \odot_{\mathbb{A}} G &= \{ H \in \mathbb{PF}_{\mathbb{A}} \mid \forall a \in F, b \in G(a \depand b \in H)\} \\
%		E_{\mathbb{A}} &= \{ F \in \mathbb{PF}_{\mathbb{A}} \mid \sepid \in F \}.
%	\end{array} \]
\end{definition}

\begin{restatable}{proposition}{primeisdibi}
For any \LOGIC algebra $\mathbb{A}$, the prime filter frame $Pr(\mathbb{A})$ is a \LOGIC frame. %\qed
\end{restatable}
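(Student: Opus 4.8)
The plan is to verify that $Pr(\mathbb{A})$ satisfies every frame condition listed in \Cref{fig:dibi_rules}, with $\sqsubseteq$ instantiated as set inclusion $\subseteq$. The essential toolkit is two-fold. First, since $\mathbb{A}$ is a \LOGIC algebra, both $\sepand$ and $\depand$ are residuated (with adjoints $\sepimp$, and $\depimpr, \depimpl$ respectively), so each is monotone and preserves joins in each argument; in particular $a \sepand \bot = a \depand \bot = \bot \depand a = \bot$, while $\sepid$ is a genuine unit for $\sepand$, a right unit for $\depand$ ($a \depand \sepid = a$), and only a \emph{weak} left unit ($a \le \sepid \depand a$). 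Second, I will repeatedly invoke the \emph{prime filter extension lemma}: in a bounded distributive lattice, a filter disjoint from an ideal extends to a prime filter still disjoint from that ideal; Heyting algebras are distributive, so this applies.

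Several conditions are immediate from the shape of the definitions. ($\oplus$ Commutativity) follows from commutativity of $\sepand$; (Unit Closure) is just upward closure of the property ``$\sepid \in F$''; and ($\oplus$ Unit Coherence) and ($\odot$ Coherence\textsubscript{R}) follow by instantiating the quantified right argument with $\sepid$ and using $a \sepand \sepid = a$ and $a \depand \sepid = a$. The two ``Closed'' conditions are equally direct: because $H \in F \oplus_{\mathbb{A}} G$ and $H \in F \odot_{\mathbb{A}} G$ are defined by universal quantification over $F, G$, enlarging $H$ preserves membership (yielding ($\odot$ Up-Closed) with $x'=x$, $y'=y$) while shrinking $F, G$ only weakens the constraint (yielding ($\oplus$ Down-Closed) with $z'=z$).

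The associativity and unit-existence conditions all follow one recipe. To witness the $S$ in ($\oplus$ Associativity) I take $S_0$ to be the filter generated by $\{\, y \sepand z \mid y \in Y, z \in Z \,\}$ and the forbidden set $J := \{\, c \mid \exists x \in X,\ x \sepand c \notin W \,\}$. Join-preservation of $\sepand$ together with primeness of $W$ shows $J$ is an ideal; the fact that $Y, Z$ are closed under meets (so that $(\bigwedge_i y_i) \sepand (\bigwedge_i z_i)$, with $\bigwedge_i y_i \in Y$ and $\bigwedge_i z_i \in Z$, lies below any finite meet $\bigwedge_i (y_i \sepand z_i)$ of generators), together with $T \in X \oplus Y$, $W \in T \oplus Z$, and associativity of $\sepand$, shows $S_0 \cap J = \emptyset$. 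Extension then gives a prime $S \supseteq S_0$ with $S \cap J = \emptyset$, i.e.\ exactly $S \in Y \oplus_{\mathbb{A}} Z$ and $W \in X \oplus_{\mathbb{A}} S$. The ($\odot$ Associativity) biconditional is handled identically in each direction, placing the quantified side of the ideal on the left or right to respect non-commutativity, and each unit-existence condition is the degenerate instance where the generated filter is ${\uparrow}\sepid$; the weak left unit law $a \le \sepid \depand a$ is precisely what makes ${\uparrow}\sepid$ disjoint from the relevant ideal in ($\odot$ Unit Existence\textsubscript{L}).

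I expect (Reverse Exchange) to be the main obstacle, since it demands \emph{two} prime filters $U, V$ tied by the coupled constraint $\forall u \in U, v \in V.\ u \depand v \in H$, so a single extension cannot suffice. The plan is a \emph{staged} extension. Starting from $U_0$ and $V_0$ generated by $\{\, a_1 \sepand b_1 \,\}$ and $\{\, a_2 \sepand b_2 \,\}$ (for $a_i \in F_i$, $b_i \in G_i$), I first extend $U_0$ to a prime $U$ avoiding $J_U := \{\, c \mid \exists v \in V_0,\ c \depand v \notin H \,\}$; here the disjointness $U_0 \cap J_U = \emptyset$ is where the algebraic exchange inequality $(a \depand b) \sepand (c \depand d) \le (a \sepand c) \depand (b \sepand d)$ does the real work, combined with $F \in F_1 \odot F_2$, $G \in G_1 \odot G_2$, and $H \in F \oplus G$. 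Crucially, this first stage already secures $\forall u \in U, v \in V_0.\ u \depand v \in H$, so when I then extend $V_0$ to a prime $V$ avoiding $J_V := \{\, c \mid \exists u \in U,\ u \depand c \notin H \,\}$, the disjointness $V_0 \cap J_V = \emptyset$ comes for free, and the resulting $U, V$ satisfy all three conjuncts. Confirming that $J_U$ and $J_V$ are genuinely ideals (join-closure via primeness of $H$ and join-preservation of $\depand$) is the routine but load-bearing bookkeeping that underlies this step.
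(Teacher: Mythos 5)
Your proposal is correct, and on the crucial axiom it shares the paper's algebraic core while differing in the prime-filter existence principle. The proper filters you call $U_0$ and $V_0$ are exactly the ones the paper constructs, and your stage-one disjointness argument---pushing $(a_1 \depand a_2) \sepand (b_1 \depand b_2) \leq (a_1 \sepand b_1) \depand (a_2 \sepand b_2)$ through $F \in F_1 \odot_{\mathbb{A}} F_2$, $G \in G_1 \odot_{\mathbb{A}} G_2$ and $H \in F \oplus_{\mathbb{A}} G$---is the same computation the paper performs. Where you genuinely diverge is afterwards: the paper invokes the \emph{prime predicate} machinery of \cite{docherty2019bunched} (its Prime Extension Lemma), packaging the coupled constraint on the pair $(U,V)$ into a single binary predicate, verifying its chain and meet-splitting conditions, and refining both filters to prime ones \emph{simultaneously}; you instead break the coupling sequentially, applying the classical Stone prime filter theorem once to fix $U$ against $V_0$, then observing that the constraint against the now-fixed $U$ defines an ideal that $V_0$ already avoids, so a second application produces $V$. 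Your ideal-hood checks (primeness of $H$ plus distribution of $\depand$ over $\lor$) are the counterpart of the paper's meet-splitting verification, so the bookkeeping burden is comparable. What your route buys is self-containedness---only the textbook prime filter theorem for distributive lattices is needed, and your uniform filter/ideal recipe also discharges the associativity and unit-existence axioms, which the paper delegates wholesale to the analogous BI proofs in \cite{docherty2019bunched}. What the paper's route buys is a reusable mechanism that applies uniformly across bunched-logic representation theorems and would still work for constraints that cannot be staged. A further point in your favor: you correctly isolate the weak left-unit law $a \leq \sepid \depand a$ as the exact ingredient needed for ($\odot$ Unit Existence\textsubscript{L}), and the exact right-unit law for ($\odot$ Unit Existence\textsubscript{R}) and ($\odot$ Coherence\textsubscript{R}), matching the asymmetry the paper deliberately builds into \LOGIC algebras.
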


In the other direction, \LOGIC frames generate \LOGIC algebras.

\begin{definition}[Complex Algebra]
Given a \LOGIC frame $\mathcal{X} = (X, \sqsubseteq, \oplus, \odot, E)$, the
\emph{complex algebra} of $\mathcal{X}$ is defined to be
$Com(\mathcal{X}) = (\mathcal{P}_{\sqsubseteq}(X), \cap, \cup,
\Rightarrow_{\mathcal{X}}, X, \emptyset, \bullet_{\mathcal{X}},
\multimapdot_{\mathcal{X}}, \triangleright_{\mathcal{X}}, \rseq_{\mathcal{X}},
\lseq_{\mathcal{X}}, E)$:

\vspace{-2.5ex}
\begin{footnotesize}
	\[ \begin{array}{cl}
	\mathcal{P}_{\sqsubseteq}(X) &= \{ A \subseteq X \mid \text{ if } a \in A \text{ and } a \sqsubseteq b \text{ then } b \in A \} \\
	A \Rightarrow_{\mathcal{X}} B &= \{ a \mid \text{for all } b, \text{ if } b \sqsupseteq a \text{ and } b \in A \text{ then } b \in B \} \\
	A \bullet_{\mathcal{X}} B &= \{ x \mid \text{there exist } x', a, b \text{ s.t } x \sqsupseteq x' \in a \oplus b, a \in A \text{ and } b \in B \} \\
	A \multimapdot_{\mathcal{X}} B &= \{ x \mid \text{for all } a, b, \text{ if }  b \in x \oplus a \text{ and } a \in A \text{ then } b \in B \}  \\
	A \triangleright_{\mathcal{X}} B &= \{ x \mid \text{there exist }  a, b \text{ s.t } x \in a \odot b, a \in A \text{ and } b \in B \} \\
	A \rseq_{\mathcal{X}} B &= \{ x \mid \text{for all } x', a, b, \text{ if } x \sqsubseteq x', b \in x' \odot a \text{ and } a \in A \text{ then } b \in B \}  \\
	A \lseq_{\mathcal{X}} B &= \{ x \mid \text{for all } x', a, b, \text{ if } x \sqsubseteq x', b \in a \odot x' \text{ and } a \in A \text{ then } b \in B \}.
	\end{array} \]
\end{footnotesize}
\end{definition}

\begin{restatable}{proposition}{complexisdibi}
For any \LOGIC frame $\mathcal{X}$, the complex algebra $Com(\mathcal{X})$ is a \LOGIC algebra. %\qed
\end{restatable}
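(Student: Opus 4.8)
The plan is to verify each clause of the \LOGIC algebra definition directly against the definition of $Com(\mathcal{X})$, exploiting the systematic correspondence between the frame conditions of \Cref{fig:dibi_rules} and the required algebraic validities. First I would confirm the underlying lattice structure: since $\mathcal{P}_{\sqsubseteq}(X)$ is the collection of $\sqsubseteq$-upward-closed subsets of a preorder, ordered by $\subseteq$, it forms a Heyting algebra with meet $\cap$, join $\cup$, top $X$, bottom $\emptyset$, and relative pseudo-complement given exactly by $\Rightarrow_{\mathcal{X}}$; this is the standard Alexandrov/upset construction and needs only a routine check of the Heyting adjunction. I would also note that $E \in \mathcal{P}_{\sqsubseteq}(X)$ by (Unit Closure), so $E$ is a legitimate element interpreting $\sepid$.

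Next I would establish that the operations send upsets to upsets. For $\bullet_{\mathcal{X}}$ this is immediate from the downward slack $x \sqsupseteq x'$ built into its definition, whereas $\triangleright_{\mathcal{X}}$ requires ($\odot$ Up-Closed) to lift a witness $x \in a \odot b$ along $x \sqsubseteq y$ to witnesses above $a$ and $b$ that still lie in the upsets $A, B$. The residuals $\rseq_{\mathcal{X}}$ and $\lseq_{\mathcal{X}}$ are upsets by the order quantification already present in their definitions, while $\multimapdot_{\mathcal{X}}$ needs ($\oplus$ Down-Closed)---mirroring exactly the asymmetry between the semantic clauses for $\depimpr$/$\depimpl$ and $\sepimp$. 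The residuation adjunctions $a \sepand b \leq c \iff a \leq b \sepimp c$ and $a \depand b \leq c \iff a \leq b \depimpr c \iff b \leq a \depimpl c$ then fall out by unfolding the definitions (using upward-closure of the relevant operands in one direction), as is standard for complex algebras.

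The monoidal laws are where the frame conditions do the real work. Commutativity of $\bullet_{\mathcal{X}}$ comes from ($\oplus$ Commutativity); its two-sided unit $E$ from ($\oplus$ Unit Existence) together with ($\oplus$ Unit Coherence); and associativity from combining ($\oplus$ Associativity) with ($\oplus$ Down-Closed). For $\triangleright_{\mathcal{X}}$, associativity follows from the biconditional ($\odot$ Associativity), the right unit $A \triangleright_{\mathcal{X}} E = A$ from ($\odot$ Unit Existence\textsubscript{R}) and ($\odot$ Coherence\textsubscript{R}), and the weak left-unit inequality $A \subseteq E \triangleright_{\mathcal{X}} A$ from ($\odot$ Unit Existence\textsubscript{L}) alone---precisely reflecting the absence of an $\odot$-Coherence\textsubscript{L} condition and hence the lack of a genuine left unit. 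Finally, the reverse-exchange inequality $(a \depand b) \sepand (c \depand d) \leq (a \sepand c) \depand (b \sepand d)$ is a near-direct transcription of the (Reverse Exchange) frame condition, threading its existential witnesses $u, v$ through the definitions and using the order-slack of $\bullet_{\mathcal{X}}$ together with upward-closure to absorb the final $x \sqsupseteq x'$.

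I expect the main obstacle to be the associativity of $\bullet_{\mathcal{X}}$: the separating-conjunction semantics carries the extra existential $x \sqsupseteq x'$, so proving $(A \bullet_{\mathcal{X}} B) \bullet_{\mathcal{X}} C = A \bullet_{\mathcal{X}} (B \bullet_{\mathcal{X}} C)$ requires reassociating a witness via ($\oplus$ Associativity) while simultaneously invoking ($\oplus$ Down-Closed) to relocate the accumulated order-slack onto the reassociated intermediate states. Keeping the bookkeeping of these intermediate states and their $\sqsubseteq$-relationships straight---rather than any conceptual difficulty---is the delicate part; the remaining clauses become comparatively mechanical once well-definedness and the adjunctions are in place.
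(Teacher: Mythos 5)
Your proposal is correct and takes essentially the same route as the paper: the paper also proves this by checking the algebraic axioms directly against the frame conditions, writing out only the Reverse Exchange case (via exactly the witness-threading and upward-closure argument you describe) and deferring the remaining clauses to the analogous BI proof of \citet{docherty2019bunched}, which are the same routine verifications you sketch. Your attributions of which frame condition underwrites which algebraic law (e.g., ($\oplus$ Down-Closed) plus ($\oplus$ Associativity) for associativity of $\bullet_{\mathcal{X}}$, the missing $\odot$-Coherence\textsubscript{L} yielding only the weak left-unit inequality) all match the paper's setup.
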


%\begin{proof}
%	We focus on the Reverse Exchange algebraic axiom (the other \LOGIC algebra properties can be proven in identical fashion to the analagous proof for BI \cite[Lemma 6.22]{docherty2019bunched}). Suppose $x \in (A \triangleright B) \bullet (C \triangleright D)$.
%	Then there exists $x', y, z$ such that $x \sqsupseteq x' \in y \oplus z$, with $y \in A \triangleright B$ and $z \in C \triangleright D$. In turn, there thus exists $y', y_1, y_2, z', z_1, z_2$ such that $y \sqsupseteq y \in y_1 \odot y_2$ and $z \sqsupseteq z' \in z_1 \odot z_2$ with
%	$y_1 \in A, y_2 \in B, z_1 \in C$ and $z_2 \in D$. By the Reverse Exchange frame axiom, there exists $u, u', v', v', x''$ such that
%	$u \sqsupseteq u' \in y_1 \oplus y_2$ (and so $u \in A \bullet C$), $v \sqsupseteq v' \in y_2 \oplus z_2$ (and so $v \in B \bullet C$) and
%	$x \sqsupseteq x'' \in u \odot v$, entailing $x \in (A \bullet C) \triangleright (B \bullet D)$ as required.
%\end{proof}

The following main result facilitates transference of soundness and completeness.

\begin{restatable}[Representation of \LOGIC algebras]{theorem}{representationthm}
Every \LOGIC algebra is isomorphic to a subalgebra of a complex algebra: given a
\LOGIC algebra $\mathbb{A}$, the map $\theta_{\mathbb{A}}: \mathbb{A}
\rightarrow Com(Pr(\mathbb{A}))$ defined by $\theta_{\mathbb{A}}(a) = \{ F \in
\mathbb{PF}_{\mathbb{A}} \mid a \in F \}$ is an embedding. %\qed
\end{restatable}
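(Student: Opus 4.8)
The plan is to prove that $\theta_{\mathbb{A}}$ is an injective \LOGIC-algebra homomorphism, i.e., an embedding, by following the standard Stone/Priestley-style duality argument adapted to the bunched setting of \citet{docherty2019bunched}. The map $\theta_{\mathbb{A}}(a) = \{F \in \mathbb{PF}_{\mathbb{A}} \mid a \in F\}$ is the familiar Stone map sending each element to the set of prime filters containing it. First I would verify that $\theta_{\mathbb{A}}$ lands in $\mathcal{P}_{\sqsubseteq}(\mathbb{PF}_{\mathbb{A}})$: since $\subseteq$ is the order on prime filters and filters are upward closed under $\leq$, any set of the form $\theta_{\mathbb{A}}(a)$ is $\subseteq$-upward closed, so it is a legitimate element of the complex algebra.

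The bulk of the argument is showing that $\theta_{\mathbb{A}}$ preserves all the operations. For the Heyting-algebra fragment ($\land, \lor, \rightarrow, \top, \bot$), this is exactly the classical Stone representation theorem for distributive lattices / Heyting algebras, which I would cite rather than reprove. The genuinely new content is preservation of the three bunched operations $\sepand, \depand$ (and the residuals $\sepimp, \depimpr, \depimpl$, which follow once their left adjoints are handled). For these I would prove that $\theta_{\mathbb{A}}(a \sepand b) = \theta_{\mathbb{A}}(a) \bullet_{\mathcal{X}} \theta_{\mathbb{A}}(b)$ and $\theta_{\mathbb{A}}(a \depand b) = \theta_{\mathbb{A}}(a) \triangleright_{\mathcal{X}} \theta_{\mathbb{A}}(b)$, where the right-hand operations are those defined on $Pr(\mathbb{A})$ and then lifted in $Com(Pr(\mathbb{A}))$. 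Unfolding the definitions, the $\supseteq$ inclusion (if $a \sepand b \in F$ then $F$ decomposes appropriately) is the hard direction and is where a prime filter extension lemma is needed; the $\subseteq$ inclusion is a routine filter computation. Finally, injectivity follows from the prime filter separation property: if $a \not\leq b$ then there is a prime filter $F$ with $a \in F$ and $b \notin F$, so $\theta_{\mathbb{A}}(a) \neq \theta_{\mathbb{A}}(b)$; this is the standard consequence of the prime filter theorem (an application of Zorn's lemma) for bounded distributive lattices.

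The main obstacle is the $\supseteq$ direction of operation-preservation, specifically for $\sepand$ and $\depand$. Concretely, given a prime filter $H \in \theta_{\mathbb{A}}(a \sepand b)$, i.e., $a \sepand b \in H$, one must \emph{construct} prime filters $F \ni a$ and $G \ni b$ with $H \in F \oplus_{\mathbb{A}} G$ (and analogously $H \in F \odot_{\mathbb{A}} G$ for $\depand$, respecting non-commutativity). This requires a bunched-logic analogue of the prime filter extension lemma: starting from the principal filter generated by $a$ and the set $\{c \mid (\text{something involving } b, H) \}$, one shows a suitable proper filter exists and extends it to a prime filter using the adjunction laws ($a \sepand b \leq c$ iff $a \leq b \sepimp c$, and the three $\depand$-residuation equivalences) together with primeness of $H$. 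The non-commutativity of $\depand$ means the left and right residuals $\depimpr, \depimpl$ must be used on the correct sides when building $F$ versus $G$, and the weak-monoid structure (only a right unit, with $a \leq \sepid \depand a$) must be threaded through carefully so that the unit set $E_{\mathbb{A}}$ behaves correctly. I would handle this by invoking the general extension-lemma machinery of \citet{docherty2019bunched}, checking that the \LOGIC-algebra axioms supply exactly the hypotheses that machinery requires; the remaining verifications are then largely mechanical.
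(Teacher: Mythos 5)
Your proposal is correct and takes essentially the same route as the paper: the paper likewise reduces the theorem to showing $\theta_{\mathbb{A}}$ is a monomorphism and then defers the argument to the prime-filter/prime-predicate extension machinery of \citet{docherty2019bunched} (Theorems 6.11 and 6.25), which is exactly the machinery you invoke for the hard decomposition direction and for injectivity via prime filter separation, after checking (as the paper does in its propositions on $Pr(\mathbb{A})$ and $Com(\mathcal{X})$) that the new DIBI axioms, chiefly Reverse Exchange, feed that machinery. Two minor slips in your sketch are absorbed by that machinery but worth noting: your $\subseteq$/$\supseteq$ labels are swapped (the hard direction is $\theta_{\mathbb{A}}(a \sepand b) \subseteq \theta_{\mathbb{A}}(a) \bullet_{\mathcal{X}} \theta_{\mathbb{A}}(b)$, i.e., decomposing a prime filter containing $a \sepand b$), and preservation of the residuals $\sepimp, \depimpr, \depimpl$ does not follow automatically from preservation of their adjoint conjunctions (since $\theta_{\mathbb{A}}$ is not surjective) but requires its own, analogous, prime-filter extension arguments.
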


Given the previous correspondence between \LOGIC algebras and frames, we only
need to show that $\theta$ is a monomorphism: the necessary argument is
identical to that for similar bunched logics \cite[Theorems 6.11,
6.25]{docherty2019bunched}. Given $\llbracket - \rrbracket$ on $\mathbb{A}$, the
representation theorem establishes that $\mathcal{V}_{\llbracket -
\rrbracket}(p) := \theta_{\mathbb{A}}(\llbracket p \rrbracket)$ is a persistent
valuation on $Pr(\mathbb{A})$ such that $F \models_{\mathcal{V}_{\llbracket -
\rrbracket}} P$ iff $\llbracket P \rrbracket \in F$, from which our main theorem
can be proved.

\begin{restatable}[Soundness and Completeness]{theorem}{kripkesc}
$P \vdash Q$ is derivable iff $P \models Q$. %\qed
\end{restatable}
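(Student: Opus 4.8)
The plan is to transfer the soundness and completeness already established for the algebraic semantics (\Cref{thm:algebraicsc}) across to the Kripke semantics, exploiting the dual pair of constructions $Pr(-)$ and $Com(-)$ together with the representation theorem. The two directions of the biconditional are handled by the two halves of this duality: soundness uses that every \LOGIC frame yields a \LOGIC algebra (its complex algebra), while completeness uses that every \LOGIC algebra yields a \LOGIC frame (its prime filter frame). In both cases the bridge between the semantics is a truth lemma identifying the algebraic interpretation of a formula with its set of satisfying states.

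For soundness, suppose $P \vdash Q$ and fix an arbitrary \LOGIC model $(\mathcal{X}, \mathcal{V})$. Since $Com(\mathcal{X})$ is a \LOGIC algebra and $\mathcal{V}$ is persistent, $\mathcal{V}$ is an algebraic assignment into $\mathcal{P}_{\sqsubseteq}(X)$, and its homomorphic extension $\llbracket - \rrbracket$ is an algebraic interpretation. The key observation is that the complex-algebra operations $\bullet_{\mathcal{X}}, \triangleright_{\mathcal{X}}, \multimapdot_{\mathcal{X}}, \rseq_{\mathcal{X}}, \lseq_{\mathcal{X}}, \Rightarrow_{\mathcal{X}}$ are defined by exactly the clauses of \Cref{fig:sat-dibi}; hence a routine induction on formulas gives $\llbracket R \rrbracket = \{ x \in X \mid x \models_{\mathcal{V}} R \}$ for every $R$. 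By \Cref{thm:algebraicsc}, $P \vdash Q$ forces $\llbracket P \rrbracket \leq \llbracket Q \rrbracket$, i.e.\ the truth set of $P$ is contained in that of $Q$, so $P \models Q$ holds in $(\mathcal{X}, \mathcal{V})$. As the model was arbitrary, $P \models Q$.

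For completeness I argue contrapositively: assume $P \not\vdash Q$. By \Cref{thm:algebraicsc} there is a \LOGIC algebra $\mathbb{A}$ (the term algebra) and an interpretation $\llbracket - \rrbracket$ with $\llbracket P \rrbracket \not\leq \llbracket Q \rrbracket$. The prime filter frame $Pr(\mathbb{A})$ is a \LOGIC frame, and $\mathcal{V}_{\llbracket - \rrbracket}(p) := \theta_{\mathbb{A}}(\llbracket p \rrbracket) = \{ F \mid \llbracket p \rrbracket \in F\}$ is a persistent valuation. The representation theorem supplies the truth lemma $F \models_{\mathcal{V}_{\llbracket - \rrbracket}} R \iff \llbracket R \rrbracket \in F$ for all prime filters $F$ and all formulas $R$. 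Since $\llbracket P \rrbracket \not\leq \llbracket Q \rrbracket$, the prime filter theorem for bounded distributive lattices yields a prime filter $F$ with $\llbracket P \rrbracket \in F$ and $\llbracket Q \rrbracket \notin F$; thus $F \models P$ while $F \not\models Q$, witnessing $P \not\models Q$ and completing the contrapositive.

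The genuinely delicate content — and the step I expect to be the main obstacle if it had to be reproved from scratch — is the truth lemma underlying the representation theorem. Its inductive cases for the substructural connectives $\sepand, \depand$ and their adjoints $\sepimp, \depimpr, \depimpl$ rest on prime filter existence/separation lemmas tailored to $\oplus_{\mathbb{A}}$ and $\odot_{\mathbb{A}}$ (for instance, extracting from $a \depand b \in F$ with $F$ prime a pair of prime filters $G \ni a$, $H \ni b$ with $F \in G \odot_{\mathbb{A}} H$, and conversely). The non-commutativity of $\odot$ and the fact that $\depand$ carries only a \emph{right} unit mean these lemmas must be stated and applied with care about the left/right asymmetry, but they follow the established pattern for bunched logics with a non-commutative conjunction, which is why the monomorphism argument for $\theta_{\mathbb{A}}$ can be taken over verbatim from prior work. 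Everything else reduces to bookkeeping: confirming that the two roles of $\llbracket - \rrbracket$ — as a truth set in a complex algebra and as a membership condition in prime filters — agree, which is precisely what the frame/algebra correspondence and the representation theorem guarantee.
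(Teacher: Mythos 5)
Your proposal is correct and takes essentially the same route as the paper: both directions transfer \Cref{thm:algebraicsc} across the frame/algebra correspondence, using the complex algebra $Com(\mathcal{X})$ (whose operations mirror the satisfaction clauses) for soundness, and the prime filter frame $Pr(\mathbb{A})$ with the representation-theorem truth lemma $F \models_{\mathcal{V}_{\llbracket - \rrbracket}} R \iff \llbracket R \rrbracket \in F$ together with prime filter separation for completeness. The only cosmetic difference is that you prove soundness directly where the paper argues both directions contrapositively.
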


\section{Models of \LOGIC} \label{sec:models}
In this section, we introduce two concrete models of \LOGIC to facilitate
logical reasoning about (in)dependence in probability distributions and
relational databases. In both models the operations $\odot$ and $\oplus$ will be
{\em deterministic} partial functions; we write $h = f \bullet g$ instead of $\{
h \} = f \bullet g$, for $\bullet \in \{ \odot, \oplus\}$. We start with some
preliminaries on memories and distributions.

\subsection{Memories, distributions, and Markov kernels}\label{sec:memories}

\subsubsection*{Operations on Memories}
Let $\Val$ be a fixed set of values (e.g., the Booleans), $S$ be a set of
variable names, and let $\R{S}$ denote the set of functions of type $m\colon S
\rightarrow \Val$.  We call such functions \emph{memories} because we can think
of $m$ as assigning a value to each variable in $S$; we will refer to $S$
as the \emph{domain} of $m$.  The only element in $\R{\emptyset}$ is the empty
memory, which we write as $\empmem$.

We need two operations on memories. First, a memory $m$ with domain $S$
can be projected to a memory $m^T$ with domain $T$ if $T \subseteq S$, defined
as $m^T (x) = m(x)$ for all $x \in T$. Second, two memories can be combined if
they agree on the intersection of their domains:
%
%\begin{definition}
%	\label{def:singleton_bowtie_prob}
  given memories $m_1 \in \R{S}$, $m_2 \in \R{T}$ such that $m_1^{S \cap T} =
  m_2^{S \cap T}$, we define $m_1 \otimes m_2\colon S \cup T \rightarrow \Val$ by
  \begin{equation}\label{def:singleton_bowtie_prob}\small
    m_1 \otimes m_2 (x) \ \defeq \begin{cases}
      m_1(x) &\text{if } x \in S \\ %\setminus T\\
      m_2(x) &\text{if } x \in T  %\setminus S \\
      %m_1(x) = m_2(x) &\text{if } x \in S \cap T
    \end{cases}
  \end{equation}
  %This operation is not defined if $m_1, m_2$ disagree on $S \cap T$.
%\end{definition}

\subsubsection*{Probability distributions and Markov kernels}
We use the distribution monad to model distributions over memories. Given a set
$X$, let $\mathcal{D}(X)$ denote the set of finite distributions over $X$, i.e.,
the set containing all finite support functions $\mu\colon X \rightarrow [0,1]$
satisfying $\sum_{x \in X} \mu(x) = 1$. This operation on sets can be lifted to
functions $f\colon X\to Y$, resulting in a map of distributions $\DD(f)\colon
\DD(X) \to \DD(Y)$ given by $ \DD(f)(\mu)(y) \defeq \sum_{f(x) =y} \mu(x)$
(intuitively, $\DD(f)$ takes the sum of the probabilities of all elements in the
pre-image of $y$). These operations turn $\DD$ into a functor on sets and,
further, $\DD$ is also a \emph{monad}~\cite{giry1982categorical, Moggi91}.
\begin{definition}[Distribution Monad]
  Define $\unit \colon X \to \DD(X)$ as $\unit_X(x)\defeq\delta_x$ where
  $\delta_x$ denotes the \emph{Dirac distribution} on $x$: for any $y \in X$, we
  have $\delta_x(y) = 1$ if $y = x$, otherwise $\delta_x(y) = 0$. Further,
  define $\bind\colon \DD (X) \rightarrow (X \to \DD (Y)) \to \DD(Y)$ by
  $\bind(\mu)(f)(y)\defeq \sum_{p \in \DD(Y)} \DD(f)(\mu)(p) \cdot p(y)$.
\end{definition}
Intuitively, $\unit$ embeds a set into distributions over the set, and $\bind$
enables the sequential combination of probabilistic computations. Both maps are
natural transformations and satisfy the following interaction laws, establishing
that $\langle \DD, \unit, \bind \rangle$ is a monad:
\begin{equation}\label{eq:monad}
  \begin{gathered}
    \bind(\unit(x))(f) = f(x),
    \qquad\bind(\mu)(\unit) = \mu,  \\
    \bind(\bind(\mu)(f))(g) = \bind(\mu)(\lambda x. \bind(f(x))(g)).
  \end{gathered}
\end{equation}
The distribution monad has an equivalent presentation in which $\bind$ is
replaced with a multiplication operation $\DD\DD(X) \to \DD(X)$, which flattens
distributions by averaging.

The monad $\DD$ gives rise to the {\em Kleisli category} of $\DD$, denoted
$\Kl(\DD)$, with sets as objects and arrows of the form $f\colon X\to \DD(Y)$,
also known as \emph{Markov kernels}~\cite{Panangaden2009}. Arrow composition in
$\Kl(\DD)$ is defined using $\bind$: given $f\colon X\to \DD(Y)$, $g\colon Y\to
\DD(Z)$, the composition $f\odot g \colon X\to \DD(Z)$ is:
\begin{equation}\label{eq:odot}
(f\odot g) (x) \defeq \bind (f(x))(g)
\end{equation}
Markov kernels generalize distributions: we can lift a distribution $\mu\colon
\DD(X)$ to the kernel $f_\mu \colon 1 \to \DD(X)$ assigning $\mu$ to the single
element of $1$. Kernels can also encode conditional distributions, which play a
key role in conditional independence.

\definecolor{orange}{rgb}{1,0.7,0}
\begin{figure*}[!t]
	% \hrule
%	\vspace{1mm}
%	\includegraphics[scale = 0.22]{coinsameprobTimes.png}
  \begin{subfigure}[b]{0.25\textwidth}
    \begin{center}
\[
\begin{array}{l@{\!\!\!\!\!\!}l}
 \Rand z {\Bern_{1/2}}; \\
\textbf{if}\ z \textbf{ then} \\
&\Rand x {\Bern_{1/4}}; \\
&\Rand y {\Bern_{1/4}}; \\
\phantom{\textbf{if}\ z }\textbf{ else} \\
&\Rand x {\Bern_{1/2}}; \\
&\Rand y {\Bern_{1/2}}
\end{array}\]
\end{center}
\caption{Probabilistic program $p$}
\label{kernels2coins:prog}
\end{subfigure}%
\begin{subfigure}[b]{0.25\textwidth}
  \begin{center}
  \[
\begin{array}{ccc|c}
x&y&z &\mu\\
\hline
\rowcolor{orange}
0&0&0& 1/8\\[-.4ex]
0&0&1& 1/32\\[-.4ex]
\rowcolor{orange}
1&0&0& 1/8\\[-.4ex]
1&0&1& 3/32\\[-.4ex]
\rowcolor{orange}
0&1&0& 1/8\\[-.4ex]
0&1&1& 3/32\\[-.4ex]
\rowcolor{orange}
1&1&0& 1/8\\[-.4ex]
1&1&1& 9/32
\end{array}
\]
\end{center}
\caption{Distribution $\mu$ generated by $p$}
\label{kernels2coins:mu}
\end{subfigure}%
\begin{subfigure}[b]{0.25\textwidth}
  \begin{center}
  \[
\begin{array}{cc|c}
x&y&\mu_0\\
\hline
\rowcolor{orange}
0&0& 1/4\\[-.4ex]
\rowcolor{orange}
1&0& 1/4\\[-.4ex]
\rowcolor{orange}
0&1& 1/4\\[-.4ex]
\rowcolor{orange}
1&1& 1/4
\end{array}
\]
\end{center}
\caption{$\mu$ conditioned on $z = 0$}
\label{kernels2coins:cond0}
\end{subfigure}%
\begin{subfigure}[b]{0.25\textwidth}
  \begin{center}
  \[
\begin{array}{cc|c}
x&y &\mu_1\\
\hline
0&0& 1/16\\[-.4ex]
1&0& 3/16\\[-.4ex]
0&1& 3/16\\[-.4ex]
1&1& 9/16
\end{array}
\]
\end{center}
\caption{$\mu$ conditioned on $z = 1$}
\label{kernels2coins:cond1}
\end{subfigure}
	\caption{From probabilistic programs to kernels}
	\label{kernels2coins}
	% \vspace{1mm}
	% \hrule  \vspace{-5mm}
\end{figure*}
%These conditional distributions are visualized in \cref{kernels2coins}, with the
%vertical axes showing the probability that each Head-Tail combination taking
%place.

\begin{example}\label{example:ci}
Consider the program $p$ in \Cref{kernels2coins:prog}, where $x,y$, and $z$ are
Boolean variables. First, flip a fair coin and store the result in $z$.  If $z =
0$, flip a fair coin twice, and store the results in $x$ and $y$, respectively.
If $z = 1$, flip a coin with bias $1/4$ twice, and store the results in $x$ and
$y$. This program produces a distribution $\mu$, shown in
\Cref{kernels2coins:mu}.

If we condition $\mu$ on $z = 0$, then the resulting distribution $\mu_0$ models
two independent fair coin flips: $1/4$ probability for each possible pair of
outcomes (\Cref{kernels2coins:cond0}). If we condition on $z = 1$, however, then
the distribution $\mu_1$ will be skewed---there will be a much higher
probability that we observe $(1,1)$ than $(0, 0)$, but $x$ and $y$ are still
independent (\Cref{kernels2coins:cond1}).

To connect $\mu_0$ and $\mu_1$ to the original distribution $\mu$, we package
$\mu_0$ and $\mu_1$ into a Markov kernel $k \colon \R{z} \to \DR{\{x,y,z\}}$
given by $k(i)(d)=\mu_i (d^{\{x,y\}})$.
Then, the relation between the conditional and original distributions is $f_\mu
= f_{\mu_z}\odot k$, where $\mu_z$ is the projection of $\mu$ on $\{z\}$.
\end{example}

Finite distributions of memories over $U$, denoted $\DR{U}$, will play a central
role in our models. We will refer to maps $f\colon \R{S} \rightarrow \DR{U}$ as
\emph{(Markov) kernels}, and define $\dom(f) = S$ and $\range(f) = U$.

We can marginalize/project kernels to a smaller range.
% Not yet requiring preserve the input
\begin{definition}[Marginalizing kernels]
  \label{def:marginalize_prob}
  For a Markov kernel $f \colon \R{S} \rightarrow \DR{U}$ and $V \subseteq U$,
  the {\em marginalization of $f$ by V} is the map $\pi_V f \colon \R{S}
  \rightarrow \DR{V}$: \((\pi_V f) (d)(r) \defeq \sum_{m \in \R{U \setminus V}}
  f(d)(r \otimes m)\) for $d \in \R{S}, r \in \R{V}$; undefined terms do not
  contribute to the sum.
\end{definition}

We say a kernel $f \colon \R{S} \rightarrow \DR{U}$ \emph{preserves its input
to its output} if $S \subseteq U$ and $\pi_S f = \unit_{\R{S}}$.  Intuitively,
such kernels are suitable for encoding conditional distributions: once a
variable has been conditioned on, its value should not change. We can compose
these kernels in two ways.

\begin{definition}[Composing Markov kernels on memories]
  Given $f\colon \R{S} \rightarrow \DR{T}$ and $g\colon \R{U} \rightarrow
  \DR{V}$ that preserve their inputs, we define their \emph{parallel
  composition}, whenever $S\cap U=T\cap V$, as the map $ f\oplus g \colon
  \R{S\cup U} \to \DR{T\cup V}$ given by \[(f \oplus g) (d) (m) \defeq
  f(d^S)(m^T)  \cdot g(d^U) (m^V). \] If $T=U$, the \emph{sequential
  composition} $f \odot g \colon \R{S} \to \DR{V}$ is just Kleisli composition
  (\cref{eq:odot}).
\end{definition}

\subsection{A concrete probabilistic model of \LOGIC}\label{sec:probmodel}

We now have all the ingredients to define a first concrete model: states are
Markov kernels that preserve their input; $\oplus$ (resp. $\odot$) will be
parallel (resp. sequential) composition. The use of $\oplus$ to model
independence generalizes the approach in \citet{barthe2019probabilistic}.
Combining both compositions---sequential and parallel---enables capturing
conditional independence.

\begin{definition}[Probabilistic frame]
  \label{def:probmodel}
  We define the frame $(\MD, \sqsubseteq, \oplus, \odot, \MD)$ as follows:
  \begin{itemize}
    \item Let $\MD$ consist of Markov kernels that preserve their input to their
      output;
    \item  $\oplus$, $\odot$ are parallel and sequential composition of
      kernels;
    \item Given $f, g \in \MD$, $f \sqsubseteq g$ if there exist $R \subseteq
      \Val$, $h \in \MD$ such that $g = (f \oplus \unit_{\R{R}}) \odot h$.
  \end{itemize}
\end{definition}

We make two remarks. First, $f\sqsubseteq g$ holds when $g$ can be obtained from
extending $f$: compose $f$ in parallel with $\unit_{\R{R}}$, then extend the
range via composition with $h$. We can recover $f$ from $g$ by marginalizing $g$
to $\range(f) \cup R$, then ignoring the $R$ portion. Second, the definition of
$f \odot g$ on $\MD$ can be simplified. Given $f\colon \R{S} \to \DR{T}$ and
$g\colon \R{T} \to \DR{V}$, \cref{eq:odot} yields the formula:
\[
  (f \odot g)(d)(m) \defeq \sum_{m' \in \R{T}} f(d)(m') \cdot g(m')(m) .
\]
Since $f,g \in \MD$ preserve input to output, this reduces to
\begin{equation}\label{eq:simplemu}
  (f \odot g)(d)(m) = f(d)(m^{T}) \cdot g(m^T)(m^V) .
\end{equation}
We show that our probabilistic frame is indeed a \LOGIC frame.
\begin{restatable}{theorem}{MdisLOGIC} \label{MdisLOGIC}
  $(\MD, \sqsubseteq, \oplus, \odot, \MD)$ is a \LOGIC frame.
\end{restatable}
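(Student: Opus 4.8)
The plan is to verify each frame condition of \Cref{fig:dibi_rules} directly against the definitions of $\oplus$, $\odot$, and $\sqsubseteq$ from \Cref{def:probmodel}. Since both operations are deterministic partial functions, every condition becomes a statement of the form ``whenever the left-hand composites are defined, the right-hand composites are defined and the asserted equality or order relation holds.'' I would first record the bookkeeping on domains and ranges: $f \oplus g$ is defined exactly when $\dom(f)\cap\dom(g) = \range(f)\cap\range(g)$, with domain $\dom(f)\cup\dom(g)$ and range $\range(f)\cup\range(g)$; and $f\odot g$ is defined exactly when $\range(f)=\dom(g)$, with domain $\dom(f)$ and range $\range(g)$. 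I would also check the closure lemma that $\MD$ is closed under both operations, i.e. that $f\oplus g$ and $f\odot g$ again preserve input: this is a short computation using $\pi_S f = \unit_{\R S}$ and the simplified formula \eqref{eq:simplemu}.

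The technical heart of the argument is the \emph{deterministic exchange equality}: whenever the left side is defined, $(y_1\odot y_2)\oplus(z_1\odot z_2) = (y_1\oplus z_1)\odot(y_2\oplus z_2)$. I would prove this in two steps. First, domain/range alignment: writing $y_i,z_i$ with domains $S_j,T_j$ and ranges as appropriate, input-preservation gives the inclusion chains $S_1\subseteq T_1\subseteq V_1$ and $S_2\subseteq T_2\subseteq V_2$, so that $S_1\cap S_2 \subseteq T_1\cap T_2 \subseteq V_1\cap V_2$; since the outer $\oplus$ forces $S_1\cap S_2 = V_1\cap V_2$, all three intersections coincide, which is exactly what makes $y_1\oplus z_1$, $y_2\oplus z_2$, and their $\odot$-composite all defined. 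Second, a pointwise computation: evaluating both sides at a memory $d$ and output $m$ via \eqref{eq:simplemu} expresses each side as the product of the same four kernel values $y_1(d^{S_1})(m^{T_1})$, $y_2(m^{T_1})(m^{V_1})$, $z_1(d^{S_2})(m^{T_2})$, $z_2(m^{T_2})(m^{V_2})$, so they agree by commutativity of multiplication, using $T_j\subseteq V_j$ to simplify nested projections.

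With this equality in hand, the remaining conditions cascade. Reflexivity of $\sqsubseteq$ is witnessed by $R=\emptyset$, $h=\unit_{\R{\range(f)}}$, and transitivity follows by combining two extension witnesses using the exchange equality together with the idempotence $\unit_{\R R}\odot\unit_{\R R}=\unit_{\R R}$ and $\unit_{\R R}\oplus\unit_{\R{R'}}=\unit_{\R{R\cup R'}}$; this establishes that $\sqsubseteq$ is a preorder. Commutativity and associativity of $\oplus$ follow from commutativity/associativity of multiplication in $[0,1]$ and symmetry of the side-conditions, while $\odot$-associativity is just associativity of Kleisli composition, i.e. the monad law in \eqref{eq:monad}. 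For the units I take $\unit_{\R\emptyset}$ as the $\oplus$-unit and, for a kernel of domain $S$ and range $U$, the identity kernels $\unit_{\R S}$ and $\unit_{\R U}$ as left and right $\odot$-units; all lie in $\MD$, and since $E=\MD$ the Unit Closure condition is immediate. The coherence conditions reduce to $y\sqsubseteq y\oplus e$ and $y\sqsubseteq y\odot e$, obtained by exhibiting the witnessing $R$ and $h$. Finally, the Up-Closed and Down-Closed conditions both follow from the exchange equality: for Up-Closed, $z' = (z\oplus\unit_{\R R})\odot h$ is rewritten as $(x\oplus\unit_{\R R})\odot\bigl((y\oplus\unit_{\R R})\odot h\bigr)$, yielding the required $x'\sqsupseteq x$ and $y'\sqsupseteq y$, and Down-Closed is the dual rewriting; Reverse Exchange itself is exactly the existence form of the equality.

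The main obstacle is establishing the deterministic exchange equality cleanly, specifically the domain/range alignment showing the three intersection sets coincide --- this is precisely where input-preservation is indispensable. Once that equality is proved, essentially all interaction between $\oplus$ and $\odot$ (the preorder laws, the Closed conditions, and Reverse Exchange) is handled uniformly, so I expect to spend most of the care verifying that alignment and confirming that the constructed witnesses genuinely remain in $\MD$.
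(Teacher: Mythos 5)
Your proposal is correct and follows essentially the same route as the paper's own proof: the paper likewise reduces everything to (i) closure of $\MD$ under $\oplus$ and $\odot$ and (ii) the Exchange Equality in exactly your definedness direction ($\oplus$-of-$\odot$'s defined implies the $\odot$-of-$\oplus$'s is defined and equal, proved by the same intersection-alignment argument from input preservation), and then derives the preorder laws, the Up-/Down-Closed conditions, the unit and coherence axioms, and Reverse Exchange by the same rewritings you describe; the only difference is organizational, in that the paper factors the axiom cascade through an abstract ``$\T$-model'' interface so the verification can be reused verbatim for the relational frame $\MP$. The one step you under-sell is $\oplus$-associativity: its definedness transfer is not ``symmetry of the side-conditions'' but a separate set-theoretic lemma in the paper, again requiring the inclusions $\dom(f_i) \subseteq \range(f_i)$ --- the same alignment technique you used for exchange, so it is easily repaired but not free.
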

\begin{proof}[Proof sketch]
  First, we show that $\MD$ is closed under $\oplus$ and $\odot$, and
  $\sqsubseteq$ is transitive and reflexive. The frame axioms are mostly
  straightforward, but some conditions rely on a property of our model we call
  \emph{Exchange Equality}: if both $(f_1 \oplus f_2) \odot (f_3 \oplus f_4)$
  and $(f_1 \odot f_3) \oplus (f_2 \odot f_4)$ are defined, then they are equal,
		and if the second is defined, then so is the first.
  For example:
  \begin{description}
    \item [($\oplus$ Unit Coherence):]
      The unit set in this frame is the entire state space $\MD$: we must show
      that for any $f_1, f_2 \in \MD$, if $f_1 \oplus f_2$ is defined, then $f_1
      \sqsubseteq f_1 \oplus f_2$:
      \begin{align*}
        f_1 \oplus f_2 &= (f_1 \odot \unit_{\range(f_1)}) \oplus (\unit_{\dom(f_2)} \odot f_2) \\
                       &= (f_1 \oplus \unit_{\dom(f_2)}) \odot (\unit_{\range(f_1)} \oplus f_2) \tag{\scriptsize Exch. Eq.}\\
                       &= (f_1 \oplus \unit_{\dom(f_2)}) \odot (f_2 \oplus \unit_{\range(f_1)}) \tag{\scriptsize $\oplus$ Comm.}
      \end{align*}
      %				So by definition of the preorder, $f_1 \sqsubseteq f_1 \oplus f_2$.

      %      \item [(Reverse Exchange):]
      %        We need to show: given $x \sqsupseteq x' = y \oplus z$ and $y
      %        \sqsupseteq y_1 \odot y_2$, $z \sqsupseteq z_1 \odot z_2$, there exists
      %
      %        By ($\oplus$ Down-Closed), from $y_1 \odot y_2 \sqsubseteq y$, $z_1
      %        \odot z_2 \sqsubseteq z$, and the fact that $y \oplus z$ is defined we obtain that $(y_1
      %        \odot y_2) \oplus (z_1 \odot z_2)$ is defined and $(y_1 \odot y_2) \oplus (z_1 \odot z_2) \sqsubseteq y \oplus z \sqsubseteq x$.
      %        By reverse exchange equality, we know that $ (y_1 \oplus z_1) \odot (y_2 \oplus z_2)$
      %        is also defined, and $(y_1 \odot y_2) \oplus (z_1 \odot z_2) = (y_1 \oplus z_1) \odot ( y_2 \oplus z_2)$.
      %        Thus, $(y_1 \oplus z_1) \odot (y_2 \oplus  z_2)  \sqsubseteq y \oplus z
      %        \sqsubseteq x$, and so (Reverse Exchange) is satisfied by
      %        taking $u = y_1 \oplus z_1$, $v = y_2 \oplus z_2$.
  \end{description}
  We present the complete proof in
  \iffull
    \Cref{app:probabilistic}.
  \else
    \cite{extended}.
  \fi
\end{proof}

\begin{example}[Kernel decomposition]
  Recall the distribution $\mu$ on $\R{\{x, y ,z \}} $ from \Cref{example:ci}. Let
  $k_x \colon \R{z} \to \DR{\{ x, z\}}$ encode the conditional distribution of $x$
  given $z$, and let $k_y \colon \R{z} \to \DR{ \{y, z\} }$ encode the conditional
  distribution of $y$ given $z$. Explicitly, for $v = x \text{ or } y$,
  \begin{align*}
    k_v(z = 0)(v = 1, z = 0) &= 1/2 & k_v(z = 0)(v = 0, z = 0) &= 1/2 \\
    k_v(z = 1)(v = 1, z = 1) &= 1/4 & k_v(z = 1)(v = 0, z = 1) &= 3/4.
  \end{align*}
  Since $k_x,k_y$ include $z$ in their range, $k_x \oplus k_y$ is defined.  A
  small calculation shows that $k_x \oplus k_y = k$, where $k\colon \R{z} \to
  \DR{ \{x, y, z\} }$ is the conditional distribution of $(x, y, z)$ given $z$.
  This decomposition shows that $x$ and $y$ are independent conditioned on $z$
  (we shall formally prove this later in~\cref{section:condindep}).
\end{example}

\subsection{Relations, join dependency, and powerset kernels}
We developed the probabilistic model in the previous section using operations
from the distribution monad $\DD$. Instantiating our definitions with operations
from other monads gives rise to other interesting models of \LOGIC. In this
section, we develop a \emph{relational} model based on the powerset monad $\PP$,
and show how our logic can be used to reason about join dependency properties of
tables from database theory. Before we present our relational model, we
introduce some notations and basic definitions on relations.

%	\subsubsection*{Operations on relations}
Tables are often viewed as \emph{relations}---sets of tuples where each
component of the tuple corresponds to an \emph{attribute}. Formally, a relation
$R$ over a set of attributes $S$ is a set of \emph{tuples} indexed by $S$. Each
tuple maps an attribute in $S$ to a value in $\Val$, and hence can be seen as a
memory in $\R{S}$, as defined in \Cref{sec:memories}.  The projection and
$\otimes$ operations on $\R{S}$ from \Cref{def:singleton_bowtie_prob} can be
lifted to relations.
\begin{definition}[Projection and Join]
  The \emph{projection} of a relation $R$ over attributes $X$ to $Y \subseteq X$
  is given by $R^{Y} \defeq \{ r^Y \mid r \in R\} $.  The \emph{natural join} of
  relations $R_1$ and $R_2$ over attributes $X_1$ and $X_2$, respectively, is
  the relation $R_1 \bowtie R_2 \defeq \{ m_1 \otimes m_2 \mid m_1 \in R_1
  \text{ and } m_2 \in R_2 \}$ over attributes $X_1 \cup X_2$.
\end{definition}

Since tables can often be very large, finding compact representations for them
is useful. These representations can leverage additional structure common in
real-world databases; for instance, the value of one attribute might determine
the value of another, a so-called \emph{functional dependency}. Other dependency
structures can enable a large relation to be factored as a combination of
smaller ones. A classical example is on \emph{join dependency}, a relational
analogue of conditional independence.

\begin{definition}[Join dependency \citep{DBLP:journals/tods/Fagin77,abiteboul1995foundations}]
  A relation $R$ over attribute set $X_1 \cup X_2$ satisfies the \emph{join
  dependency} $X_1 \bowtie X_2$ if $R = (R^{X_1}) \bowtie (R^{X_2})$.
\end{definition}

\begin{example}[Decomposition]\label{example:join}
  Consider the relation $R$ in \Cref{ex:joindep}, with three attributes:
  \Researcher, \Field, and \Conference. $R$ contains triple $(a,b,c)$ if and
  only if researcher $a$ works in field $b$ and attends conference $c$. If we
  know that researchers in the same field all have a shared set of conferences
  they attend, then we can recover $R$ by joining two relations: one associating
  researchers to their fields, and another associating fields to conferences.
  As shown below, $R$ satisfies the join dependency $\{\Researcher, \Field\}
  \bowtie \{\Conference, \Field\}$. While the factored form is only a bit
  smaller (12 entries instead of 15), savings can be significant for larger
  relations.

  \begin{figure*}[!h]\small
    \begin{align*}\label{ex:joindep}
      \underbrace{%
        \left(
          \begin{tabular}{lll}
            \Researcher & \Field & \Conference \\
            Alice &  Theory & LICS \\
            Alice &  Theory & ICALP \\
            Bob &  Theory & LICS \\
            Bob &  Theory & ICALP \\
            Alice &  DB & PODS
          \end{tabular}
      \right)}_{R}
      =
      \underbrace{%
        \left(
          \begin{tabular}{ll}
            \Field & \Conference \\
            Theory & LICS \\
            Theory & ICALP \\
            DB & PODS
          \end{tabular}
      \right)}_{R_1}
      \bowtie
      \underbrace{%
        \left(
          \begin{tabular}{ll}
            \Field & \Researcher \\
            Theory & Alice \\
            Theory & Bob \\
            DB & Alice
          \end{tabular}
      \right)}_{R_2}
    \end{align*}
    \caption{Factoring a relation}
    \label{ex:joindep}
  \end{figure*}
\end{example}

\subsubsection*{Powerset monad and kernels}
Much like how we decomposed distributions as Markov kernels---Kleisli arrows for
the distribution monad---we will decompose relations using Kleisli arrows for
the powerset monad, $\Kl(\PP)$.
\begin{definition}[Powerset monad]
  Let $\PP$ be the endofunctor $\mathbf{Set} \rightarrow \mathbf{Set}$ mapping
  every set to the set of its subsets $\PP(X) = \{U \mid U\subseteq X\}$. We
  define $\unit_X\colon X \rightarrow \PP(X)$ mapping each $x \in X$ to the
  singleton $\{x\}$, and $\bind \colon \PP(X) \to (X \to \PP(Y)) \to \PP(Y)$
  by $\bind (U)(f) \defeq \cup \{ y \mid \exists x\in U.  f(x) = y\}$.
\end{definition}
The triple $\left<\PP, \unit, \bind\right>$ forms a monad, and obeys the laws in
\Cref{eq:monad}. We overload the use of $\unit$ and $\bind$ as it will be clear
from the context which monad, powerset or distribution, we are considering. The
Kleisli category $\Kl(\PP)$ is defined analogously as for $\DD$, with sets as
objects and arrows $X\to \PP(Y)$, and composition given as in \Cref{eq:odot}.

Like before, we consider maps $\R{S} \rightarrow \PR{T}$, which we call {\em
powerset kernels} in analogy to Markov kernels, or simply kernels when the monad
is clear from the context. Powerset kernels can also be projected to a smaller
range.
\begin{definition}[Marginalization]
  \label{def:Mp_marginalization}
  Suppose that $T \subseteq U$.  A map $f$ of type $\R{S} \rightarrow \PR{U}$
  can be marginalized to $\pi_T f \colon \R{S} \to \PR{T}$ by defining:
  $(\pi_T f) (s) \defeq f(s)^{T}$
\end{definition}

We need two composition operations on powerset kernels. We say that powerset
kernel $f\colon  \R{S} \rightarrow \PR{S \cup T}$ \emph{preserves input to
output} if $\pi_S f = \unit_{\R{S}}$.

\begin{definition}[Composition of powerset kernels]
  Given kernels $f\colon \R{S} \rightarrow \PR{T}$ and $g \colon \R{U}
  \rightarrow \PR{V}$ that preserve input to output, we define their {\em
  parallel composition} whenever $T\cap V= S\cap U$ as the map $f \oplus g
  \colon \R{S\cup U}\to \PR{T \cup V}$ given by $(f \oplus g) (d) \defeq f(d^S)
  \bowtie g(d^U)$.  Whenever $T=U$ we define the {\em sequential composition} $f
  \odot g\colon \R{S} \to \PR{V}$ using Kleisli composition. Explicitly: $(f
  \odot g)(s)  =  \{ v \mid u \in f(s) \text{ and } v \in g(u)\}$.
\end{definition}
%    \begin{example}
%    To illustrate the operations on kernels, we revisit \Cref{example:join} and define kernels
%    $k_0\colon \R{\emptyset} \rightarrow \PR{\Field}$, $k_1\colon \R{\Field}
%    \rightarrow \PR{\{\Field, \Researcher\}}$, and $k_2 \colon \R{\text{\Field}}
%    \rightarrow \PR{\{\Field, \Conference\}}$ via:
%    %
%		\begin{align*}
%      k_0(\empmem) &\defeq \{\text{(\Field = PL, \Field = DB)}\} \\
%      k_1(\text{\Field = PL}) &\defeq \{\text{(\Field = PL, \Researcher = Bob), (\Field = PL, Research = Alice)}\} \\
%      k_1(\text{\Field = DB}) &\defeq \{\text{(\Field = DB, \Researcher = Alice)}\} \\
%      k_2(\text{\Field = PL}) &\defeq \{\text{(\Field = PL, \Conference = PLDI), (\Field = PL, \Conference = POPL)}\} \\
%      k_2(\text{\Field = DB}) &\defeq \{\text{(\Field = DB, \Conference = PODS)}\}
%		\end{align*}
%    %
%
%    A small calculation yields that $(k_0 \odot k_1)(\empmem)$ is the relation $R_1$ on \Field\ and
%    \Conference,\ and $(k_0 \odot k_1)(\empmem)$ is the relation $R_2$ on \Field,\    and
%    \Researcher\ from \Cref{ex:joindep}. Moreover,  using the fact that $R_1\bowtie R_2 =R$, we can also show that
%    $(k_0 \odot (k_1 \oplus k_2))(\empmem) = R$.
%\end{example}

\subsection{A concrete relational model of \LOGIC}\label{sec:relmodel}
We can now define the second concrete model of \LOGIC: states will be powerset
kernels, and we will use the parallel and sequential composition in a
construction similar to $\MD$.

\begin{definition} [Relational frame]
  We define the frame $(\MP, \sqsubseteq,
  \oplus, \odot, \MP)$ as follows:
  \begin{itemize}[leftmargin=*]
    \item $\MP$ consists of powerset kernels preserving input to output;
    \item $\oplus$, $\odot$ are parallel and sequential composition of powerset kernels;
    \item Given $f, g \in \MP$, $f \sqsubseteq g$ if there exist $R \subseteq \Val$, $h \in \MP$
      such that $g = (f \oplus \unit_{\R{R}}) \odot h$.
  \end{itemize}
\end{definition}
Like in $\MD$, $f \sqsubseteq g$ iff $g$ can be obtained from $f$ by adding
attributes that are preserved from domain to range, and then mapping tuples in
the range to relations over a larger set of attributes. We can recover $f$ from
$g$ by marginalizing to $\range(f) \cup R$, and then ignoring the attributes in
$R$.

$\MP$ is also a \LOGIC frame.
\begin{restatable}{theorem}{thmrelframe}
  \label{thm:relframe}
  $(\MP, \sqsubseteq, \oplus, \odot, \MP)$ is a \LOGIC frame.
\end{restatable}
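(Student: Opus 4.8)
The plan is to mirror the proof of \Cref{MdisLOGIC} for the probabilistic frame, exploiting that both models are built from input-preserving Kleisli arrows and differ only in the underlying monad ($\PP$ rather than $\DD$); throughout, the natural join $\bowtie$ plays the role that the pointwise product played in $\MD$, and relational composition in $\Kl(\PP)$ plays the role of Kleisli composition. I would first discharge the two structural prerequisites. For closure, I check that if $f,g \in \MP$ preserve input to output then so do $f \oplus g$ and $f \odot g$: in the parallel case, $\pi_{S \cup U}(f \oplus g)(d) = (f(d^S) \bowtie g(d^U))^{S \cup U} = \{d\}$ because $\pi_S f = \unit$ and $\pi_U g = \unit$ force every joined tuple to restrict to $d$ on $S \cup U$; in the sequential case, any $v \in (f \odot g)(s)$ restricts as $v^S = (v^T)^S = s$ using input-preservation of $g$ and then of $f$. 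For the preorder, reflexivity is witnessed by $R = \emptyset$ and $h = \unit_{\range(f)}$, since then $(f \oplus \unit_{\R{\emptyset}}) \odot h = f$; transitivity requires collapsing two witnessing decompositions of the form $(f' \oplus \unit_{\R{R}}) \odot h$ into one, which again leans on the interaction between $\oplus$ and $\odot$ described below.

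The heart of the argument is to isolate the relational analogue of \emph{Exchange Equality}: whenever both $(f_1 \oplus f_2) \odot (f_3 \oplus f_4)$ and $(f_1 \odot f_3) \oplus (f_2 \odot f_4)$ are defined they coincide, and definedness of the second forces definedness of the first. In $\MD$ this rested on multiplicativity of the product formula for $\oplus$ together with the sum-over-intermediate-memories formula for $\odot$; in $\MP$ the corresponding statement is a distributive law relating $\bowtie$ and relational composition. I would prove it by a direct tuple chase: unfold both sides into set-builder form and show a tuple lies in the left-hand expression iff it lies in the right-hand one, using that the compatibility side-conditions of shape $T \cap V = S \cap U$ for each parallel composition guarantee the relevant memories agree on shared attributes, so that every intervening join is well-defined.

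With Exchange Equality in hand, the remaining axioms of \Cref{fig:dibi_rules} follow as in $\MD$. The ``Closed'' conditions, commutativity and associativity of $\oplus$, and associativity of $\odot$ are routine consequences of the corresponding properties of $\bowtie$ and of composition in $\Kl(\PP)$. The unit axioms use that the unit set is all of $\MP$: $\unit$-kernels are right units for $\odot$ and parallel units for $\oplus$, and (Unit Closure) holds because any extension of an element of $\MP$ is again in $\MP = E$. The two delicate cases are exactly those driven by Exchange Equality: for ($\oplus$ Unit Coherence) I would reproduce verbatim the three-line calculation from the $\MD$ sketch with $\bowtie$ in place of the probabilistic product, exhibiting $f_1 \oplus f_2 = (f_1 \oplus \unit_{\dom(f_2)}) \odot (f_2 \oplus \unit_{\range(f_1)})$ and hence $f_1 \sqsubseteq f_1 \oplus f_2$; and for (Reverse Exchange), given $x = y \oplus z$, $y = y_1 \odot y_2$, $z = z_1 \odot z_2$, I would set $u = y_1 \oplus z_1$ and $v = y_2 \oplus z_2$ and use Exchange Equality to rewrite $x = (y_1 \odot y_2) \oplus (z_1 \odot z_2)$ as $u \odot v$, where the same lemma also guarantees that $u \odot v$ is defined.

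The main obstacle I anticipate is the relational Exchange Equality itself, and specifically the domain/range bookkeeping needed to keep all four kernels simultaneously composable. Unlike the probabilistic case, where definedness reduces to marginals agreeing and equality is then an arithmetic identity, here I must track four attribute sets at once and verify that the several compatibility conditions of the shape $T \cap V = S \cap U$ are mutually consistent, so that the tuple-level identity between the nested join/composition expressions genuinely holds on the nose rather than failing through a mismatch of attribute sets.
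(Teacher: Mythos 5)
Your proposal is correct and takes essentially the same approach as the paper: the paper likewise reduces the theorem to closure of $\MP$ under $\oplus$ and $\odot$, the preorder properties, and the relational Exchange Equality (proved by the same set-builder tuple chase using the compatibility conditions on domains and ranges), from which all frame axioms---including ($\oplus$ Unit Coherence) and (Reverse Exchange) via exactly the calculations you describe---follow as for $\MD$. The only difference is presentational: rather than replaying the $\MD$ proof by hand, the paper packages the shared argument into an abstract ``$\T$-model'' definition (padding equality, commutativity, associativity, Exchange Equality, closure, and the decomposition-based preorder), verifies those conditions for the powerset monad $\PP$, and then invokes a single general lemma stating that every $\T$-model is a \LOGIC frame.
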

\begin{proof}[Proof sketch.]
  The proof follows \cref{MdisLOGIC} quite closely, since $\MP$ also satisfies
  Exchange equality. We present the full proof in
  \iffull
    \cref{app:relational}.
  \else
    \cite{extended}.
  \fi
\end{proof}

\section{Application: Modeling Conditional and Join Dependencies} \label{sec:CI}
In our concrete models, distributions and relations can be factored into simpler
parts. Here, we show how \LOGIC formulas capture conditional independence and
join dependency.

\subsection{Conditional independence}
\label{section:condindep}

Conditional independence (CI) is a well-studied notion in probability theory
and statistics~\citep{dawid1979}. While there are many interpretations of
CI, a natural reading is in terms of \emph{irrelevance}: $X$ and $Y$ are
independent conditioned on $Z$ if knowing the value of $Z$ renders $X$
irrelevant to $Y$---observing one gives no further information about the other.

Before defining CI, we introduce some notations. Let $\mu \in
\DR{\Var}$ be a distribution. For any subset $S \subseteq \Var$ and
assignment $s \in \R{S}$, we write:
\[
  \mu(S = s) \defeq \sum_{m \in \R{\Var}} \mu(s \otimes m) .
\]
Terms with undefined $s \otimes m$ contribute zero to the sum. We can now define
conditional probabilities:
\[
  \mu(S = s \mid S' = s') \defeq \frac{\mu(S = s, S' = s')}{\mu(S' = s')} ,
\]
where $\mu(S = s, S' = s') \defeq \mu( S \cup S' = s \otimes s')$.
Intuitively, this ratio is the probability of $S = s$ given $S' = s'$, and
it is only defined when the denominator is non-zero and $s, s'$ are
consistent (i.e., $s \otimes s'$ is defined). CI can be defined as follows.

\begin{definition}[Conditional independence]
  Let $X, Y, Z \subseteq \Var$. $X$ and $Y$ are \emph{independent
  conditioned on $Z$}, written $\pearlindep{X}{Y}{Z}$, if for all $x \in
  \R{X}$, $y \in \R{Y}$, and $z \in \R{Z}$:
  \[
    \mu(X = x \mid Z = z) \cdot \mu(Y = y \mid Z = z) = \mu(X = x, Y = y \mid Z = z) .
  \]
  When $Z = \emptyset$, we say $X$ and $Y$ are \emph{independent}, written $X \ci Y$.
\end{definition}

\begin{example}
  We give two simple examples of CI.
  \paragraph*{Chocolate and Nobel laureates}
  Researchers found a strong positive correlation between a nation's per capita
  Nobel laureates number and chocolate consumption.  But the correlation may be
  due to other factors, e.g., a nation's economic status. A simple check is to
  see if the two are conditionally independent fixing the third factor.

  \paragraph*{Algorithmic fairness}
  To prevent algorithms from discriminating based on sensitive features (e.g.,
  race and gender), researchers formalized notions of fairness using conditional
  independence~\cite{barocas-hardt-narayanan}. For instance, let $A$ be the
  sensitive features, $Y$ be the target label, and $\wh{Y}$ be the algorithm's
  prediction for $Y$. Considering the joint distribution of $(A, Y, \wh{Y})$, an
  algorithm satisfies \emph{equalized odds} if $\pearlindep{\wh{Y}}{A}{Y}$;
  \emph{calibration} if $\pearlindep{Y}{A}{\wh{Y}}$.
\end{example}

We will define a \LOGIC formula $P$ such that a distribution $\mu$ satisfies
$\pearlindep{X}{Y}{Z}$ if and only if its lifted kernel $f_\mu \defeq \empmem \mapsto f$
satisfies $P$. For this, we will need a basic atomic
proposition which describes the domain and range of kernels.

\begin{definition}[Basic atomic proposition]
  For sets of variables $A, B \subseteq \Var$, a basic atomic proposition
  has the form $\pair{A}{B}$. We give the following semantics to these
  formulas:
  \begin{align*}
    f \models \pair{A}{B} \text{ iff } &\text{there exists } f' \sqsubseteq f \\
                                       &\text{such that } \dom(f') = A \text{ and } \range(f') \supseteq B .
  \end{align*}
\end{definition}

For example, $f \colon \R{y} \to \DR{y, z}$ defined by $f(y \mapsto v) \defeq
\dunit(y \mapsto v, z \mapsto v)$ satisfies $\pair{y}{y}$, $\pair{y}{z}$,
$\pair{y}{\emptyset}$, $\pair{y}{y,z}$, $\pair{\emptyset}{\emptyset}$, and no
other atomic propositions.

\begin{restatable}{theorem}{theoprob}
  \label{theo:prob}
  Given distribution $\mu \in \DR{\Var}$, then for any $X, Y, Z \subseteq \Var$,
  \begin{equation}\label{eq:ci_form}
    f_\mu \models \pair{\emptyset}{Z} \depand \pair{Z}{X} \sepand \pair{Z}{Y}
  \end{equation}
  if and only if $\pearlindep{X}{Y}{Z}$ and $X \cap Y \subseteq Z$ are both satisfied.
\end{restatable}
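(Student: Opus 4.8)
The plan is to prove both directions by unfolding the satisfaction clauses for $\depand$ and $\sepand$ (which are deterministic in this model, so $\odot$ and $\oplus$ are partial functions) into explicit kernel decompositions of $f_\mu$, and then matching those decompositions against marginal and conditional distributions of $\mu$. Concretely, $f_\mu \models \pair{\emptyset}{Z} \depand (\pair{Z}{X}\sepand\pair{Z}{Y})$ unfolds to: there are kernels $g,k$ with $f_\mu = g \odot k$ and $g \models \pair{\emptyset}{Z}$, and a further decomposition $k \sqsupseteq a \oplus b$ with $a \models \pair{Z}{X}$, $b \models \pair{Z}{Y}$.

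For the ($\Leftarrow$) direction I would argue constructively. Let $\mu_Z = \pi_Z \mu$ and take $g = f_{\mu_Z}$, the lifted $Z$-marginal (so $\dom(g) = \emptyset$, $\range(g) = Z$); take $a \colon \R{Z}\to\DR{X\cup Z}$ sending $z$ to the conditional $\mu(X = {\cdot} \mid Z = z)$, and $b$ the analogous kernel for $Y$. Both are input-preserving, and $g \models \pair{\emptyset}{Z}$, $a \models \pair{Z}{X}$, $b \models \pair{Z}{Y}$ are witnessed by the kernels themselves. The hypothesis $X \cap Y \subseteq Z$ is exactly what makes $a \oplus b$ defined: the definedness condition $\dom(a)\cap\dom(b) = \range(a)\cap\range(b)$ reads $Z = (X\cup Z)\cap(Y\cup Z) = (X\cap Y)\cup Z$. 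The hypothesis $\pearlindep{X}{Y}{Z}$ then guarantees, through the product formula, that $a \oplus b$ coincides with the conditional of $X\cup Y\cup Z$ given $Z$. Finally, taking $k$ to be the full conditional of $\Var$ given $Z$, the chain rule (mirroring $f_\mu = f_{\mu_Z}\odot k$ of the running example and \eqref{eq:simplemu}) gives $g \odot k = f_\mu$, and $a\oplus b \sqsubseteq k$ via $R = \emptyset$ with $h$ the conditional of $\Var$ given $X\cup Y\cup Z$; hence $k \models \pair{Z}{X}\sepand\pair{Z}{Y}$ and $f_\mu$ satisfies \eqref{eq:ci_form}.

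For the ($\Rightarrow$) direction I would start from the witnesses above. The first move is to descend to the substates $a' \sqsubseteq a$, $b' \sqsubseteq b$ promised by the atomic propositions, with $\dom(a') = \dom(b') = Z$, $\range(a')\supseteq X$ and $\range(b')\supseteq Y$; by ($\oplus$ Down-Closed) the composite $a'\oplus b'$ is defined and $a'\oplus b' \sqsubseteq k$. Definedness of $a'\oplus b'$ now forces $\range(a')\cap\range(b') = Z$, and since $X\subseteq\range(a')$ and $Y\subseteq\range(b')$ this already yields $X\cap Y\subseteq Z$. For the independence, I would use input preservation and \eqref{eq:simplemu} to see that $f_\mu = g\odot k$ forces $g$ to carry the $\range(g)$-marginal of $\mu$ and $k$ the matching conditional; then combine $a'\oplus b'\sqsubseteq k$ (so marginalizing $k$ to $\range(a'\oplus b')$ recovers $a'\oplus b'$, by the recovery-by-marginalization remark following the definition of the probabilistic frame) with the law of total probability to identify $a'\oplus b'$ at each $z$ with the conditional $\mu(\range(a')\cup\range(b') = {\cdot} \mid Z = z)$. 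Marginalizing the $\oplus$ product structure onto $X\cup Y$ and using $\range(a')\cap\range(b') = Z$ then delivers exactly the CI product equation.

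The main obstacle is the bookkeeping in the ($\Rightarrow$) direction, where two points need care. First, $\range(g)$ may strictly contain $Z$, so $k$ a priori conditions on a larger set than $Z$; this is resolved by averaging via the law of total probability, which goes through precisely because the extracted piece $a'\oplus b'$ has domain exactly $Z$, making its value depend only on $z$. Second, translating the abstract relation $a'\oplus b'\sqsubseteq k$ into the concrete statement that marginalizing the conditional recovers $a'\oplus b'$ requires chasing the definition of $\sqsubseteq$ (a parallel extension by $\unit_{\R{R}}$ followed by a Kleisli composition) through the marginalization operator. Once these marginal/conditional identifications are established, both the containment $X\cap Y\subseteq Z$ and the product equation fall out, and the forward and reverse constructions are visibly inverse to one another.
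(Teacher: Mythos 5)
Your proposal is correct, and both directions rest on the same core identification as the paper's proof: satisfaction of \eqref{eq:ci_form} amounts to factoring $f_\mu$ into a $Z$-marginal composed via $\odot$ with a parallel product via $\oplus$ of conditional kernels, with $X \cap Y \subseteq Z$ falling out of the definedness side-condition for $\oplus$. Your backward direction is essentially the paper's (\cref{Md:probhelper}), differing only cosmetically: you exhibit a decomposition of $f_\mu$ itself ($f_\mu = g \odot k$ with $a \oplus b \sqsubseteq k$, both built by disintegration), whereas the paper shows that the small composite $f_1 \odot (f_2 \oplus f_3) = \pi_{Z \cup X \cup Y} f_\mu \sqsubseteq f_\mu$ satisfies the formula and then invokes persistence. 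The genuine divergence is in the forward direction. The paper routes it through the abstract normalization lemma (\cref{findexact}), which uses only the frame algebra --- Down-Closure, marginalization of witnesses, padding, and the exchange equality --- to replace all witnesses by exactly-typed kernels $g \colon \R{\emptyset} \to \DR{Z}$, $h \colon \R{Z} \to \DR{Z \cup X}$, $i \colon \R{Z} \to \DR{Z \cup Y}$ assembled into one composite $g \odot (h \oplus i) \sqsubseteq f_\mu$, so that the probabilistic computation (\cref{Md:probhelper}, via \cref{lemma:condfirst}) only ever sees exact types. You instead normalize only the domains (via $\oplus$ Down-Closed) and tolerate oversized ranges --- $\range(g) \supseteq Z$ and $\range(a') \supseteq X \cup Z$ possibly strict --- discharging the slack with a concrete probabilistic maneuver: averaging over $\range(g) \setminus Z$ by the law of total probability, which works precisely because $\dom(a' \oplus b') = Z$ makes the extracted kernel depend only on $z$. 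Both routes are sound (modulo the standard caveat, which the paper handles explicitly, that conditional kernels at zero-probability values of $Z$ must be defined arbitrarily but consistently). What the paper's route buys is modularity: \cref{findexact} is stated for arbitrary $\T$-models and is reused verbatim for the relational model (\cref{theo:rel}) and for the semi-graphoid axioms, and it cleanly separates frame-algebraic from probabilistic reasoning. Your route is more elementary and self-contained for $\MD$, but would have to be redone from scratch for $\MP$, and it front-loads exactly the bookkeeping you flag as the main obstacle; marginalizing $a'$ to $\pi_{Z \cup X} a'$ and $b'$ to $\pi_{Z \cup Y} b'$ up front, as \cref{findexact} does, would eliminate most of it.
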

The restriction $X \cap Y \subseteq Z$ is harmless: when $\pearlindep{X}{Y}{Z}$
but $X \cap Y \not\subseteq Z$, $X\cap Y$ must be deterministic given $Z$
\iffull
  (see~\cref{concrete:XcapYinS}),
\else
  (see~\cite{extended}),
\fi
and it suffices to check $\pearlindep{X}{Y}{Z \cup (X \cap Y)}$.  For
simplicity, we abbreviate the formula $\pair{\emptyset}{Z} \depand (\pair{Z}{X}
\sepand \pair{Z}{Y})$ as $\indep{Z}{X}{Y}$.

\begin{proof}[Proof sketch]
  For the forward direction, suppose $f_\mu$ satisfies \ref{eq:ci_form}.
  Then by
  \iffull
    \cref{findexact},
  \else
    \citep[Lemma A.38]{extended},
  \fi
  there exist $f$, $g$, and $h$ in $\MD$ with $f \odot (g \oplus h) \sqsubseteq
  f_\mu$, where $f\colon \R{\emptyset} \rightarrow \DR{Z}$, $g \colon \R{Z}
  \rightarrow \DR{Z \cup X}$, and $h \colon \R{Z} \rightarrow \DR{Z \cup Y}$; we
  also have $X \cap Y \subseteq Z$ as $f \odot (g \oplus h)$ is defined. Since
  $\dom(f_\mu) = \R{\emptyset}$, $f \odot (g \oplus h) \sqsubseteq f_\mu$
  implies:
  \begin{align*}
    f \odot (g \oplus h) & = \pi_{Z \cup X \cup Y} f_\mu
    \quad \text{and} \quad
    \qquad f = \pi_{Z} f_\mu.
  \end{align*}
  Further, we can show that  $f \odot (g \oplus h) = f \odot g
  \odot (\unit_{X} \oplus h) =  f \odot h \odot (\unit_{Y} \oplus
  g)$, and thus:
  \begin{align*}
    f \odot g  & = \pi_{Z \cup X} f_\mu %\tag{marginal distribution of $Z \cup X$}\\
    \quad \text{and} \quad
    f \odot h = \pi_{Z \cup Y} f_\mu. %\tag{marginal distribution of $Z \cup Y$}
  \end{align*}
  These imply that $g$ ($h$ resp.) encodes the conditional distributions of $X$
  ($Y$ resp.) given $Z$, and $g \oplus h$ encodes the conditional distribution
  of $(X, Y)$ given $Z$. Hence, the conditional distribution of $(X,Y)$ given
  $Z$ is equal to the product distribution of $X$ given $Z$ and $Y$ given $Z$,
  and so $\pearlindep{X}{Y}{Z}$ holds in $\mu$.

  For the reverse direction, suppose that (a) $\pearlindep{X}{Y}{Z}$ holds in
  $\mu$ and (b) $X\cap Y\subseteq Z$.  Now, consider  $\pi_{X\cup Y \cup Z}
  f_\mu$, the marginal distribution on $(X, Y, Z)$ encoded as a kernel, and
  observe that $\pi_{X,Y,Z} f_\mu = f \odot f'$, where $f$ encodes the marginal
  distribution of $Z$, and $f'$ is the conditional distribution of $(X,Y)$ given
  values of $Z$. From (a), the conditional distribution of $(X,Y)$ given $Z$ is
  the product of the conditional distributions of $X$ given $Z$, and $Y$ given
  $Z$, that is $f'= g \oplus h$, where $g$ (resp. $h$) encode the conditional
  distribution of $X$ (resp. $Y$) given $Z$.  Then by (b), $f \odot (g \oplus
  h)$ is defined and $f \odot (g \oplus h) = \pi_{X\cup Y \cup Z} f_\mu
  \sqsubseteq f_\mu$.  It is straightforward to see that $f \odot (g \oplus h)$
  satisfies $\indep{Z}{X}{Y}$. Hence,  persistence shows that $f_\mu$ also
  satisfies $\indep{Z}{X}{Y}$.

  See
  \iffull
    \cref{Md:probhelper}
  \else
    \citep[Theorem A.11]{extended}
  \fi
  for details.
\end{proof}

\subsection{Join dependency}\label{sec:relational}

Recall that a relation $R$ over attributes $X \cup Y$ satisfies the \emph{Join
Dependency} (JD) $X \bowtie Y$ if $R = R^X \bowtie R^Y$. As we illustrated
through the \Researcher-\Field-\Conference\ example in \cref{sec:models}, join
dependencies can enable a relation to be represented more compactly.  By
interpreting the atomic propositions in the relational model, JD is captured by
the same formula we used for CI.

\begin{restatable}{theorem}{theorel}
  \label{theo:rel}
  Let $R \in \PR{\Var}$ and $X, Y$ be sets of attributes such that $X \cup Y =
  \Var$. The lifted relation $f_R = \empmem \mapsto R$ satisfies $f_R \models
  \indep{X \cap Y}{X}{Y}$ iff $R$ satisfies the join dependency $X \bowtie Y$.
\end{restatable}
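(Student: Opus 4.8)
The plan is to mirror the proof of \cref{theo:prob}, replacing the distribution monad by the powerset monad and, correspondingly, product distributions by natural joins and ``nonzero probability'' by ``nonempty fiber''. Throughout, set $Z \defeq X \cap Y$. Since $X \cup Y = \Var$ we have $Z \cup X = X$ and $Z \cup Y = Y$, so the formula $\indep{X\cap Y}{X}{Y}$ abbreviates $\pair{\emptyset}{Z} \depand (\pair{Z}{X} \sepand \pair{Z}{Y})$, which is exactly the shape treated in the probabilistic case. Because $\MP$ validates Exchange Equality (\cref{thm:relframe}), the same structural manipulations available for $\MD$ are available here, and the argument splits into two directions.

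For the forward direction, suppose $f_R \models \indep{X\cap Y}{X}{Y}$. I would apply the relational analogue of the kernel-extraction lemma underlying \cref{theo:prob} to obtain powerset kernels $f\colon \R\emptyset \to \PR{Z}$, $g\colon \R Z \to \PR{X}$, and $h\colon \R Z \to \PR{Y}$ with $f \odot (g \oplus h) \sqsubseteq f_R$. Since $\dom(f_R) = \R\emptyset$, this refinement forces $f \odot (g \oplus h) = \pi_{Z\cup X\cup Y}\, f_R = f_R$ (as $Z\cup X\cup Y = \Var$). Marginalizing and using Exchange Equality exactly as in \cref{theo:prob}, where $f\odot(g\oplus h) = f\odot g \odot (\unit_X \oplus h) = f \odot h \odot (\unit_Y \oplus g)$, identifies $f(\empmem) = R^Z$ and forces $g(z) = \{ m \in R^X \mid m^Z = z\}$ and $h(z) = \{ m \in R^Y \mid m^Z = z\}$ for $z \in R^Z$. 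Unfolding $\odot$ and $\oplus$ on powerset kernels then gives $R = (f\odot(g\oplus h))(\empmem) = \bigcup_{z \in R^Z} g(z) \bowtie h(z)$, and because $Z = X\cap Y$ is precisely the set of attributes shared by $X$ and $Y$, this union equals $R^X \bowtie R^Y$. Equating with $R$ yields the join dependency $R = R^X \bowtie R^Y$.

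For the reverse direction, assume $R = R^X \bowtie R^Y$ and construct witnesses directly: let $f(\empmem) \defeq R^Z$, and define the fiber kernels $g(z) \defeq \{ m \in R^X \mid m^Z = z \}$ and $h(z) \defeq \{ m \in R^Y \mid m^Z = z \}$ on $z \in R^Z$, extending $g,h$ arbitrarily on inputs outside $R^Z$ so that they still preserve input to output. Each of $f,g,h$ then lies in $\MP$ and satisfies $\pair{\emptyset}{Z}$, $\pair{Z}{X}$, and $\pair{Z}{Y}$ respectively, since these atoms constrain only domain and range. A direct computation as above gives $(f \odot (g \oplus h))(\empmem) = \bigcup_{z\in R^Z} g(z)\bowtie h(z) = R^X \bowtie R^Y = R$, so $f \odot (g\oplus h) = f_R$. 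The semantic clauses for $\oplus$ and $\odot$ then certify $g \oplus h \models \pair{Z}{X} \sepand \pair{Z}{Y}$ and hence $f_R \models \indep{X\cap Y}{X}{Y}$.

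The main obstacle is the bookkeeping peculiar to the powerset monad: unlike Markov kernels, powerset kernels may return the empty set, so I must check that the fiber kernels $g,h$ are genuinely input-preserving elements of $\MP$ (i.e.\ $\pi_Z g = \unit_{\R Z}$), which fails on inputs $z \notin R^Z$ unless $g,h$ are suitably completed there. Making those completions without disturbing the composite $f\odot(g\oplus h)$---which only ever evaluates $g,h$ on $z\in R^Z$, where the fibers are automatically nonempty since $R^Z = (R^X)^Z = (R^Y)^Z$---is the delicate point. The remaining ingredients, namely the combinatorial identity $\bigcup_{z\in R^Z} g(z)\bowtie h(z) = R^X\bowtie R^Y$ (driven entirely by $Z = X\cap Y$ being the join's shared attributes) and the transfer of the extraction lemma and Exchange Equality from $\MD$ (\cref{MdisLOGIC}) to $\MP$, are otherwise a faithful re-run of the probabilistic argument.
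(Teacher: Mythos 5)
Your proposal is correct and follows essentially the same route as the paper's proof: the extraction lemma (\cref{findexact}, available in $\MP$ via the relational disintegration lemma), the conversion $f\odot(g\oplus h) = f\odot g\odot(\unit_{X}\oplus h) = f\odot h\odot(\unit_{Y}\oplus g)$ to pin down $f\odot g = \pi_X f_R$ and $f\odot h = \pi_Y f_R$, the $\bowtie$-idempotence computation showing $R = R^X \bowtie R^Y$, and in the converse the very same fiber kernels $g(z)=\{m\in R^X\mid m^Z=z\}$ and $h(z)=\{m\in R^Y\mid m^Z=z\}$. Your additional care about empty fibers (completing $g,h$ on inputs outside $R^Z$ so that input preservation holds) is a detail the paper's construction glosses over, but it does not alter the structure of the argument.
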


JD is a special case of Embedded Multivalued Dependency (EMVD), where the
relation $R$ may have more attributes than $X \cup Y$. It is straightforward
to encode EMVD in our logic, but for simplicity we stick with JD.

\begin{proof}[Proof sketch]
  For the forward direction, by
  \iffull
    \cref{findexact},
  \else
  \citep[Lemma A.38]{extended},
  \fi
  there exist $f$, $g$, and $h \in \MP$ such that $f\colon \R{\emptyset}
  \rightarrow \PR{X \cap Y}$, $g \colon \R{X \cap Y} \rightarrow \PR{X}$, $h
  \colon \R{X \cap Y} \rightarrow \PR{Y}$, and $f \odot (g \oplus h) \sqsubseteq
  f_R$. Since by assumption $X \cup Y = \Var$, we must have $f \odot (g \oplus
  h) = f_R$.

  Unfolding $\oplus$ and $\odot$ and using the fact that $\range(f) =
  \dom(g) = \dom(h)$, we can show:
  \begin{align*}
    f \odot (g \oplus h) (\empmem)
        &= \{ u \bowtie (v_1 \bowtie v_2) \mid u \in f(\empmem), v_1 \in g(u ), v_2 \in h(u ) \} .
  \end{align*}
  Since $\bowtie$ is commutative, associative and idempotent, we have:
  \begin{align*}
    f \odot (g \oplus h) (\empmem)
        &= \{ (u \bowtie v_1) \bowtie ( u \bowtie v_2) \mid u \in f(\empmem), v_1 \in g(u), v_2 \in h(u) \} \\
        &= f \odot g (\empmem) \bowtie f \odot h (\empmem).
  \end{align*}
  We can also convert the parallel composition of $g, h$ into sequential
  composition by padding  to make the respective domain and range match:
  \( f \odot (g \oplus h) =  f \odot g \odot (\unit_{X} \oplus h)
  =  f \odot h \odot (\unit_{Y} \oplus g) \).
  Hence $f \odot g = \pi_X f_R $ and $f \odot h = \pi_Y f_R$,
  which implies $f \odot g (\empmem) = R^{X}$ and $f \odot h (\empmem) =
  R^{Y}$. Thus:
  \[
    R = f \odot (g \oplus h)(\empmem)
    = f \odot g (\empmem) \bowtie f \odot h (\empmem)
    = R^{X} \bowtie R^{Y} ,
  \]
  so $R$ satisfies the join dependency $X \bowtie Y$. The reverse direction
  is analogous to~\Cref{theo:prob}.
  See
  \iffull
    \cref{Mp:mainhelper}
  \else
  \citep[Theorem A.14]{extended}
  \fi
  for details.
\end{proof}

\subsection{Proving and validating the semi-graphoid axioms} \label{sec:graphoid}
Conditional independence and join dependency are closely related in our
models. Indeed, there is a long line of research on generalizing these
properties to other independence-like notions, and identifying suitable
axioms. \emph{Graphoids} are perhaps the most well-known
approach~\citep{pearl1985graphoids}; \citet{dawid2001} has a similar notion
called \emph{separoids}.

\begin{definition}[Graphoids and semi-graphoids]
  Suppose that $\I{X}{Z}{Y}$ is a ternary relation on subsets of $\Var$
  (i.e., $X, Z, Y \subseteq \Var$). Then $I$ is a \emph{graphoid} if it
  satisfies:
  {\small
    \begin{align}
        &\I{X}{Z}{Y} \Leftrightarrow \I{Y}{Z}{X}
        \tag{\textsc{Symmetry}} \label{graphoid:sym} \\
        &\I{X}{Z}{Y \cup W} \Rightarrow \I{X}{Z}{Y} \land \I{X}{Z}{W}
        \tag{\textsc{Decomposition}} \label{graphoid:decomp}\\
        &\I{X}{Z}{Y \cup W} \Rightarrow \I{X}{Z \cup W}{Y}
        \tag{\textsc{Weak Union}} \label{graphoid:weakunion} \\
        &\I{X}{Z}{Y} \land \I{X}{Z \cup Y}{W} \Leftrightarrow \I{X}{Z}{Y \cup W}
        \tag{\textsc{Contraction}} \label{graphoid:contr} \\
        &\I{X}{Z \cup W}{Y} \land \I{X}{Z \cup Y}{W} \Rightarrow \I{X}{Z}{Y \cup W}
        \tag{\textsc{Intersection}} \label{graphoid:intersect}
    \end{align}
  }
  \noindent If $I$ satisfies the first four properties, then it is a
  \emph{semi-graphoid}.
\end{definition}

Intuitively, $\I{X}{Z}{Y}$ states that knowing $Z$ renders $X$ irrelevant to
$Y$. If we fix a distribution over $\mu \in \DR{\Var}$, then taking
$\I{X}{Z}{Y}$ to be the set of triples such that $\pearlindep{X}{Y}{Z}$ holds
(in $\mu$) defines a semi-graphoid. Likewise, if we fix a relation $R \in
\PR{\Var}$, then the triples of sets of attributes such that $R$ satisfies an
Embedded Multivalue Dependency (EMVD) forms a
semi-graphoid~\citep{DBLP:journals/tods/Fagin77,DBLP:conf/aaai/PearlV87}.

Previously, we showed that the \LOGIC formula $\indep{Z}{X}{Y}$ asserts
conditional independence of $X$ and $Y$ given $Z$ in $\MD$, and join dependency
$X \bowtie Y$ in \MP when $Z = X \cap Y$. Here, we show that the semi-graphoid
axioms can be naturally translated into valid formulas in our concrete models.

\begin{theorem}
  Given a model $M$, define $\I{X}{Z}{Y}$ iff $M \models \indep{Z}{X}{Y}$.
  Then,~\ref{graphoid:sym},~\ref{graphoid:decomp},~\ref{graphoid:weakunion}, and
  ~\ref{graphoid:contr} are valid when $M$ is the probabilistic or the
  relational model. Furthermore, \ref{graphoid:sym} is derivable in the proof
  system, and \ref{graphoid:decomp} is derivable given the following axiom,
  valid in both models:
  \begin{equation}
    \pair{Z}{Y \cup W} \leftrightarrow \pair{Z}{Y} \land \pair{Z}{W} \tag{\textsc{Split}} \label{ax:split}
  \end{equation}
\end{theorem}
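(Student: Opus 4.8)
The plan is to handle the \emph{semantic} validity in both models uniformly, and then give the two \emph{syntactic} derivations separately.

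For the semantic part, I would first use the correspondence theorems to replace the logical relation by a concrete one. By \Cref{theo:prob}, in the probabilistic model $f_\mu \models \indep{Z}{X}{Y}$ iff $\pearlindep{X}{Y}{Z}$ and $X \cap Y \subseteq Z$; the embedded-multivalued-dependency generalization of \Cref{theo:rel} gives the analogous statement in the relational model. Thus in either model the relation defined by $\I{X}{Z}{Y} \Leftrightarrow M \models \indep{Z}{X}{Y}$ is an intersection $I = I_0 \cap \mathsf{SC}$, where $I_0$ is classical probabilistic CI (resp.\ EMVD) and $\mathsf{SC}(X,Z,Y)$ abbreviates the side condition $X \cap Y \subseteq Z$. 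Each of \ref{graphoid:sym}--\ref{graphoid:contr} is a universally quantified implication (or biconditional) whose antecedent and consequent are conjunctions of $I$-atoms, so it is preserved under intersection: it holds for $I$ as soon as it holds for $I_0$ and for $\mathsf{SC}$ separately. For $I_0$ these are exactly the classical semi-graphoid laws already cited. For $\mathsf{SC}$ they reduce to elementary set inclusions that I would check case by case --- e.g.\ \ref{graphoid:weakunion} follows from $X \cap Y \subseteq X \cap (Y \cup W) \subseteq Z \subseteq Z \cup W$, and both directions of \ref{graphoid:contr} follow from $(X \cap Y \subseteq Z) \wedge (X \cap W \subseteq Z \cup Y) \Leftrightarrow X \cap (Y \cup W) \subseteq Z$. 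This settles all four axioms in both models at once.

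For \emph{\ref{graphoid:sym} derivable}, unfolding the abbreviation reduces the claim to $\indep{Z}{X}{Y} \dashv\vdash \indep{Z}{Y}{X}$. From $\sepand$-\textsc{Comm} I get $\pair{Z}{X} \sepand \pair{Z}{Y} \vdash \pair{Z}{Y} \sepand \pair{Z}{X}$, and combining this with \textsc{Ax} on $\pair{\emptyset}{Z}$ through $\depand$-\textsc{Conj} yields $\indep{Z}{X}{Y} \vdash \indep{Z}{Y}{X}$; the converse is the same derivation with $X$ and $Y$ interchanged. For \emph{\ref{graphoid:decomp} derivable given \ref{ax:split}}, the goal is $\indep{Z}{X}{Y \cup W} \vdash \indep{Z}{X}{Y} \wedge \indep{Z}{X}{W}$. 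Using the forward direction of \ref{ax:split} with $\land 3/\land 4$ gives $\pair{Z}{Y \cup W} \vdash \pair{Z}{Y}$ and $\pair{Z}{Y \cup W} \vdash \pair{Z}{W}$; feeding each into $\sepand$-\textsc{Conj} (with \textsc{Ax} on $\pair{Z}{X}$) and then into $\depand$-\textsc{Conj} (with \textsc{Ax} on $\pair{\emptyset}{Z}$) produces $\indep{Z}{X}{Y \cup W} \vdash \indep{Z}{X}{Y}$ and $\indep{Z}{X}{Y \cup W} \vdash \indep{Z}{X}{W}$, which $\land 1$ combines. I would also verify that \ref{ax:split} is itself valid in both models: its forward direction is immediate, since a single witnessing substate of domain $Z$ whose range contains $Y \cup W$ also witnesses the two conjuncts.

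The main obstacle is the backward direction of \ref{ax:split}, together with the EMVD generalization of \Cref{theo:rel}: both depend on controlling the $\sqsubseteq$-substate structure rather than on formula manipulation. Concretely, I expect to need the characterization of satisfiable basic propositions underlying \Cref{theo:prob} (the lemma used there to extract witnessing decompositions), from which it should follow that, for a fixed domain $Z$, the ranges $B$ with $f \models \pair{Z}{B}$ are closed under union; witnesses for $\pair{Z}{Y}$ and $\pair{Z}{W}$ can then be merged into a single witness for $\pair{Z}{Y \cup W}$. Once this substate characterization and the two correspondence theorems are in place, the remaining set-theoretic and proof-theoretic steps are routine.
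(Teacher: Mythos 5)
Your two proof-theoretic derivations are essentially the paper's own: \cref{graphoid:symmetry} derives \ref{graphoid:sym} from \textsc{Ax}, $\sepand$-\textsc{Comm} and $\depand$-\textsc{Conj}, and \cref{graphoid:decomposition} derives \ref{graphoid:decomp} from \ref{ax:split}, $\land 3/\land 4$, $\sepand$-\textsc{Conj}, $\depand$-\textsc{Conj} and $\land 1$, exactly as you describe; your set-algebra for the side condition $X \cap Y \subseteq Z$ is also correct. The gap is in the semantic half. Validity here means validity at \emph{every} state of $\MD$ (resp.\ $\MP$): the paper's \cref{graphoid:weaku} and \cref{graphoid:contraction} fix an arbitrary kernel $f$ (resp.\ $h$) in a $\T$-model and prove the pointwise implication. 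But \Cref{theo:prob} characterizes satisfaction of $\indep{Z}{X}{Y}$ only for lifted kernels $f_\mu \colon \R{\emptyset} \to \DR{\Var}$, and \Cref{theo:rel} additionally requires $X \cup Y = \Var$; so your reduction $I = I_0 \cap \mathsf{SC}$ establishes the axioms only on empty-domain states. Kernels with non-empty domain can perfectly well satisfy formulas like $\indep{Z}{X}{Y \cup W}$, and your argument says nothing about them, so the model-validity claims for \ref{graphoid:weakunion} and \ref{graphoid:contr} are not proved. This is precisely why the paper does not route through \Cref{theo:prob}: it argues directly on kernels, using \cref{findexact} to extract $f_1 \odot (f_2 \oplus f_3) \sqsubseteq f$, Disintegration (\cref{lemma:condfirst}) to factor $f_3$, the Exchange equality to rearrange, and, for \ref{graphoid:contr}, the Uniqueness lemma (\cref{uniqueness}); that argument is uniform in the monad and needs no EMVD generalization of \Cref{theo:rel}, which you would otherwise have to prove from scratch.

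The same scope issue defeats your plan for the backward direction of \ref{ax:split}. The proposed closure property---for a fixed domain $Z$, the ranges $B$ with $f \models \pair{Z}{B}$ are closed under union---is false for general states. Counterexample in $\MD$: let $f \colon \R{a} \to \DR{\{a,y,w\}}$ preserve its input, with $f(a \mapsto 0)$ making $y,w$ two independent fair coins and $f(a \mapsto 1)$ making $y$ a fair coin with $w = y$. The constant kernels $\R{\emptyset} \to \DR{y}$ and $\R{\emptyset} \to \DR{w}$ returning the fair coin are each $\sqsubseteq f$ (exhibit the extension explicitly via conditional distributions), so $f \models \pair{\emptyset}{y} \land \pair{\emptyset}{w}$; but any $g \sqsubseteq f$ with $\dom(g) = \emptyset$ and $\range(g) \supseteq \{y,w\}$ would force $\pi_{\{y,w\}}(f(a \mapsto v))$ to be the same distribution for both values of $v$, which it is not, so $f \not\models \pair{\emptyset}{y,w}$. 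Merging witnesses does work when their domain equals $\dom(f)$ (they are then marginalizations of $f$, and \cref{lemma:condfirst} glues them), which is the situation actually arising inside the decomposition argument for lifted states; as a model-validity claim, though, your closure property fails. To be fair, the paper asserts the validity of \ref{ax:split} without giving a proof, so this subtlety is not resolved there either---but the specific argument you propose for it cannot work as stated.
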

\begin{proof}[Proof sketch]
  We comment on the derivable axioms. To derive \ref{graphoid:sym}, we use the
  $\sepand$-\textsc{Comm} proof rule to commute the separating conjunction.
  The proof of \ref{graphoid:decomp} uses the axiom \ref{ax:split} to split up
  $Y \cup W$, and then uses proof rules $\land 3$ and $\land 4$ to prove the
  two conjuncts.  We show derivations
  \iffull
    (\cref{graphoid:symmetry,graphoid:decomposition})
    and prove validity
    (\cref{graphoid:weaku,graphoid:contraction})
    in \cref{app:separoid}.
  \else
		(\cite[Theorems A.15 and A.16]{extended})
    and prove validity
				(\cite[Theorems A.17 and A.18]{extended}).
  \fi
\end{proof}

\section{Application: Conditional Probabilistic Separation Logic} \label{sec:cpsl}
As our final application, we design a separation logic for probabilistic
programs. We work with a simplified probabilistic imperative language with
assignments, sampling, sequencing, and conditionals; our goal is to show how a
\LOGIC-based program logic could work in the simplest setting. For lack of
space, we only show a few proof rules and example programs here; we defer
the full presentation of the separation logic, the metatheory, and the
examples to
\iffull
  \Cref{app:cpsl-full}.
\else
  \cite{extended}.
\fi

\paragraph*{Proof rules}
\SYSTEM includes novel proof rules for randomized conditionals and inherits
the frame rule from PSL~\citep{barthe2019probabilistic}. Here, we show two of
the rules and explain how to use them in the simple program from
\cref{eq:simple}, reproduced here:
\[
  \textsc{Simple} \defeq \Rand x {\Bern_{1/2}}; \Rand y {\Bern_{1/2}} ; z \gets x \vee y
\]
\SYSTEM has Hoare-style rules for sampling and assignments:
\begin{mathpar}	\small
  \inferrule*[Left=Samp]
  { x \not\in \FV(d) \cup \FV(P) }
  { \vdash \psl{P}{\Rand{x}{d}}{P \depand \kp{\FV(d)}{ [x] }} }

  \inferrule*[Left=Assn]
  { x \not\in \FV(e) \cup \FV(P) }
  { \vdash \psl{P}{\Assn{x}{e}}{P \depand \kp{\FV(e)}{ [x] }} }
\end{mathpar}
Using \textsc{Samp} and the fact that the coin-flip distribution $\Bern_{1/2}$
has no free variables, we can infer:
\[
  \vdash \psl{\top}{\Rand{x}{\Bern_{1/2}}}{\empdom{\Dist[x]}} \qquad \vdash
  \psl{\top}{\Rand{y}{\Bern_{1/2}}}{\empdom{\Dist[y]}}
\]
Applying a variant of the frame rule, we are able to derive:
\[
  \vdash \psl{\top}{\Rand{x}{\Bern_{1/2}};\Rand{y}{\Bern_{1/2}}}{\empdom{\Dist[x]}\sepand \empdom{\Dist[y]}}
\]
Using \textsc{Assn} on $P=\empdom{\Dist[x]}\sepand \empdom{\Dist[y]}$ and the
fact that $z$ is not a free variable in either $P$ or $x\lor y$:
\[
  \vdash \psl{P}{\Assn{z}{x\vee y}}{P \depand \kp{\{x,y\}}{ [z] }}
\]
Putting it all together, we get the validity of triple:
\[
  \vdash \psl{\top}
  {\textsc{Simple}}
  {(\empdom{\Dist[x]} \sepand \empdom{\Dist[y]}) \depand \kp{\{x,y\}}{ [z] }}
\]
stating that $z$ depends on $x$ and $y$, which are independent.

\paragraph*{Example programs}
\Cref{fig:examples} introduces two example programs. \ExONE (\Cref{fig:prog-1})
models a distribution where two random observations share a common cause.
Specifically, we consider $z$, $x$, and $y$ to be independent random samples,
and $a$ and $b$ to be values computed from $(x, z)$ and $(y, z)$, respectively.
Intuitively, $z$, $x$, $y$ could represent independent noisy measurements, while
$a$ and $b$ could represent quantities derived from these measurements. Since
$a$ and $b$ share a common source of randomness $z$, they are not independent.
However, $a$ and $b$ are independent conditioned on the value of $z$---this is a
textbook example of conditional independence. Our program logic can establish
the following judgment capturing this fact:
\[
  \vdash \psl{\top}
  {\ExONE}
  {\kp{\Exact{\emptyset}}{\Dist[z]} \depand (\kp{\Exact{z}}{\Dist[a]} \sepand \kp{\Exact{z}}{\Dist[b]})}
\]

The program \ExTWO (\Cref{fig:prog-2}) is a bit more complex: it branches on
a random value $z$, and then assigns $x$ and $y$ with two independent
samples from $\Bern_p$ in the true branch, and $\Bern_q$ in the false
branch. While we might think that $x$ and $y$ are independent at the end of
the program since they are independent at the end of each branch, this is
not true because their distributions are different in the two branches. For
example, suppose that $p = 1$ and $q = 0$. Then at the end of the first
branch $(x, y) = (\ktt, \ktt)$ with probability $1$, while at the end of the
second branch $(x, y) = (\kff, \kff)$ with probability $1$. Thus observing
whether $x = \ktt$ or $x = \kff$ determines the value of $y$---clearly, $x$
and $y$ can't be independent. However, $x$ and $y$ \emph{are} independent
conditioned on $z$. Using our program logic's proof rules for conditionals,
we are able to prove the following judgment capturing this fact:
\[
  \vdash \psl{\top}
  {\ExTWO}
  {\kp{\Exact{\emptyset}}{\Dist[z]} \depand (\kp{\Exact{z}}{\Dist[x]} \sepand \kp{\Exact{z}}{\Dist[y]})}
\]

The full development of the separation logic, consisting of a proof system, a
soundness theorem, along with the detailed verification of the two examples
above, can be found in
\iffull
  \Cref{app:cpsl-full}.
\else
  \cite{extended}.
\fi

%\jh{TODO: make the variable names and story more interesting.}

\begin{figure}[!t]\small
  % \hrule
  % \vspace{1mm}
  \begin{subfigure}[b]{0.48\linewidth}
    \begin{center}
      \[
        \begin{array}{l}
          \Rand{z}{\Bern_{1/2}}; \\
          \Rand{x}{\Bern_{1/2}}; \\
          \Rand{y}{\Bern_{1/2}}; \\
          \Assn{a}{x \lor z}; \\
          \Assn{b}{y \lor z}
        \end{array}
      \]
    \end{center}
    \caption{\ExONE}
    \label{fig:prog-1}
  \end{subfigure}
  \begin{subfigure}[b]{0.48\linewidth}
    \begin{center}
      \[
        \begin{array}{l}
          \Rand{z}{\Bern_{1/2}}; \\
          \Condt{z}{} \\
          \qquad\Rand{x}{\Bern_{p}};
          \Rand{y}{\Bern_{p}} \\
          \mathbf{else} \\
          \qquad\Rand{x}{\Bern_{q}};
          \Rand{y}{\Bern_{q}}
        \end{array}
      \]
    \end{center}
    \caption{\ExTWO}
    \label{fig:prog-2}
  \end{subfigure}
  \caption{Example programs}
  \label{fig:examples}
  % \vspace{1mm}
  % \hrule
  \vspace{-5mm}
\end{figure}

\section{Related Work} \label{sec:rw}

\paragraph*{Bunched implications and other non-classical logics}
\LOGIC extends the logic of bunched implications
(BI)~\citep{o1999logic}, and shares many similarities: \LOGIC can be
given a Kripke-style resource semantics, just like BI, and our
completeness proof relies on a general framework for proving
completeness for bunched logics~\citep{docherty2019bunched}. The
non-commutative conjunction and exchange rules are inspired by the
logic CKBI~\citep{docherty2019bunched}. The main difference is that
our exchange rule is reversed, due to our reading of separating
conjunction $\sepand$ as ``can be combined independently'', rather
than ``interleaved''. In terms of models, the probabilistic model of
\LOGIC can be seen as a natural extension of the probabilistic model
for BI~\citep{barthe2019probabilistic}---by lifting distributions to
kernels, \LOGIC is able to reason about dependencies, while
probabilistic BI is not.

There are other non-classical logics that aim to model dependencies.
\emph{Independence-friendly (IF) logic}~\citep{HINTIKKA1989571} and
\emph{dependence logic}~\citep{vaananen_2007} introduce new quantifiers and
propositional atoms to state that a variable depends, or does not depend, on
another variable; these logics are each equivalent in expressivity to
existential second-order logic. More recently,
\citet{DBLP:conf/foiks/0001HKMV18} proposed a probabilistic team semantics for
dependence logic, and \citet{hannula2020descriptive} gave a descriptive
complexity result connecting this logic to real-valued Turing machines. Under
probabilistic team semantics, the universal and existential quantifiers bear a
resemblance to our separating and dependent conjunctions, respectively. It would
be interesting to understand the relation between these two logics, akin to how
the semantics of propositional IF forms a model of
BI~\citep{DBLP:journals/synthese/AbramskyV09}

\paragraph*{Conditional independence, join dependency, and logic}
There is a long line of research on logical characterizations of
conditional independence and join dependency. The literature is too
vast to survey here. On the CI side, we can point to work
by~\citet{GeigerPearl93} on graphical models; on the JD side, the
survey by~\citet{DBLP:conf/icalp/FaginV84} describes the history of
the area in database theory. There are several broadly similar
approaches to axiomatizing the general properties of conditional
dependence, including graphoids~\citep{pearl1985graphoids} and
separoids~\citep{dawid2001}.

\paragraph*{Categorical probability}
The view of conditional independence as a factorization of Markov kernels has
previously been explored~\citep{jacobszanasi2017,DBLP:journals/mscs/ChoJ19,FRITZ2020107239}.
Taking a different approach, \citet{DBLP:journals/entcs/Simpson18} has
recently introduced category-theoretic structures for modeling conditional
independence, capturing CI and JD as well as analogues in heaps and nominal
sets~\citep{pitts_2013}. Roughly speaking, conditional independence in heaps
requires two disjoint portions except for a common overlap contained in the
part that is conditioned; this notion can be smoothly accommodated in our
framework as a \LOGIC\ model where kernels are Kleisli arrows for the identity
monad (\cite{Brotherston2009}) also consider a similar notion of
separation). Simpson's notion of conditional independence in nominal sets
suggests that there might be a \LOGIC model where kernels are Kleisli arrows for
some monad in nominal sets, although the appropriate monad is unclear.

\paragraph*{Program logics}
Bunched logics are well-known for their role in \emph{separation logics},
program logics for reasoning about
heap-manipulating~\citep{DBLP:conf/csl/OHearnRY01} and concurrent
programs~\citep{OHEARN2007271,DBLP:journals/tcs/Brookes07}.  Recently,
separation logics have been developed for probabilistic programs. Our work is
most related to PSL~\citep{barthe2019probabilistic}, where separation models
probabilistic independence. \citet{DBLP:journals/pacmpl/BatzKKMN19} gives a
different, quantitative interpretation to separation in their logic QSL, and
uses it to verify expected-value properties of probabilistic heap-manipulating
programs. Finally, there are more traditional program logics for probabilistic
program. The \textsc{Ellora} logic by \citet{DBLP:conf/esop/BartheEGGHS18} has
assertions for modeling independence, but works with a classical logic. As a
result, basic structural properties of independence must be introduced as
axioms, rather than being built-in to the logical connectives.

\section{Discussion and Future Directions} \label{sec:conc}

% \jh{Maybe: say something about possible layered graph/tree models?}

We have presented \LOGIC, a new bunched logic to reason about dependence and
independence, together with its Kripke semantics and a sound and complete
proof system. We provided two concrete models, based on Markov and powerset
kernels, that can capture conditional independence-like notions.  We see
several directions for further investigation.

\paragraph*{Generalizing the two models}
The probabilistic and relational models share many similarities: both $\MD$ and
$\MP$ are sets of Kleisli arrows, and use Kleisli composition to interpret
$\odot$; both $\oplus$ operators correspond to parallel composition. Since both
the distribution and powerset monads are commutative strong
monads~\citep{jacobs1994semantics,kock1970monads}, which come with a
\emph{double strength} bi-functor $st_{A,B}: T(A) \times T(B) \rightarrow T(A
\times B)$ that seems suitable for defining $\oplus$, it is natural to consider
more general models based on Kleisli arrows for such monads. Indeed, variants of
conditional independence could make sense in other settings; taking the multiset
monad instead of the powerset monad would lead to a model where we can assert
join dependency in bags, rather than relations, and the free vector space monad
could be connected to subspace models of the graphoid
axioms~\citep{lauritzen-book}.

However, it is not easy to define an operation generalizing $\oplus$ from our
concrete models. The obvious choice---taking $\oplus$ as $f_1 \oplus f_2 = (f_1
\otimes f_2) ; st$---gives a \emph{total} operation, but in our concrete models
$\oplus$ is partial, since it is not possible to compose two arrows that
disagree on their domain overlap. For instance in the probabilistic model,
there is no sensible way to use $\oplus$ to combine a kernel encoding the normal
distribution $\mathcal{N}(0,1)$ on $x$ with another encoding the Dirac
distribution of $x = 1$. We do not know how to model such coherence
requirements between two Kleisli arrows in a general categorical model, and we
leave this investigation to future work.

\paragraph*{Restriction and intuitionistic \LOGIC}
A challenge in designing the program logic is ensuring that formulas in the
assertion logic satisfy \emph{restriction}
\iffull
  (see~\cref{sec:cpsl-assertions}),
\else
  (see~\cite{extended}),
\fi
and one may wonder if a classical version of \LOGIC would be more suitable for
the program logic---if assertions were not required to be preserved under kernel
extensions, it might be easier to show that they satisfy restriction.  However,
a classical logic would require assertions to specify the dependence structure
of \emph{all} variables, which can be quite complicated. Moreover,
intuitionistic logics like probabilistic BI can also satisfy the restriction
property, so the relevant design choice is not classical versus intuitionistic.

Rather, the more important point appears to be whether the preorder can extend
a kernel's domain. If this is allowed---as in \LOGIC---then kernels satisfying
an assertion may have extraneous variables in the domain. However, this choice
also makes the dependent conjunction $P \depand Q$ more flexible: $Q$ does not
need to exactly describe the domain of the second kernel, which is useful since
the range of the first kernel cannot be constrained by $P$. This underlying
tension---allowing the range to be extended, while restricting the domain---is
an interesting subject for future investigation.

\section*{Acknowledgments}
We thank the anonymous reviewers for thoughtful comments and feedback.  This
work was partially supported by the EPSRC grant (EP/S013008/1), the ERC
Consolidator Grant AutoProbe (\#101002697) and a Royal Society Wolfson
Fellowship.  This work was also partially supported by the NSF (\#2023222 and
\#1943130) and Facebook.
		% Bibliography
  \bibliographystyle{IEEEtranN}
  \bibliography{header,ref}
  \clearpage

  \iffull
    %% Appendix
    \onecolumn

\appendix

\subsection{Section \ref{sec:logic}: omitted proof}
\label{app:logic}

\begin{lemma}
	\label{sep2dep}
	$P \sepand Q \vdash P \depand Q$
\end{lemma}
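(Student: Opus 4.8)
The plan is to drive the derivation through the \textsc{RevEx} axiom, whose left-hand side is a $\sepand$ of two $\depand$-formulas---exactly the shape we can manufacture from $P \sepand Q$ by inserting units on opposite sides. The whole argument is then a short chain of congruence rules and unit laws, with \textsc{RevEx} doing the real work.

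First I would pad $P$ and $Q$ with $\sepid$ asymmetrically: $\depand$-\textsc{Right Unit} gives $P \vdash P \depand \sepid$, and $\depand$-\textsc{Left Unit} gives $Q \vdash \sepid \depand Q$. Applying $\sepand$-\textsc{Conj} to these two entailments yields
\[ P \sepand Q \vdash (P \depand \sepid) \sepand (\sepid \depand Q). \]
Next, \textsc{RevEx}, instantiated with the four components $P, \sepid, \sepid, Q$, gives
\[ (P \depand \sepid) \sepand (\sepid \depand Q) \vdash (P \sepand \sepid) \depand (\sepid \sepand Q). \]
The asymmetric placement of the units is the crux: after the exchange, each $\sepid$ lands next to the formula it should absorb.

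Finally I would collapse the units using the $\sepand$ laws. We have $P \sepand \sepid \vdash P$ directly from $\sepand$-\textsc{Unit}, and $\sepid \sepand Q \vdash Q$ by first commuting via $\sepand$-\textsc{Comm} and then applying $\sepand$-\textsc{Unit}. Feeding these two into $\depand$-\textsc{Conj} yields
\[ (P \sepand \sepid) \depand (\sepid \sepand Q) \vdash P \depand Q. \]
Chaining the three displayed entailments by transitivity of $\vdash$ closes the argument.

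There is no genuine obstacle here beyond spotting the correct instantiation: the sole design decision is to introduce the two units on \emph{opposite} sides (right unit for $P$, left unit for $Q$), so that \textsc{RevEx} realigns them for elimination. Everything else is a routine application of the conjunction congruence rules and the unit laws, together with transitivity of the entailment relation (which holds since $\vdash$ is a preorder generated by \textsc{Ax} and the proof rules). It is worth noting that this derivation is the proof-theoretic counterpart of the fact, emphasized in the overview, that in \LOGIC the separating conjunction is \emph{stronger} than the dependent one---the reverse of the situation in CKBI.
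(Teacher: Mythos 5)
Your derivation is exactly the paper's own proof: the same asymmetric padding ($\depand$-\textsc{Right Unit} on $P$, $\depand$-\textsc{Left Unit} on $Q$), the same instantiation of \textsc{RevEx}, and the same unit eliminations via $\sepand$-\textsc{Comm}, $\sepand$-\textsc{Unit}, and $\depand$-\textsc{Conj}. The only cosmetic difference is that you appeal to transitivity of $\vdash$ informally, whereas the paper notes that \textsc{Cut} is not primitive in the Hilbert system and explicitly derives it from $\land 2$, $\rightarrow$, and \textsc{MP}---a detail worth making explicit, but not a gap.
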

\begin{proof}
	For better readability, we break the proof tree down into two components.
	\begin{prooftree}\small
	\AxiomC{}
	\RightLabel{$\depand$-\textsc{Right Unit}}
	\UnaryInfC{$P \vdash P \depand \sepid$}
	\AxiomC{}
	\RightLabel{$\depand$-\textsc{Left Unit}}
	\UnaryInfC{$Q \vdash \sepid \depand Q$}
	\RightLabel{$\sepand$-\textsc{Conj}}
	\BinaryInfC{$P \sepand Q \vdash (P \depand \sepid) \sepand (\sepid \depand Q)$}
	\AxiomC{}
	\RightLabel{\textsc{RevEx}}
	\UnaryInfC{$(P \depand \sepid) \sepand (\sepid \depand Q) \vdash
			(P \sepand \sepid) \depand (\sepid \sepand Q)$}
	\RightLabel{\textsc{Cut}}
	\BinaryInfC{$P \sepand Q \vdash (P \sepand \sepid) \depand (\sepid \sepand Q)$}
	\end{prooftree}
	With $P \sepand Q \vdash (P \sepand \sepid) \depand (\sepid \sepand Q)$, we construct the following
	\begin{prooftree}\small
	\AxiomC{$P \sepand Q \vdash (P \sepand \sepid) \depand (\sepid \sepand  Q)$}
	\AxiomC{}
	\RightLabel{$\sepand$-\textsc{Unit}}
	\UnaryInfC{$P \sepand I \vdash P$}
	\AxiomC{}
	\RightLabel{$\sepand$-\textsc{Comm}}
	\UnaryInfC{$\sepid \sepand Q \vdash Q \sepand \sepid$}
	\AxiomC{}
	\RightLabel{$\sepand$-\textsc{Unit}}
	\UnaryInfC{$Q \sepand \sepid \vdash Q$}
	\RightLabel{\textsc{Cut}}
	\BinaryInfC{$\sepid \sepand Q \vdash Q$}
	\RightLabel{$\depand$-\textsc{Conj}}
	\BinaryInfC{$(P \sepand \sepid) \depand (\sepid \sepand Q) \vdash P \depand Q$}
	\RightLabel{\textsc{Cut}}
	\BinaryInfC{$P \sepand Q \vdash P \depand Q$}
	\end{prooftree}
This proof uses the admissible rule $\textsc{Cut}$, which can be derived as follows:
\begin{prooftree}\small
\AxiomC{$Q \vdash R$}
\RightLabel{$\land 2$}
\UnaryInfC{$P \land Q \vdash R$}
\RightLabel{$\rightarrow$}
\UnaryInfC{$P \vdash Q \rightarrow R$}
\AxiomC{$P \vdash Q$}
\RightLabel{\textsc{MP}}
\BinaryInfC{$P \vdash R$}
\end{prooftree}
\end{proof}

\subsection{Section~\ref{sec:sc}, Soundness and Completeness: Omitted Details} \label{app:completeness}

\algebraicsc*

\begin{proof}
Soundness can be established by a straightforward induction on the proof rules.
For completeness, we can define a Lindenbaum-Tarski algebra by quotienting
$\mathrm{Form_{\LOGIC}}$ by the equivalence relation $P \equiv Q$ iff $P \vdash Q$ and $Q \vdash P$ derivable.
This yields a \LOGIC algebra, and moreover, $[P]_{\equiv} \leq [Q]_{\equiv}$ iff
$[P \rightarrow Q]_{\equiv} = [\top]_{\equiv}$ iff
$\top \vdash P \rightarrow Q$ derivable iff $P \vdash Q$ derivable. Hence for any $P, Q$ such that
$P \vdash Q$ is \emph{not} derivable, in the Lindenbaum-Tarski algebra
(with the canonical interpretation sending formulas to their equivalence class)
$[P]_{\equiv} \not\leq [Q]_{\equiv}$ holds, establishing completeness.
\end{proof}

A \emph{filter} on a bounded distributive lattice $\mathbb{A}$ is a non-empty set
$F \subseteq A$ such that, for all $x, y \in A$, (1) $x \in F$ and $x \leq y$ implies $y \in F$; and
(2) $x, y \in F$ implies $x \land y \in F$. It is a \emph{proper} filter if it additionally satisfies
(3) $\bot \not\in F$, and a \emph{prime} filter if in addition it also satisfies
(4)   $x \lor y \in F$ implies $x \in F$ or $y \in F$.
The order dual version of these definitions gives the notions of ideal, proper ideal and prime ideal.
We denote the sets of proper and prime filters of $\mathbb{A}$ by $\mathbb{F}_{\mathbb{A}}$ and
$\mathbb{PF}_{\mathbb{A}}$ respectively, and the sets of proper and prime ideals of
$\mathbb{A}$ by $\mathbb{I}_{\mathbb{A}}$ and $\mathbb{PI}_{\mathbb{A}}$ respectively.

To prove that prime filter frames are \LOGIC frames we require an auxiliary lemma that can be used to establish the existence of prime filters. First some terminology: a \emph{$\subseteq$-chain} is a sequence of sets $(X_{\alpha})_{\alpha < \lambda}$ such that $\alpha \leq \alpha'$ implies $X_{\alpha} \subseteq X_{\alpha'}$. A basic fact about proper filters (ideals) is that the union of a $\subseteq$-chain of proper filters (ideals) is itself a proper filter (ideal). We lift the terminology to $n$-tuples of sets by determining $(X^1_{\alpha}, \ldots, X^n_{\alpha})_{\alpha < \lambda}$ to be a $\subseteq$-chain if each $(X^i_{\alpha})_{\alpha < \lambda}$ is a $\subseteq$-chain.

\begin{definition}[Prime Predicate]
A \emph{prime predicate} is a map $P: \mathbb{F}_{\mathbb{A}}^n \times \mathbb{I}_{\mathbb{A}}^m \rightarrow \{ 0, 1\}$, where $n, m \geq 0$ and $n + m \geq 1$, such that
	\begin{itemize}
		\item[a)] Given a $\subseteq$-chain $(F^0_{\alpha}, \ldots, F^n_{\alpha}, I^0_{\alpha}, \ldots,
				 I^m_{\alpha})_{\alpha < \lambda}$ of proper filters/ideals,
				\[min \{ P(F^0_{\alpha}, \ldots, I^m_{\alpha}) \mid \alpha < \lambda \} \leq P(\bigcup_\alpha
					F^0_\alpha, \ldots, \bigcup_{\alpha} I^m_{\alpha});\]
		\item[b)] $P(\ldots, H_0 \cap H_1, \ldots) \leq \max\{P(\ldots, H_0, \ldots), P(\ldots, H_1, \ldots)\}$. \qedhere
	\end{itemize}
\end{definition}

Intuitively, a prime predicate is a property of proper filter/ideal sequences whose truth value is inherited by unions of chains, and is witnessed by one of $H_0$ or $H_1$ whenever witnessed by $H_0 \cap H_1$. The proof of the next lemma can be found in \cite{docherty2019bunched}.

\begin{lemma}[Prime Extension Lemma {\cite[Lemma 5.7]{docherty2019bunched}}] \label{lem:primeextension}
If $P$ is an $(n+m)$-ary prime predicate and $F_0, \ldots, F_n, I_0, \ldots, I_m$ an $(n+m)$-tuple
of proper filters and ideals such that $P(F_0, \ldots, F_n, I_0, \ldots, I_m) = 1$
then there exists a $(n+m)$-tuple of prime filters and ideals
$F^{pr}_0, \ldots, F^{pr}_n$,
$I^{pr}_0, \ldots I^{pr}_m$ such that
$P(F^{pr}_0, \ldots, F^{pr}_n, I^{pr}_0, \ldots I^{pr}_m)=1$. \qed
\end{lemma}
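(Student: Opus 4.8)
The plan is to prove the lemma by the standard prime-filter-extension technique from Stone duality, via a single application of Zorn's Lemma, with the two defining conditions of a prime predicate engineered precisely to make the argument close. First I would fix the given tuple $\vec{H} = (F_0, \ldots, F_n, I_0, \ldots, I_m)$ with $P(\vec{H}) = 1$ and consider the poset $\mathcal{C}$ of all $(n+m)$-tuples of \emph{proper} filters and ideals that extend $\vec{H}$ componentwise under $\subseteq$ and still satisfy $P = 1$, ordered by componentwise inclusion. This poset is nonempty, as $\vec{H} \in \mathcal{C}$. To apply Zorn's Lemma I must check that every $\subseteq$-chain has an upper bound: the componentwise union of a chain of proper filters (resp. ideals) is again a proper filter (resp. ideal), the basic fact recalled just before the lemma, so the union tuple again consists of proper filters and ideals; and property (a) of a prime predicate guarantees $P$ of the union is at least the minimum of $P$ along the chain, hence $1$. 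Thus the union tuple lies in $\mathcal{C}$ and bounds the chain, and Zorn's Lemma yields a maximal element $\vec{H}^{*} = (F_0^*, \ldots, I_m^*)$.

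It then remains to show each component of $\vec{H}^*$ is prime; I would argue the filter case and obtain the ideal case by order-duality. Suppose some $F_i^*$ is not prime, so there are $a, b$ with $a \lor b \in F_i^*$ but $a \notin F_i^*$ and $b \notin F_i^*$. Let $G_a, G_b$ be the filters generated by $F_i^* \cup \{a\}$ and $F_i^* \cup \{b\}$. The key lattice identity is $F_i^* = G_a \cap G_b$: the inclusion $\subseteq$ is immediate, and for $\supseteq$ any $z \in G_a \cap G_b$ dominates $f \wedge a$ and $f \wedge b$ for a common $f \in F_i^*$ (take the meet of the two witnesses), so $z \geq f \wedge (a \lor b) \in F_i^*$ and hence $z \in F_i^*$ by upward closure. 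Property (b) in the $i$-th slot then gives $1 = P(\ldots, F_i^*, \ldots) = P(\ldots, G_a \cap G_b, \ldots) \leq \max\{P(\ldots, G_a, \ldots), P(\ldots, G_b, \ldots)\}$, so $P = 1$ with $G_a$ or $G_b$ substituted for $F_i^*$. Since $a, b \notin F_i^*$, the chosen $G$ strictly extends $F_i^*$, contradicting maximality of $\vec{H}^*$ as soon as that component is still proper.

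The main obstacle is precisely this properness bookkeeping, which is the one subtle point. I would rule out the degenerate case by distributivity: if both $G_a$ and $G_b$ were improper, then $f \wedge a = \bot$ and $g \wedge b = \bot$ for some $f, g \in F_i^*$, and writing $h = f \wedge g \in F_i^*$ we would get $h \wedge (a \lor b) = (h \wedge a) \lor (h \wedge b) = \bot \in F_i^*$, contradicting that $F_i^*$ is a proper filter containing $a \lor b$; hence at least one branch is a proper strict extension preserving $P = 1$, the desired contradiction. Because property (b) touches only a single slot at a time, the globally maximal tuple suffices: extending any one component while freezing the rest keeps the tuple in $\mathcal{C}$, so maximality is contradicted componentwise, and the order-dual of both steps (starting from $a \wedge b \in I_j^*$ with $a, b \notin I_j^*$ and the identity $I_j^* = J_a \cap J_b$) handles the ideal components. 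Since this is the standard combined extension result, I would ultimately defer the full details to \cite{docherty2019bunched} after recording the sketch above.
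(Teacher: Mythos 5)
Your overall strategy is the right one, and it is essentially the argument behind the result you are asked to reproduce: the paper itself gives no proof of this lemma, deferring entirely to \citet{docherty2019bunched} (Lemma 5.7), and the proof there is exactly this Zorn's-lemma construction---take a componentwise-maximal tuple of proper filters/ideals extending the given one and satisfying $P=1$ (condition (a), plus the fact that unions of $\subseteq$-chains of proper filters/ideals are again proper, makes Zorn applicable), then use condition (b) together with the lattice identity $F_i^* = G_a \cap G_b$ to show every component of the maximal tuple is prime. Your verification of that identity, via $z \geq (f \wedge a) \lor (f \wedge b) = f \wedge (a \lor b)$ with a common witness $f \in F_i^*$, is correct and is precisely where distributivity enters.

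However, the properness bookkeeping---which you yourself flag as the delicate point---has a genuine gap as written. The predicate $P$ is defined only on \emph{proper} filters and ideals (its domain is $\mathbb{F}_{\mathbb{A}}^n \times \mathbb{I}_{\mathbb{A}}^m$), so condition (b) with $H_0 = G_a$, $H_1 = G_b$ is not even well-formed until you know that \emph{both} generated filters are proper; you apply (b) first and only afterwards discuss properness. Moreover, your case analysis only excludes ``both improper,'' and the conclusion ``hence at least one branch is a proper strict extension preserving $P=1$'' is a non sequitur: from ``at least one of $G_a, G_b$ is proper'' and ``at least one of them satisfies $P=1$'' it does not follow that these are the same one. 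Fortunately the gap closes with a one-line observation already implicit in your identity: if, say, $G_b$ were improper, then $\bot \in G_b$ and upward closure gives $G_b = A$, whence $F_i^* = G_a \cap G_b = G_a$, contradicting $a \in G_a \setminus F_i^*$; symmetrically for $G_a$. So both generated filters are automatically proper (your separate distributivity argument for the ``both improper'' case is subsumed by this), condition (b) is then legitimately applicable, and whichever filter it selects is a proper strict extension of $F_i^*$, contradicting maximality. With that reordering---properness first, then (b)---and the order-dual argument in the ideal slots, your proof is complete and coincides with the cited one.
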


Now, whenever prime filters are required that satisfy a particular property
(for example, an existentially quantified consequent of a frame axiom),
it is sufficient to show that the property defines a prime predicate and
there exists proper filters satisfying it. We also note the following useful properties of
\LOGIC algebras, which are special cases of those found in \cite[Proposition 6.2]{docherty2019bunched}.

\begin{lemma}
Given any \LOGIC algebra $\mathbb{A}$, for all $a, b, c \in A$ and $\circ \in \{ \sepand, \depand \}$, the following properties hold:
	\[ \begin{array}{cl}
		(a \lor b) \circ c = (a \circ c) \lor (b \circ c) & a \circ (b \lor c) = (a \circ b) \lor (a \circ c) \\
		a \leq a' \text{ and } b \leq b' \text{ implies } a \circ b \leq a' \circ b' & \bot \circ a = \bot = a \circ \bot
	    \end{array} \] \qedhere
\end{lemma}

\primeisdibi*

\begin{proof}
All but one of the frame axioms can be verified in an identical fashion to the analogous proof for BI
\cite[Lemma 6.24]{docherty2019bunched}, and $\oplus_{\mathbb{A}}$ and $\odot_{\mathbb{A}}$ are both Up-Closed and Down-Closed. We focus on the novel frame axiom: Reverse Exchange.
For readability we omit the $\mathbb{A}$ subscripts on operators.
Assume there are prime filters such that
$F_x \supseteq F_x' \in F_y \oplus F_z$, $F_y \in F_{y_1} \odot F_{y_2}$
and $F_{z} \in F_{z_1} \odot F_{z_2}$.
We will prove that
	\[ \PrimePredicate{F, G}{F_x \in F \odot G \text{ and } F \in F_{y_1} \oplus F_{z_1} \text{ and } G \in F_{y_2} \oplus F_{z_2}} \]
	\jia{Why not $F_x \supseteq F_x'' \in F \odot G$ here}
is a prime predicate, abusing notation to allow $\odot$ and $\oplus$ to be defined for non-prime filters.

For a), suppose $(F_{\alpha}, G_{\alpha})_{\alpha \leq \lambda}$ is a $\subseteq$-chain
such that for all $\alpha$, $P(F_\alpha, G_\alpha) = 1$. Call $F = \bigcup_\alpha F_\alpha$
and $G = \bigcup_\alpha G_\alpha$. We must show that $P(F,G) = 1$.
Let $a \in F$, $b \in G$. Then $a \in F_\alpha$, $b \in G_\beta$ for some
$\alpha$, $\beta$. Wolog, we may assume $\alpha \leq \beta$.
Then since $F_x \in F_\beta \odot G_\beta$, we have that $a \depand b \in F_x$ as required, so $F_x \in F \odot G$.
$F \in F_{y_1} \oplus F_{z_1} \text{ and } G \in F_{y_2} \oplus F_{z_2}$ hold trivially.

For b), suppose for contradiction that $P(F \cap F', G) = 1, P(F, G) = 0$ and $P(F', G) = 0$.
From $P(F \cap F', G) = 1$ we know $F, F' \in F_{y_1} \oplus F_{y_2}$:
for all $a \in F_{y_1}, b \in F_{y_2}, a \sepand b \in F \cap F' \subseteq F, F'$.
So the only way this can be the case is if $F_x \not\in F \odot G$ and $F_x \not\in F' \odot G$.
Hence there exists $a \in F, b \in G$ such that $a \depand b \not\in F_x$, and $a' \in F', b' \in G$
such that $a' \depand b' \not\in F_x$. It follows by properties of filters that
$a \lor a' \in F \cap F'$ and $b'' = b \land b' \in G$. Hence $(a \lor a') \sepand b'' \in F_x$ by assumption,
and $(a \lor a') \sepand b'' = (a \sepand b'') \lor (a' \sepand b'')$.
Since $F_x$ is prime, this means either $a \sepand b'' \in F_x$ or $a' \sepand b'' \in F_x$.
But that's not possible: $a \sepand b'' \leq a \sepand b$ and $a' \sepand b'' \leq a' \sepand b'$,
so whichever holds results in a contradiction.
Hence either $P(F, G) = 1$ or $P(F', G) = 1$ as required.
The argument for the second component is similar.

Now consider $F = \{ c \mid \exists a \in F_{y_1}, b \in F_{z_1}(c \geq a \sepand b)\}$ and
$G = \{ c \mid \exists a \in F_{y_2}, b \in F_{z_2}(c \geq a \sepand b)\}$.
These are both proper filters. Focusing on $F$ (both arguments are essentially identical),
it is clearly upwards-closed. Further, it is closed under $\land$: if $c, c' \in F$ because
$c \geq a \sepand b$ and $c' \geq a' \sepand b'$ for $a, a' \in F_{y_1}$ and $b, b' \in F_{z_1}$
then $c \land c' \geq (a \sepand b) \land (a' \sepand b') \geq
(a \land a') \sepand (b \land b')$, with $a \land a' \in F_{y_1}$ and $b \land b' \in F_{z_1}$.
It is proper, because if $\bot \in F$, then there exists $a \in F_{y_1}$ and $b \in F_{z_1}$
such that $a * b = \bot$. Let $c \in F_{y_2}$ and $d \in F_{z_2}$ be arbitrary.
Then by our initial assumption, $a \depand c \in  F_y$ and
$b \depand d \in  F{z}$.
Hence $(a \depand c) \sepand (b \depand d) \in F_{x'} \subseteq F_x$.
However,  by the Reverse Exchange algebraic axiom,
$(a \depand c) \sepand (b \depand d) \leq (a * b) \depand (c * d) = \bot \depand (c * d) = \bot$.
By upwards-closure, $\bot \in F_x$, which is supposed to be a prime, and therefore proper, filter, which gives a contradiction.

By definition, $F \in F_{y_1} \oplus F_{z_1}$ and $G \in F_{y_2} \oplus F_{z_2}$.
To see that $F_x \in F \odot G$, let $c \in F$ (with $c \geq a \sepand b$ for some $a \in F_{y_1}$ and $b \in F_{z_1}$) and
$c' \in G$ (with $c' \geq a' \sepand b'$ for some $a' \in F_{y_2}$ and $b \in F_{z_2}$).
By assumption $a \depand a' \in  F_y$ and $b \depand b' \in F_z$, and so
$(a \depand a') \sepand (b \depand b') \in F_{x'} \subseteq F_x$.
By the algebraic Reverse Exchange axiom, we obtain
$(a \sepand b) \depand (a' \sepand b') \in F_x$, and by monotonicity of $\depand$ and
upwards-closure of $F_x$ we obtain
$c \depand c' \in F_x$. Hence $P(F, G) = 1$ and by
Lemma \ref{lem:primeextension} there are prime $F, G$ with $P(F, G) = 1$.
This verifies that the Reverse Exchange frame axiom holds.
\jia{what is monotonicity of $\depand$}
\end{proof}

\complexisdibi*

\begin{proof}
	We focus on the Reverse Exchange algebraic axiom (the other \LOGIC algebra properties can be proven in identical fashion to the analogous proof for BI \cite[Lemma 6.22]{docherty2019bunched}). Suppose $x \in (A \triangleright B) \bullet (C \triangleright D)$.
	Then there exists $x', y, z$ such that $x \sqsupseteq x' \in y \oplus z$, with $y \in A \triangleright B$ and $z \in C \triangleright D$. In turn, there thus exists $y_1, y_2, z_1, z_2$ such that $y \in y_1 \odot y_2$ and $z \in z_1 \odot z_2$ with
	$y_1 \in A, y_2 \in B, z_1 \in C$ and $z_2 \in D$.
	By the Reverse Exchange frame axiom, there exist $u, v$ such that $u \in y_1 \oplus z_1$, $v \in y_2 \oplus z_2$ and $x' \in u \odot v$.
	Hence $u \in A \bullet C$, $v \in B \bullet D$ and $x' \in (A \bullet C) \triangleright (B \bullet D)$. Since $x' \sqsubseteq x$ and
	$(A \bullet C) \triangleright (B \bullet D)$ is an upwards-closed set, $x \in (A \bullet C) \triangleright (B \bullet D)$ as required.
\end{proof}

Now clearly every persistent valuation $\mathcal{V}$ on a Kripke frame $\mathcal{X}$ generates an algebraic interpretation $\llbracket - \rrbracket_{\mathcal{V}}$ on $Com(\mathcal{X})$ with the property that $x \models_{\mathcal{V}} P$ iff $x \in \llbracket P \rrbracket$ (note that the complex algebra operations are defined precisely as the corresponding semantic clauses). Similarly, by the Representation Theorem, given an algebraic interpretation $\llbracket - \rrbracket$ on $\mathbb{A}$, a persistent valuation $\mathcal{V}_{\llbracket - \rrbracket}$ on $Pr(\mathbb{A})$ can be defined by $\mathcal{V}_{\llbracket - \rrbracket}(p) = \{ F \in \mathbb{PF}_{\mathbb{A}} \mid \llbracket p \rrbracket \in F \} = \theta_{\mathbb{A}}(\llbracket p \rrbracket)$. That $\theta$ is a monomorphism into $Com(Pr(\mathbb{A}))$ establishes that, for all $P \in \mathrm{Form_{\LOGIC}}$, $F \models_{\mathcal{V}_{\llbracket - \rrbracket}} P$ iff $\llbracket P \rrbracket \in F$.

\kripkesc*

\begin{proof}
Assume $P \not\models Q$. Then there exists a \LOGIC model $(\mathcal{X}, \mathcal{V})$ and a state $x \in X$ such that $x \models P$ but $x \not\models Q$. Hence $\llbracket P \rrbracket_{\mathcal{V}} \not\subseteq \llbracket Q \rrbracket_{\mathcal{V}}$ in $Com(\mathcal{X})$, so, by Theorem \ref{thm:algebraicsc}, $P \vdash Q$ is not derivable. Now assume $P \vdash Q$ is not derivable. By Theorem \ref{thm:algebraicsc} there exists a \LOGIC algebra
$\mathbb{A}$ and an interpretation $\llbracket - \rrbracket$ such that $\llbracket P \rrbracket \not\leq \llbracket Q \rrbracket$. From this it can be established that there is a prime filter $F$ on $\mathbb{A}$ such that $ \llbracket P \rrbracket \in F$ and $\llbracket Q \rrbracket \not\in F$. Hence
$F \models_{\mathcal{V}_{\llbracket - \rrbracket}} P$ but $F \not\models_{\mathcal{V}_{\llbracket - \rrbracket}} Q$, so $P \not\models Q$.
\end{proof}

\subsection{Section \ref{sec:probmodel}, probabilistic model: omitted proofs}
			\label{app:probabilistic}

			\paragraph{Remark}
			In the following, we sometimes abbreviate $\dom(f_i)$ as $\dm{i}$ and
			$\range(f_i)$	as $\rg{i}$.

%			\begin{lemma}
%				\label{Md:odot}
%				We prove \Cref{eq:simplemu}: given $f\colon \R{S} \to \DR{T}$ and $g\colon \R{T} \to \DR{U}$,  and $d \in \R{S}$, $m \in \R{U}$ such that $d \otimes m$ is defined,
%				\begin{align*}
%					(f \odot g)(d)(m) &= f(d)(m^{T}) \cdot g(m^{T})(m^{U})
%				\end{align*}
%			\end{lemma}
%	\begin{proof}
%	For any $d \in \R{S}$, $m \in \R{U}$, note that \Cref{eq:odot} yields the explicit formula:
%         \[
%          (f \odot g)(d)(m) \defeq \sum_{m' \in \R T} f(d)(m') \cdot g(m')(m) .
%        \]
%Now, note that $ f (d) (m') \cdot g(m') (m)$ is zero if either $d \otimes m'$ is undefined or $m' \otimes m$ is undefined. Recall that $S \subseteq T \subseteq U$. The only $m' \in \R{S}$ such that $m' \otimes m$ is defined is $m' = m^{T}$. Since $d \otimes m$ is defined, when $m' = m^{T}$, we have $(m')^S = m^{S}=d $ and so $d \otimes m'$ is also defined. Thus,
%	\begin{align*}
%	(f \odot g) (d) (m) &= \sum_{m' \in \R{T}} f (d) (m') \cdot g(m') (m) \\
%	&= f (d) (m^{T}) \cdot g(m^{T}) (m) \qedhere
%	\end{align*}
%	\end{proof}
%
In the proof of \Cref{MdisLOGIC} we use that $\MD$ is closed under $\oplus$ and $\odot$, which we prove next.
\begin{lemma}
\label{Md:closed}
$\MD$ is closed under $\oplus$ and $\odot$.
\end{lemma}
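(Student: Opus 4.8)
The plan is to handle the two operations separately; in each case I would verify that the result is again an element of $\MD$ by checking (i) that its domain is contained in its range, (ii) that every output is a genuine probability distribution, and (iii) that it preserves its input to its output. Closure under $\odot$ is the more routine of the two. If $f\colon\R S\to\DR T$ and $g\colon\R T\to\DR V$ lie in $\MD$, then $S\subseteq T\subseteq V$, so $\dom(f\odot g)=S\subseteq V=\range(f\odot g)$; and each $(f\odot g)(d)=\bind(f(d))(g)$ is a distribution because $\bind$ is the bind of the distribution monad applied to a distribution and a kernel. For input preservation I would first marginalize to $T$: using the simplified formula \eqref{eq:simplemu} and the fact that $g$ preserves input (so $\sum_{m\colon m^T=t}g(t)(m)=(\pi_T g(t))(t)=\delta_t(t)=1$), one obtains $\pi_T(f\odot g)=f$. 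Since $S\subseteq T$ and marginalization satisfies $\pi_S=\pi_S\circ\pi_T$, it follows that $\pi_S(f\odot g)=\pi_S f=\unit_{\R S}$, as $f\in\MD$.

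The $\oplus$ case is where the real work lies. Take $f\colon\R S\to\DR T$ and $g\colon\R U\to\DR V$ in $\MD$ with $S\cap U=T\cap V$, so that $f\oplus g\colon\R{S\cup U}\to\DR{T\cup V}$ is defined. The inclusion $S\cup U\subseteq T\cup V$ is immediate from $S\subseteq T$ and $U\subseteq V$. The main obstacle is showing that $(f\oplus g)(d)(m)=f(d^S)(m^T)\cdot g(d^U)(m^V)$ sums to $1$ over $m\in\R{T\cup V}$: because $T$ and $V$ overlap exactly on $T\cap V=S\cap U$, the two factors are not independent, and one must use input preservation to see that the overlap is forced to be consistent. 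Concretely, I would write $m$ as determined by $m^T=a\in\R T$ together with $b=m^{V\setminus T}\in\R{V\setminus T}$, so that $m^V=a^{T\cap V}\otimes b$, and factor the sum as
\[
  \sum_{a\in\R T} f(d^S)(a)\sum_{b\in\R{V\setminus T}} g(d^U)\bigl(a^{T\cap V}\otimes b\bigr).
\]
The inner sum is the marginal of $g(d^U)$ on $T\cap V=S\cap U\subseteq U$ evaluated at $a^{T\cap V}$; since $g$ preserves input, that marginal is $\delta_{d^{S\cap U}}$, so the inner sum equals $1$ exactly when $a^{S\cap U}=d^{S\cap U}$. But every $a$ in the support of $f(d^S)$ already satisfies $a^S=d^S$ (as $f$ preserves input), hence $a^{S\cap U}=d^{S\cap U}$ automatically, and the whole expression collapses to $\sum_a f(d^S)(a)=1$.

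Finally, input preservation for $\oplus$ follows from the same support analysis: the computation above shows that $(f\oplus g)(d)$ is concentrated on memories $m$ with $m^S=d^S$ and $m^U=d^U$, i.e.\ with $m^{S\cup U}=d$, and together with total mass $1$ this gives $\pi_{S\cup U}(f\oplus g)(d)=\delta_d$, so $\pi_{S\cup U}(f\oplus g)=\unit_{\R{S\cup U}}$. I expect the distribution-summation step for $\oplus$ to be the crux: it is the only place where the matching condition $S\cap U=T\cap V$ and the input-preservation hypothesis genuinely interact, and getting the marginalization bookkeeping right is what makes the two factors combine into a single normalized distribution.
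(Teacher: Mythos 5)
Your proof is correct and takes essentially the same route as the paper's: a direct verification that parallel and sequential composition of input-preserving kernels again yield well-formed, input-preserving kernels, driven by the definedness condition $S\cap U = T\cap V$, input preservation of the components, and support/marginalization bookkeeping. The only differences are organizational---the paper proves input preservation first and deduces total mass from it via a symmetric factorization over the disjoint blocks $\R{T\setminus S}$ and $\R{V\setminus U}$, whereas you prove total mass first through an asymmetric nested sum (Dirac inner marginal plus a support argument) and then read off input preservation from the concentration of the support---but these are the same ingredients in a different order.
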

\begin{proof}
For any $f_1, f_2 \in \MD$, we need to show that
		\begin{itemize}[leftmargin=*]
					\item If $f_1 \oplus f_2$ is defined, then $f_1 \oplus f_2 \in \MD$.
Recall that  $f_1 \oplus f_2$ is defined if and only if $\rg{1} \cap \rg{2} = \dm{1} \cap \dm{2}$, which implies that $(\rg{1} \cup \rg{2}) \setminus (\dm{1} \cup \dm{1}) =(\rg{1} \setminus \dm{1}) \cup (\rg{2} \setminus \dm{2}) $, and $(\rg{1} \setminus \dm{1}) \cap (\rg{2} \setminus \dm{2}) = \emptyset$.

	So we can split any memory assignment on $(\rg{1} \cup \rg{2}) \setminus (\dm{1} \cup \dm{2})$ into two disjoint parts, one on $\rg{1} \setminus \dm{1}$, another on $\rg{2} \setminus \dm{2}$.

State	$f_1 \oplus f_2$ preserves the input because  for any $d \in \R{\dm{1} \cup \dm{2}}$, we can obtain ($\star$):
\begin{align*}
&(\pi_{\dm{1} \cup \dm{2}} (f_1 \oplus f_2))(d)(d) \\
&= \sum_{x} (f_1 \oplus f_2)(d)(d \bowtie x)  \tag{$x \in \R{(\rg{1} \cup \rg{2}) \setminus (\dm{1} \cup \dm{2} )}$} \\
&\stackrel\dagger= \sum_{x_1, x_2} f_1(d^{\dm{1}})(d^{\dm{1}} \bowtie x_1) \cdot f_2(d^{\dm{2}})(d^{\dm{2}} \bowtie x_2) \tag{$x_1 \in \R{\rg{1} \setminus \dm{1}},\, x_2 \in \R{\rg{2} \setminus \dm{2}}$}\\
&= \left(\sum_{x_1 \in \R{\rg{1} \setminus \dm{1}}} f_1(d^{\dm{1}})(d^{\dm{1}} \bowtie x_1) \right) \cdot \left(\sum_{ x_2 \in \R{\rg{2} \setminus \dm{2}}} f_2(d^{\dm{2}})(d^{\dm{2}} \bowtie x_2) \right) \notag \\
&= 1 \cdot 1 =1 \tag{Using $f_1, f_2 \in \MD$} \\
\end{align*}
Step $\dagger$ follows using $(\rg{1} \cup \rg{2}) \setminus (\dm{1} \cup \dm{1}) =(\rg{1} \setminus \dm{1}) \cup (\rg{2} \setminus \dm{2})$ and $(\rg{1} \setminus \dm{1}) \cap (\rg{2} \setminus \dm{2}) = \emptyset$.
Then, for any $d \in \R{\dm{1} \cup \dm{2}}$, $(f_1 \oplus f_2)(d)$ is a distribution since:
\begin{align*}
&\sum_{m \in \R{\rg{1} \cup \rg{2}}} (f_1 \oplus f_2)(d)(m) \\
&= \sum_{m \in \R{\rg{1} \cup \rg{2}}} f_1(d^{\dm{1}})(m^{\rg{1}}) \cdot f_2(d^{\dm{2}})(m^{\rg{2}}) \\
&\stackrel\ddagger= \sum_{x_1,x_2} f_1(d^{\dm{1}})(d^{\dm{1}} \bowtie x_1) \cdot f_2(d^{\dm{2}})(d^{\dm{2}} \bowtie x_2) \tag{$x_1 \in \R{\rg{1} \setminus \dm{1}},\, x_2 \in \R{\rg{2} \setminus \dm{2}}$}\\
&=1 \tag{Using ($\star$)}
\end{align*}
Step $\ddagger$ follows using $(\rg{1} \setminus \dm{1}) \cap (\rg{2} \setminus \dm{2}) = \emptyset$, and the $f_i$ term is 0 when $d^{\dm{i}} \neq m^{\dm{i}}$.

Thus, $f_1 \oplus f_2$ is a kernel in $\MD$.
\item If $f_1 \odot f_2$ is defined, then $f_1 \odot f_2 \in \MD$. Recall that $f_1 \odot f_2: \R{\dm{1}} \rightarrow \DR{\rg{2}}$ is defined iff $\rg{1} = \dm{2}$.
$f_1 \odot f_2$ preserves the input because for any $d \in \R{\dm{1}}$, we can obtain $(\spadesuit)$
\begin{align*}
&(\pi_{\dm{1}} f_1 \odot f_2) (d)(d)\\
 &= \sum_{x \in \R{\rg{2} \setminus \dm{1}}} (f_1 \odot f_2)(d)(d\bowtie x) \notag \\
&=\sum_{x \in \R{\rg{2} \setminus \dm{1}}} f_1(d)(d \bowtie x^{\rg{1} \setminus \dm{1}}) \cdot f_2(d \bowtie x^{\rg{1} \setminus \dm{1}})(d \bowtie x) \notag \\
&=\sum_{x_1} f_1(d)(d \bowtie x_1) \cdot \left( \sum_{x_2} f_2(d \bowtie x_1)(d \bowtie x_1 \bowtie x_2) \right) \tag{$x_1 \in \R{\rg{1} \setminus \dm{1}}$, $x_2 \in \R{\rg{2} \setminus \rg{1}} $} \\
&=\sum_{x_1 \in \R{\rg{1} \setminus \dm{1}}} \left( f_1(d)(d \bowtie x_1) \cdot 1 \right) \tag{Using $f_2 \in \MD$ } \notag \\
&=1
\end{align*}
Then, for any $d \in \dm{1}$, $(f_1 \odot f_2)(d)$ is a distribution as
\begin{align*}
\sum_{m \in \rg{2}} (f_1 \odot f_2)(d)(m) &= \sum_{m \in \rg{2}} f_1(d)(m^{\rg{1}}) \cdot f_2(m^{\rg{1}})(m) \tag{\Cref{eq:simplemu} } \\
																																										&\stackrel \heartsuit=\sum_{x \in \rg{2} \setminus \dm{1}} f_1(d)(d \bowtie x^{\rg{1} \setminus \dm{1}}) \cdot f_2(d \bowtie x^{\rg{1} \setminus \dm{1}})(d \bowtie x)\\
 &=1 \tag{Using ($\spadesuit$)}
\end{align*}
Step $\heartsuit$ follows since the $f_i$ term is 0 when $d^{\dm{i}} \neq m^{\dm{i}}$.

Thus $f_1 \odot f_2$ is a kernel in $\MD$. \qedhere
\end{itemize}
\end{proof}

            \begin{lemma}
	\label{Md:tmodel}
              The probabilistic model $\MD$ is a $\T$-model defined in~\cref{def:tmodel},
 for $\T=\DD$.
            \end{lemma}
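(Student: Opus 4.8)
The plan is to check, one requirement at a time, that the data $(\MD,\sqsubseteq,\oplus,\odot,\MD)$ of \cref{def:probmodel} meet every condition that \cref{def:tmodel} imposes on a $\T$-model, instantiated at $\T=\DD$. These conditions split into (i) closure of the carrier under the two compositions, (ii) the preorder conditions, and (iii) the algebraic laws relating $\oplus$ and $\odot$ for input-preserving Kleisli arrows. Group (i) is already in hand: \cref{Md:closed} shows that $\MD$ is closed under $\oplus$ and $\odot$, and each identity kernel $\unit_{\R{R}}$ trivially preserves its input and so belongs to $\MD$. For (ii), reflexivity $f\sqsubseteq f$ is witnessed by $R=\emptyset$ and $h=\unit_{\range(f)}$, and transitivity is obtained by composing the two witnessing extensions of $f\sqsubseteq g\sqsubseteq g'$ and collapsing the result into a single extension, which uses associativity of $\odot$ together with the Exchange Equality property discussed below.

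For group (iii) I would exploit that $\DD$ is a \emph{commutative} monad. Commutativity and associativity of $\oplus$ follow from the symmetry and associativity of the pointwise product of distributions over memories with compatible overlap (the content of $\DD$ being commutative, cf.\ the double strength $st_{A,B}$); associativity of $\odot$ is just associativity of Kleisli composition, i.e.\ the third monad law of \cref{eq:monad}; and the unit-existence and unit-coherence laws follow from input-preservation, exactly as in the ($\oplus$ Unit Coherence) calculation of the proof sketch of \cref{MdisLOGIC}. Each of these reduces to a short computation once the relevant domain/range compatibility side-conditions are discharged.

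The main obstacle is the \emph{Exchange Equality} condition: whenever both $(f_1\oplus f_2)\odot(f_3\oplus f_4)$ and $(f_1\odot f_3)\oplus(f_2\odot f_4)$ are defined they are equal, and definedness of the second forces definedness of the first. I would prove the equality by unfolding both sides with the simplified composition formula \cref{eq:simplemu} (available because all $f_i$ preserve their inputs) and the definition of $\oplus$, rewriting each side as a single product of four kernel values evaluated at the appropriate memory projections; commutativity of multiplication in $[0,1]$, backed by commutativity of $\DD$, then makes the two products agree term by term. The genuinely delicate part is the definedness bookkeeping---tracking that the overlap constraints $\dom\cap\dom=\range\cap\range$ required by the two groupings of $\oplus$ and $\odot$ line up, so that definedness of the $(f_1\odot f_3)\oplus(f_2\odot f_4)$ grouping propagates to the other. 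Once Exchange Equality is established, the remaining conditions of \cref{def:tmodel} follow by the same unfolding, completing the verification that $\MD$ is a $\T$-model for $\T=\DD$.
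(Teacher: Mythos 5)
Your proposal is correct and takes essentially the same route as the paper's own proof: closure is delegated to \cref{Md:closed}, and the remaining conditions (5)--(8) of \cref{def:tmodel} are verified by unfolding $\oplus$ and the simplified Kleisli formula \cref{eq:simplemu} (valid because all kernels preserve input) and appealing to commutativity and associativity of multiplication on $[0,1]$ — exactly the paper's computations, including the domain/range bookkeeping for Exchange Equality. The only deviation is that you also prove reflexivity and transitivity of $\sqsubseteq$, the definedness propagation for Exchange Equality, and unit-coherence-style laws, none of which are actually conditions of \cref{def:tmodel}; the paper establishes these once and for all at the level of arbitrary $\T$-models (\cref{Md:order}, \cref{exch_revex}, \cref{Mframeaxioms}), so that part of your work is harmless but redundant.
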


\begin{proof}$\MD$ satisfies condition (1)--(4) and (10) by construction, so we only prove (5)--(9).
            \begin{itemize}
            \item[(5)] We show that when  $(f \oplus g) \oplus h$ and $f \oplus (g \oplus h)$ are defined,  $(f \oplus g) \oplus h = f \oplus (g \oplus h)$. Consider
            $f\colon \m{S} \rightarrow \DR{S \cup T}$, $g\colon \m{U} \rightarrow \DR{U \cup V}$, and $h\colon \m{W} \rightarrow \DR{W \cup X}$.
                  For any $d \in \R{S \cup U \cup W}$, and $m \in \R{S \cup T \cup U \cup V \cup W \cup X}$,
                  \begin{align*}
                    ((f \oplus g) \oplus h) (d)(m)&=  \big(f(d^S)(m^{S\cup T}) \cdot g(d^{U})(m^{U\cup V}) \big)  \cdot h (d^{W})(m^{W\cup X}) \tag{def. $\oplus$}\\
                                                         &=   f(d^S)(m^{S\cup T}) \cdot \big(g(d^{U})(m^{U\cup V})   \cdot h (d^{W})(m^{W\cup X})\big)  \\
                                                         &= (f \oplus (g \oplus h)) (d)(m)
                  \end{align*}

					\item[(6)] When $f_1 \oplus f_2$ and $f_2 \oplus f_1$ defined, $f_1 \oplus f_2 = f_2 \oplus f_1$.

 For any $d \in \R{\dm{1} \cup \dm{2}}$, $m \in \DR{\rg{1} \cup \rg{2}}$ such that $d \bowtie m$ is defined,
	\begin{align*}
(f_1 \oplus f_2) (d)(m) &\defeq f_1(d^{\dm{1}})(m^{\rg{1}}) \cdot f_2(d^{\dm{2}})(m^{\rg{2}}) =  f_2(d^{\dm{2}})(m^{\rg{2}}) \cdot f_1(d^{\dm{1}})(m^{\rg{1}}) = (f_2 \oplus f_1)(d)(m)
\end{align*}

Thus,  $f_1 \oplus f_2 = f_2 \oplus f_1$.

              \item[(7)] For any $f\colon  \R{A} \rightarrow \DR{A \cup X} \in M$, and any $S \subseteq A$, we must show
                \begin{align*}
                  f \oplus \unit_{S} = f
                \end{align*}
		Since $S \subseteq A$, we have $\dom(f \oplus \unit_S) = A \cup S = A = \dom(f)$ and $\range(f \oplus \unit_S) = A \cup X \cup S = A \cup X = \range(f)$.
		For any $d \in \R{A}$, and any $r \in \R{A \cup X}$ such that $d \otimes r$ is defined, we have
				\begin{align*}
				(f \oplus \unit_S) (d)(r) &= f(d)(r) \cdot \unit(d^S)(r^S) \\
				     				  &= f(d)(r) \cdot 1 = f(d)(r)
			        \end{align*}
		Hence, $f \oplus \unit_S = f$.

              \item[(8)] We show that when both $(f_1 \oplus f_2) \odot (f_3 \oplus f_4)$ and $(f_1 \odot f_3) \oplus (f_2 \odot f_4)$ are defined, it hold that
                \begin{align*}
                  (f_1 \oplus f_2) \odot (f_3 \oplus f_4) &=(f_1 \odot f_3) \oplus (f_2 \odot f_4).
                \end{align*}
              First note that the well-definedness of both terms we can conclude that $D_1 \subseteq R_1 = D_3 \subseteq R_3$, $D_2 \subseteq R_2 = D_4 \subseteq R_4$, where $D_i = \dom(f_i)$ and $R_i = \range(f_i)$. Moreover, both terms are of type $\R{D_1 \cup D_2} \rightarrow \DR{R_3 \cup R_4}$, and, for any $d \in D_1 \cup D_2$ and $m \in R_3 \cup R_4$:
		\begin{align*}
		\big(	(f_1 \oplus f_2) \odot (f_3 \oplus f_4) \big) (d) (m) &= (f_1 \oplus f_2)(d)(m^{R_1 \cup R_2}) \cdot (f_3 \oplus f_4)(m^{D_3 \cup D_4})(m) \tag{\Cref{eq:simplemu}}\\
		 				 &= \big( f_1(d^{D_1}) (m^{R_1}) \cdot f_2(d^{D_2})(m^{R_2}) \big) \cdot \big( f_3(m^{D_3}) (m^{R_3}) \cdot f_4(m^{D_4})(m^{R_4}) \big) \\
		\big(	(f_1 \odot f_3) \oplus (f_2 \odot f_4) \big)(d) (m)  &= (f_1 \odot f_3)(d^{D_1})(m^{R_3}) \cdot (f_2 \odot f_4)(d^{D_2})(m^{R_3}) \\
 &= \big( f_1(d^{D_1})(m^{R_1}) \cdot f_3(d^{D_3})(m^{R_3}) \big) \cdot  \big( f_2(d^{D_2})(m^{R_2}) \cdot f_4(d^{D_4})(m^{R_4}) \big)\\
&= \big( f_1(d^{D_1}) (m^{R_1}) \cdot f_2(d^{D_2})(m^{R_2}) \big) \cdot \big( f_3(m^{D_3}) (m^{R_3}) \cdot f_4(m^{D_4})(m^{R_4}) \big) \\
		\end{align*}
		Thus, $(f_1 \odot f_3) \oplus (f_2 \odot f_4)  = (f_1 \oplus f_2) \odot (f_3 \oplus f_4) $.
	\item [(9)] Proved in~\cref{Md:closed}
		\qedhere
            \end{itemize}
\end{proof}

\MdisLOGIC*
			\begin{proof}
				By~\cref{Mframeaxioms} that all $\T$-models are \LOGIC frames and by~\cref{Md:tmodel} that $\MD$ is a $\T$-model, $\MD$ is a \LOGIC frame.
			\end{proof}

\subsection{Section \ref{sec:relmodel}, relational model: omitted proofs}
			\label{app:relational}
%
%			\begin{lemma}
%				\label{Mp:Odot}
%				For $f_1, f_2$ such that $f_1 \odot f_2$ is defined, for all $s \in \dm{1}$, 		\begin{align*}
%					(f_1 \odot f_2)(s)  =  \{ v \mid u \in f_1(s) \textbf{ and } v \in f_2(u)\}
%				\end{align*}
%			\end{lemma}
%
%			\begin{proof}
%				Recall that $f_1 \odot f_2$ is defined if and only if $\rg{1} = \dm{2}$.
%				For any $s \in \R{\dm{1}}$,
%				\begin{align*}
%					(f_1 \odot f_2) (s) &= (f_1 . \PP(f_2) . \mu) (s) \\
%	&= \cup_{m \in \R{\rg{1}}} f_1(s) \bowtie f_2(m) \\
%	&= \{u \bowtie v \mid u \in f_1(s) \text{ and } v \in f_2(u)\} \\
%	&= \{v \mid u \in f_1(s) \text{ and } v \in f_2(u)\} .
%				\end{align*}
%			\end{proof}
For the proof of \Cref{thm:relframe} we need the following closure property.
	\begin{lemma}
	\label{Mp:closed}
	$\MP$ is closed under $\oplus$ and $\odot$.
	\end{lemma}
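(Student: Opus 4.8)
The plan is to mirror the structure of the proof of \Cref{Md:closed}, splitting into the two cases given by the two composition operators, but the argument will be considerably simpler here: powerset kernels return \emph{sets} of tuples rather than distributions, so no summation of weights is required---only bookkeeping of projections together with nonemptiness. Throughout I will abbreviate $\dom(f_i)$ and $\range(f_i)$ as $\dm{i}$ and $\rg{i}$, and I will repeatedly use the two consequences of input preservation $\pi_{\dm{i}} f_i = \unit$: first, that $f_i(d)$ is \emph{nonempty} for every $d \in \R{\dm{i}}$ (since its projection to $\dm{i}$ must be the singleton $\{d\}$), and second, that every tuple $m \in f_i(d)$ satisfies $m^{\dm{i}} = d$.

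For closure under $\oplus$, suppose $f_1 \oplus f_2$ is defined, so that $\rg{1} \cap \rg{2} = \dm{1} \cap \dm{2}$ and $f_1 \oplus f_2 \colon \R{\dm{1} \cup \dm{2}} \to \PR{\rg{1} \cup \rg{2}}$. Since $\dm{i} \subseteq \rg{i}$, I get $\dm{1} \cup \dm{2} \subseteq \rg{1} \cup \rg{2}$, so the domain is contained in the range, and it remains to verify $\pi_{\dm{1} \cup \dm{2}}(f_1 \oplus f_2) = \unit$. Fixing $d \in \R{\dm{1} \cup \dm{2}}$, I will show that every tuple in $(f_1 \oplus f_2)(d) = f_1(d^{\dm{1}}) \bowtie f_2(d^{\dm{2}})$ projects to $d$ on $\dm{1} \cup \dm{2}$. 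Any such tuple has the form $m_1 \otimes m_2$ with $m_i \in f_i(d^{\dm{i}})$; input preservation gives $m_1^{\dm{1}} = d^{\dm{1}}$ and $m_2^{\dm{2}} = d^{\dm{2}}$, and since $\rg{1} \cap \rg{2} = \dm{1} \cap \dm{2}$ these agree on the overlap, so the join is well-defined with $(m_1 \otimes m_2)^{\dm{1} \cup \dm{2}} = d^{\dm{1}} \otimes d^{\dm{2}} = d$. Nonemptiness of $f_1(d^{\dm{1}})$ and $f_2(d^{\dm{2}})$ then makes the join nonempty, so the projected set is exactly $\{d\}$, establishing $f_1 \oplus f_2 \in \MP$.

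For closure under $\odot$, suppose $f_1 \odot f_2$ is defined, so $\rg{1} = \dm{2}$ and $f_1 \odot f_2 \colon \R{\dm{1}} \to \PR{\rg{2}}$. The chain $\dm{1} \subseteq \rg{1} = \dm{2} \subseteq \rg{2}$ places the domain inside the range. For input preservation, fix $s \in \R{\dm{1}}$ and take any $v \in (f_1 \odot f_2)(s)$, i.e.\ $v \in f_2(u)$ for some $u \in f_1(s)$. Then $u^{\dm{1}} = s$ and $v^{\dm{2}} = u$ by input preservation of $f_1$ and $f_2$, whence $v^{\dm{1}} = (v^{\dm{2}})^{\dm{1}} = u^{\dm{1}} = s$; combined with the nonemptiness of $f_1(s)$ and of each $f_2(u)$, this yields $\pi_{\dm{1}}(f_1 \odot f_2)(s) = \{s\}$, so $f_1 \odot f_2 \in \MP$. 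The only real obstacle is the projection bookkeeping in the $\oplus$ case: one must use the defining constraint $\rg{1} \cap \rg{2} = \dm{1} \cap \dm{2}$ both to guarantee that the natural joins are never vacuous and to see that the two factors' restrictions to the shared domain coincide. Once this is arranged, everything else is routine.
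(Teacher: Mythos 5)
Your proof is correct and takes essentially the same approach as the paper's: closure reduces to verifying input preservation of $f_1 \oplus f_2$ and $f_1 \odot f_2$ directly from the definedness conditions $\rg{1} \cap \rg{2} = \dm{1} \cap \dm{2}$ and $\rg{1} = \dm{2}$, together with input preservation of the components. Your tuple-level bookkeeping (every tuple of the join, resp.\ composite, projects back to the input, plus nonemptiness of the relevant sets) is precisely what justifies the paper's more compact steps of distributing $\pi_{\dm{1} \cup \dm{2}}$ over $\bowtie$ and collapsing $\pi_{\dm{1}}(f_1 \odot f_2)$ to $\pi_{\dm{1}} f_1$, so the two write-ups differ only in level of explicitness.
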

	\begin{proof}
	For any $f_1, f_2 \in \MP$, we need to show that :
	\begin{itemize}[leftmargin=*]
	\item If $f_1 \oplus f_2$ is defined, then $f_1 \oplus f_2 \in \MP$.
		Recall that  $f_1 \oplus f_2$ is defined if and only if $\rg{1} \cap \rg{2} = \dm{1} \cap \dm{2}$, which implies that
		\begin{align*}
		(\dm{1} \cup \dm{2}) \cap \rg{1} = (\dm{1} \cap \rg{1}) \cup (\dm{2} \cap \rg{1}) = \dm{1} \cup (\dm{2} \cap \dm{1}) = \dm{1} \\
		(\dm{1} \cup \dm{2}) \cap \rg{2} = (\dm{1} \cap \rg{2}) \cup (\dm{2} \cap \rg{2}) = (\dm{1} \cap \dm{2}) \cup \dm{2} = \dm{2}
		\end{align*}
	We show $f_1 \oplus f_2$ also preserves the input: for any $d \in \R{\dm{1} \cup \dm{2}}$,
                      \begin{align*}
                        (\pi_{\dm{1} \cup \dm{2}} (f_1 \oplus f_2))(d)
                        &= \pi_{\dm{1} \cup \dm{2}} ((f_1 \oplus f_2)(d)) \\
                        &= \pi_{\dm{1} \cup \dm{2}} f_1(d^{\dm{1}}) \bowtie f_2(d^{\dm{2}}) \\
                        &\stackrel\dagger= \pi_{\dm{1}} f_1(d^{\dm{1}}) \bowtie \pi_{\dm{1}} f_2(d^{\dm{2}})\\
                        &= \{d^{\dm{1}}\} \bowtie \{d^{\dm{2}}\} \tag{Because $f_1, f_2 \in \MP$} \\
                        &= \{d\}.
                      \end{align*}
         Step $\dagger$  follows because $(\dm{1} \cup \dm{2}) \cap \rg{1} = \dm{1}$ and $(\dm{1} \cup \dm{2}) \cap \rg{2} = \dm{2}$.

	\item If $f_1 \odot f_2$ is defined, then $f_1 \odot f_2 \in \MP$.
	Recall $f_1 \odot f_2$ is defined iff $\rg{1} = \dm{2}$, and gives a map of type  $\R{\dm{1}} \rightarrow \DR{\rg{2}}$.
	We show that $f_1 \odot f_2$ preserves the input: for any $d \in \R{\dm{1}}$,
			\begin{align*}
			(\pi_{\dm{1}} f_1 \odot f_2) (d) &= (\pi_{\dm{1}} f_1) (d) \tag{Because $\dm{1} \subseteq \rg{1} = \dm{2}$} \\
    		      &= \unit_{\dm{1}}(d)
			\end{align*}
	Thus, $\pi_{\dm{1}} f_1 \odot f_2 = \unit_{\dm{1}}$ and hence $f_1 \odot f_2$ preserves the input. \qedhere
	\end{itemize}
	\end{proof}

            \begin{lemma}
	\label{Mp:tmodel}
              The relational model $\MP$ is a $\T$-model~\cref{def:tmodel}
for the monad $\T=\PP$.
            \end{lemma}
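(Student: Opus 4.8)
The plan is to mirror the proof of \Cref{Md:tmodel} essentially verbatim in structure, verifying each of the $\T$-model conditions (1)--(10) of \cref{def:tmodel} for the powerset monad $\T = \PP$. Conditions (1)--(4) and (10) are purely structural (typing, input-preservation, definedness side-conditions) and hold by construction of the relational frame, exactly as in the probabilistic case; condition (9)---closure of $\MP$ under $\oplus$ and $\odot$---is already established in \Cref{Mp:closed}. Thus only the commutative-monoid laws for $\oplus$ (conditions (5)--(7)) and the Exchange Equality (condition (8)) require work, and throughout, the role played by multiplication of probabilities in \Cref{Md:tmodel} is now played by the natural join $\bowtie$, so the key algebraic facts are that $\bowtie$ is commutative, associative, and idempotent.

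First I would dispatch the monoid laws. For associativity (5) and commutativity (6), unfolding $(f \oplus g)(d) = f(d^S) \bowtie g(d^U)$ reduces both identities pointwise to associativity and commutativity of $\bowtie$ on relations. For the unit law (7), $f \oplus \unit_S = f$ whenever $S \subseteq \dom(f) = A$: here $\unit_S(d^S) = \{ d^S \}$, so $(f \oplus \unit_S)(d) = \{ r \bowtie d^S \mid r \in f(d)\}$, and since $f$ preserves its input every $r \in f(d)$ satisfies $r^S = d^S$, whence $r \bowtie d^S = r$ by absorption (an instance of idempotence of $\bowtie$); the domain/range bookkeeping is identical to the probabilistic case.

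The crux is the Exchange Equality (8): when both are defined, $(f_1 \oplus f_2) \odot (f_3 \oplus f_4) = (f_1 \odot f_3) \oplus (f_2 \odot f_4)$. As in \Cref{Md:tmodel}, definedness forces $\dm{1} \subseteq \rg{1} = \dm{3}$ and $\dm{2} \subseteq \rg{2} = \dm{4}$, and both sides have type $\R{\dm{1} \cup \dm{2}} \to \PR{\rg{3} \cup \rg{4}}$. I would unfold both sides into explicit sets of joins. Writing $u_1 \in f_1(d^{\dm{1}})$ and $u_2 \in f_2(d^{\dm{2}})$, the left side applies $f_3$ and $f_4$ to the projections $(u_1 \bowtie u_2)^{\dm{3}} = u_1$ and $(u_1 \bowtie u_2)^{\dm{4}} = u_2$ (these identities are exactly input-preservation, since $\dm{3} = \rg{1}$ and $\dm{4} = \rg{2}$), yielding joins $v_3 \bowtie v_4$ with $v_3 \in f_3(u_1)$ and $v_4 \in f_4(u_2)$; the right side produces the same joins but with the quantifiers grouped differently. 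The two sets coincide by commutativity and associativity of $\bowtie$, once one checks that their definedness conditions agree: the $\oplus$-definedness constraints on the two expressions force $\rg{1} \cap \rg{2} = \dm{1} \cap \dm{2} = \rg{3} \cap \rg{4}$, and by input-preservation $v_3$ agrees with $u_1$ on $\rg{1}$ and $v_4$ with $u_2$ on $\rg{2}$, so ``$u_1 \bowtie u_2$ defined'' on the left is equivalent to ``$v_3 \bowtie v_4$ defined'' on the right.

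The main obstacle is precisely this Exchange Equality, and within it the domain/range bookkeeping: one must verify that the definedness side-conditions on the two expressions coincide and that the projection identities $(u_1 \bowtie u_2)^{\dm{3}} = u_1$, $(u_1 \bowtie u_2)^{\dm{4}} = u_2$ hold, which is where input-preservation is used. Once all of (1)--(10) are verified the lemma is proved; combined with the general fact that every $\T$-model is a \LOGIC frame (\cref{Mframeaxioms}), this yields \Cref{thm:relframe} in the same one-line fashion as \Cref{MdisLOGIC}.
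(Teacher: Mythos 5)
Your proposal is correct and follows the paper's own proof essentially step for step: conditions (1)--(4) and (10) by construction, (9) from \Cref{Mp:closed}, the monoid laws (5)--(7) reduced pointwise to commutativity, associativity, and idempotence/absorption of $\bowtie$, and Exchange Equality (8) proved by unfolding both sides into sets of joins, where input-preservation (via $\dm{3} = \rg{1}$, $\dm{4} = \rg{2}$) yields the projection identities $(u_1 \bowtie u_2)^{\dm{3}} = u_1$ and $(u_1 \bowtie u_2)^{\dm{4}} = u_2$ --- exactly the step marked ($\star$) in the paper. Your treatment of the unit law and the definedness bookkeeping in (8) is in fact slightly more explicit than the paper's, but the argument is the same.
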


				\begin{proof}
				 $\MP$ satisfies conditions (1)--(4) and (10) by construction, so we only prove (5)--(9).
            \begin{itemize}
              \item[(5)] We show that when both  $(f \oplus g) \oplus h$ and $f \oplus (g \oplus h)$ are defined, $(f \oplus g) \oplus h = f \oplus (g \oplus h)$. Consider
            $f\colon \m{S} \rightarrow \PR{S \cup T}$, $g\colon \m{U} \rightarrow \PR{U \cup V}$, and $h\colon \m{W} \rightarrow \PR{W \cup X}$.
                  For any $d \in \R{S \cup U \cup W}$,
                  \begin{align*}
			((f \oplus g) \oplus h) (d)&=  \big(f(d^{S}) \bowtie  f_2(d^{U}) \big) \bowtie f_3(d^{V}) \\
	     &= f(d^{S}) \bowtie  \big(g(d^{U}) \bowtie h(d^{V}) \big)  \tag{By associativity of $\bowtie$} \\
                                                         &= (f \oplus (g \oplus h)) (d)
                  \end{align*}
					\item[(6)] When both $f_1 \oplus f_2$ and $f_2 \oplus f_1$ are defined, they are equal.

Analogous to $\MD$, instead of followed from the commutativity of $\cdot$, it follows from the commutativity of $\bowtie$.

              \item[(7)] For any $f\colon  \R{A} \rightarrow \PR{A \cup X}$, and any $S \subseteq A$,  we must show
                \begin{align*}
                  f \oplus \unit_{S} = f
                \end{align*}
		Since $S \subseteq A$, so $\dom(f \oplus \unit_S) = A \cup S = A = \dom(f)$, and $\range(f \oplus \unit_S) = A \cup X \cup S = A \cup X = \range(f)$.
		For any $d \in \R{A}$, we have
		\begin{align*}
		(f \oplus \unit_S) (d) &= f(d) \bowtie \unit_S(d^S) = f(d) \bowtie \{d^S\} = f(d)
		\end{align*}
		Hence, $f \oplus \unit_S = f$.

              \item[(8)]We show that when both $(f_1 \oplus f_2) \odot (f_3 \oplus f_4)$ and $(f_1 \odot f_3) \oplus (f_2 \odot f_4)$ are defined, it hold that
                \begin{align*}
                  (f_1 \oplus f_2) \odot (f_3 \oplus f_4) &=(f_1 \odot f_3) \oplus (f_2 \odot f_4).
                \end{align*}
             Take $D_i = \dom(f_i)$ and $R_i = \range(f_i)$ and note that well-definedness of the above terms implies that $R_1=D_3$ and $R_2=D_4$. Both terms have type $\R{D_1 \cup D_2} \rightarrow \PR{R_3 \cup R_4}$, and, for any $d \in D_1 \cup D_2$:
                           {\small
                  \begin{align*}
                    \big((f_1 \oplus f_2) \odot (f_3 \oplus f_4) \big) (d)
                    &= \{v \mid u \in (f_1 \oplus f_2)(d) ,\,   v \in (f_3 \oplus f_4)(u) \} \\
                    &= \{v \mid u \in f_1(d^{D_1}) \bowtie f_2(d^{D_2}) ,\,  v \in f_3(u^{D_3}) \bowtie f_4(u^{D_4}) \} \tag{Def. $\oplus$}\\
                    &= \{v \mid v \in f_3(x) \bowtie f_4(y),\, x \in f(d^{D_1}) ,\,  y \in g(d^{D_2}) \} \tag{$\star$}\\
                   & = \{ v_1 \bowtie v_2 \mid v_1 \in f_3(x),\, v_2 \in f_4(y),\, x \in f(d^{D_1}) ,\, y \in g(d^{D_2}) \} \\
                    \big((f_1 \odot f_3) \oplus (f_2 \odot f_4) \big)(d)
                    &= (f_1 \odot f_3)(d^{D_1}) \bowtie (f_2 \odot f_4)(d^{D_2})\tag{Def. $\oplus$}\\
                    &= \{ v_1 \mid u_1 \in f_1(d^{D_1}) ,\,  v_1 \in f_3(u_1) \} \bowtie \{ v_2 \mid u_2 \in f_2(d^{D_2}) ,\, v_2 \in f_4(u_2)\} \tag{Def. $\otimes$}\\
                   &= \{ v_1\bowtie v_2  \mid v_1 \in f_3(u_1),\, v_2 \in f_4(u_2), u_1 \in f_1(d^{D_1}),\, u_2 \in f_2(d^{D_2})\}
                  \end{align*}
                  }
                  The step marked with ($\star$) follows from the fact that $R_1=D_3$ and $R_2=D_4$ implies that for any $u \in f(d^{D_1}) \bowtie g(d^{D_2}) $, we have $u^{D_3} = x \in f_1(d^{D_1})$ and $u^{D_3} = y \in f_1(d^{D_1})$. \qedhere
     %             Thus, $(f \odot h) \oplus (g \odot i)  = (f \oplus g) \odot (h \oplus i) $.
					\item [(9)] Proved in~\cref{Mp:closed}.
            \end{itemize}
                \end{proof}

\thmrelframe*
			\begin{proof}

				By~\cref{Mframeaxioms} that all $\T$-models are \LOGIC frames and by~\cref{Mp:tmodel} that $\MP$ is a $\T$-model, $\MP$ is a \LOGIC frame.
			\end{proof}

				\subsection{Section \ref{section:condindep}, Conditional Independence: Omitted Details}
			\label{app:verifyCI}
			First, we prove~\cref{lemma:condfirst} so we can use~\cref{findexact} for \MD.

 \begin{lemma}[Disintegration]
	\label{lemma:condfirst}
  If $f = f_{1} \odot f_{2}$ , then $\pi_{R_1} f = f_{1}$.
  Conversely, if $\pi_{R_1} f = f_{1}$, then there exists $g$ such that $f = f_1
  \odot g$.
\end{lemma}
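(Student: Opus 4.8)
The plan is to treat the two implications separately, in both cases by a direct computation that combines the simplified sequential-composition formula \eqref{eq:simplemu} with the input-preservation identity $\pi_S f = \unit_{\R{S}}$ enjoyed by every kernel in $\MD$. Throughout I write $R_1 = \range(f_1)$.

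For the forward direction, let $f_1 \colon \R{S} \to \DR{R_1}$ and $f_2 \colon \R{R_1} \to \DR{V}$, so that $f = f_1 \odot f_2 \colon \R{S} \to \DR{V}$ with $R_1 \subseteq V$. Fixing $d \in \R{S}$ and $r \in \R{R_1}$, I would expand the marginalization and use $(r \otimes m)^{R_1} = r$ in \eqref{eq:simplemu}:
\[
(\pi_{R_1} f)(d)(r) = \sum_{m \in \R{V \setminus R_1}} (f_1 \odot f_2)(d)(r \otimes m) = \sum_{m \in \R{V \setminus R_1}} f_1(d)(r)\cdot f_2(r)(r \otimes m).
\]
Factoring out $f_1(d)(r)$ leaves the sum $\sum_{m} f_2(r)(r \otimes m) = (\pi_{R_1} f_2)(r)(r)$, which equals $1$ because $f_2 \in \MD$ preserves its input. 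Hence $(\pi_{R_1} f)(d)(r) = f_1(d)(r)$ for all $d, r$, giving $\pi_{R_1} f = f_1$.

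The converse is the substantive direction. Given $\pi_{R_1} f = f_1$ with $f \colon \R{S} \to \DR{V}$, and $f_1 \in \MD$ (so $S \subseteq R_1 \subseteq V$; write $T = R_1$), I must synthesize a conditional kernel $g \colon \R{T} \to \DR{V}$. The natural candidate is the disintegration of $f$ along $T$: for $t \in \R{T}$ with $f_1(t^S)(t) > 0$ set $g(t)(m) \defeq f(t^S)(m)/f_1(t^S)(t)$ when $m^T = t$ and $0$ otherwise; when $f_1(t^S)(t) = 0$, define $g(t)$ to be an arbitrary input-preserving distribution supported on $\{m : m^T = t\}$ (e.g.\ a fixed Dirac extension of $t$). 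The point that makes this definition legitimate is the containment $S \subseteq T$: the argument $t$ of $g$ already determines the value $t^S$ that must be fed to $f$, so $g$ genuinely depends only on $t$ and not on the original input $d$. I expect this to be the main obstacle—checking that the conditional distribution is recoverable from $t$ alone, and handling the zero-denominator case without disturbing the composition.

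It then remains to verify two facts. First, $g \in \MD$: the identity $\sum_{m^T = t} f(t^S)(m) = f_1(t^S)(t)$ shows that each $g(t)$ is a distribution and that $(\pi_T g)(t) = \delta_t$, i.e.\ $g$ preserves its input. Second, $f_1 \odot g = f$: by \eqref{eq:simplemu}, $(f_1 \odot g)(d)(m) = f_1(d)(m^T) \cdot g(m^T)(m)$. When $f_1(d)(m^T) > 0$, input-preservation of $f_1$ forces $m^S = d$, hence $(m^T)^S = d$, and the definition of $g$ makes the product telescope to $f(d)(m)$; when $f_1(d)(m^T) = 0$, the identity $f_1(d)(m^T) = \sum_{{m'}^T = m^T} f(d)(m') = 0$ forces $f(d)(m) = 0$, so both sides vanish. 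This establishes $f = f_1 \odot g$ and completes the converse.
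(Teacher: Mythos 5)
Your proof is correct and follows essentially the same route as the paper's: the forward direction is the same marginalization-plus-input-preservation computation (done pointwise rather than via the algebraic chain $\pi_{R_1}(f_1\odot f_2)=f_1\odot(\pi_{R_1}f_2)=f_1\odot \unit = f_1$), and the converse constructs the same conditional kernel $g$ as the ratio $f(t^S)(m)/f_1(t^S)(t)$ and verifies $f_1\odot g = f$ by the same nonzero/zero case split. If anything, your treatment of the zero-denominator case (substituting an arbitrary input-preserving distribution) is slightly cleaner than the paper's, which sets $g(r)$ identically to $0$ there and thus, strictly speaking, fails to make $g(r)$ a distribution on those measure-zero inputs.
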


\begin{proof}
  For the forwards direction, suppose that $f = f_{1} \odot f_{2}$. Then,
  \begin{align*}
    \pi_{R_1} f &= \pi_{R_1} (f_1 \odot f_2) = f_1 \odot (\pi_{R_1} f_2)
                = f_1 \odot \unit_{\R{\rg{1}}}
				= f_1.
  \end{align*}
  Thus, $\pi_{R_1} f = f_{1}$. For the converse, assume $\pi_{R_1} f = f_1$.
  Define $g: \R{R_1} \rightarrow \DR{\range(f)}$ such that for any $r \in
  \R{R_1}$, $m \in \R{\range(f)}$ such that $r \bowtie m$ is defined, let
  \[
    g(r)(m) \defeq
    \begin{cases}
      \frac{f(r^{\dm{1}})(m)}{(\pi_{R_1} f)(r^{\dm{1}})(r)} &: (\pi_{R_1} f)(r^{\dm{1}})(r) \neq 0\\
      0 &: (\pi_{R_1} f)(r^{\dm{1}})(r) = 0
    \end{cases}
  \]
  We need to check that $g \in \MD$. Fixing any $r \in \R{R_1}$,
		denote the distribution $\Pr_{f(r^{D_1})}$ as $\mu_r$, then
  \begin{align*}
    (\pi_{R_1} f)(r^{\dm{1}})(r) &= \frac{\mu_r (\range(f) = m)}{\mu_r (\rg{1} = r)}
    = \mu_r (\range(f) = m \mid \rg{1} = r) \tag{if $(\pi_{R_1} f)(r^{\dm{1}})(r) \neq 0$} \\
    \sum_{m \in \R{\range(g)}} g(r) (m) &= \sum_{m \in \R{\range(g)}} \mu_r (\range(f) = m \mid \rg{1} = r) = 1
  \end{align*}
  so $g$ does map any input to a distribution, and $g$ preserves the input.

  By their types, $f_1 \odot g$ is defined, and for any $d \in \R{\dm{1}}$, $m
  \in \R{\range(f)}$ such that $d \bowtie m$ is defined. If  $(\pi_{R_1}
  f)(d)(m^{R_1}) \neq 0$, then
  \begin{align*}
    (f_1 \odot g) (d)(m) = f_1(d)(m^{\rg{1}}) \cdot g(m^{\rg{1}}) (m) &= f_1(d)(m^{\rg{1}}) \cdot \frac{f(m^{\dm{1}})(m)}{(\pi_{R_1} f)(m^{\dm{1}})(m)}\\
																																																																						& = f_1(d)(m^{\rg{1}}) \cdot \frac{f(m^{\dm{1}})(m)}{f_1 (m^{\dm{1}})(m^{\rg{1}})}
                         \\
                         &= f(d)(m) \tag{$d \bowtie m$ is defined iff $d = m^{\dm{1}}$}
  \end{align*}
  If $(\pi_{R_1} f)(d)(m^{R_1}) \neq 0$, then $f(d)(m) = 0$, and
  \(
    (f_1 \odot g) (d)(m)
    = f_1(d)(m^{\rg{1}}) \cdot g(m^{\rg{1}}) (m)
    = 0
    = f(d)(m).
  \)
  Thus, $f_1 \odot g = f$.
\end{proof}
\theoprob*
\begin{proof}
This result follows by combining~\Cref{Md:probhelper} and~\Cref{findexact}.
\end{proof}
		\begin{lemma}
	\label{Md:probhelper}
	For a distribution $\mu$ on $\Var$, $S, X, Y \subseteq \Var$,
	there exist $f_1\colon  \R{\emptyset} \rightarrow \DR{S}$, $f_2 \colon \R{S} \rightarrow \DR{S \cup X}$, $f_3 \colon \R{S} \rightarrow \DR{S \cup Y}$, such that $f_1 \odot (f_2 \oplus f_3) \sqsubseteq f_{\mu}$, if and only if $\pearlindep{X}{Y}{S}$ and also $X \cap Y \subseteq S$.
\end{lemma}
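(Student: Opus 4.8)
The plan is to prove both implications directly by manipulating kernels, using the Disintegration Lemma (\cref{lemma:condfirst}) to pass between sequential compositions and marginalizations, together with the simplified composition formula \eqref{eq:simplemu}. Two structural observations frame the whole argument. First, $f_2 \oplus f_3$ is defined only when $\dom(f_2) \cap \dom(f_3) = \range(f_2) \cap \range(f_3)$, i.e. $S = S \cup (X \cap Y)$; hence the very existence of the decomposition already forces $X \cap Y \subseteq S$, which settles that side condition in both directions. Second, since $\dom(f_\mu) = \R{\emptyset}$, unfolding $\sqsubseteq$ from \cref{def:probmodel} forces the padding set $R$ to be empty (otherwise $(f_1 \odot (f_2 \oplus f_3)) \oplus \unit_{\R R}$ would have nonempty domain), so $f_1 \odot (f_2 \oplus f_3) \sqsubseteq f_\mu$ reduces to the existence of $h \in \MD$ with $f_\mu = (f_1 \odot (f_2 \oplus f_3)) \odot h$.

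For the forward direction, set $k = f_1 \odot (f_2 \oplus f_3) \colon \R{\emptyset} \to \DR{S \cup X \cup Y}$. From $f_\mu = k \odot h$ and \cref{lemma:condfirst} we obtain $k = \pi_{S \cup X \cup Y} f_\mu$, and applying the lemma again to $k = f_1 \odot (f_2 \oplus f_3)$ gives $f_1 = \pi_S f_\mu$. The key step is the padding identity $f_2 \oplus f_3 = f_2 \odot (\unit_{\R X} \oplus f_3)$, a short calculation with \eqref{eq:simplemu} that is valid because $f_2$ preserves its input (so its factor is nonzero only when $m^S$ matches); this lets me rewrite $k = f_1 \odot f_2 \odot (\unit_{\R X} \oplus f_3)$ and conclude $f_1 \odot f_2 = \pi_{S \cup X} f_\mu$, and symmetrically $f_1 \odot f_3 = \pi_{S \cup Y} f_\mu$. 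Unfolding these three marginal equalities with \eqref{eq:simplemu} and dividing by $\mu(S = s)$ on fibers where it is positive identifies $f_2$ and $f_3$ with the conditional distributions of $X$ and of $Y$ given $S$, and $f_2 \oplus f_3$ with the conditional of $(X,Y)$ given $S$; the equality $k = f_1 \odot (f_2 \oplus f_3)$ then reads exactly as $\mu(X=x,Y=y \mid S=s) = \mu(X=x \mid S=s)\cdot\mu(Y=y \mid S=s)$, giving $\pearlindep{X}{Y}{S}$.

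For the reverse direction, assume $\pearlindep{X}{Y}{S}$ and $X \cap Y \subseteq S$, and put $f_1 = \pi_S f_\mu$. Since $\pi_S(\pi_{S \cup X \cup Y} f_\mu) = f_1$, \cref{lemma:condfirst} yields $f' \in \MD$ of type $\R S \to \DR{S \cup X \cup Y}$ with $\pi_{S \cup X \cup Y} f_\mu = f_1 \odot f'$, concretely the conditional distribution of $(X,Y)$ given $S$. Define $f_2 = \pi_{S \cup X} f'$ and $f_3 = \pi_{S \cup Y} f'$, the conditionals of $X$ given $S$ and $Y$ given $S$; these lie in $\MD$ since further marginalization preserves input-preservation. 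Because $X \cap Y \subseteq S$ the product $f_2 \oplus f_3$ is defined, and \eqref{eq:simplemu} with the CI hypothesis gives fiberwise $(f_2 \oplus f_3)(s)(m) = \mu(X=m^X \mid S=s)\cdot\mu(Y=m^Y\mid S=s) = \mu(X=m^X,Y=m^Y\mid S=s) = f'(s)(m)$, so $f_2 \oplus f_3 = f'$. Hence $f_1 \odot (f_2 \oplus f_3) = f_1 \odot f' = \pi_{S \cup X \cup Y} f_\mu$, and one more appeal to the converse of \cref{lemma:condfirst} (with $R = \emptyset$, using $\pi_{S\cup X\cup Y} f_\mu = \pi_{S\cup X\cup Y} f_\mu$) shows $\pi_{S \cup X \cup Y} f_\mu \sqsubseteq f_\mu$, completing the decomposition.

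The main obstacle is the gap between the ``for all $x \in \R X, y \in \R Y, z \in \R S$'' quantification in the definition of $\pearlindep{X}{Y}{S}$ and the fiberwise identities above, which only directly speak about mutually consistent assignments arising from a single memory $m \in \R{S \cup X \cup Y}$. The delicate case is when $x$ and $y$ disagree on $X \cap Y$: then $\mu(X=x,Y=y\mid S=z)=0$, yet the product of the two conditionals need not vanish a priori. This is precisely where $X \cap Y \subseteq S$ is used: under that hypothesis, consistency of each of $x,y$ with $z$ forces agreement on $X \cap Y$, so whenever $x,y$ disagree there, at least one conditional factor is already $0$ and both sides of the CI equation collapse to $0$. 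The remaining inconsistency cases (disagreement of $x$ or $y$ with $z$, and zero-probability conditioning events) are handled by the same bookkeeping, and I expect these consistency checks, rather than the algebraic kernel identities, to be where the real care is needed.
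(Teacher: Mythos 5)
Your proof is correct and follows essentially the same route as the paper's: definedness of $f_2 \oplus f_3$ forces $X \cap Y \subseteq S$, the Disintegration Lemma together with the padding identity $f_2 \oplus f_3 = f_2 \odot (\unit_{\R{X}} \oplus f_3)$ identifies $f_1$, $f_1 \odot f_2$, $f_1 \odot f_3$ with $\pi_S f_\mu$, $\pi_{S\cup X} f_\mu$, $\pi_{S\cup Y} f_\mu$, and reading \eqref{eq:simplemu} fiberwise yields conditional independence, while in the converse the conditional kernels---which the paper builds explicitly by ratio formulas and you obtain more modularly from two applications of the converse of \cref{lemma:condfirst}---factor as $f_2 \oplus f_3$ by the CI hypothesis. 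The one claim to phrase more carefully is ``$f_2 \oplus f_3 = f'$'': it is justified only on fibers with $\mu(S = s) > 0$ (on null fibers the $f'$ produced by disintegration is arbitrary and need not factor as a product), but the slip is harmless because such fibers are zeroed out after pre-composition with $f_1$, which is exactly the observation the paper makes about its explicitly constructed kernels.
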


\begin{proof}
	\textbf{Forward direction}: Assume the existence of $f_1, f_{2}, f_{3}$ satisfying $f_1 \odot (f_2 \oplus f_3) \sqsubseteq f_{\mu}$. We must prove $\pearlindep{X}{Y}{S}$ and $X \cap Y \subseteq S$.
	\begin{enumerate}[leftmargin=*]
		\item 	${X \cap Y \subseteq S}$: $f_{2} \oplus f_{3}$ defined implies $(X \cup S) \cap (Y \cup S) \subseteq S \cap S$.  Thus,
				$X \cap Y \subseteq S$
\item $\pearlindep{X}{Y}{S}$:
By assumption,  $f_{1} \odot (f_{2} \oplus f_{3}) \sqsubseteq f_{\mu}$.
%Since $\dom(f_{\mu}) = \emptyset$, there must exist $v \in \MD$ such that:	\(f_{\mu} = \left( f_{1} \odot (f_{2} \oplus f_{3})\right) \odot v\).
\cref{lemma:condfirst} gives us  $f_{1} \odot (f_{2} \oplus f_{3}) = \pi_{S \cup X \cup Y} (f_{\mu})$, and $f_1 = \pi_S (f_{\mu})$.
For any $m \in \R{X \cup Y \cup S}$, $m^X \bowtie m^Y \bowtie m^S$ is defined. Thus,
			\begin{align*}
			\mu(X = m^X, Y = m^Y, S = m^S ) &= (\pi_{X \cup Y \cup S} \mu)(m^X \bowtie m^Y \bowtie m^S) \tag {By definition $\mu$}\\
			&= \pi_{X \cup Y \cup S} (f_{\mu}) (\empmem) (m^X \bowtie m^Y \bowtie m^S) \\
			&= f_{1} \odot (f_{2} \oplus f_{3})(\empmem) (m^X \bowtie m^Y \bowtie m^S )
	                \end{align*}
Similarly, $\mu(S = m^S) \defeq (\pi_S \mu)(m^S)$. We have $f_{1} = \pi_S(f_{\mu}) $, and so
			\begin{align}
			\mu(S = m^S) &= (\pi_S \mu)(m^S) = \left(\pi_S (f_{\mu})\right) (\empmem) (m^S)
			=f_{1} (\empmem) (m^S) \label{prob:s}
	                 \end{align}
By definition of conditional probability, when $\mu(S = m^S) \neq 0$,
			\begin{align*}
			\mu(X = m^X, Y = m^Y \mid S = m^S ) &= \frac{ \mu(X = m^X, Y = m^Y, S = m^S )}{ \mu(S = m^S )}\\
				&= \frac{ f_{1} \odot (f_{2} \oplus f_{3})(\empmem)(m^S \bowtie m^X \bowtie m^Y)}{f_{1}(\empmem)(m^S)}
\end{align*}
By~\cref{eq:simplemu}:
\(
f_{1} \odot (f_{2} \oplus f_{3})(\empmem)(m^S \bowtie m^X \bowtie m^Y) = f_{1}(\empmem) (m^S) \cdot (f_{2} \oplus f_{3}) (m^S )(m^S \bowtie m^X \bowtie m^Y).
\)\
Thus,
\begin{align}
\mu(X = m^X, Y = m^Y \mid S = m^S ) &= \frac{ f_{1} \odot (f_{2} \oplus f_{3})(\empmem)(m^S \bowtie m^X \bowtie m^Y)}{f_{1}(\empmem)(m^S)} \notag \\
	 &= (f_{2} \oplus f_{3}) (m^S )(m^S \bowtie m^X \bowtie m^Y) \notag \\
		&= f_{2}(m^{S})(m^{X \cup S}) \cdot f_{3}(m^{S})(m^{Y \cup S}) \label{XYgivenS}
\end{align}
Let $f_2' = f_2 \oplus \unit_{\R{Y}}$, $f_3' = f_3 \oplus \unit_{\R{X}}$. By~\cref{oplus2odot},
	\begin{align*}
f_1 \odot (f_2 \oplus f_3) &= f_1 \odot f_2 \odot (f_3 \oplus \unit_{\R{X}}) = f_1 \odot f_2 \odot f_3' \\
f_1 \odot (f_2 \oplus f_3) &= f_1 \odot (f_3 \oplus f_2) = f_1 \odot f_3 \odot (f_2 \oplus \unit_{\R{Y}}) = f_1 \odot f_3 \odot f_2'
      \end{align*}
\Cref{lemma:condfirst} gives us $\pi_{X \cup S} (f_{\mu}) = f_{1} \odot f_{2}$,
and $\pi_{Y \cup S} (f_{\mu}) =f_{1} \odot f_{3}$,
%	\begin{align}
%\pi_{X \cup S} (f_{\mu}) &=  \pi_{X \cup S} (f_{1} \odot (f_{2} \oplus f_{3}))
%= \pi_{X \cup S} (f_{1} \odot f_{2} \odot f'_{3}) = f_{1} \odot f_{2} \label{mu:marginalXS}\\
%\pi_{Y \cup S} (f_{\mu})  &=  \pi_{X \cup S} (f_{1} \odot (f_{2} \oplus f_{3}))
%= \pi_{Y \cup S} (f_{1} \odot f_{3} \odot f'_{2}) = f_{1} \odot f_{3} \label{mu:marginalYS}
%\end{align}
Therefore,
\begin{align}
	\mu(X = m^X, S = m^S) &\defeq (\pi_{X \cup S} \mu) (m^S \otimes m^X) \notag \\
	&= (\pi_{X \cup S} (f_{\mu}))(\empmem ) (m^S \otimes m^X) \notag \\
	&= (f_{1} \odot f_{2})(\empmem) (m^S \otimes m^X)\\ %  \tag{By \cref{mu:marginalXS}} \\
	&= f_{1}(\empmem)(m^S) \cdot f_{2}(m^{S}) (m^{S} \otimes m^{X}) \notag \\
	\mu(Y = m^Y, S = m^S) &\defeq (\pi_{Y \cup S} \mu) (m^S \otimes m^Y) \notag\\
	&= (\pi_{Y \cup S} (f_{\mu}) (\empmem) (m^S \otimes m^Y) \notag \\
	&= (f_{1} \odot f_{3})(\empmem) (m^S \otimes m^Y) \\% \tag{By \cref{mu:marginalYS}}					\\
	&= f_{1}(\empmem)(m^S) \cdot f_{3}(m^S)(m^{S} \otimes m^{Y}) \notag
\end{align}
Thus, by definition of conditional probability.
\begin{align}
	\mu(X = m^X \mid S = m^S) &= \frac{\mu(X = m^X, S =m^S) }{\mu(S = m^S)} \notag \\
			&= \frac{f_{1}(\empmem)(m^S) \cdot f_{2}(m^{S}) (m^{S \cup X})}{f_{1}(\empmem)(m^{S})} \notag \\
			&= f_{2}(m^{S}) (m^{S \cup X}) \label{prob:xconds} \\
	\mu(X = m^Y \mid S = m^S) &= \frac{\mu(X = m^X, S =m^S) }{\mu(S = m^S)} \notag \\
			&= \frac{f_{1}(\empmem)(m^S) \cdot f_{3}(m^S)(m^{S \cup Y})}{f_{1}(\empmem)(m^S)} \notag \\
			&= f_{3}(m^S)(m^{S \cup Y})\label{prob:yconds}
\end{align}
Substituting \cref{prob:xconds} and \cref{prob:yconds} into the equation \cref{XYgivenS}, we have
\begin{align*}
	\mu(X = m^X, Y = m^Y \mid S = m^S ) = \mu(X = m^X \mid S = m^S) \cdot \mu(X = m^Y \mid S = m^S))
\end{align*}
Thus, $X, Y$ are conditionally independent given $S$. This completes the proof for the first direction.
	\end{enumerate}
					\textbf{Backward direction:} We want to show that if  $\pearlindep{X}{Y}{S}$ and $X \cap Y \subseteq S$ then $f_1 \odot (f_2 \oplus f_3) \sqsubseteq f_{\mu}$.
Given $\mu$, we define $f_{1}=\pi_S (f_{\mu})$ and construct $f_{2}, f_{3}$ as follows:

Let $f_{2}: \R{S} \rightarrow \DR{S \cup X}$. For any $s \in \R{S}$, $x \in \R{X}$ such that $s \otimes x$ is defined, when $f_1(\empmem)(s) \neq 0$, let
\begin{align*}
	f_{2}(s)(s \otimes x) \defeq \frac{(\pi_{ S \cup X} f_{\mu})(\empmem) (s \otimes x) }{f_{1}(\empmem)(s)}
\end{align*}
(When $f_1(\empmem)(s) = 0$, we can define $f_2(s)(s \otimes x)$ arbitrarily as long as $f_2(s)$ is a distribution, because that distribution will be zeroed out in $f_1 \odot (f_2 \oplus f_3)$ anyway. )

Similarly, let $f_{3}: \R{S} \rightarrow \DR{S \cup Y}$.
For any $s \in \R{S}$, $x \in \R{Y}$ such that $s \otimes y$ is defined, when $f_1(\empmem)(s) \neq 0$, let
\begin{align*}
	f_{3}(s)(s \otimes y) \defeq \frac{(\pi_{ S \cup Y} f_{\mu}) (s \otimes y) }{f_{1}(\empmem)(s)}
\end{align*}
By construction, $f_1, f_2, f_3$ each has the type needed for the lemma. We are left to prove that given any $s \in \R{S}$, $f_{2}$ and $f_{3}$ are kernels in $\MD$, $f_{1} \odot (f_{2} \oplus f_{3})$ is defined, and $f_{1} \odot (f_{2} \oplus f_{3}) \sqsubseteq f_{\mu}$.

\begin{itemize}
	\item State $f_{2}$ is in $\MD$.

		We need to show that for any $s \in \R{S}$, $f_2(s)$ forms a distribution, and also $f_2$ preserves the input.
		For any $s \in \R{S}$, by equation \cref{prob:s}, $f_{1}(\empmem)(s) = \mu(S = s)$.

		If $f_{1}(\empmem)(s) = 0 $, then we define $f_2(s)$ arbitrarily but make sure $f_2(s)$ is a distribution.

		If $f_{1}(\empmem)(s) \neq 0 $:		for any $x \in \R{X}$ such that $s \otimes x$ is defined, $(\pi_{ S \cup X} f_{\mu})(\empmem) (s \otimes x) = \mu(S = s, X = x)$, so
		\begin{align*}
		&	f_{2}(s)(s \otimes x) = \frac{(\pi_{ S \cup X} f_{\mu})(\empmem) (s \otimes x) }{f_{1}(\empmem)(s)} \\
			= &\frac{\mu(S = s, X = x)}{\mu(S = s)}
			= \mu(X = x \mid S = s)
		\end{align*}
		Thus, $f_2(s)$ is a distribution for any $s \in \R{S}$.

		Also, $f_2(s)(s \otimes x)$ is non-zero only when $s \otimes x$ is defined, i.e., when $(s \otimes x)^S = s$. So $(\pi_S f_2)(s)(s) = \sum_{x \in \R{X}} f_2(s)(s \otimes x) = 1$, and thus $\pi_S f_2= \unit_{\R{S}}$. Therefore, $f_2$ preserves the input.

		Therefore, $f_{2} \in \MD$.

	\item State $f_{3}$ is in $\MD$. Similar as above.

	\item State $f_{1} \odot (f_{2} \oplus f_{3})$ is defined.

		$f_2 \oplus f_3$ is defined because $\rg{2} \cap \rg{3} = (S \cup X) \cap (S \cup Y) = S \cup (X \cap Y)$,
		and by assumption, $X \cap Y \subseteq S$, so $S \cup (X \cap Y) = S = \dm{2} \cap \dm{3}$.
		Then $f_{1} \odot (f_{2} \oplus f_{3})$ is defined because $\dom(f_{2} \oplus f_{3}) = \dm{2} \cup \dm{3} = S \cup S = S = \range(f_{1})$.

	\item State $f_{1} \odot (f_{2} \oplus f_{3})\sqsubseteq f_{\mu}$.

		It suffices to show that there exists $g$ such that $(f_1 \odot (f_2 \oplus f_3)) \odot g = f_{\mu}$.

		For any $s \in \R{S}$, $x \in \R{X}$, $y \in \R{Y}$ such that $s \otimes x \otimes y$ is defined,
		\begin{align}
			f_{1} \odot (f_{2} \oplus f_{3})(\empmem)(s \otimes x \otimes y) &= f_{1}(\empmem)(s) \cdot f_{2} \oplus f_{3}(s)(s \otimes x \otimes y) \notag \\
				&= f_{1}(\empmem)(s) \cdot \left(f_{2}(s)(s \otimes x) \cdot f_{3}(s)(s \otimes y)\right) \notag \\
				&= \mu(S = s) \cdot \left(\mu(X = x \mid S = s) \cdot \mu(Y = y \mid S = s)\right) \label{eq:unfolding}
		\end{align}
		Because $X, Y$ are conditionally independent given $S$ in the distribution $q$, so
		\begin{align}
			\mu(X = x \mid S = s) \cdot \mu(Y = y \mid S = s) = \mu(X = x, Y = y \mid S= s) \label{eq:condassumption}
		\end{align}
		Substituting \cref{eq:condassumption} into \cref{eq:unfolding}, we have
		\begin{align*}
			f_{1} \odot (f_{2} \oplus f_{3})(\empmem)(s \otimes x \otimes y) &= \mu(S = s) \cdot \mu(X = x, Y=y \mid S = s) \\
				&= \mu(X = x, Y = y, S = s)
		\end{align*}
		Let $g: \R{X \cup Y \cup S} \rightarrow \DR{\Val}$ such that for any $d \in \R{X \cup Y \cup S}$, $m \in \R{\Val}$ such that $d \otimes m$ is defined, let
		\[g(d)(m) = \mu (\Val = m \mid X \cup Y \cup S = d) \]
		Then, $(f_1 \odot (f_2 \oplus f_3)) \odot g$ is defined, and
		\begin{align*}
			(f_1 \odot (f_2 \oplus f_3) \odot g) (\empmem) (m) &= (f_1 \odot (f_2 \oplus f_3) )(\empmem)(m^{X \cup Y \cup S}) \cdot g (m^{X \cup Y \cup S})(m) \\
		&= \mu(\Val = m)
		\end{align*}
		Thus, $(f_1 \odot (f_2 \oplus f_3)) \odot g = f_{\mu}$, and therefore $f_{1} \odot (f_{2} \oplus f_{3}) \sqsubseteq f_{\mu}$.
	\end{itemize}
This completes the proof for the backwards direction.
					\end{proof}

					\begin{lemma}
			 \label{concrete:XcapYinS}
							 If $X, Y$ are conditionally independent given $S$, then values on $X \cap Y$ is determined given values on $S$.
		 \end{lemma}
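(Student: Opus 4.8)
The plan is to show that, conditioned on any value of $S$ to which $\mu$ assigns positive probability, the common part $W \defeq X \cap Y$ takes a single value with probability one; this is exactly what it means for the values on $X \cap Y$ to be determined given the values on $S$. The guiding intuition is that $W$ is simultaneously a function of the variables in $X$ and of those in $Y$, so conditional independence of $X$ and $Y$ given $S$ forces $W$ to be ``independent of itself'' given $S$, which is only possible when $W$ is constant.

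First I would fix an arbitrary $z \in \R{S}$ with $\mu(S = z) > 0$ and argue by contradiction: suppose there are two distinct assignments $w, w' \in \R{W}$ with $\mu(W = w \mid S = z) > 0$ and $\mu(W = w' \mid S = z) > 0$. Since $W \subseteq X$, the marginal $\mu(W = w \mid S = z)$ is the sum of $\mu(X = x \mid S = z)$ over all $x \in \R{X}$ with $x^W = w$, so some such $x$ satisfies $\mu(X = x \mid S = z) > 0$; symmetrically, since $W \subseteq Y$, some $y \in \R{Y}$ with $y^W = w'$ satisfies $\mu(Y = y \mid S = z) > 0$.

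The crux is the observation that $x$ and $y$ now disagree on $W$, namely $x^W = w \ne w' = y^W$, so $x \otimes y$ is undefined. By the convention that terms with undefined $x \otimes y$ contribute zero to the joint probability, $\mu(X = x, Y = y \mid S = z) = \mu(X \cup Y = x \otimes y \mid S = z) = 0$. Applying the hypothesis $\pearlindep{X}{Y}{S}$ to these $x$, $y$, $z$ then yields
\[
  0 < \mu(X = x \mid S = z) \cdot \mu(Y = y \mid S = z) = \mu(X = x, Y = y \mid S = z) = 0,
\]
a contradiction. Hence at most one value of $W$ carries positive conditional probability given $S = z$; since these probabilities sum to $1$, exactly one does, so $W$ is determined given $S = z$. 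As $z$ ranged over all values of positive probability, the values on $X \cap Y$ are determined by the values on $S$.

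I expect the only delicate point to be the bookkeeping around the marginalization convention: making explicit that marginalizing the conditional distribution on $X$ (resp.\ $Y$) down to $W \subseteq X$ (resp.\ $W \subseteq Y$) yields the conditional distribution on $W$, and that genuinely inconsistent joint assignments receive zero mass under the definition $\mu(X = x, Y = y) \defeq \mu(X \cup Y = x \otimes y)$. Once these two facts are pinned down, the self-independence argument closes immediately, with no probabilistic computation beyond the displayed contradiction.
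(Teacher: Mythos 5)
Your proof is appealingly short, but it leans on exactly the step that the paper's definitions do not clearly license: applying the hypothesis $\pearlindep{X}{Y}{S}$ to a deliberately \emph{inconsistent} pair $(x,y)$ with $x^W \neq y^W$ (writing $W$ for $X \cap Y$, as you do). In the paper, $\mu(X = x, Y = y \mid S = z)$ abbreviates $\mu(X \cup Y = x \otimes y \mid S = z)$, and conditional probabilities are explicitly declared to be ``only defined when the denominator is non-zero and [the assignments] are consistent.'' When $x \otimes y$ does not exist, the left-hand side of the CI equation is therefore not a defined quantity equal to $0$; on the conservative reading, the CI requirement simply does not constrain such pairs, and your displayed contradiction $0 < \dots = 0$ cannot be derived. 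This is not a pedantic worry: the paper's own proof is careful never to invoke CI off the consistent diagonal. It instantiates the CI equation only for $x, y$ that agree on $W$ (decomposing $x$ into $\wh{X} = X \setminus Y$ and $W$ parts, and $y$ into $\wh{Y} = Y \setminus X$ and $W$ parts, with the \emph{same} $W$-component $m$), rewrites it by the chain rule, and then sums over all values of the $\wh{X}$ and $\wh{Y}$ parts: the joint conditional on the right sums to $1$, while the left sums to $\mu(W = m \mid S = s)$, forcing $\mu(W = m \mid S = s) = 1$ whenever it is nonzero. The very length of that argument is indirect evidence that the authors did not take your zero-mass convention to be part of the definition; under that convention the lemma would indeed be the two-line corollary you give.

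The good news is that your ``$W$ is independent of itself'' intuition survives a small repair that uses CI only on consistent pairs. Fix $z$ with $\mu(S = z) > 0$ and $w \in \R{W}$ consistent with $z$, restricting all sums to assignments consistent with $z$. Any $x \in \R{X}$ and $y \in \R{Y}$ with $x^W = y^W = w$ automatically agree on $X \cap Y$, hence are mutually consistent, so the CI equation applies to every such pair. Since each $v \in \R{X \cup Y}$ with $v^W = w$ decomposes uniquely as $v^X \otimes v^Y$, marginalization and CI give
\begin{align*}
\mu(W = w \mid S = z)
&= \sum_{x \colon x^W = w} \ \sum_{y \colon y^W = w} \mu(X = x \mid S = z)\cdot\mu(Y = y \mid S = z)
= \mu(W = w \mid S = z)^2 ,
\end{align*}
so $\mu(W = w \mid S = z) \in \{0,1\}$, which is the desired determinism of $W$ given $S$. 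This keeps your argument elementary and avoids both the paper's chain-rule computation and the unsupported appeal to inconsistent assignments.
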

		 \begin{proof}
				For any $x \in \R{X}$, $y \in \R{Y}$, $s \in \R{S}$, $m \in \R{M}$,
				 when $\mu( X = x, Y = y, M = m \mid S = s) \neq 0$,
					it must $x \otimes y \otimes s \otimes m$ is defined.
					Note that $x \otimes y \otimes s \otimes m$ defined only if $m = \pi_M x = \pi_M y$, which implies that $m \otimes x = x$, $m \otimes y = y$, $m \otimes x \otimes y = x \otimes y$.

%					Thus, for any $x \in \R{X}$, $y \in \R{Y}$, $s \in \R{S}$, $m \in \R{M}$ such that $x \otimes y \otimes s \otimes m$ is defined, we have:
%					\begin{align*}
%\mu(X = x  \mid S = s) &= \mu(X = x, M = m \mid S = s) \\
%\mu(Y = y \mid S = s) &= \mu(Y = y, M =m \mid S = s)\\
%\mu(X = x , Y= y \mid S = s) &= \mu(X = x , Y= y, M = m \mid S = s)
%\end{align*}
					Let $M = X \cap Y$, $\wh{X} = X \setminus Y$, $\wh{Y} = Y \setminus X$.
					By assumption, $X, Y $ are conditionally independent given $S$ , so
					 $x \in \R{X}$, $y \in \R{Y}$, $s \in \R{S}$, $m \in \R{M}$
			 \begin{align*}
				 \mu(X = x  \mid S = s) \cdot \mu(Y = y \mid S = s) & = \mu(X = x , Y= y \mid S = s),
				\end{align*}
				which implies that, if we denote $x' = \pi_{\wh{X}} x$, $y' = \pi_{\wh{Y}} y$,
\begin{align}
				\mu(\wh{X} = x', M = m \mid S = s) \cdot \mu(\wh{Y} = y', M =m \mid S = s)  &=  \mu(\wh{X} = x' , \wh{Y} = y', M = m \mid S = s) \label{eq:XYMgivenS}
				\end{align}
				For any probabilistic events $E_1, E_2, E_3$, $\mu(E_1 , E_2 \mid E_3) = \mu(E_1 \mid E_2, E_3) \cdot \mu(E_2 \mid E_3)$. Thus,~\cref{eq:XYMgivenS} implies that
%\begin{align}
%\mu(X = x, M = m \mid S = s) &=
%\mu(X = x \mid M = m, S = s) \cdot \mu(M = m \mid S= s) \label{eq:condinstep1}\\
%\mu(Y = y, M =m \mid S = s) &=
%\mu(Y = y \mid M =m,  S = s) \cdot \mu(M = m \mid S = s) \label{eq:condinstep2}\\
%\mu(X = x , Y= y, M = m \mid S = s) &=
%\mu(X = x , Y= y \mid M = m, S = s)  \cdot \mu(M = m \mid S= m)
%\label{eq:condinstep3}
%\end{align}
%Therefore, by substituting~\cref{eq:condinstep1},~\cref{eq:condinstep2},~\cref{eq:condinstep3} into~\cref{eq:XYMgivenS}, and cancelling out repetitive terms, we have
\begin{align}
 &  \mu(\wh{X} = x' \mid M = m, S = s)
	\cdot \mu(\wh{Y} = y' \mid M =m,  S = s)  \cdot \mu(M = m \mid S = s)
	=
	\mu(\wh{X} = x' , \wh{Y}= y' \mid M = m, S = s) \label{eq:XYgivenMS}
\end{align}
%Also, when $x' \otimes y' \otimes m \otimes s$ is not defined, both sides are $0$.
%So the equation holds for all $x', y', s, m$ of the right ty'pe.
Then, for any $s \in \R{S}, m \in \R{M}$ such that $m \otimes s$ is defined and $\mu(M =m, S=s) \neq 0$,
\begin{align}
&\sum_{x' \in \R{\wh{X}}, y' \in \R{\wh{Y}} } \mu(\wh{X} = x' \mid M = m, S = s)
	\cdot \mu(\wh{Y} = y' \mid M =m,  S = s)  \cdot \mu(M = m \mid S = s) \notag\\
	=&
\sum_{x' \in \R{\wh{X}}, y' \in \R{\wh{Y}} } \mu(\wh{X} = x' , \wh{Y}= y' \mid M = m, S = s)
\tag{Because of ~\cref{eq:XYgivenMS}} \\
	=& 1 \label{eq:rhssum}
\end{align}

Meanwhile,  for any $s \in \R{S}, m \in \R{M}$ such that $m \otimes s$ is defined and  $\mu(M =m, S=s) \neq 0$,
\begin{align}
	&\sum_{x' \in \R{\wh{X}}, y' \in \R{\wh{Y}} } \mu(\wh{X} = x' \mid M = m, S = s)
	\cdot \mu(\wh{Y} = y' \mid M =m,  S = s)  \cdot \mu(M = m \mid S = s)\notag \\
	=&	\left(\sum_{x' \in \R{\wh{X}}, y' \in \R{\wh{Y}} } \mu(\wh{X} = x' \mid M = m, S = s)
	\cdot \mu(\wh{Y} = y' \mid M =m,  S = s) \right)  \cdot \mu(M = m \mid S = s) \notag\\
	=&	\left(\sum_{x' \in \R{\wh{X}}} \mu(\wh{X} = x' \mid M = m, S = s)  \right)
\cdot \left( \sum_{y' \in \R{\wh{Y}} } \mu(\wh{Y} = y' \mid M =m,  S = s) \right)  \cdot \mu(M = m \mid S = s) \notag \\
	=& 1  \cdot  \mu(M = m \mid S = s) \label{eq:lhssum}
\end{align}
Combining~\cref{eq:lhssum} and~\cref{eq:rhssum}, we derive  $ \mu(M = m \mid S = s) = 1$. That is, when $\pearlindep{X}{Y}{S}$, whether $M \supseteq S$ or not, $m \otimes s$ is defined and $\mu(M =m, S=s) \neq 0$ implies $\mu(M =m \mid S=s) = 1$. Thus,  $\pearlindep{X}{Y}{S}$ renders values on $X \cap Y$ deterministic given values on $S$.
%
%To sum up, implies that:
%for all $s \in \R{S}$, for all $m \in \R{M}$ such that $m \otimes s$ is defined,
%there is only one $m$ such that there exists $x \in \R{X}$, $y \in \R{Y}$ making
%$x \otimes y \otimes m \otimes s$ defined.
 \end{proof}

					\subsection{Section~\ref{sec:relational}, Join Dependency: Omitted Details}
					We again prove a disintegration lemma for \MP~\cref{Mp:lemma:condfirst} so that we can use~\cref{findexact} on \MP.

			\begin{lemma}[Disintegration]
	\label{Mp:lemma:condfirst}
  If $f = f_{1} \odot f_{2}$ and $D_2 = R_1$, then $\pi_{R_1} f = f_{1}$.
  Conversely, if $\pi_{R_1} f = f_{1}$, then there exists $g$ such that $f = f_1
  \odot g$.
\end{lemma}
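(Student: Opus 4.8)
The plan is to mirror the probabilistic Disintegration lemma (\cref{lemma:condfirst}), replacing the conditional-distribution construction by a \emph{fibre} construction appropriate to the powerset monad. Throughout, write $D_1 = \dom(f)$, $R = \range(f)$, and note that $f_1 = \pi_{R_1}f \in \MP$ forces $D_1 \subseteq R_1 \subseteq R$.

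For the forward direction I would exploit that projection distributes over Kleisli composition. Since $R_1 = D_2 \subseteq R_2$, for every $s \in \R{D_1}$ one computes directly
\[
  (\pi_{R_1}(f_1 \odot f_2))(s)
  = \{\, w^{R_1} \mid u \in f_1(s),\ w \in f_2(u)\,\}
  = \{\, v \mid u \in f_1(s),\ v \in f_2(u)^{R_1}\,\}
  = (f_1 \odot \pi_{R_1} f_2)(s).
\]
Because $f_2 \in \MP$ preserves its input and $R_1 = D_2$, we have $\pi_{R_1} f_2 = \pi_{D_2} f_2 = \unit_{\R{R_1}}$, so $\pi_{R_1} f = f_1 \odot \unit_{\R{R_1}} = f_1$ by the right-unit law for $\odot$. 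This direction is routine.

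For the converse I would define $g\colon \R{R_1} \to \PR{R}$ as the fibre map $g(r) \defeq \{\, m \in f(r^{D_1}) \mid m^{R_1} = r\,\}$ whenever this set is nonempty---equivalently whenever $r \in f_1(r^{D_1})$---and $g(r) \defeq \{\wh{r}\}$ for an arbitrary fixed memory $\wh{r} \in \R{R}$ extending $r$ otherwise. There are two claims: (i) $g \in \MP$, i.e.\ $g$ is total and preserves input; and (ii) $f_1 \odot g = f$. For (i), in the first case $g(r)^{R_1} = \{r\}$ by construction, in the second $\wh{r}^{\,R_1} = r$, and totality is immediate. For (ii), observe that every $u \in f_1(s)$ satisfies $u^{D_1} = s$ and $u \in f(s)^{R_1} = f_1(u^{D_1})$, so $g(u)$ always takes the first branch; hence $(f_1 \odot g)(s) = \{\, m \in f(s) \mid m^{R_1} \in f_1(s)\,\}$, and since $f_1(s) = f(s)^{R_1}$ the side condition holds for all $m \in f(s)$, giving $(f_1 \odot g)(s) = f(s)$.

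The main obstacle is the converse, specifically keeping $g$ inside $\MP$: unlike the probabilistic proof, where every conditioning value carries a (possibly arbitrary) distribution, here the inputs $r$ with $r \notin f_1(r^{D_1})$ are never produced by $f_1$, yet membership in $\MP$ still demands a total, input-preserving value at those $r$. The arbitrary extension $\wh{r}$ resolves this, and it does not interfere with the composition precisely because such $r$ are unreachable, as the computation in (ii) shows. With the fibre construction in hand, the remaining verifications are the same direct unfoldings of $\odot$ and $\pi$ already used for $\MP$ in the proof of \cref{theo:rel}.
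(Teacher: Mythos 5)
Your proof is correct and takes essentially the same route as the paper's: the forward direction pushes the projection through the Kleisli composition and uses input preservation of $f_2$, and the converse uses the fibre construction $g(r) = \{\, m \in f(r^{D_1}) \mid m^{R_1} = r \,\}$, which is exactly the paper's $g(r) = \{\, s \otimes r \mid s \in f(r^{D_1})\,\}$ once undefined joins are discarded. If anything, your treatment is more careful than the paper's: the paper never verifies $g \in \MP$, and its $g$ returns $\emptyset$ on inputs $r \notin f_1(r^{D_1})$ (violating input preservation), whereas your arbitrary-extension clause $g(r) = \{\wh{r}\}$ repairs exactly this while leaving the composition $f_1 \odot g = f$ untouched, since such $r$ are never produced by $f_1$.
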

\begin{proof}Assume $f = f_{1} \odot f_{2}$ and $D_2 = R_1$.  Then,
 \begin{align*}
    \pi_{R_1} f &= \pi_{R_1} (f_1 \odot f_2) = f_1 \odot (\pi_{R_1} f_2)
                = f_1 \odot \unit_{\R{\rg{1}}}
				= f_1.
  \end{align*}

 Conversely, assume $\pi_{R_1} f = f_1$. Define $g\colon \R{R_1} \rightarrow \PR{R_2}$ by $g(r) = \{ s\otimes r \mid s\in f(r^{D_1})\}$.
 \[(f_1 \odot g)(d) = \{ u \mid u\in g(r), r \in f_1(d) \} = \{ s\otimes r \mid s\in f(r^{D_1}), r \in \pi_{R_1}f(d) \}= \{ s \mid s\in f(d) \}=f(d).\]
\end{proof}

\theorel*
\begin{proof}
The result follows from combining~\cref{Mp:mainhelper} and~\cref{findexact}.
\end{proof}
\begin{lemma}
	\label{Mp:mainhelper}
	For a relation $R$ on $\Val$, $X, Y \subseteq \Val$,
	there exists $f_1\colon \R{\emptyset} \rightarrow \PR{X \cap Y}$,
	$f_2 \colon \R{X \cap Y} \rightarrow \PR{X}$,
	$f_3 \colon \R{X \cap Y} \rightarrow \PR{Y}$,
	such that $f_1 \odot (f_2 \oplus f_3) \sqsubseteq f_R$,
	if and only if $R^{X \cup Y} = R^X \bowtie R^Y$.
\end{lemma}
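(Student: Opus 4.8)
The plan is to prove both directions by reducing the preorder relation $\sqsubseteq$ to an equality of marginals and then exploiting idempotence of the natural join. Throughout I write $Z = X \cap Y$, and note that the typing constraints force $f_2 \oplus f_3$ and $f_1 \odot (f_2 \oplus f_3)$ to be defined exactly because $\range(f_2) \cap \range(f_3) = X \cap Y = Z = \dom(f_2) \cap \dom(f_3)$ and $\range(f_1) = Z = \dom(f_2 \oplus f_3)$. The computational heart of the argument, used in both directions, is the identity
\[
\bigl(f_1 \odot (f_2 \oplus f_3)\bigr)(\empmem) = (f_1 \odot f_2)(\empmem) \bowtie (f_1 \odot f_3)(\empmem).
\]
To establish it I would unfold $\oplus$ and $\odot$ into $\{u \bowtie v_1 \bowtie v_2 \mid u \in f_1(\empmem),\ v_1 \in f_2(u),\ v_2 \in f_3(u)\}$, and then observe that input preservation ($\pi_Z f_2 = \unit_{\R Z}$ and $\pi_Z f_3 = \unit_{\R Z}$) forces $v_1^Z = u = v_2^Z$; hence, by idempotence, commutativity and associativity of $\bowtie$, each term equals $(u \bowtie v_1) \bowtie (u \bowtie v_2) = v_1 \bowtie v_2$, and conversely any defined $v_1 \bowtie v_2$ with $v_1 \in (f_1\odot f_2)(\empmem)$ and $v_2 \in (f_1 \odot f_3)(\empmem)$ forces $v_1^Z = v_2^Z$ and thus a common witness $u = v_1^Z \in f_1(\empmem)$.

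For the forward direction, I would first turn $f_1 \odot (f_2 \oplus f_3) \sqsubseteq f_R$ into the clean equation $\pi_{X \cup Y} f_R = f_1 \odot (f_2 \oplus f_3)$. Unwinding the definition of $\sqsubseteq$ gives $R'$ and $h$ with $f_R = (k \oplus \unit_{\R{R'}}) \odot h$ for $k = f_1 \odot (f_2 \oplus f_3)$; comparing domains ($\dom f_R = \emptyset$) forces $R' = \emptyset$, so by the $\T$-model unit law $k \oplus \unit_{\R\emptyset} = k$ we get $f_R = k \odot h$, and \cref{Mp:lemma:condfirst} yields $\pi_{X\cup Y} f_R = k$. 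Evaluating at $\empmem$ gives $R^{X\cup Y} = (f_1 \odot f_2)(\empmem) \bowtie (f_1 \odot f_3)(\empmem)$ by the identity above. It remains to identify $(f_1 \odot f_2)(\empmem) = R^X$ and $(f_1 \odot f_3)(\empmem) = R^Y$; for this I would use the padding identity $f_2 \oplus f_3 = f_2 \odot (\unit_{\R X} \oplus f_3)$ (an instance of Exchange Equality together with the unit laws $f_2 \oplus \unit_{\R Z} = f_2$, $\unit_{\R Z} \odot f_3 = f_3$), which gives $k = (f_1 \odot f_2) \odot (\unit_{\R X} \oplus f_3)$; marginalizing to $X$ and using input preservation of $\unit_{\R X} \oplus f_3$ to collapse the tail yields $\pi_X f_R = f_1 \odot f_2$, and symmetrically $\pi_Y f_R = f_1 \odot f_3$. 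Combining these gives $R^{X\cup Y} = R^X \bowtie R^Y$.

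For the reverse direction, assuming $R^{X\cup Y} = R^X \bowtie R^Y$, I would set $f_1 = \pi_Z f_R$ (so $f_1(\empmem) = R^Z$) and define $f_2(u) = \{v \in R^X \mid v^Z = u\}$ for $u \in R^Z$, extending $f_2$ arbitrarily but input-preservingly on $\R Z \setminus R^Z$ (these inputs are never reached by $f_1$), and likewise $f_3$ with $Y$ in place of $X$. A short check shows $f_2, f_3 \in \MP$ and $(f_1 \odot f_2)(\empmem) = R^X$, $(f_1 \odot f_3)(\empmem) = R^Y$, since every tuple of $R^X$ projects into $R^Z$. The core identity then gives $(f_1 \odot (f_2 \oplus f_3))(\empmem) = R^X \bowtie R^Y = R^{X\cup Y}$, so $f_1 \odot (f_2 \oplus f_3) = \pi_{X\cup Y} f_R$ as kernels out of $\R\emptyset$; the converse half of \cref{Mp:lemma:condfirst} then supplies $h$ with $f_R = (f_1 \odot (f_2 \oplus f_3)) \odot h$, witnessing $f_1 \odot (f_2 \oplus f_3) \sqsubseteq f_R$ (with $R' = \emptyset$).

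The main obstacle I anticipate is the idempotence step in the core identity: unlike the probabilistic model, where a shared conditioning value is handled automatically by the product of conditional probabilities, here the two joined tuples must share the value $u$ on $Z$, and seeing that this is enforced precisely by input preservation (and that the join is idempotent on the overlap) is the crux. A secondary, more bookkeeping obstacle is cleanly collapsing $\sqsubseteq$ to a marginal equality and defining $f_2, f_3$ as genuine total input-preserving kernels off the support of $f_1$; both become routine once the typing constraint $X \cap Y = Z$ is tracked carefully.
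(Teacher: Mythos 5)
Your proof is correct and follows essentially the same route as the paper's: both directions reduce $\sqsubseteq$ to a marginal equality via the disintegration lemma (exploiting that $f_R$ has empty domain), use idempotence, commutativity and associativity of $\bowtie$ (with input preservation supplying the shared $X \cap Y$ values) to factor $f_1 \odot (f_2 \oplus f_3)(\empmem)$ as $(f_1 \odot f_2)(\empmem) \bowtie (f_1 \odot f_3)(\empmem)$, convert $\oplus$ to $\odot$ by padding/exchange to identify these with $R^X$ and $R^Y$, and construct the same witnesses $f_1, f_2, f_3$ for the converse. Your explicit input-preserving extension of $f_2, f_3$ on inputs outside $R^{X \cap Y}$ is a minor tightening of the paper's construction, which as written maps such inputs to the empty set.
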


\begin{proof}\textbf{Forward Direction: } Assuming there exist $f_1, f_{2}, f_{3}$ such that $f_1 \odot (f_2 \oplus f_3) \sqsubseteq f_R$,
	we want to show that $R^{X \cup Y} = R^X \bowtie R^Y$.

	We have $f_{1} \odot (f_{2} \oplus f_{3}) \sqsubseteq f_R$ and $f_R$ with empty domain. Hence, there exists $h \in \MP$ such that
					 \[f_R = \left( f_{1} \odot (f_{2} \oplus f_{3})\right) \odot h.\]
Thus, $f_1 \odot (f_2 \oplus f_3) = \pi_{X \cup Y} f_R$, and so $f_1 \odot (f_2 \oplus f_3) (\empmem) = R^{X \cup Y}$.

Similarly to the reasoning in~\cref{Md:probhelper}, by~\cref{oplus2odot}, we have
\begin{align*}
	f_1 \odot f_2 \sqsubseteq f_1 \odot (f_2 \oplus f_3) \\
f_1 \odot f_3 \sqsubseteq	f_1 \odot (f_2 \oplus f_3)
\end{align*}
Then, as above,  $f_1 \odot f_2 = \pi_{X} f_R$, $f_{1} \odot f_3= \pi_{Y} (f_R)$. So, $f_1 \odot f_2 (\empmem) = R^{X}$, $f_1 \odot f_3 (\empmem) = R^{Y}$.

By definition of $\oplus$ and $\odot$,
					 \begin{align*}
					 f_{1} \odot ( f_{2} \oplus f_{3}) (\empmem)
&= \{ u \bowtie v \mid u \in f_{1}( \empmem) \textbf{ and } v \in f_{2} \oplus f_{3}( u )\} \\
&= \{ u \bowtie v \mid u \in f_{1}( \empmem) \textbf{ and } v \in \{v_1 \bowtie v_2 \mid v_1 \in f_{2}(u ) \textbf{ and } v_2 \in f_{3}( u )\} \} \\
&=\{ u \bowtie (v_1 \bowtie v_2) \mid u \in f_{1}( \empmem) \textbf{ and } v_1 \in f_{2}(u ) \textbf{ and } v_2 \in f_{3}(u ) \}
					 \end{align*}
					 Since $\bowtie$ is idempotent, i.e., $u \bowtie u = u$, commutative and associative, we have
					 \begin{align*}
	&u \bowtie (v \bowtie w) =  (u \bowtie u) \bowtie (v \bowtie w)
	= (u \bowtie v) \bowtie (u \bowtie w).
					 \end{align*}
Therefore, we can convert the previous equality into
					 \begin{align*}
	&f_{1} \odot ( f_{2} \oplus f_{3}) (\empmem)  = \{ (u \bowtie v_1) \bowtie ( u \bowtie v_2) \mid u \in f_{1}( \empmem) \textbf{ and } v_1 \in f_{2}(u ) \textbf{ and } v_2 \in f_{3}(u ) \} \\
							=& \big\{ u \bowtie v_1 \mid u \in f_{1}( \empmem) \textbf{ and } v_1 \in f_{2}(u )  \big\} \bowtie
				 \big\{ u \bowtie v_2 \mid u \in f_{1}( \empmem) \textbf{ and } v_2 \in f_{3}(u )  \big\}\\
							=& (f_1 \odot f_2) (\empmem) \bowtie (f_1 \odot f_3) (\empmem)
					 \end{align*}

Thus, $R^{X \cup Y} = R^X \bowtie R^Y$.

This completes the proof for the first direction.

\noindent\textbf{Backward direction: }
If $R^{X \cup Y} = R^X \bowtie R^Y$, then we want to show that there exist
$f_1\colon \R{\emptyset} \rightarrow \PR{X \cap Y}$,
	$f_2 \colon \R{X \cap Y} \rightarrow \PR{X}$,
	$f_3 \colon \R{X \cap Y} \rightarrow \PR{Y}$, such that $f_1 \odot (f_2 \oplus f_3) \sqsubseteq f_R$.

	Let $f_{1} = f_R^{X \cap Y}$ and define $f_{2}: \R{{X \cap Y}} \rightarrow \PR{X }$ by having
 \[f_{2}(s) := \{ r \in R^X \mid r^{{X \cap Y}} = s \}\]
 for all $s \in \R{{X \cap Y}}$.
 Define $f_3: \R{{X \cap Y}} \rightarrow \PR{Y }$ by having
 \[f_3(s) = \{ r \in R^Y \mid r^{{X \cap Y}} =s \}\]
 for all $s \in \R{{X \cap Y}}$.

	\begin{itemize}
					\item
	By construction, $f_1, f_2, f_3$ have the desired types.
				 \item States $f_{2}, f_{3}$ are both in $\MP$.

$f_2$ preserves the input because for any $s \in \R{X \cap Y}$, $f_2(s)$ as a relation only includes tuples whose projection to $X \cap Y$ is equals to $s$. Thus, $f_2$ is in $\MP$.

Similarly, $f_3$ is in $\MP$.

					 \item  $f_1 \odot (f_2 \oplus f_3) \sqsubseteq f_R$.

 First, by their types, $f_{1} \odot (f_{2} \oplus f_3) $ is defined, and
	\begin{align}
	 f_{1} \odot (f_{2} \oplus f_3) (\empmem)  &=
	 \{ u \bowtie v \mid u \in f_{1}( \empmem ) \textbf{ and } v \in (f_{2} \oplus f_3)( u )\} \\
	     &=\{ u \bowtie v \mid u \in f_{1}(\empmem) \textbf{ and } v \in f_{2}(u^{\dm{2}}) \bowtie f_3(u^{\dm{3}})\} \notag \\
&=\{ u \bowtie v \mid u \in f_{1}(\empmem) \textbf{ and } v \in f_{2}(u) \bowtie f_3(u)\} \tag{By $\dm{2} =\dm{3} = {X \cap Y}$ } \\
	     &= \{ u \bowtie (v_i \bowtie v_j) \mid u \in f_{1}(\empmem) \textbf{ and } v_i \in f_{2}(u) \textbf{ and } v_j \in f_3( u) \} \notag \\
       &=\{ (u \bowtie v_i)  \bowtie ( u \bowtie v_j) \mid u \in f_{1}(\empmem) \textbf{ and } v_i \in f_{2}(u) \textbf{ and } v_j \in f_3( u) \} \tag{because $\bowtie$ is idempotent, associative, commutative} \\
        &=\{ u \bowtie v_i  \mid u \in f_{1}(\empmem) \textbf{ and } v_i \in f_{2}(u) \}
					       \bowtie
       \{ u \bowtie v_j  \mid u \in f_{1}(\empmem) \textbf{ and } v_j \in f_3(u) \}  \label{eq:relation1}
 \end{align}
	Recall that we define $f_1$ such that $f_{1}(\empmem) = R^{X \cap Y}$, and $f_{2}(s) := \{ r \in R \mid r^{{X \cap Y}} = s \} $, so
 \begin{align}
	 \{ u \bowtie v_i  \mid u \in R^{X \cap Y} \textbf{ and } v_i \in f_{2}(u) \}
	&= \{ (u \bowtie v_i)  \mid u \in R^{X \cap Y} \textbf{ and }
	v_i \in \{ r \in R^X \mid r^{X \cap Y} = u \} \} \notag \\
	&= \{ v_i \mid v_i \in \{ r \in R^X \mid r^{X \cap Y} \in R^{X \cap Y} \} \} \notag \\
	&= R^X \label{eq:relation2}
 \end{align}$f_1 \odot (f_2 \oplus f_3) \sqsubseteq f_{\mu}$
 Analogously,
 \begin{align}
	 \{ u \bowtie v_j  \mid u \in f_{1}(\empmem) \textbf{ and } v_j \in f_3(u) \}
	&= R^Y \label{eq:relation3}
 \end{align}
 Substituting \cref{eq:relation2} and \cref{eq:relation3} into \cref{eq:relation1},  we have
 \begin{align*}
	 f_{1} \odot (f_{2} \oplus f_3) (\empmem) = R^X \bowtie R^Y
 \end{align*}
	By assumption, $R^X \bowtie R^Y =  R^{X \cup Y}$.
	Thus,
	$f_{1} \odot (f_{2} \oplus f_3)(\empmem) =   R^{X \cup Y} $,
	and $f_{1} \odot (f_{2} \oplus f_3) = \pi_{X \cup Y} f_R$.
	By~\cref{Mp:lemma:condfirst}, this implies that $f_1 \odot (f_2 \oplus f_3) \sqsubseteq f_R$.
				\end{itemize}
				 Thus, the constructed $f_1, f_2, f_3$ satisfy all requirements.
			 \end{proof}

		\subsection{Section \ref{sec:graphoid}, graphoid axioms: Omitted Details}
\label{app:separoid}
\begin{lemma}
	\label{graphoid:symmetry}
	The following judgment is derivable in \LOGIC:
              \[
                \vdash P \depand (Q \sepand R)\rightarrow P \depand (R \sepand Q) .
              \]
\end{lemma}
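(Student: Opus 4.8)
The plan is to reduce the implication to a single entailment and then lift it to implicational form using the deduction machinery already present in the Hilbert system. Concretely, writing $A = P \depand (Q \sepand R)$ and $B = P \depand (R \sepand Q)$, I would first establish the sequent $A \vdash B$ directly from the commutativity of $\sepand$ together with the congruence rule for $\depand$, and then derive $\vdash A \rightarrow B$.

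For the core sequent, the two premises I need are $P \vdash P$, obtained by \textsc{Ax}, and $Q \sepand R \vdash R \sepand Q$, obtained by $\sepand$-\textsc{Comm}. Feeding these into $\depand$-\textsc{Conj}, whose conclusion is $P \depand Q \vdash R \depand S$ from hypotheses $P \vdash R$ and $Q \vdash S$, instantiated so that the first component $P$ is left unchanged and the second component $Q \sepand R$ is rewritten to $R \sepand Q$, yields exactly $P \depand (Q \sepand R) \vdash P \depand (R \sepand Q)$, i.e. $A \vdash B$.

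To obtain the implication formula, I would use the standard deduction-theorem derivation. First derive $\top \land A \vdash A$ from $A \vdash A$ via $\land 2$ (taking the left conjunct to be $\top$), then combine with the core sequent $A \vdash B$ by the admissible \textsc{Cut} rule (derived as in \Cref{sep2dep}) to get $\top \land A \vdash B$, and finally apply the $\rightarrow$ rule with antecedent $\top$ to conclude $\top \vdash A \rightarrow B$, which is the desired theorem.

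The derivation has essentially no obstacle: its only mathematical content is the commutativity of $\sepand$ and the monotonicity of $\depand$ in its second argument, both of which are primitive rules. The one place requiring care is the bookkeeping in passing from an entailment $A \vdash B$ to a derivable implication $\vdash A \rightarrow B$, since the system proves sequents rather than formulas directly; but this lifting is the routine deduction step used elsewhere in the development, so it adds no real difficulty.
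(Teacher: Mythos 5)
Your proposal is correct and follows essentially the same derivation as the paper: \textsc{Ax} for $P \vdash P$, $\sepand$-\textsc{Comm} for $Q \sepand R \vdash R \sepand Q$, then $\depand$-\textsc{Conj} to get the core sequent, followed by the $\rightarrow$ rule. The only difference is cosmetic: the paper applies $\rightarrow$ directly to the sequent $A \vdash B$, eliding the $\land 2$ weakening to $\top \land A \vdash B$ that you spell out explicitly, so your version is if anything slightly more careful.
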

            \begin{proof}
              We have the derivation:
              \begin{prooftree}
                \AxiomC{~}
                \RightLabel{\textsc{Ax}}
                \UnaryInfC{$P \vdash P$}
                \AxiomC{~}
                \RightLabel{$\sepand$-\textsc{Comm}}
                \UnaryInfC{$Q \sepand R \vdash R \sepand Q$}
                \RightLabel{$\depand$-\text{Conj}}
                \BinaryInfC{$ P \depand (Q \sepand R)\vdash P \depand (R \sepand Q)$}
                \RightLabel{$\rightarrow$}
                \UnaryInfC{$\vdash P \depand (Q \sepand R)\rightarrow P \depand (R \sepand Q)$}
              \end{prooftree}
            \end{proof}

\begin{lemma}
	\label{graphoid:decomposition}
	The following judgment is derivable in \LOGIC:
              \[
                \vdash P \depand (Q \sepand (R \land S))\rightarrow   P \depand (Q \sepand R)  \land  P \depand (Q \sepand S) .
              \]
\end{lemma}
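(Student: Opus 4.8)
The plan is to reduce the stated implication to an ordinary entailment and then discharge it exactly as in \Cref{graphoid:symmetry}: I would prove
\[
P \depand (Q \sepand (R \land S)) \vdash (P \depand (Q \sepand R)) \land (P \depand (Q \sepand S))
\]
and conclude with a single application of the $\rightarrow$ rule. Since the right-hand side of this entailment is a conjunction, the first step is to apply $\land 1$, splitting the goal into the two separate obligations $P \depand (Q \sepand (R \land S)) \vdash P \depand (Q \sepand R)$ and $P \depand (Q \sepand (R \land S)) \vdash P \depand (Q \sepand S)$. All the real work lives in these two symmetric subgoals.

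Each subgoal is handled by pushing the entailment through the two nested conjunctions using the congruence rules. For the first one I would apply $\depand$-\textsc{Conj} with the axiom $P \vdash P$ (\textsc{Ax}) on the left, leaving the residual obligation $Q \sepand (R \land S) \vdash Q \sepand R$ on the right. That obligation in turn follows from $\sepand$-\textsc{Conj} applied to $Q \vdash Q$ (\textsc{Ax}) and $R \land S \vdash R$, and this last entailment is just an instance of $\land 3$ (obtained from $R \land S \vdash R \land S$). The second subgoal is literally identical except that $\land 3$ is replaced by $\land 4$ to obtain $R \land S \vdash S$, yielding the $S$-conjunct. Reassembling the two derivations with $\land 1$ gives the displayed entailment, and the final $\rightarrow$ introduction produces the lemma.

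I do not expect a genuine obstacle: this is a purely proof-theoretic fact, derivable entirely from the structural rules of the Hilbert system with no appeal to the Kripke semantics or the concrete models. The only point requiring care is the nesting order of the congruence rules—applying $\depand$-\textsc{Conj} on the outside and $\sepand$-\textsc{Conj} on the inside—so that the projections $\land 3$ and $\land 4$ act precisely on the innermost conjunct $R \land S$ while $P$ and $Q$ are carried through unchanged via \textsc{Ax}. This is exactly the generic logical decomposition step that, once combined with the \textsc{Split} axiom to rewrite $\pair{Z}{Y} \land \pair{Z}{W}$ as $\pair{Z}{Y \cup W}$, underlies the derivation of the \textsc{Decomposition} graphoid law.
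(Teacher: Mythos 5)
Your proposal is correct and matches the paper's derivation essentially verbatim: the same bottom-up structure of $\land 1$ splitting the conjunction, $\depand$-\textsc{Conj} with \textsc{Ax} on $P$, $\sepand$-\textsc{Conj} with \textsc{Ax} on $Q$, the projections $\land 3$ and $\land 4$ on $R \land S$, and a final $\rightarrow$ introduction. No gaps; nothing further is needed.
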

            \begin{proof}
              We have the derivation:
              \begin{prooftree}
                \AxiomC{~}
                \LeftLabel{\textsc{Ax}}
                \UnaryInfC{$P \vdash P$}
                \AxiomC{~}
                \LeftLabel{\textsc{Ax}}
                \UnaryInfC{$Q \vdash Q$}
                \AxiomC{~}
                \LeftLabel{\textsc{Ax}}
                \UnaryInfC{$R \land S \vdash R \land S$}
                \RightLabel{$\land 3$}
                \UnaryInfC{$R \land S \vdash R$}
                \RightLabel{$\sepand$-\textsc{Conj}}
                \BinaryInfC{$Q \sepand (R \land S) \vdash Q \sepand R$}
                \RightLabel{ $\depand$-\textsc{Conj}}
                \BinaryInfC{$P \depand (Q \sepand (R \land S)) \vdash P \depand (Q \sepand R) $}
                %	\AxiomC{~}
                %	\UnaryInfC{$P \vdash P$}
                %	\AxiomC{~}
                %	\UnaryInfC{$Q \vdash Q$}
                %	\AxiomC{~}
                %	\UnaryInfC{$R \vdash \top$}
                %	\AxiomC{~}
                %	\UnaryInfC{$S \vdash S$}
                %	\BinaryInfC{$R \land S \vdash S$}
                %	\BinaryInfC{$Q \sepand (R \land S) \vdash Q \sepand S$}
                \AxiomC{Similar to left}
                \UnaryInfC{$P \depand (Q \sepand (R \land S)) \vdash P \depand (Q \sepand S) $}
                \LeftLabel{$\land 1$}
                \BinaryInfC{$ P \depand (Q \sepand (R \land S)) \vdash P \depand (Q \sepand R)  \land  P \depand (Q \sepand S)$}
                \LeftLabel{$\rightarrow$}
                \UnaryInfC{$\vdash P \depand (Q \sepand (R \land S))\rightarrow   P \depand (Q \sepand R)  \land  P \depand (Q \sepand S)$}
              \end{prooftree}
            \end{proof}

            \begin{lemma}[Weak Union]
              \label{graphoid:weaku}
														The following judgment is valid in any $\T$-model where Disintegration holds (see ~\cref{lemma:condfirst} and~\cref{Mp:lemma:condfirst} for Disintegration):
              \begin{align*}
                \models \indep{Z}{X}{Y \cup W} \rightarrow \indep{Z \cup W}{X}{Y}
              \end{align*}
            \end{lemma}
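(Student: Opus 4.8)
The plan is to prove validity by fixing an arbitrary state $f$ of the $\T$-model and showing $f \models \indep{Z}{X}{Y \cup W}$ implies $f \models \indep{Z \cup W}{X}{Y}$; since the $\rightarrow$-clause unfolds (using Persistence) to exactly this state-wise implication, validity follows. First I would apply \cref{findexact} to the hypothesis to obtain input-preserving kernels $f_1 \colon \R{\emptyset} \to \TR{Z}$, $g \colon \R{Z} \to \TR{Z \cup X}$ and $k \colon \R{Z} \to \TR{Z \cup Y \cup W}$ with $f_1 \odot (g \oplus k) \sqsubseteq f$. The hypothesis also forces $X \cap (Y \cup W) \subseteq Z$, so every parallel composition below is defined; in particular $X \cap Y \subseteq Z \subseteq Z \cup W$, which is exactly the side condition the conclusion requires.

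The core construction splits $k$ along $W$ via Disintegration. I set $k_W \defeq \pi_{Z \cup W} k \colon \R{Z} \to \TR{Z \cup W}$, so by \cref{lemma:condfirst} (for $\MD$) or \cref{Mp:lemma:condfirst} (for $\MP$) there is $k' \colon \R{Z \cup W} \to \TR{Z \cup W \cup Y}$ with $k = k_W \odot k'$. I then propose the three kernels
\[ \hat f_1 \defeq f_1 \odot k_W, \qquad \hat g \defeq g \oplus \unit_{\R{W}}, \qquad \hat h \defeq k', \]
of types $\R{\emptyset} \to \TR{Z \cup W}$, $\R{Z \cup W} \to \TR{Z \cup W \cup X}$ and $\R{Z \cup W} \to \TR{Z \cup W \cup Y}$ respectively (here $\hat g$ is defined because $X \cap W \subseteq Z$); these are precisely the shapes \cref{findexact} demands to witness $f \models \indep{Z \cup W}{X}{Y}$.

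The heart of the argument is the identity $\hat f_1 \odot (\hat g \oplus \hat h) = f_1 \odot (g \oplus k)$. Using $\oplus$-associativity together with $\unit_{\R{W}} \oplus k' = k'$ (as $W \subseteq \dom(k')$) gives $\hat g \oplus \hat h = g \oplus k'$, and $\odot$-associativity then yields $\hat f_1 \odot (\hat g \oplus \hat h) = f_1 \odot \bigl(k_W \odot (g \oplus k')\bigr)$. It remains to establish $k_W \odot (g \oplus k') = g \oplus (k_W \odot k') = g \oplus k$. I expect this to be the main obstacle: it is \emph{not} an instance of Exchange Equality, since $g$ and $k'$ both read the full $Z \cup W$ output of $k_W$ rather than partitioning it. The identity holds precisely because $k_W$ preserves its input $Z$: its output always has $Z$-component equal to the input, so $g$—which reads only $Z$—returns the same value whether it reads $Z$ from $k_W$'s output or from the preserved input. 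Concretely I would expand both sides with \cref{eq:simplemu} and observe that $k_W(d)(\cdot)$ vanishes off the set where the $Z$-component is unchanged, whence $g(m^{Z})(\cdot) = g(d)(\cdot)$ on the support; the relational case is identical with $\cdot$ replaced by $\bowtie$ (using its idempotence), so the argument is uniform across $\T$-models where Disintegration holds.

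Finally I would conclude: chaining these identities gives $\hat f_1 \odot (\hat g \oplus \hat h) = f_1 \odot (g \oplus k) \sqsubseteq f$, and since $\hat f_1, \hat g, \hat h$ carry the required domains and ranges, \cref{findexact} yields $f \models \indep{Z \cup W}{X}{Y}$. As $f$ was arbitrary, the implication $\indep{Z}{X}{Y \cup W} \rightarrow \indep{Z \cup W}{X}{Y}$ is valid in every such model.
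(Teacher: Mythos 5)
Your construction is the same as the paper's: apply \cref{findexact} to the hypothesis, disintegrate the kernel $k \colon \R{Z} \to \TR{Z \cup Y \cup W}$ along $Z \cup W$, and propose exactly the witnesses $f_1 \odot k_W$, $g \oplus \unit_{\R{W}}$, $k'$ (these are the paper's $f_1 \odot f_3^1$, $f_2 \oplus \unit_W$, $f_3^2$), concluding by \cref{findexact} and persistence. The gap is in the step you yourself flag as the main obstacle. Having correctly reduced everything to the identity $k_W \odot (g \oplus k') = g \oplus k$, you assert that it is \emph{not} an instance of Exchange Equality and instead verify it pointwise via \cref{eq:simplemu} and its relational analogue. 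That verification only covers $\MD$ and $\MP$: the lemma is stated for \emph{every} $\T$-model in which Disintegration holds, and in such a model $\oplus$ is an abstract operation constrained only by the axioms of \cref{def:tmodel}---there is no pointwise formula to expand. So, as written, your proof does not reach the stated generality.

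The missing observation is that your identity \emph{does} follow from the axioms, by essentially the move you dismissed. Since $\dom(k_W) = Z$, commutativity and \cref{paddingdom} give $k_W = \unit_{\R{Z}} \oplus k_W$, and therefore
\[
  k_W \odot (g \oplus k')
  = (\unit_{\R{Z}} \oplus k_W) \odot (g \oplus k')
  = (\unit_{\R{Z}} \odot g) \oplus (k_W \odot k')
  = g \oplus k ,
\]
where the middle equality is Exchange Equality \cref{revexeq}: its right-hand side equals $g \oplus k$, which is defined by hypothesis (definedness of $g \oplus k$ is exactly $X \cap (Y \cup W) \subseteq Z$), so \cref{exch_revex} guarantees the left-hand side is defined as well and the two sides are equal; the final step uses the Kleisli unit law and Disintegration. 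Note that $\oplus$ does not require the domains of $g$ and $k'$ to partition $\range(k_W)$---overlapping domains are permitted, and padding with $\unit_{\R{Z}}$ is precisely what routes the preserved $Z$-component to $g$. (The paper runs the same axioms in mirrored order, rewriting with \cref{oplus2odot}, then disintegrating, then applying \cref{odot2oplus} at its step $\dagger$, and lands on the same witnesses.) With this one-line replacement your argument becomes correct, uniform over all $\T$-models with Disintegration, and in substance identical to the paper's proof.
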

            \begin{proof}
              Let $M$ be a $\T$-model. If $f \models \indep{Z}{X}{Y \cup W}$,
              by~\cref{findexact}, there exist $f_1, f_2, f_3 \in M$ such that
              $f_1 \odot (f_2 \oplus f_3) \sqsubseteq f$, $f_1 : \m{\emptyset}
              \rightarrow \TR{Z}$, $f_2 : \m{Z} \rightarrow \TR{Z \cup X}$, $f_3
              : \m{Z} \rightarrow \TR{Z \cup Y \cup W}$.

              Let $f_3^1 = \pi_{Z \cup W} f_3$, then by Disintegration there exists $f_3^2 \in M$ such that $f_3 = f_3^1 \odot f_3^2$.

              Since $f_1 \odot (f_2 \oplus f_3) \sqsubseteq f$, and $f$ has empty domain, there must exists $v \in M$ such that
              \begin{align*}
                f &= f_1 \odot (f_2 \oplus f_3) \odot v \\
                  & = f_1 \odot f_3 \odot (\unit_{Z \cup Y \cup W} \oplus f_2) \odot v \tag{By~\cref{oplus2odot}} \\
																		& = f_1 \odot f_3 \odot (\unit_{Y \cup W} \oplus f_2) \odot v \tag{By $\dom(f_2) = Z$} \\
                  & = f_1 \odot (f_3^1 \odot f_3^2) \odot (\unit_{Y \cup W} \oplus f_2) \odot v  \\
                  & = f_1 \odot f_3^1 \odot (f_3^2 \odot (\unit_{Y \cup W} \oplus f_2)) \odot v  \\
                  & = f_1 \odot f_3^1 \odot ((f_2 \oplus \unit_W)  \oplus f_3^2) \odot v \tag{$\dagger$}
              \end{align*}
														where $\dagger$ follows from \cref{odot2oplus} and $\dom(f_2 \oplus \unit_W) = Z  \cup W \subseteq \range(f_3^1)$.

														Thus, $ f_1 \odot f_3^1 \odot ((f_2 \oplus \unit_W) \oplus f_3^2)  \sqsubseteq f$.

              Note that	$f_1 \odot f_3^1$ has type $\m{\emptyset} \rightarrow \T{\m{Z \cup W}}$, so $f_1 \odot f_3^1 \models \pair{\emptyset}{Z \cup W}$.

              State $f_2 \oplus \unit_W$ has type $\m{Z \cup W} \rightarrow \T(\m{Z \cup W \cup X}$, so $f_2 \oplus \unit_W \models \pair{Z \cup W}{X}$.

              State $f_3^2 $ has type $\m{Z \cup W} \rightarrow \TR{Z \cup W \cup Y}$, so $f_3^2\models \pair{Z \cup W}{Y}$.

              Therefore, $ f_1 \odot f_3^1 \odot ((f_2 \oplus \unit_W)  \oplus f_3^2) \models \pair{\emptyset}{Z \cup W} \depand \pair{Z \cup W}{X} \sepand \pair{Z \cup W}{Y}$.

              By persistence, $f \models \indep{Z \cup W}{X}{Y} $, and Weak Union is valid.
            \end{proof}

    \begin{lemma}[Contraction]
      \label{graphoid:contraction}
      The following judgment is valid in any $\T$-model:
      \begin{align*}
        \models (\indep{Z}{X}{Y}) \land (\indep{Z \cup Y}{X}{W}) \rightarrow \indep{Z}{X}{Y \cup W}
      \end{align*}
    \end{lemma}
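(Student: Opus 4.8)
The plan is to prove validity by a direct construction at the level of kernels, mirroring the structure of the Weak Union proof (\cref{graphoid:weaku}). Fix a $\T$-model $M$ and a state $f$ of empty domain (as it interprets a lifted distribution/relation) with $f \models \indep{Z}{X}{Y}$ and $f \models \indep{Z \cup Y}{X}{W}$. By \cref{findexact} applied to each conjunct I would extract witnesses $f_1 \colon \R{\emptyset} \to \TR{Z}$, $f_2 \colon \R{Z} \to \TR{Z \cup X}$, $f_3 \colon \R{Z} \to \TR{Z \cup Y}$ with $f_1 \odot (f_2 \oplus f_3) \sqsubseteq f$, and $g_1 \colon \R{\emptyset} \to \TR{Z \cup Y}$, $g_2 \colon \R{Z \cup Y} \to \TR{Z \cup Y \cup X}$, $g_3 \colon \R{Z \cup Y} \to \TR{Z \cup Y \cup W}$ with $g_1 \odot (g_2 \oplus g_3) \sqsubseteq f$. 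The key idea is that the conditional of $(Y,W)$ given $Z$ factors as ``first produce $Y$ from $Z$, then produce $W$ from $Z \cup Y$''; accordingly I would take as target witnesses $h_1 \defeq f_1$, $h_2 \defeq f_2$, and $h_3 \defeq f_3 \odot g_3 \colon \R{Z} \to \TR{Z \cup Y \cup W}$, and show $h_1 \odot (h_2 \oplus h_3) \sqsubseteq f$, which by \cref{findexact} (together with persistence) gives $f \models \indep{Z}{X}{Y \cup W}$.

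Since $f$ has empty domain, each witness $\sqsubseteq f$ is in fact the corresponding marginal of $f$, so disintegration (\cref{lemma:condfirst}, \cref{Mp:lemma:condfirst}) would upgrade the orderings to equalities: $f_1 = \pi_Z f$, $f_1 \odot (f_2 \oplus f_3) = \pi_{Z \cup X \cup Y} f$, and via \cref{oplus2odot}, $f_1 \odot f_3 = \pi_{Z \cup Y} f$. From the second hypothesis I would likewise obtain $g_1 = \pi_{Z \cup Y} f$ and $g_1 \odot g_2 = \pi_{Z \cup X \cup Y} f$. In particular $g_1 = f_1 \odot f_3$ and $g_1 \odot g_2 = f_1 \odot (f_2 \oplus f_3)$. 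The main computation is then the chain
\begin{align*}
  h_1 \odot (h_2 \oplus h_3)
    &= f_1 \odot f_2 \odot (\unit_{\R{X}} \oplus (f_3 \odot g_3)) \\
    &= f_1 \odot f_2 \odot ((\unit_{\R{X}} \oplus f_3) \odot (\unit_{\R{X}} \oplus g_3)) \\
    &= f_1 \odot (f_2 \oplus f_3) \odot (\unit_{\R{X}} \oplus g_3) \\
    &= g_1 \odot g_2 \odot (\unit_{\R{X}} \oplus g_3)
     = g_1 \odot (g_2 \oplus g_3) = \pi_{Z \cup X \cup Y \cup W} f ,
\end{align*}
where the first and third equalities use \cref{oplus2odot}, the second uses Exchange Equality (with $\unit_{\R{X}} = \unit_{\R{X}} \odot \unit_{\R{X}}$), and the fourth substitutes $f_1 \odot (f_2 \oplus f_3) = g_1 \odot g_2$. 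As $\pi_{Z \cup X \cup Y \cup W} f \sqsubseteq f$ for empty-domain $f$, this yields $h_1 \odot (h_2 \oplus h_3) \sqsubseteq f$.

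Before running this chain I would discharge the definedness side-conditions, which is where the two hypotheses genuinely interact. The various $\oplus$ operations---in particular $h_2 \oplus h_3$ and the paddings $\unit_{\R{X}} \oplus (\cdot)$---are defined only when domains and ranges overlap compatibly, which reduces to $X \cap (Y \cup W) \subseteq Z$. The definedness of $f_1 \odot (f_2 \oplus f_3)$ already forces $X \cap Y \subseteq Z$, and that of $g_1 \odot (g_2 \oplus g_3)$ forces $X \cap W \subseteq Z \cup Y$; combining these, $X \cap W \cap Y \subseteq X \cap Y \subseteq Z$, so $X \cap W \subseteq Z$, and hence $X \cap (Y \cup W) \subseteq Z$ as required. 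I expect the main obstacle to be precisely this bookkeeping: verifying at each step that the relevant $\oplus$/$\odot$ composites are defined (so that \cref{oplus2odot} and Exchange Equality apply), and confirming that the empty-domain hypothesis legitimately upgrades every $\sqsubseteq$ to an exact marginal equality. The underlying algebraic identities themselves are routine once the decomposition $h_3 = f_3 \odot g_3$ is in place.
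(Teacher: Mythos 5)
Your proposal is correct and is essentially the paper's own proof: the same witnesses extracted via \cref{findexact}, the same key decomposition $h_3 = f_3 \odot g_3$, and the same algebraic chain via \cref{oplus2odot} and Exchange Equality producing the witness $f_1 \odot (f_2 \oplus (f_3 \odot g_3))$, merely run in the reverse direction. The only cosmetic difference is that where the paper invokes \cref{uniqueness} to conclude $f_1 \odot (f_2 \oplus f_3) = g_1 \odot g_2$, you identify both sides with the marginal $\pi_{Z \cup X \cup Y} f$ via disintegration---which is precisely how \cref{uniqueness} is itself proved---so the two arguments coincide.
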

		\begin{proof}
      Let $M$ be a $\T$-model.
      If $h \models  (\indep{Z}{X}{Y} ) \land ( \indep{Z \cup Y}{X}{W} )$, then
      \begin{itemize}
        \item $h \models \indep{Z}{X}{Y}$.
          By \cref{findexact}, there exists $f_1, f_2, f_3$ such that $f_1: \m{\emptyset} \rightarrow \TR{Z}$, $f_2 : \m{Z} \rightarrow \TR{Z \cup X}$, $f_3 : \m{Z} \rightarrow \TR{Z \cup Y}$, and $f_1 \odot (f_2 \oplus f_3) \sqsubseteq h$.

          Note $f_1 \odot (f_2 \oplus f_3)$ has type $\m{\emptyset} \rightarrow \TR{Z \cup Y \cup Z}$.

        \item $ h \models  \indep{Z \cup Y}{X}{W}$.
          By \cref{findexact}, there exists $g_1, g_2, g_3$ such that $g_1: \m{\emptyset} \rightarrow \TR{Z \cup Y}$, $g_2 : \m{Z \cup Y} \rightarrow \TR{Z \cup Y \cup X}$, $g_3 : \m{Z \cup Y} \rightarrow \TR{Z \cup Y \cup W}$, and $g_1 \odot (g_2 \oplus g_3) \sqsubseteq h$.

          Note $g_1 \odot g_2$ has type $\m{\emptyset} \rightarrow \TR{Z \cup Y \cup X}$.
      \end{itemize}
      By~\cref{uniqueness}, $f_1 \odot (f_2 \oplus f_3) = g_1 \odot g_2$.
      \begin{align*}
        g_1 \odot (g_2 \oplus g_3)
                        &= g_1 \odot (g_2 \oplus \unit_{Z \cup Y}) \odot (\unit_{Z \cup Y \cup X} \oplus g_3) \tag{By!\cref{oplus2odot}}\\
																								&= g_1 \odot g_2 \odot  (\unit_{Z \cup X} \oplus g_3)  \tag{Because $Z \cup Y \subseteq \dom(g_2),\, Y \subseteq \dom(g_3)$ } \\
                        &= f_1 \odot (f_2 \oplus f_3) \odot (\unit_{Z \cup X} \oplus g_3) \tag{$ f_1 \odot (f_2 \oplus f_3) = g_1 \odot g_2$}\\
                        &= f_1 \odot \big( (f_2 \odot \unit_{Z \cup X} ) \oplus (f_3 \odot  g_3) \big) \tag{By \ref{revexeq}}\\
                        &= f_1 \odot  \big(f_2  \oplus (f_3 \odot  g_3) \big)
      \end{align*}
      By their types, it is easy to see that $f_1 \models \pair{\emptyset}{Z}$, $f_2 \models \pair{Z}{X}$, $f_3 \odot g_3 \models \pair{Z}{Y \cup W}$. So,
      \[f_1 \odot (f_2 \oplus (f_3 \odot g_3)) \models \indep{Z}{X}{Y \cup W}. \]
      Also, note that	$h \sqsupseteq  g_1 \odot (g_2 \oplus g_3) = f_1 \odot (f_2 \oplus (f_3 \odot g_3))$, so by persistence,
      \[h \models \pair{\emptyset}{Z} \depand (\pair{Z}{X} \sepand \pair{Z}{Y \cup W}).\qedhere\]
		\end{proof}

    \subsection{Section \ref{sec:cpsl}, Conditional Probabilistic Separation Logic}
    \label{app:cpsl-full}

    As our final application, we design a separation logic for probabilistic
    programs. We work with a simplified probabilistic imperative language with
    assignments, sampling, sequencing, and conditionals; our goal is to show how
    a \LOGIC-based program logic could work in the simplest setting. Following
    the design of PSL~\citep{barthe2019probabilistic}, a richer program logic
    could also layer on constructs for deterministic assignment and
    deterministic control flow (conditionals and loops) at the cost of
    increasing the complexity of the programming language and semantics. We do
    not foresee difficulties in implementing these extensions, and we leave them
    for future work.

		\subsection{A basic probabilistic programming language}

		\paragraph*{Program syntax}
		Let $\Var$ be a fixed, finite set of program variables. We will consider the
		following programming language:
		\begin{align*}
			\Exp \ni e &::= x \in \Var \mid \ktt \mid \kff \mid e \land e' \mid e \lor e'\mid \cdots
			\\
			\Com \ni c &::= \Skip
			\mid \Assn{x}{e}
			\mid \Rand{x}{\Bern_{p}} \quad (p \in [0, 1]) \\
      & \qquad
      \mid \Seq{c}{c'}
			\mid \Cond{x}{c}{c'}
		\end{align*}
    We assume that all variables and expressions are Boolean-valued, for
    simplicity. The only probabilistic command is $\Rand{x}{\Bern_{p}}$, which
    draws from a $p$-biased coin flip (i.e., probability of $\ktt$ is $p$) and
    stores the result in $x$; for instance, $\Rand{x}{\Bern_{1/2}}$
    samples from a fair coin flip. %The rest of the commands are standard.

		\begin{figure*}[!t]
      \hrule
      \vspace{1mm}
			\begin{align*}
				\denot{\Assn{x}{e}}\mu
				&\defeq \dbind(\mu, m \mapsto \dunit(m[x \mapsto \denot{e}(m)])) \\
				\denot{\Rand{x}{\Bern_p}}\mu
				&\defeq \dbind(\mu, m \mapsto \dbind(\text{Bern}_p, v \mapsto \dunit(m[x \mapsto v]))) \\
				\denot{\Seq{c}{c'}}\mu
				&\defeq \denot{c'}(\denot{c}\mu) \\
				\denot{\Cond{b}{c}{c'}}\mu
				&\defeq
				\dconv{p}
				{(\denot{c} \dcond{\mu}{\denot{b = \ktt}})}
				{(\denot{c'} \dcond{\mu}{\denot{b = \kff}})}
				\qquad \text{where } p \defeq \mu(\denot{b = \ktt})
			\end{align*}
	\vspace{-6mm}
			\caption{Program semantics}
      \vspace{1mm}
      \hrule	\vspace{-5mm}
			\label{fig:semantics}
		\end{figure*}

		\paragraph*{Program semantics}
    Following~\citet{kozen81}, we give programs a denotational semantics as
    \emph{distribution transformers} $\denot{c} : \DR{\Var} \to \DR{\Var}$, see \Cref{fig:semantics}. To define the semantics of
    randomized conditionals, we will use operations for conditioning to split
    control flow, and convex combinations to merge control flow. More formally,
    let $\mu \in \DD(A)$ be a distribution, let $S \subseteq A$ be an event, and
    let $\mu(S)$ be the probability of $S$ in $\mu$. Then the conditional
    distribution of $\mu$ given $S$ is:
		\[
			(\dcond{\mu}{S})(a) \defeq \begin{cases}
				\frac{\mu(a)}{\mu(S)} &: a \in S, \mu(S)\neq 0 \\
				0 &: a \notin S .
			\end{cases}
		\]

		For convex combination, let $p \in [0, 1]$ and $\mu_1, \mu_2 \in
		\DD(A)$. We define:
		\[
			(\dconv{p}{\mu_1}{\mu_2})(a) \defeq p \cdot \mu_1(a) + (1 - p) \cdot \mu_2(a) .
		\]
		When $p = 0$ or $p = 1$, we define $\oplus_p$ lazily:
		$\dconv{0}{\mu_1}{\mu_2} \defeq \mu_2$ and $\dconv{1}{\mu_1}{\mu_2}
		\defeq \mu_1$. %, even when $\mu_1$ or $\mu_2$ may be undefined.
		 Conditioning
		and convex combination are inverses in the following sense: $\mu =
		\dconv{\mu(S)}{(\dcond{\mu}{S})}{(\dcond{\mu}{\overline{S}})}$.

	\begin{example}
    \Cref{fig:examples_ap} introduces two more example programs. The program \ExONE
    (\Cref{fig:prog-1_ap}) generates a distribution where two random observations
    share a common cause. Specifically, $z$, $x$, and $y$ are independent random
    samples, and $a$ and $b$ are values computed from $(x, z)$ and $(y, z)$,
    respectively. Intuitively, $z$, $x$, $y$ could represent independent noisy
    measurements, while $a$ and $b$ could represent quantities derived from
    these measurements. Since $a$ and $b$ share a common source of randomness
    $z$, they are not independent. However, $a$ and $b$ are independent
    conditioned on the value; this is a textbook example of conditional
    independence.

    The program \ExTWO (\Cref{fig:prog-2_ap}) is a bit more complex: it branches on
    a random value $z$, and then assigns $x$ and $y$ with two independent
    samples from $\Bern_p$ in the true branch, and $\Bern_q$ in the false
    branch. While we might think that $x$ and $y$ are independent at the end of
    the program since they are independent at the end of each branch, this is
    not true because their distributions are different in the two branches. For
    example, suppose that $p = 1$ and $q = 0$. Then at the end of the first
    branch $(x, y) = (\ktt, \ktt)$ with probability $1$, while at the end of the
    second branch $(x, y) = (\kff, \kff)$ with probability $1$. Thus observing
    whether $x = \ktt$ or $x = \kff$ determines the value of $y$---clearly, $x$
    and $y$ can't be independent. However, $x$ and $y$ \emph{are} independent
    conditioned on $z$. Verifying this example relies on the proof rule for
    conditionals.
    \end{example}
				%\jh{TODO: make the variable names and story more interesting.}

				\begin{figure}[!t]\small
      \hrule
     % \vspace{1mm}
			\begin{subfigure}[b]{0.48\linewidth}
				\begin{center}
					\[
\begin{array}{l}
	\Rand{z}{\Bern_{1/2}}; \\
	\Rand{x}{\Bern_{1/2}}; \\
	\Rand{y}{\Bern_{1/2}}; \\
	\Assn{a}{x \lor z}; \\
	\Assn{b}{y \lor z}
\end{array}
					\]
				\end{center}
				\caption{\ExONE}
				\label{fig:prog-1_ap}
			\end{subfigure}
			\begin{subfigure}[b]{0.48\linewidth}
				\begin{center}
					\[
\begin{array}{l}
	\Rand{z}{\Bern_{1/2}}; \\
	\Condt{z}{} \\
	\qquad\Rand{x}{\Bern_{p}};
	\Rand{y}{\Bern_{p}} \\
	\mathbf{else} \\
	\qquad\Rand{x}{\Bern_{q}};
	 \Rand{y}{\Bern_{q}}
\end{array}
					\]
				\end{center}
				\caption{\ExTWO}
				\label{fig:prog-2_ap}
			\end{subfigure}
			\caption{Example programs}
			\label{fig:examples_ap}
      \vspace{1mm}
      \hrule	\vspace{-5mm}
		\end{figure}

    \subsection{\SYSTEM: Assertion Logic}
    \label{sec:cpsl-assertions}

    Like all program logics, \SYSTEM is constructed in two layers: the
    \emph{assertion logic} describes program states---here, probability
    distributions---while the \emph{program logic} describes probabilistic
    programs, using the assertion logic to specify pre- and post-conditions.
        Our starting point for the assertion logic is the probabilistic model of
    \LOGIC introduced in \Cref{sec:models}, with atomic assertions as in
    \Cref{sec:CI}. %; we encode distributions as Markov
    %kernels with domain $\R{\emptyset} \cong 1$ in order to interpret \LOGIC on
    %program states.
    However, it turns out that the full logic \LOGIC is not
    suitable for a program logic. The main problem is that not all formulas in
    \LOGIC satisfy a key technical condition, known as \emph{restriction}.

    \begin{definition}[Restriction]
      A formula $P$ satisfies \emph{restriction} if: a
      Markov kernel $f$ satisfies $P$ if and only if there exists $f'
      \sqsubseteq f$ such that $\range(f') \subseteq \FV(P)$ and $f' \models P$.
    \end{definition}

    The reverse direction is immediate by persistence, but the forward direction
    is more delicate. Restriction was first considered
    by~\citet{barthe2019probabilistic} while developing PSL: formulas satisfying restriction are
    preserved if the program does not modify variables appearing in the formula.
    This technical property is crucial to supporting Frame-like rules in
    PSL, which are also used to derive general versions of rules
    for assignment and sampling, so failure of the restriction property imposes
    severe limitations on the program logic. In PSL, assertions were drawn from BI with atomic
    formulas for modeling random variables. Using properties specific to
    probability distributions, they showed that their logic is well-behaved with
    respect to restriction: \emph{all formulas} satisfy this property. However,
    \LOGIC is richer than BI, and there are simple
    formulas where restriction fails.

    \begin{example}[Failure of restriction]
      Consider the formula $P \defeq \top \depand \kp{\Exact{x}}{\Dist[x]}$, and
      consider the kernel $f : \R{z} \to \DR{x, z}$ with $f(z \mapsto c) \defeq
      \dunit(x \mapsto c, z \mapsto c)$. Letting $f_1 : \R{z} \to \DR{x, z}$ and
      $f_2 \colon \R{x, z} \to \DR{x, z}$ with $f_1(z \mapsto c) \defeq \dunit(x
      \mapsto c, z \mapsto c) \models \top$ and $f_2 \defeq \unit_{\R{x}} \oplus
      \unit_{\R{z}} \models \kp{\Exact{x}}{\Dist[x]}$, we have $f = f_1 \odot f_2
      \models P$.  Any subkernel $f' \sqsubseteq f$ satisfying $P$ and
      witnessing restriction must be of type $f' : \R{x} \to \DR{x}$,
      but it is not hard to check that there is no such subkernel.
    \end{example}

    To address this problem, we will identify a fragment of \LOGIC that
    satisfies restriction and is sufficiently rich to support an interesting program logic. Intuitively,
    restriction may fail for $P$ when a kernel satisfying $P$ (i) implicitly requires
    unexpected variables in its domain, or (ii) does not describe needed variables
    in its range. Thus, we employ syntactic
    conditions to approximate which variables \emph{may} appear in the domain
    ($\FFV$), and which variables \emph{must} appear in the range ($\SFV$).

    \begin{definition}[$\FFV$ and $\SFV$]
      For the formulas in $\Frestrict$ generated by probabilistic
      atomic propositions, conjunctions ($\land$, $\sepand$, $\depand$) and
      disjunction ($\lor$), we define two sets of variables:
      \begin{align*}\small
        \FFV(\top) = \FFV(\bot) &\defeq \emptyset & \SFV(\top) = \SFV(\bot) &\defeq \emptyset \\
        \FFV\kp{A}{B} &\defeq \FV(A) & \SFV\kp{A}{B} &\defeq \FV(A) \cup \FV(B) \\
        \FFV(P \land Q) &\defeq \FFV(P) \cup \FFV(Q) & \SFV(P \land Q) &\defeq \SFV(P) \cup \SFV(Q) \\
        \FFV(P \sepand Q) &\defeq \FFV(P) \cup \FFV(Q) & \SFV(P \sepand Q) &\defeq \SFV(P) \cup \SFV(Q) \\
        \FFV(P \depand Q) &\defeq \FFV(P) \cup \FFV(Q) & \SFV(P \depand Q) &\defeq \SFV(P) \cup \SFV(Q) \\
        \FFV(P \lor Q) &\defeq \FFV(P) \cup \FFV(Q) & \SFV(P \lor Q) &\defeq \SFV(P) \cap \SFV(Q)
      \end{align*}
    \end{definition}

    Now, we have all the ingredients to introduce our assertions. The logic
    \RLOGIC is a fragment of \LOGIC with atomic propositions $\mathcal{AP}$,
    with formulas $\Frestrict$ defined by the following grammar:
    \begin{align*}
      P, Q &::= \mathcal{AP}
            \mid \top
            \mid \bot
            \mid P \lor Q
            \mid P \sepand Q \\
					&\mid P \depand Q \quad (\FFV(Q) \subseteq \SFV(P)) \\
           &\mid P \land Q \quad (\SFV(P) = \SFV(Q) = \FV(P) = \FV(Q)) .
    \end{align*}
    The side-condition for $P \depand Q$ ensures that variables used by
    $Q$ are described by $P$.
				%Intuitively, this ensures that we can combine a kernel satisfying $P$ with a kernel satisfying $Q$ without introducing enlarging the overall domain and range. \jia{I think this condition corresponds better to the intuition we provided earlier than this one}
				The side-condition for
    $P \land Q$ is the most restrictive---to understand why we need it, consider
    the following example.

    \begin{example}[Failure of restriction for $\land$]
      Consider the formula $P \defeq \kp{\Exact{\emptyset}}{\Dist[x]} \land
      \kp{\Exact{\emptyset}}{\Dist[y]}$,
and kernel $f : \R{z} \to
      \DR{x, y, z}$ with $f(z \mapsto \ktt)$ being the distribution with $x$ a
      fair coin flip, $y = x$, and $z = \ktt$,
and $f(z \mapsto \kff)$ being the
      distribution with $x$ a fair coin flip, $y = \neg x$, and $z = \kff$.
Then, there exist
$f_1 : \R{\emptyset} \to \DR{x}$ and $f_2 : \R{\emptyset} \to \DR{y}$
such that $f_1 \sqsubseteq f$ and $f_2 \sqsubseteq f$.
Since $f_1 \models
      \kp{\Exact{\emptyset}}{\Dist[x]}$ and $f_2 \models
\kp{\Exact{\emptyset}}{\Dist[y]}$, it follows $f \models P$. But, because $z$ is correlated with $(x,y)$,
there is no kernel $f' : \R{\emptyset} \to \DR{x, y}$ satisfying $P$ such that $f' \sqsubseteq f$.
    \end{example}

    When we take atomic propositions from \Cref{sec:CI}, formulas are pairs of
    sets of variables: $\pair{A}{B}$ where $A, B \subseteq \Var$. With these
    atoms, all formulas in \RLOGIC satisfy restriction. Before showing this
    property, however, we will enrich the atomic propositions
    to describe more fine-grained information about the domain and range of
    kernels:
    \begin{description}
      \item[Domain.] Given a kernel $f$, the existing atomic propositions can
        only describe properties that hold for all (well-typed) inputs $m$ to
        $f$. We would like to be able to describe properties that hold for only
        certain inputs, e.g., for memories $m$ where a variable $z$ is true.
      \item[Range.] Given any input $m$ to a kernel $f$, the existing atomic
        propositions can only guarantee the presence of variables in the output
        distribution $f(m)$. We would like describe more precise information
        about $f(m)$, e.g., that certain variables are independent conditioned
        on a \emph{particular} value of $m$, rather on all values of $m$.
    \end{description}
    Our strategy will be to extend atomic propositions to all pairs of logical
    formula $\kp{D}{R}$, where $D$ is a logical formula over the kernel domain
    (i.e., memories), while $R$ is a logical formula over the kernel range
    (i.e., distributions over memories).

    To describe memories, we take a simple propositional logic for the domain
    logic.
    \begin{definition}[Domain logic]
      The \emph{domain logic} has formulas $D$ of the form $S : p_d$, where $S
      \subseteq \Var$ is a subset of variables and:
      \(
     p_d ::= x = e
        \mid \top
        \mid \bot
        \mid p_d \land p_d'
        \mid p_d \lor p_d' .
      \)
      A formula $S : p_d$ is \emph{satisfied} in $m \in \R{T}$, written $m
      \models_d S : p_d$, if $S = T$ and $p_d$ holds in $m$.
    \end{definition}
    We can read $S : p_d$ as ``memories over $S$ such that $p_d$'' and abbreviate $S : \top$ as just $S$. To describe distributions over
    memories, we adapt probabilistic
    BI~\citep{barthe2019probabilistic} for the range logic.
    \begin{definition}[Range logic]
      The \emph{range logic} has the following formulas from probabilistic BI:
      \[
        p_r ::= \Dist[S] \quad (S \subseteq \Var)
        \mid x \sim d
        \mid x = e
        \mid \top
        \mid \bot
        \mid p_r \land p_r'
        \mid p_r \sepand p_r' .
      \]
    \end{definition}
    We give a semantics where states are distributions over memories: $M_r = \{
    \mu : \DR{S} \mid S \subseteq \Var \}$. We define a preorder on states via
    $\mu_1 \sqsubseteq_r \mu_2$ if and only if $dom(\mu_1) \subseteq
    dom(\mu_2)$ and $\pi_{dom(\mu_1)}\mu_2 = \mu_1$, and we define a partial
    binary operation on states: if $dom(\mu_1) = S_1 \cup T$ and $dom(\mu_2) =
    S_2 \cup T$ with $S_1, S_2, T$ disjoint, and $\pi_T\mu_1 = \pi_T\mu_2 =
    \dunit(m)$ for some $m \in \R{T}$, then
    \[
      \mu_1 \oplus_r \mu_2 \defeq \pi_{S_1}\mu_1 \otimes \dunit(m) \otimes \pi_{S_2}\mu_2
    \]
    where $\otimes$ takes the independent product of two distributions over
    disjoint domains; otherwise $\oplus_r$ is not defined. This operation
    generalizes the monoid from probabilistic BI to allow combining
    distributions with overlapping domains if the distributions over the
    overlap are deterministic and equal; this mild generalization is useful
    for our setting, where distributions often have deterministic variables
    (e.g., variables corresponding to the input of kernels).

    Then, we define the semantics of the range logic as:
    \[
    \begin{array}{@{}ll}
       \mu \models_r \top &\phantom{iff}\text{always} \qquad\qquad \mu \models_r \bot \qquad\text{ never} \\
  %      &\mu \models_r \bot &&\text{ never} \\
        \mu \models_r [S] &\text{ iff } S \subseteq dom(\mu) \\
        \mu \models_r x \sim d &\text{ iff } x \in dom(\mu)
        \text{ and } \pi_x \mu = \denot{d}m_v,
        \text{ where }  \dunit(m_v) = \pi_{\FV(d)} \mu \\
        \mu \models_r x = e &\text{ iff } \{ x \}, \FV(e) \subseteq dom(\mu) \text{ and } \mu(\denot{x = e}) = 1 \\
        \mu \models_r p_r \land p_r' &\text{ iff } \mu \models_r p_r \text{ and } \mu \models_r p_r' \\
        \mu \models_r p_r \sepand p_r' &\text{ iff there exists }
        \mu_1 \oplus_r \mu_2 \sqsubseteq \mu \text{ with }
        \mu_1 \models_r p_r \text{ and } \mu_2 \models_r p_r' .
    \end{array}
\]
    Now, we can give a semantics to our enriched atomic propositions.

    \begin{definition}
      Given a kernel $f$ and atomic proposition $\kp{D}{R}$, we define a
      persistent semantics:
      \[
        f \models \kp{D}{R} \text{ iff there exists } f' \sqsubseteq f
        \text{ such that } m \models_d D \text{ implies }
        m \in dom(f') \text{ and } f(m) \models_r R .
      \]
    \end{definition}

    Atomic propositions satisfy the following axiom schemas, inspired by Hoare
    logic.

    \begin{restatable}{proposition}{axschemasound} \label{prop:ax:cpsl}
      The following axiom schemas for atomic propositions are sound.
      \begin{align}
        & \kp{S : p_d}{p_r} \land \kp{S : p_d'}{p_r'} \rightarrow \kp{S : p_d \land p_d'}{p_r \land p_r'}
        \qquad \text{if } \FV(p_r) = \FV(p_r')
        \tag{\textsc{AP-And}} \label{ax:and} \\
        & \kp{S : p_d}{p_r} \land \kp{S : p_d'}{p_r'} \rightarrow \kp{S : p_d \lor p_d'}{p_r \lor p_r'}
        \tag{\textsc{AP-Or}} \label{ax:or} \\
        & \kp{S : p_d}{p_r} \sepand \kp{S' : p_d'}{p_r'} \rightarrow \kp{S \cup S': p_d \land p_d'}{p_r \sepand p_r'}
        \tag{\textsc{AP-Par}} \label{ax:par} \\
        & p_d' \rightarrow p_d
        \text{ and } \models_r p_r \rightarrow p_r'
        \text{ implies } \models \kp{S : p_d}{p_r} \to \kp{S : p_d'}{p_r'}
        \tag{\textsc{AP-Imp}} \label{ax:cons}
      \end{align}
    \end{restatable}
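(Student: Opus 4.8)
The plan is to unfold the persistent semantics of the atomic proposition $\kp{D}{R}$ and to reduce each of the four schemas to two orthogonal facts about the component logics: a \emph{locality} property of the range logic, and the coherence between kernel-level parallel composition $\oplus$ and the range-logic independence operator $\oplus_r$. First I would record two preliminaries. Since a domain formula $D = S : p_d$ is satisfiable only by memories $m \in \R{S}$, any witnessing subkernel $f' \sqsubseteq f$ whose domain must contain such $m$ has domain variable-set exactly $S$ (the case where $p_d$ is unsatisfiable makes both premise and conclusion vacuous, so I dispatch it separately). Second, I would prove a \emph{Locality Lemma} for the range logic by induction on $p_r$: whenever $\FV(p_r) \subseteq \dom(\mu)$, one has $\mu \models_r p_r$ iff $\pi_{\FV(p_r)}\mu \models_r p_r$. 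Combined with persistence of range formulas under $\sqsubseteq_r$, this lets me marginalize or pad the range of a kernel without disturbing satisfaction; and by disintegration (\cref{lemma:condfirst}) any such marginalization $\pi_V f$ is itself $\sqsubseteq f$.

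With these in hand, \textsc{AP-Imp} and \textsc{AP-Or} are the warm-up cases. For \textsc{AP-Imp} I would reuse the given witness $f'$ verbatim: the domain hypothesis $p_d' \to p_d$ is used contravariantly (every $m \models_d S:p_d'$ also satisfies $S:p_d$, so $f'(m)\models_r p_r$), while the range hypothesis $\models_r p_r \to p_r'$ is used covariantly to upgrade $f'(m)\models_r p_r$ to $f'(m) \models_r p_r'$; this is exactly Hoare's rule of consequence. For \textsc{AP-Or}, given witnesses for the two conjuncts I would take a single $f'$ to be the marginalization of $f$ to range $S \cup \FV(p_r) \cup \FV(p_r')$, and for each $m \models_d S:(p_d \lor p_d')$ case on which disjunct of $p_d \lor p_d'$ holds; the Locality Lemma transports the corresponding $\models_r p_r$ or $\models_r p_r'$ onto $f'(m)$, yielding $f'(m)\models_r p_r \lor p_r'$ (reading the range disjunction with the evident semantics). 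For \textsc{AP-And} the same marginalization works, now crucially using the side condition $\FV(p_r) = \FV(p_r')$: it pins down a single range $S \cup \FV(p_r)$ on which Locality simultaneously transports \emph{both} $\models_r p_r$ and $\models_r p_r'$ from the two witnesses, so that $f'(m) \models_r p_r \land p_r'$.

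The crux is \textsc{AP-Par}. I would first unfold the \LOGIC clause for $\sepand$ to obtain $f \sqsupseteq g \oplus h$ with $g \models \kp{S:p_d}{p_r}$ and $h \models \kp{S':p_d'}{p_r'}$, hence witnesses $g' \sqsubseteq g$ and $h' \sqsubseteq h$ of domain variable-sets $S$ and $S'$. I would then set $f' := g' \oplus h'$ and argue $f' \sqsubseteq g \oplus h \sqsubseteq f$: monotonicity of $\oplus$ under $\sqsubseteq$ follows by expanding the definition of $\sqsubseteq$ and commuting the paddings past the compositions using Exchange Equality (the property isolated in the proof of \cref{MdisLOGIC}). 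Finally, for $w \models_d (S \cup S'):(p_d \land p_d')$ I would split $w$ into $w^S$ and $w^{S'}$ (legitimate since $\FV(p_d)\subseteq S$ and $\FV(p_d')\subseteq S'$), obtaining $g'(w^S)\models_r p_r$ and $h'(w^{S'})\models_r p_r'$, and recognize that by definition of parallel composition $f'(w) = g'(w^S) \otimes h'(w^{S'})$ is precisely the range-logic product $g'(w^S) \oplus_r h'(w^{S'})$, whence $f'(w)\models_r p_r \sepand p_r'$.

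I expect the main obstacle to be exactly this last identification, namely showing that kernel-level $\oplus$ descends to the range-logic operator $\oplus_r$. The delicate point is that $\oplus_r$ is defined only when the two distributions agree and are deterministic on the overlap of their domains, here $\range(g') \cap \range(h')$; I would discharge this using that $g'$ and $h'$ preserve their inputs, so their ranges overlap only on the shared domain variables $S \cap S'$, on which both outputs are the Dirac distribution determined by $w$. Verifying definedness of $g' \oplus h'$ (the constraint $\range{\cap} = \dom{\cap}$) and the matching $\sqsubseteq$-monotonicity of $\oplus$ are the places demanding care, but both reduce to the Exchange Equality and input-preservation facts already established for $\MD$.
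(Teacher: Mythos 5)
Your handling of \textsc{AP-Imp} and \textsc{AP-Par} is correct and essentially identical to the paper's: for \textsc{AP-Par} the paper also takes the two witnesses supplied by the semantics of $\sepand$, keeps their combination below $f$ via $\oplus$ Down-Closedness, and identifies $(w_1 \oplus w_2)(m)$ with $w_1(m^S) \oplus_r w_2(m^{S'})$ by splitting off the Dirac factor on the overlap $S \cap S'$ using input preservation, exactly as you propose.

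The genuine gap is in \textsc{AP-And} (and the same defect infects your \textsc{AP-Or}). Your single witness is $f' := \pi_V f$ with $V = S \cup \FV(p_r) \cup \FV(p_r')$, but marginalization does not shrink the domain: $\pi_V f$ has domain $\dom(f)$, which by your own preliminary observation must equal $S$ exactly for $f'$ to count as a witness, and nothing forces $\dom(f) = S$ --- soundness must hold at \emph{every} state of $\MD$, and when $\dom(f) \not\subseteq V$ the map $\pi_V f$ is not even an input-preserving kernel, hence not $\sqsubseteq f$. The obvious repair, finding a subkernel of $f$ with domain $S$ and range $V$, is not available in general: having subkernels over each of two ranges does not give one over their union. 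This is precisely the paper's ``failure of restriction for $\land$'' example ($f_1 \colon \R{\emptyset}\to\DR{x}$ and $f_2 \colon \R{\emptyset}\to\DR{y}$ exist below $f \colon \R{z}\to\DR{x,y,z}$, yet no subkernel $\R{\emptyset}\to\DR{x,y}$ exists), so any argument that manufactures a witness over the union of the two ranges is doomed. The paper's proof of \textsc{AP-And} takes a different route: it applies restriction (\cref{thm:restriction}) to each witness \emph{separately}, normalizing $w_1$ and $w_2$ so that both have domain $S$ and the \emph{same} range (essentially $S \cup \FV(p_r) = S \cup \FV(p_r')$ --- this is where the side condition really works: it makes the two ranges coincide rather than bounding a union), and then invokes the uniqueness of subkernels with a given domain and range (\cref{prop:subkernel}) to conclude $w_1 = w_2$; that single kernel then witnesses $p_r \land p_r'$. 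This merging-by-uniqueness step is the idea your proposal is missing, and transporting both range formulas onto marginals of $f$ cannot substitute for it. (For \textsc{AP-Or} the paper builds no larger kernel at all and argues directly from the semantics of $\lor$; in any case your union-range construction is not the way to obtain it.)
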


    Finally, formulas in \RLOGIC satisfy restriction.

    \begin{restatable}[Restriction in \RLOGIC]{theorem}{restriction}
      \label{thm:restriction}
      Let $P \in \Frestrict$ with atomic propositions $\kp{D}{R}$,
      as described above. Then $f \models P$ if and only if there exists $f'
      \sqsubseteq f$ such that $range(f') \subseteq \FV(P)$ and $f' \models P$.
    \end{restatable}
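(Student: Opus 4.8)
The plan is to prove the nontrivial (forward) direction by structural induction on $P \in \Frestrict$; the reverse direction is immediate from the Persistence Lemma, since $f' \sqsubseteq f$ and $f' \models P$ give $f \models P$. To make the induction carry, I would strengthen the statement to also control the domain and lower-bound the range: if $f \models P$ then there is $f' \sqsubseteq f$ with $f' \models P$, $\dom(f') \subseteq \FFV(P)$, and $\SFV(P) \subseteq \range(f') \subseteq \FV(P)$. The upper bound $\range(f') \subseteq \FV(P)$ is exactly the theorem, while the extra invariants are what allow the sub-witnesses to be recombined in a well-typed way in the inductive step.

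For the base case $P = \kp{D}{R}$ with $D = (S : p_d)$, the domain-logic clause forces any witness $f'$ to have $\dom(f') = \FV(D) = \FFV(P)$, since every memory $m$ with $m \models_d D$ has domain exactly $S$. It then remains to shrink the range to $\FV(D) \cup \FV(R) = \SFV(P)$: I would marginalize $f'$ to this variable set and appeal to a restriction property of the range logic, inherited from probabilistic BI~\citep{barthe2019probabilistic}, to see that $f(m) \models_r R$ is preserved, as $R$ mentions only variables in $\FV(R)$.

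In the inductive step, the disjunctive case is immediate: a witness for the satisfied disjunct works, using $\FV(P_i) \subseteq \FV(P_1 \lor P_2)$ and $\SFV(P_1 \lor P_2) = \SFV(P_1) \cap \SFV(P_2)$. For $P_1 \sepand P_2$ I would unfold the semantics to $g \oplus h \sqsubseteq f$ with $g \models P_1$, $h \models P_2$, apply the induction hypothesis to each factor, and recombine: the $\oplus$ Down-Closed frame axiom applied to $g \oplus h$ together with $g \sqsupseteq g'$ and $h \sqsupseteq h'$ simultaneously yields that $g' \oplus h'$ is defined and that $g' \oplus h' \sqsubseteq g \oplus h \sqsubseteq f$, with range $\range(g') \cup \range(h') \subseteq \FV(P_1) \cup \FV(P_2)$ and containing $\SFV(P_1) \cup \SFV(P_2)$. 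For $P_1 \depand P_2$ I would unfold to $f = g \odot h$ and again apply the induction hypothesis. Here the side condition $\FFV(P_2) \subseteq \SFV(P_1)$ is essential: it gives $\dom(h') \subseteq \FFV(P_2) \subseteq \SFV(P_1) \subseteq \range(g')$, so after padding $h'$ to $h'' = h' \oplus \unit_{\R{\range(g') \setminus \dom(h')}}$ (still satisfying $P_2$ by persistence) the Kleisli composite $g' \odot h''$ is well-typed; since $h''$ preserves its input, $\range(g' \odot h'') = \range(h'') \supseteq \dom(h'') = \range(g') \supseteq \SFV(P_1)$, and the range is contained in $\FV(P_1) \cup \FV(P_2)$.

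The main obstacle is verifying that the recombined kernels remain below $f$, and this is hardest in the $\depand$ and $\land$ cases. For $\depand$, showing $g' \odot h'' \sqsubseteq g \odot h$ requires rewriting the witnessing $\sqsubseteq$-decompositions of $g$ and $h$ using the identities that trade parallel for sequential composition (padding with units, as in the $f \odot (g \oplus h) = f \odot g \odot (\unit_{X} \oplus h)$ manipulations) together with associativity of $\odot$ and the Exchange Equality of the concrete model. For $P_1 \land P_2$---the case where naive restriction fails, as the counterexample shows---I would exploit the strong side condition $\SFV(P_i) = \FV(P_i)$, which pins $\range(f_1) = \range(f_2) = \FV(P)$; since $\dom(f_2) \subseteq \FFV(P_2) \subseteq \FV(P_2) = \range(f_1)$ and symmetrically, I can align the two witnesses to a common domain and range (extending each along $\sqsubseteq$, preserving satisfaction by persistence) and then invoke a disintegration/uniqueness lemma (cf.\ \cref{lemma:condfirst}), namely that two subkernels of $f$ of the same type coincide, yielding a single kernel that satisfies both $P_1$ and $P_2$. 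Making this uniqueness step precise, and checking that the domain alignment stays within $f$, is the crux of the argument.
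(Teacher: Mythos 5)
Your proposal matches the paper's proof essentially step for step: the same strengthened induction hypothesis ($\dom(f') \subseteq \FFV(P)$ and $\SFV(P) \subseteq \range(f') \subseteq \FV(P)$), the same handling of atoms by marginalizing to $\FV(D) \cup \FV(R)$ and appealing to probabilistic BI's restriction, $\oplus$ Down-Closure for $\sepand$, unit-padding plus quasi-down-closure of $\odot$ (the paper's \cref{odotdownwards}) for $\depand$, and uniqueness of same-typed subkernels for $\land$. The only difference is at your acknowledged crux: the paper needs no domain alignment at all, since in the concrete model $f' \sqsubseteq f$ forces $\dom(f') = \dom(f) \cap \range(f')$, so the equal ranges pinned by $\SFV(P_i) = \FV(P_i)$ already force equal domains, and \cref{prop:subkernel} then yields that the two witnesses coincide.
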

    \begin{proof}[Proof sketch.]
      By induction on $P$, proving a stronger statement: $f \models P$ if and
      only if there exists $f' \sqsubseteq f$ such that $dom(f') \subseteq
      \FFV(P)$, and $\SFV(P) \subseteq range(f') \subseteq \FV(P)$.
    \end{proof}

    \subsection{\SYSTEM: program logic}\label{sec:cpsl_logic}

    \begin{figure*}\small
      \hrule
      \vspace{1mm}
      \begin{mathpar}
        \inferrule*[Left=Assn]
        { x \not\in \FV(e) \cup \FV(P) }
        { \vdash \psl{P}{\Assn{x}{e}}{P \depand \kp{\FV(e)}{ x = e }} }
        \and
        \inferrule*[Left=Samp]
        { x \not\in \FV(d) \cup \FV(P) }
        { \vdash \psl{P}{\Rand{x}{d}}{P \depand \kp{\FV(d)}{ x \sim d }} }
        \\
        \inferrule*[Left=Skip]
        {~}
        { \vdash \psl{P}{\Skip}{P} }
        \and
        \inferrule*[Left=Seqn]
        { \vdash \psl{P}{c}{Q} \\ \vdash \psl{Q}{c'}{R} }
        { \vdash \psl{P}{\Seq{c}{c'}}{R} }
        \\
        \inferrule*[Left=DCond]
        { \vdash \psl{\kp{\Exact{\emptyset}}{b = \ktt} \depand P}{c}{\kp{\Exact{\emptyset}}{b = \ktt} \depand \kp{b : b = \ktt}{Q_1}}
        \\\\ \vdash \psl{\kp{\Exact{\emptyset}}{b = \kff} \depand P}{c'}{\kp{\Exact{\emptyset}}{b = \kff} \depand \kp{b : b = \kff}{Q_2}} }
        { \vdash \psl{\kp{\Exact{\emptyset}}{\Dist[b]} \depand P} {\Cond{b}{c}{c'}}
        {\kp{\Exact{\emptyset}}{\Dist[b]} \depand (\kp{b : b = \ktt}{Q_1} \land \kp{b : b = \kff}{Q_2}) } }
        \\
        % \inferrule*[Left=True]
        % {~}
        % { \vdash \psl{\top}{c}{\top} }
        % \and
        \inferrule*[Left=Weak]
	{ \vdash \psl{P}{c}{Q} \\\\ \models P' \rightarrow P \land Q \rightarrow Q' }
        { \vdash \psl{P'}{c}{Q'} }
		\and

        % \inferrule*[Left=Conj]
        % { \vdash \psl{P_1}{c}{Q_1} \\
        % \vdash \psl{P_2}{c}{Q_2} }
        % { \vdash \psl{P_1 \land P_2}{c}{P_1 \land Q_2} }
        % \and
        % \inferrule*[Left=Disj]
        % { \vdash \psl{P_1}{c}{Q_1} \\
        % \vdash \psl{P_2}{c}{Q_2} }
        % { \vdash \psl{P_1 \lor P_2}{c}{P_1 \lor Q_2} }
        % \\
       % \inferrule*[Left=DCase]
       % { \vdash \psl{\kp{\Exact{\emptyset}}{b = \ktt} \depand P}{c}{\kp{\Exact{\emptyset}}{b = \ktt} \depand \kp{b : b = \ktt}{Q_1}}
       % \\\\ \vdash \psl{\kp{\Exact{\emptyset}}{b = \kff} \depand P}{c}{\kp{\Exact{\emptyset}}{b = \kff} \depand \kp{b : b = \kff}{Q_2}} }
       % { \vdash \psl{\kp{\Exact{\emptyset}}{\Dist[b]} \depand P} {c}
       % {\kp{\Exact{\emptyset}}{\Dist[b]} \depand (\kp{b : b = \ktt}{Q_1} \land \kp{b : b = \kff}{Q_2}) } }
		%\\
        % \inferrule*[Left=Const]
        % { \vdash \psl{P}{c}{Q} \\ \FV(R) \cap \MV(c) = \emptyset }
        % { \vdash \psl{P \land R}{c}{Q \land R} }
        % \and
        \inferrule*[Left=Frame]
        { \vdash \psl{P}{c}{Q} \\
          \FV(R) \cap \MV(c) = \emptyset \\\\
          \FV(Q) \subseteq \SFV(P) \cup \WV(c) \\
        \RV(c) \subseteq \SFV(P) }
        { \vdash \psl{P \sepand R}{c}{Q \sepand R} }
      \end{mathpar}
	\vspace{-6mm}
      \caption{Proof rules: \SYSTEM}
      \label{fig:cpsl-rules}
      \vspace{1mm}
      \hrule 	\vspace{-5mm}
    \end{figure*}

  With the assertion logic set, we are now ready to introduce our program logic.
  Judgments in \SYSTEM have the form $\psl{P}{c}{Q}$, where $c \in \Com$ is a
  probabilistic program and $P, Q \in \mathrm{Form_{\RLOGIC}}$ are restricted
  assertions. As usual, a program in a judgment maps states satisfying the
  pre-condition to states satisfying the post-condition.

  \begin{restatable}[\SYSTEM Validity]{definition}{cpslvalid}
    A \SYSTEM judgment $\psl{P}{c}{Q}$ is \emph{valid}, written $\models
    \psl{P}{c}{Q}$, if for every input distribution $\mu \in \DR{\Var}$ such
    that the lifted input $f_{\mu} : \R{\emptyset} \to \DR{\Var}$
    satisfies $f_{\mu} \models P$, the lifted output satisfies
    $f_{\denot{c} \mu} \models Q$.
  \end{restatable}

  The proof rules of \SYSTEM are presented in \Cref{fig:cpsl-rules}. Note that all rules implicitly require that assertions are from
		\RLOGIC, e.g., the rule \textsc{Assn} requires
  that the post-condition $P \depand \kp{\FV(e)}{x = e}$ is a formula in
  \RLOGIC, which in turn requires that $\FV(e) = \FFV\kp{\FV(e)}{x = e}
  \subseteq \SFV(P)$.

  The rules \textsc{Skip}, \textsc{Seqn}, \textsc{Weak} are standard, we comment on the other, more interesting rules.
		\textsc{Assn} and \textsc{Samp} allow forward reasoning across assignments and random sampling commands. In both
  cases, a pre-condition that does not mention the assigned variable $x$ is
  augmented with new information tracking the value or distribution of $x$, and
  variables $x$ may depend on.

  \textsc{DCond} allows reasoning about probabilistic control flow, and the
  ensuing conditional dependence that may result. The main pre-condition $P$ is
  allowed to depend on the guard variable $b$---recalling that $\FFV(P)
  \subseteq \SFV\kp{\emptyset}{[b]}$---and $P$ is preserved as a pre-condition
  for both branches. The post-conditions allows introducing new facts $\kp{b : b
  = \ktt}{Q_1}$ and $\kp{b : b = \ktt}{Q_2}$, which are then combined in the
  post-condition of the entire conditional command. As in PSL, the rule for
  conditionals does not allow the branches to modify the guard $b$---this
  restriction is needed to accurately associate each post-condition to each
  branch.
		%\textsc{DCase} allows similar reasoning for arbitrary commands.

  Finally, \textsc{Frame} is the frame rule for \SYSTEM. Much like in PSL, the
  rule involves three classes of variables: $\MV(c)$ is the set of variables
  that $c$ may write to, $\RV(c)$ is the set of variables that $c$ may read from
  the input, and $\WV(c)$ is the set of variables that $c$ must write to; these
  variable sets are defined in \Cref{app:cpsl}. Then, \textsc{Frame} is
  essentially the same as in PSL. The first side-condition $\FV(R) \cap \MV(c)$
  ensures that the framing condition is not modified---this condition is fairly
  standard. The second and third side-conditions are more specialized. First,
  the variables described by $Q$ in the post-condition are either already
  described by $P$ in the pre-condition, or are written by $c$.  Second, the
  variables read by $c$ must be described by $P$ in the pre-condition. These two
  side-conditions ensure that variables mentioned by $Q$ that were not already
  independent of $R$ are freshly written, and freshly written variables are
  derived from variables that were already independent of $R$.

  \begin{restatable}[\SYSTEM Soundness]{theorem}{cpslsound}
  \label{thm:cpslsound}
    \SYSTEM is sound: derivable judgments are valid.
  \end{restatable}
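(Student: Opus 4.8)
The plan is to prove soundness by structural induction on the derivation of $\vdash \psl{P}{c}{Q}$, showing that each rule in \Cref{fig:cpsl-rules} preserves validity: whenever $f_\mu \models P$ for an input distribution $\mu \in \DR{\Var}$, the lifted output $f_{\denot{c}\mu}$ satisfies $Q$. The base cases \textsc{Skip}, \textsc{Assn}, and \textsc{Samp} are handled by unfolding the program semantics of \Cref{fig:semantics} together with the kernel semantics of the enriched atomic propositions. For \textsc{Skip} the semantics is the identity, so $f_{\denot{\Skip}\mu} = f_\mu \models P$ directly. For \textsc{Assn} and \textsc{Samp}, the key is to exhibit a decomposition of the output kernel as $f' \odot g$, where $f' \sqsubseteq f_\mu$ witnesses $P$ (extracted from $f_\mu \models P$) and $g$ is a kernel with domain $\FV(e)$ (resp. $\FV(d)$) that deterministically (resp. stochastically) fixes $x$. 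The freshness side-condition $x \notin \FV(e) \cup \FV(P)$ guarantees both that this new kernel composes with $f'$ via $\odot$ and that $g \models \kp{\FV(e)}{x = e}$ (resp. $g \models \kp{\FV(d)}{x \sim d}$); persistence then lifts satisfaction from the decomposition to $f_{\denot{c}\mu}$, yielding $f_{\denot{c}\mu} \models P \depand \kp{\FV(e)}{x = e}$.

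The compositional cases \textsc{Seqn} and \textsc{Weak} are routine. \textsc{Seqn} follows by chaining the two induction hypotheses through the semantic equation $\denot{\Seq{c}{c'}}\mu = \denot{c'}(\denot{c}\mu)$. \textsc{Weak} uses the soundness of the \LOGIC entailments $\models P' \to P$ and $\models Q \to Q'$ established earlier, together with persistence. The case \textsc{DCond} is more delicate: the semantics of $\Cond{b}{c}{c'}$ splits $\mu$ into the conditional distributions $\dcond{\mu}{\denot{b = \ktt}}$ and $\dcond{\mu}{\denot{b = \kff}}$, runs each branch, and recombines the results via the convex combination $\oplus_p$ with $p = \mu(\denot{b = \ktt})$. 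The plan is to show that conditioning an input satisfying $\kp{\Exact{\emptyset}}{\Dist[b]} \depand P$ on $b = \ktt$ (resp. $b = \kff$) produces an input satisfying the corresponding branch pre-condition, apply the branch induction hypotheses to obtain that each branch output satisfies $\kp{b : b = \ktt}{Q_1}$ (resp. $\kp{b : b = \kff}{Q_2}$), and then verify that the recombined kernel satisfies $\kp{\Exact{\emptyset}}{\Dist[b]} \depand (\kp{b : b = \ktt}{Q_1} \land \kp{b : b = \kff}{Q_2})$. Here the \emph{conditional} flavour of the enriched atomic propositions $\kp{D}{R}$---which constrain the range only on inputs $m \models_d D$---is precisely what allows the two branch facts to coexist under $\land$: each speaks only about the fiber $b = \ktt$ or $b = \kff$, and \Cref{prop:ax:cpsl} lets us combine them.

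Finally, \textsc{Frame} is where the main difficulty lies, as is typical for separation logics. The goal is to show that if $c$ maps $P$-states to $Q$-states, it also maps $(P \sepand R)$-states to $(Q \sepand R)$-states. Given $f_\mu \models P \sepand R$, the restriction property (\Cref{thm:restriction}) lets us realize $f_\mu$ as a $\oplus$-combination of a $P$-part and an $R$-part whose ranges are confined to $\FV(P)$ and $\FV(R)$ respectively. The side-condition $\FV(R) \cap \MV(c) = \emptyset$ ensures $c$ does not touch the framed variables, so the $R$-part is preserved through execution; the side-conditions $\RV(c) \subseteq \SFV(P)$ and $\FV(Q) \subseteq \SFV(P) \cup \WV(c)$ ensure that $c$ reads only variables already isolated in the $P$-part, and that variables newly described by $Q$ are either in the $P$-part or freshly written by $c$---hence still independent of $R$. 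Applying the induction hypothesis to the $P$-part and reassembling via $\oplus$ yields $Q \sepand R$. I expect this step to be the principal obstacle: one must track how the semantics acts on the $\oplus$-decomposition, confirm that the deterministic and freshly-written structure of the output is compatible with parallel composition, and invoke restriction repeatedly to realize these decompositions at the kernel level, following the corresponding frame-rule soundness argument in PSL~\citep{barthe2019probabilistic}.
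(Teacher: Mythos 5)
Your proposal follows essentially the same route as the paper's proof: induction on the derivation, with the restriction property (\Cref{thm:restriction}) invoked repeatedly to extract kernel witnesses, semantic decomposition lemmas for \textsc{Assn}/\textsc{Samp}, a conditioning-then-recombination argument via convex combination for \textsc{DCond}, and a PSL-style argument using the soundness of $\RV$/$\WV$/$\MV$ for \textsc{Frame}. The plan is sound as stated; the paper's elaboration differs only in implementation details (e.g., the output kernel in \textsc{Assn} is padded as $j \odot (j_1 \oplus j_2)$ so that domains match, and in \textsc{Frame} the induction hypothesis is applied to the full input distribution, which satisfies $P$ by persistence, rather than to a sub-distribution).
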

  \begin{proof}[Proof sketch]
    By induction on the proof derivation. The restriction property is used
    repeatedly to constrain the domains and ranges of kernels witnessing
    different sub-assertions, ensuring that pre-conditions about unmodified
    variables continue to hold in the post-condition.
  \end{proof}

  \subsection{Example: proving conditional independence for programs}\label{sec:cpsl_ex}
  Now, we show how to use \SYSTEM to verify our two example programs in
  \Cref{fig:examples_ap}. In both cases, we will prove a conditional independence
  assertion as the post-condition. We will need some axioms for implications
  between formulas in \RLOGIC; these axioms are valid in our probabilistic model
  $\MD$.

		\begin{restatable}{proposition}{RLOGICAx}
			(\textsc{Axioms for \RLOGIC})
    The following axioms are sound, assuming both precedent and antecedent are
    in $\Frestrict$.
    \begin{align}
        &(P \depand Q) \depand R \rightarrow P \depand (Q \sepand R)
        \tag{\textsc{Indep-1}} \label{ax:dep2sep} \\
        & P \depand Q \rightarrow P \sepand Q
        \qquad\qquad\qquad \text{if $\FFV(Q) = \emptyset$}
        \tag{\textsc{Indep-2}} \label{ax:dep2sep2} \\
        & P \depand Q \rightarrow P \depand (Q \sepand \kp{\Exact{S}}{\Dist[S]})
        \tag{\textsc{Pad}} \label{ax:padding} \\
        & (P \sepand Q) \depand (R \sepand S) \rightarrow (P \depand R) \sepand (Q \depand S)
        \tag{\textsc{RestExch}} \label{ax:exch}
    \end{align}
  \end{restatable}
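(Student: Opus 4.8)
The plan is to prove all four implications as soundness facts about the probabilistic model \MD: for each axiom I fix a kernel \(f\in\MD\) satisfying the antecedent and produce, from a decomposition of \(f\), a subkernel \(f''\sqsubseteq f\) witnessing the consequent; the Persistence Lemma then lifts satisfaction from \(f''\) back to \(f\). The common toolkit is (i) restriction (\Cref{thm:restriction}), which lets me replace each witness for a subformula by a subkernel whose domain is contained in its \FFV\ and whose range lies between its \SFV\ and its \FV; (ii) the algebraic laws of \MD\ used throughout \Cref{MdisLOGIC}, namely associativity of \(\odot\), commutativity and associativity of \(\oplus\), the unit law \(f\oplus\unit_{\R S}=f\) whenever \(S\subseteq\dom(f)\), and the padding identities that rewrite a sequential step reading only preserved variables as a parallel branch; and (iii) Exchange Equality, established in the proof of \Cref{MdisLOGIC}: when \((f_1\odot f_3)\oplus(f_2\odot f_4)\) is defined it equals \((f_1\oplus f_2)\odot(f_3\oplus f_4)\). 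The well-formedness side conditions inherited from \(\Frestrict\) (e.g.\ \(\FFV(Q)\subseteq\SFV(P)\) for \(P\depand Q\)) are exactly what make the required compositions type-check.

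\textsc{Pad} is the quickest. Writing \(f=f_P\odot f_Q\) with \(f_P\models P\) and \(f_Q\models Q\), well-formedness of \(P\depand(Q\sepand\kp{\Exact{S}}{\Dist[S]})\) forces \(S\subseteq\SFV(P)\subseteq\range(f_P)=\dom(f_Q)\); the unit law then gives \(f_Q=f_Q\oplus\unit_{\R S}\), and since \(\unit_{\R S}\models\kp{\Exact{S}}{\Dist[S]}\), this exhibits \(f_Q\models Q\sepand\kp{\Exact{S}}{\Dist[S]}\), whence \(f\models P\depand(Q\sepand\kp{\Exact{S}}{\Dist[S]})\). For \textsc{Indep-2}, the hypothesis \(\FFV(Q)=\emptyset\) lets restriction supply a witness \(f_Q'\sqsubseteq f_Q\) with empty domain, i.e.\ a plain distribution; because kernels in \MD\ preserve their input and \(f_Q'\) ignores it, the sequential factor detaches into a parallel one, producing \(f_P\oplus f_Q'\sqsubseteq f\) with \(f_P\models P\) and \(f_Q'\models Q\), hence \(f\models P\sepand Q\).

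For \textsc{Indep-1} the semantic content is that the side condition \(\FFV(R)\subseteq\SFV(P)\), from well-formedness of the consequent, says \(R\) reads only variables produced by \(P\), so \(R\) is independent of \(Q\) once \(P\) is fixed. Concretely, I unfold the antecedent by associativity of \(\odot\) to \(f=f_P\odot(f_Q\odot f_R)\) with \(f_P\models P,\ f_Q\models Q,\ f_R\models R\), then use restriction to shrink \(f_R\) to a subkernel \(f_R'\) with \(\dom(f_R')\subseteq\FFV(R)\subseteq\SFV(P)\subseteq\dom(f_Q)\). Since \(f_Q\) preserves its input, these \(\SFV(P)\)-variables survive into \(\range(f_Q)\), so running \(f_Q\) then \(f_R'\) agrees with running them in parallel on the shared input; the padding identity therefore yields \((f_Q\oplus f_R')\sqsubseteq f_Q\odot f_R\) with \(f_Q\oplus f_R'\models Q\sepand R\). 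Then \(f_P\odot(f_Q\oplus f_R')\sqsubseteq f\) witnesses \(P\depand(Q\sepand R)\), and persistence finishes.

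\textsc{RestExch} is the genuine obstacle, because it is the \emph{forward} exchange, passing from \((p\oplus q)\odot(r\oplus s)\) to \((p\odot r)\oplus(q\odot s)\)---the converse of the Reverse Exchange frame condition built into \LOGIC. From \(f\models(P\sepand Q)\depand(R\sepand S)\) I extract \(f=u\odot v\) with \(p\oplus q\sqsubseteq u\), \(r\oplus s\sqsubseteq v\) and \(p\models P,\,q\models Q,\,r\models R,\,s\models S\); the target is \((p\odot r)\oplus(q\odot s)\sqsubseteq f\). The difficulty is that Exchange Equality, as proved, delivers both the equality and the \emph{definedness} of the sequential-of-parallel form only from the parallel-of-sequential form, i.e.\ in exactly the opposite direction, so I cannot simply invoke it to get definedness of \((p\odot r)\oplus(q\odot s)\). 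I expect to resolve this by establishing definedness of \((p\odot r)\oplus(q\odot s)\) directly from the domain/range bookkeeping that restriction supplies: the consequent's well-formedness (\(\FFV(R)\subseteq\SFV(P)\) and \(\FFV(S)\subseteq\SFV(Q)\)) is what forces \(\range(p)=\dom(r)\), \(\range(q)=\dom(s)\), and the disjointness needed for the outer \(\oplus\). Once \((p\odot r)\oplus(q\odot s)\) is defined, Exchange Equality rewrites it as \((p\oplus q)\odot(r\oplus s)\), and \(\sqsubseteq\)-monotonicity of \(\odot\) (from \(p\oplus q\sqsubseteq u\) and \(r\oplus s\sqsubseteq v\)) gives \((p\oplus q)\odot(r\oplus s)\sqsubseteq u\odot v=f\); since \(p\odot r\models P\depand R\) and \(q\odot s\models Q\depand S\), persistence yields the consequent. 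Verifying these definedness conditions and the monotonicity of \(\odot\) under \(\sqsubseteq\) is the step I expect to be most delicate.
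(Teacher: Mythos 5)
Your proofs of \textsc{Pad}, \textsc{Indep-2}, and \textsc{Indep-1} take essentially the same route as the paper's: decompose $f$ along $\depand$, shrink the relevant witnesses by restriction (\cref{thm:restriction}), use the $\Frestrict$ side conditions to place the shrunken domains inside the preceding kernel's range, and convert sequential into parallel composition via the $\odot$-elimination and exchange lemmas (\cref{odot2oplus}, \cref{revexeq}), finishing with persistence. Those three sketches are sound.

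\textsc{RestExch} is where your proposal has a genuine gap, and it sits precisely at the step you flag as delicate. The claim that well-formedness of the consequent ``forces $\range(p)=\dom(r)$, $\range(q)=\dom(s)$'' is false. What restriction gives is only a containment for the \emph{shrunken} witnesses, $\dom(r')\subseteq\FFV(R)\subseteq\SFV(P)\subseteq\range(p')$, and this containment is strict in general: with $P=\pair{\emptyset}{x,z}$ and $R=\pair{\{x\}}{a}$ (well-formed, since $\FFV(R)=\{x\}\subseteq\{x,z\}=\SFV(P)$), every restricted witness for $P$ has range containing $\{x,z\}$, while the restricted witness for $R$ has domain contained in $\{x\}$. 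Since $\odot$ in $\MD$ is Kleisli composition and is defined only when the first kernel's range \emph{equals} the second's domain, $p\odot r$ is ill-typed, so the expression $(p\odot r)\oplus(q\odot s)$ whose definedness you plan to verify cannot even be formed; for the unrestricted witnesses matters are worse still, since $\dom(r)$ may contain arbitrary variables of $\range(u)$. Note also that the monotonicity lemma you invoke (\cref{odotdownwards}) presupposes that both composites are already defined, so it cannot be used to bypass the problem.

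The paper's proof repairs exactly this by padding: its witnesses are $u_1\odot(u_2'\oplus\unit_{\range(u_1)})$ and $v_1\odot(v_2'\oplus\unit_{\range(v_1)})$, where $u_2',v_2'$ are the restriction-shrunken witnesses of $R,S$; the padded kernels still satisfy $R$ and $S$ by persistence. Moreover, their definedness, the definedness of the outer $\oplus$, and the bound $\sqsubseteq f_1\odot f_2=f$ are never checked directly---they fall out of rewriting the known-to-be-defined composite $f_1\odot f_2$ forward, using the definedness-preserving lemmas (\cref{odot2oplus}, \cref{oplus2odot}, and \cref{exch_revex} together with \cref{revexeq}). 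That is the opposite direction of travel from your plan: start from $f$ and let the rewrites carry definedness to the target, rather than build the target and then try to prove from scratch that it is defined.
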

  We briefly explain the axioms. \ref{ax:dep2sep} holds because $P \depand (Q
  \sepand R) \in \Frestrict$ implies that $R$ only mentions variables that are
  guaranteed to be in $P$. \ref{ax:dep2sep2} holds because any kernel witnessing
  $Q$ depends on no variables and thus independent of any kernel
  witnessing $P$. \ref{ax:padding} allows conjoining $\kp{S}{\Dist[S]}$ to the
  second conjunct; since $P \depand (Q \sepand \kp{S}{\Dist[S]})$ is in \RLOGIC,
  $S$ can only mention variables that are already in $P$. Finally, \ref{ax:exch}
  shows that the standard exchange law holds for restricted assertions.
  We defer the proof to \Cref{app:extraaxioms}.

  We also need the following axioms for a particular form of atomic propositions, in addition to the axioms for general atomic propositions in \Cref{prop:ax:cpsl}.

		\begin{restatable}{proposition}{APAx}
			(\textsc{Axioms for atomic propositions})
    The following axioms are sound.
    \begin{align}
        & \kp{S}{\Dist[A] \sepand \Dist[B]} \rightarrow \kp{S}{\Dist[A]} \sepand \kp{S}{\Dist[B]}
        \qquad\text{if } A \cap B \subseteq S
        \tag{\textsc{RevPar}} \label{ax:revpar} \\
        & \kp{S}{\Dist[A] \sepand \Dist[B]} \rightarrow \kp{S}{\Dist[A \cup B]}
        \tag{\textsc{UnionRan}} \label{ax:unionran} \\
        & \pair{A}{B} \depand \pair{B}{C} \rightarrow \pair{A}{C}
        \tag{\textsc{AtomSeq}} \label{ax:atomseq} \\
        & \pair{A}{B} \rightarrow \pair{A}{A} \depand \pair{A}{B}
        \tag{\textsc{UnitL}} \label{ax:repeatdom} \\
        & \pair{A}{B} \rightarrow \pair{A}{B} \depand \pair{B}{B}
        \tag{\textsc{UnitR}} \label{ax:repeatran}
    \end{align}
			\end{restatable}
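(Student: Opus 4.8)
The plan is to prove soundness of each of the five axioms directly in the probabilistic model $\MD$, by unfolding the satisfaction clause for the atomic propositions $\kp{D}{R}$ together with the semantic clauses for $\depand$ and $\sepand$, and exhibiting explicit witnessing kernels. In every case I first extract, from the antecedent, a subkernel $f' \sqsubseteq f$ whose domain and output distributions realise the left-hand assertion; I then build the decomposition required by the right-hand assertion out of $f'$ and the unit kernels; finally the Persistence Lemma lifts the conclusion from $f'$ back to $f$. Because $\oplus$ and $\odot$ are deterministic partial operations on $\MD$, a $\depand$- (resp. $\sepand$-) goal reduces to writing the witness as a single sequential (resp. parallel) composite and checking that composite is defined.

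I would dispatch the two unit axioms and \textsc{UnionRan} first, as they are immediate. For \textsc{UnitL}, the identity law $f' = \unit_{\R{A}} \odot f'$ exhibits the decomposition, with $\unit_{\R{A}} \models \pair{A}{A}$ and $f' \models \pair{A}{B}$. For \textsc{UnitR}, dually $f' = f' \odot \unit_{\range(f')}$, and since input preservation gives $B \subseteq \range(f')$, the subkernel $\unit_{\R{B}} \sqsubseteq \unit_{\range(f')}$ witnesses $\unit_{\range(f')} \models \pair{B}{B}$. \textsc{UnionRan} is a pure fact about the range logic: a witness $\mu_1 \oplus_r \mu_2 \sqsubseteq \mu$ for $\mu \models_r [A] \sepand [B]$ has $A \cup B \subseteq \dom(\mu_1) \cup \dom(\mu_2) \subseteq \dom(\mu)$, so $\mu \models_r [A \cup B]$; the same witnessing subkernel of $f$ then serves for both sides.

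For \textsc{AtomSeq} the content is Kleisli composition plus input preservation. From $f \models \pair{A}{B} \depand \pair{B}{C}$ I get $f = g \odot h$ with $g \models \pair{A}{B}$ and $h \models \pair{B}{C}$; transitivity via $g \sqsubseteq g \odot h = f$ already yields a domain-$A$ subkernel of $f$ reaching $B$, but to reach $C$ I must compose the domain-$A$ subkernel $g' \sqsubseteq g$ with a suitably padded domain-$B$ subkernel of $h$, aligning $\range(g')$ with its domain and invoking monotonicity of $\odot$ under $\sqsubseteq$ to conclude the composite is still $\sqsubseteq f$. This type-reconciliation step is the one routine subtlety here.

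The substantive case is \textsc{RevPar}, which turns range-level independence into kernel-level parallel composition. From $f \models \kp{S}{[A] \sepand [B]}$ I fix $f' \sqsubseteq f$ with $\dom(f') = S$ such that each output $f'(m)$ factors, via $\oplus_r$, into an $A$-part independent of a $B$-part. I then set $g \defeq \pi_{S \cup A} f'$ and $h \defeq \pi_{S \cup B} f'$, so that $g \models \kp{S}{[A]}$ and $h \models \kp{S}{[B]}$, and claim $g \oplus h = \pi_{S \cup A \cup B} f' \sqsubseteq f$, which then witnesses $\kp{S}{[A]} \sepand \kp{S}{[B]}$. The key computation is that the independent product $(g \oplus h)(d)(m) = g(d)(m^{S \cup A}) \cdot h(d)(m^{S \cup B})$ reconstructs the $(S \cup A \cup B)$-marginal of $f'(d)$ — precisely the content of the $\oplus_r$ factorisation. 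I expect this to be the main obstacle: it requires reconciling the range-logic monoid $\oplus_r$ (which tolerates a shared deterministic overlap) with the partial parallel composition on kernels, and it is exactly here that the side condition $A \cap B \subseteq S$ is used, since it forces the range-overlap $(S \cup A) \cap (S \cup B) = S \cup (A \cap B)$ to collapse to $S$, matching the domain-overlap and making $g \oplus h$ defined.
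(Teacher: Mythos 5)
Your proposal is correct and takes essentially the same route as the paper's proof: extract witnessing subkernels from the antecedent, build the required decompositions out of projections and unit kernels (for \textsc{RevPar}, exactly the paper's choice $f_1 = \pi_{S\cup A}f'$, $f_2 = \pi_{S\cup B}f'$, with $A \cap B \subseteq S$ forcing definedness of $f_1 \oplus f_2$ and the $\oplus_r$-factorisation giving $f_1 \oplus f_2 = \pi_{S\cup A\cup B}f'$; for \textsc{AtomSeq}, aligning ranges/domains and using quasi-downwards-closure of $\odot$; for the unit axioms, the Kleisli identity laws), and conclude by persistence. The one slip is in \textsc{UnitR}: the containment $B \subseteq \range(f')$ follows from the semantics of $\pair{A}{B}$ (the witness's outputs must cover $B$), not from input preservation, which only yields $A \subseteq \range(f')$ --- but since the fact itself holds, your argument is unaffected.
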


  We defer the proof to \Cref{app:extraaxioms}.

  Now, we will describe how to verify our example programs, $\ExONE$ and
  $\ExTWO$. Throughout, we must ensure that all formulas used in \SYSTEM rules
  or \RLOGIC axioms are in $\Frestrict$. The product $\depand$ raises a tricky
  point: $\Frestrict$ is not closed under reassociating $\depand$, so we add
  parentheses for formulas that must be in \RLOGIC. However, we may soundly use
  the full proof system of \LOGIC when proving implications between \RLOGIC
  assertions, since \RLOGIC is a fragment of \LOGIC.

  \paragraph*{Verification of \ExONE}
  We aim to prove the following judgment:
  \[
    \vdash \psl{\top}
    {\ExONE}
    {\kp{\Exact{\emptyset}}{\Dist[z]} \depand (\kp{\Exact{z}}{\Dist[a]} \sepand \kp{\Exact{z}}{\Dist[b]})}
  \]
  By~\Cref{theo:prob}, this shows that $a,b$ are conditionally independent given
  $z$ at the end of the program. Using \textsc{Samp} to handle the sampling for
  $z, x, y$, we can prove the assertion:
  $\empdom{\Dist[z]} \depand \empdom{\Dist[x]} \depand \empdom{\Dist[y]}$.
  Using
  Axioms~\ref{ax:padding},~\ref{ax:repeatdom},~\ref{ax:par},~\ref{ax:unionran},
  and $\depand$ \textsc{Assoc}, this assertion implies
  $\empdom{\Dist[z]} \depand \pair{z}{z,x} \depand \pair{z}{z,y}$.
  We take this as the pre-condition before assigning to $a$ and assigning to
  $b$. After the assignments, \textsc{Assn} proves:
    \[\Big( \big( \empdom{z} \depand \pair{z}{z,x} \depand \pair{z}{z,y} \big) \depand \pair{z,x}{a} \Big) \depand \pair{z, y}{b}.\]
  Then, we can apply~\ref{ax:dep2sep} to derive:% the postcondition
    \(\kp{\Exact{\emptyset}}{\Dist[z]} \depand
    \big(\kp{\Exact{z}}{\Dist[z,x]} \depand  \kp{\Exact{z, x}}{\Dist[a]}\big)
    \sepand
    \big(\kp{\Exact{z}}{\Dist[z,y]} \depand \kp{\Exact{z, y}}{\Dist[b]} \big)\).
  By Axiom~\ref{ax:atomseq}, we obtain the desired post-condition:
   $ \kp{\Exact{\emptyset}}{\Dist[z]} \depand (\kp{\Exact{z}}{\Dist[a]} \sepand \kp{\Exact{z}}{\Dist[b]})$. \qed
%By~\Cref{theo:prob}, this shows that $a,b$ are conditionally independent given
%  $z$ at the end of the program. Using \textsc{Samp} to handle the sampling for
%  $z, x, y$, we can prove the assertion:
%  \[\empdom{\Dist[z]} \depand \empdom{\Dist[x]} \depand \empdom{\Dist[y]} \]
%  Using
%  Axioms~\ref{ax:padding},~\ref{ax:repeatdom},~\ref{ax:par},~\ref{ax:unionran},
%  and $\depand$ \textsc{Assoc}, this assertion implies
%  \[\empdom{\Dist[z]} \depand \pair{z}{z,x} \depand \pair{z}{z,y} .\]
%  We take this as the pre-condition before assigning to $a$ and assigning to
%  $b$. After the assignments, \textsc{Assn} proves:
%  \[
%    \Big( \big( \empdom{z} \depand \pair{z}{z,x} \depand \pair{z}{z,y} \big) \depand \pair{z,x}{a} \Big) \depand \pair{z, y}{b} .
%  \]
%  %
%  Then, we can apply~\ref{ax:dep2sep} to derive the postcondition
%  \[
%    \kp{\Exact{\emptyset}}{\Dist[z]} \depand
%    \big(\kp{\Exact{z}}{\Dist[z,x]} \depand  \kp{\Exact{z, x}}{\Dist[a]}\big)
%    \sepand
%    \big(\kp{\Exact{z}}{\Dist[z,y]} \depand \kp{\Exact{z, y}}{\Dist[b]} \big) .
%  \]
%  By Axiom~\ref{ax:atomseq}, we obtain the desired post-condition:
%  \[
%    \kp{\Exact{\emptyset}}{\Dist[z]} \depand (\kp{\Exact{z}}{\Dist[a]} \sepand \kp{\Exact{z}}{\Dist[b]}) .
%  \]

  \paragraph*{Verification of \ExTWO}
  We aim to show the following judgment:
  \[
    \vdash \psl{\top}
    {\ExTWO}
    {\kp{\Exact{\emptyset}}{\Dist[z]} \depand (\kp{\Exact{z}}{\Dist[x]} \sepand \kp{\Exact{z}}{\Dist[y]})}
  \]
  By~\Cref{theo:prob}, this shows that $x, y$ are conditionally independent
  given $z$ at the end of the program. Starting with the sampling statement for
  $z$, applying \textsc{Samp} and Axiom~\ref{ax:dep2sep2} gives:
  \[
    \vdash \psl{\top}{\Rand{z}{\Bern_{1/2}}}{\empdom{\Dist[z]} \depand \top} .
  \]
  To reason about the branching, we use \textsc{DCond}.  We start with the first
  branch.  By~\textsc{Samp},~\textsc{Weak} and~\textsc{Seq}, we have
    $\vdash \psl{\empdom{z = \ktt} \depand \top}
    {\Rand{x}{\Bern_p} \depand \Rand{y}{\Bern_p}}
    {\empdom{z = \ktt} \depand \empdom{\Dist[x]} \depand \empdom{\Dist[y]} }$.
  As before,
  Axioms~\ref{ax:padding},~\ref{ax:repeatdom},~\ref{ax:par},~\ref{ax:unionran},
  together with $\depand$ \textsc{Assoc} give the post-condition
  \[
    \empdom{z = \ktt} \depand \pair{z}{z,x} \depand \pair{z}{z,y} .
  \]
  Applying Axiom~\ref{ax:dep2sep}, we can show
    $\empdom{z = \ktt} \depand (\pair{z}{z,x} \sepand \pair{z}{z,y})$
  at the end of the branch. Thus:
   $ \vdash \psl{\empdom{z = \ktt} \depand \top}
    {\Rand{x}{\Bern_p} \depand \Rand{y}{\Bern_p}}
    {\empdom{z = \ktt} \depand \kp{z : z = \ktt}{\Dist[z,x] \sepand \Dist[z,y]}}$.
  The second branch is similar:
  \[
    \vdash \psl{\empdom{z = \kff} \depand \top}
    {\Rand{x}{\Bern_q} \depand \Rand{y}{\Bern_q}}
    {\empdom{z = \kff} \depand \kp{z : z = \kff}{\Dist[z,x] \sepand \Dist[z,y]}} .
  \]
  Applying \textsc{DCond}, we have:
  \[
    \vdash \psl{\empdom{\Dist[z]}}
    {\ExTWO}
    { \empdom{\Dist[z]} \depand (\kp{z : z = \ktt}{\Dist[z,x] \sepand \Dist[z,y]} \land \kp{z = \kff}{\Dist[z,x] \sepand \Dist[z,y]} )}
    .
  \]
  By~\ref{ax:or}, the postcondition implies
  $ \empdom{\Dist[z]} \depand \kp{(z : z = \ktt \lor z = \kff)}{\Dist[z,x] \sepand \Dist[z,y] \lor \Dist[z,x] \sepand \Dist[z,y]} $.
  In the domain and range logic, we have:
    $\models_d z : \top \to z : (z = \ktt \lor z = \kff)$ and
    \[\models_ r\Dist[z,x] \sepand \Dist[z,y] \lor \Dist[z,x] \sepand \Dist[z,y] \to \Dist[z,x] \sepand \Dist[z,y].\]
  So~\ref{ax:cons} implies
  $\empdom{\Dist[z]} \depand \kp{\Exact{z}}{\Dist[z,x] \sepand \Dist[z,y]}$.
  We can then apply~\ref{ax:revpar} because $\{ z, x \} \cap \{ z, y \} = z$,
  deriving the postcondition
   $ \empdom{\Dist[z]} \depand (\kp{\Exact{z}}{\Dist[z,x]} \sepand \kp{\text{z}}{\Dist[z,y]})$.
  By Axiom~\ref{ax:split}, we obtain the desired post-condition:
    $\empdom{\Dist[z]} \depand (\kp{\Exact{z}}{\Dist[x]} \sepand \kp{\text{z}}{\Dist[y]})$. \qed

			\subsection{Section \ref{sec:cpsl-assertions}, atomic propositions: Omitted Details}
			\label{app:atomic}

			As we described in \cref{sec:cpsl-assertions}, atomic formulas for
			\SYSTEM are of the form $\kp{D}{R}$. The domain assertions $D$ are of
			the form $S : \phi_d$, where $S$ is a set of variables and $\phi_d$
			describes memories, and the range assertions $R$ are of the form
			$\phi_r$, where $\phi_r$ is from a fragment of probabilistic BI.

			\axschemasound*
			\begin{proof}
				We check each of the axioms.
				\begin{description}[leftmargin=*]
					\item[Case: \ref{ax:and}.]
Suppose that $w \models \kp{S : p_d}{p_r} \land \kp{S :
p_d'}{p_r'}$. By semantics of atomic propositions, there exists
$w_1 \sqsubseteq_k w$ and $w_2 \sqsubseteq_k w$ such that for
all $m \in \R{S}$ such that $m \models_d p_d \land p_d'$, we
have $w_1(m) \models_r p_r$ and $w_2(m) \models_r p_r'$. By
restriction (\cref{thm:restriction}), we may assume that
$\range(w_1) = \FV(p_r) = \FV(p_r') = \range(w_2)$. Thus,
\cref{prop:subkernel} implies that $w_1 = w_2$, and so $w
\models \kp{S : p_d \land p_d'}{p_r \land p_r'}$.
					\item[Case: \ref{ax:or}.]
Immediate, by semantics of $\lor$.
					\item[Case: \ref{ax:par}.]
Suppose that $w \models \kp{S : p_d}{p_r} \sepand \kp{S' :
	p_d'}{p_r'}$. We will show that $w \models \kp{S \cup S' : p_d
\sepand p_d'}{p_r \sepand p_r'}$.

By semantics of atomic propositions, there exists $w_1
\sqsubseteq_k w$ and $w_2 \sqsubseteq_k w$ such that $w_1 \oplus
w_2 \sqsubseteq w$, and for all $m_1 \in \R{S}$ such that $m_1
\models_d p_d$, we have $w_1(m_1) \models_r p_r$, and for all
$m_2 \in \R{S'}$ such that $m_2 \models_d p_d'$, we have
$w_2(m_2) \models_r p_r'$.

Now for any $m \in \R{S \cup S'}$ such that $m \models_d p_d
\land p_d'$, we have $m^S \models_d p_d$ and $m^{S'} \models_d
p_d'$. Thus $w_1(m^S) \models_r p_r$ and $w_2(m^{S'}) \models_r
p_r'$. Letting $T = S \cap S'$ and $T_1 = S \setminus T$; $T_2 =
S' \setminus T$ be disjoint sets, and noting that $w_1, w_2$
both preserve inputs on $T$, we have:
\begin{align*}
	w_1 \oplus w_2(m)
&= \pi_{T_1}w_1(m^S) \otimes \dunit(m^T) \otimes \pi_{T_2}w_2(m^{S'}) \\
&= (\pi_{T_1}w_1(m^S) \otimes \dunit(m^T)) \oplus_r (\dunit(m^T)\otimes \pi_{T_2}w_2(m^{S'})) \\
&= w_1(m^S) \oplus_r w_2(m^{S'}) \\
&\models_r p_r \sepand p_r'
\end{align*}
Thus, $w \models \kp{S \cup S' : p_d \sepand p_d'}{p_r \sepand p_r'}$.
					\item[Case: \ref{ax:cons}.]
Immediate, by semantics of $\rightarrow$.
				\end{description}
			\end{proof}

%			\subsection{Restricted assertions: Omitted Details}
%			\label{app:restrict}

			For the proof of \Cref{thm:restriction}, we need the following characterization of $g\sqsubseteq f$.

			\begin{proposition} \label{prop:subkernel}
				Let $f$ be a Markov kernel, and let $D \subseteq \dom(f) \subseteq R \subseteq
				\range(f)$. Then we have $\pi_{R}(f(m)) = g(m')$ for all $m' \in \R{D}, m \in
				\R{\dom(f)}$ such that $m^{D} = m'$ if and only if $g \sqsubseteq f$ and
				$\dom(g) = D, \range(g) = R$.
			\end{proposition}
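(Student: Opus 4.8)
The plan is to prove both implications by reducing to the Disintegration Lemma (\Cref{lemma:condfirst}), which is precisely the tool that links marginalization to sequential ($\odot$) decomposition in $\MD$. Throughout I would use that every $f \in \MD$ preserves its input, so $\pi_{\dom(f)}(f(m)) = \delta_m$, and that $g \sqsubseteq f$ unfolds to the existence of a variable set $R'$ and a kernel $h \in \MD$ with $f = (g \oplus \unit_{\R{R'}}) \odot h$. For the statement to type-check, both sides tacitly require $\dom(g) = D$ and $\range(g) = R$, so I only need to prove the equivalence of the marginalization equation with $g \sqsubseteq f$.

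For the $(\Leftarrow)$ direction, I would start from $f = (g \oplus \unit_{\R{R'}}) \odot h$. Since $\range(g \oplus \unit_{\R{R'}}) = R \cup R'$, the forward half of \Cref{lemma:condfirst} gives $\pi_{R \cup R'} f = g \oplus \unit_{\R{R'}}$. Marginalizing once more to $R \subseteq R \cup R'$ and unfolding the definition of $\oplus$, the Dirac factor $\unit_{\R{R'}}(d^{R'})$ collapses the sum over the $R' \setminus R$ coordinates to a single surviving term, leaving $\pi_R(f(d)) = g(d^D)$ for every $d \in \R{\dom(f)}$. Taking $m = d$ and $m' = m^D$ yields the claimed equation.

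For the $(\Rightarrow)$ direction, suppose $\pi_R(f(m)) = g(m^D)$ for all $m \in \R{\dom(f)}$. Applying $\pi_{\dom(f)}$ to both sides, which is legitimate because $\dom(f) \subseteq R$, and using input preservation gives $\delta_m = \pi_{\dom(f)}(g(m^D))$. As the right-hand side depends only on $m^D$ while $m \mapsto \delta_m$ is injective, this forces $D = \dom(f)$, whence $g = \pi_R f$. Since $\dom(f) \subseteq R = \range(g)$, the converse half of \Cref{lemma:condfirst} then supplies an $h \in \MD$ with $f = g \odot h$; taking $R' = \emptyset$ (so that $g \oplus \unit_{\R{\emptyset}} = g$) exhibits $f = (g \oplus \unit_{\R{\emptyset}}) \odot h$, i.e. $g \sqsubseteq f$.

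The main obstacle is bookkeeping rather than conceptual: one must track the domain/range overlaps so that every composite is defined (in particular the $\oplus$-definedness condition $R \cap R' = D \cap R'$), and check that marginalizing $g \oplus \unit_{\R{R'}}$ down to $R$ genuinely annihilates the padded coordinates. The one delicate point is recognizing that input preservation rigidly pins down $D = \dom(f)$: without this the reverse construction of $h$ would not type-check, and it is what makes the two sides of the equivalence agree (both fail whenever $D \subsetneq \dom(f)$).
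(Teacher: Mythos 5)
Your proof is correct and rests on the same key tool as the paper's, namely the Disintegration Lemma (\cref{lemma:condfirst}); your reverse direction is essentially the paper's argument verbatim (marginalize $f=(g\oplus\unit_{\R{R'}})\odot h$ down to $R$ and let the Dirac coordinates collapse). The forward direction is where you genuinely diverge. The paper pads $g$ to $g\oplus\unit_{D'}$ with $D'=\dom(f)\setminus D$, proves $\pi_{D'\cup R}(f(m))=(g\oplus\unit_{D'})(m)$, and applies disintegration pointwise to build a witness $v$ with $f=(g\oplus\unit_{D'})\odot v$. You instead show that the hypothesis forces $D=\dom(f)$ --- two distinct extensions $m_1\neq m_2$ of the same $m'$ would give $\delta_{m_1}=\pi_{\dom(f)}(g(m'))=\delta_{m_2}$ --- whence $g=\pi_R f$, $g$ preserves its input, and a single direct application of the converse half of \cref{lemma:condfirst} finishes. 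Your route buys a real clarification: the paper's padded kernel $g\oplus\unit_{D'}$ is not even $\oplus$-definable when $D'\neq\emptyset$, since definedness requires $\dom(g)\cap D'=\range(g)\cap D'$, whose left side is empty while its right side is all of $D'$ (because $D'\subseteq\dom(f)\subseteq R=\range(g)$); the paper's construction survives only because the case $D\subsetneq\dom(f)$ never actually arises under the hypothesis, which is exactly the fact your argument makes explicit. One caveat: your injectivity step silently assumes $|\Val|\geq 2$, so that a proper sub-memory admits at least two distinct completions. This is harmless --- the proposition itself fails for singleton $\Val$ (there the marginalization condition holds trivially while, by the same definedness computation, no $g$ with $\dom(g)\subsetneq\dom(f)$ and $\range(g)\supseteq\dom(f)$ can satisfy $g\sqsubseteq f$) --- but it deserves to be recorded as a standing assumption.
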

			\begin{proof}
				For the reverse direction, suppose that $f = (g \oplus \unit_{S}) \odot v$,
				with $S$ disjoint from $\dom(g)$. Since $\range(g) \subseteq \dom(v)$, we
				have:
				\begin{align*}
					\pi_{R}(f(m)) &= \pi_{R}((g \oplus \unit_{S})(m)) \\
	&= \pi_{R}(g(m^{D}) \oplus \unit_{S}(m^S)) \\
	&= \pi_{R}(g(m^{D})) \otimes \pi_{R}(\unit_{S}(m^S)) \\
	&= g(m^{D}) \\
	&= g(m') .
				\end{align*}
				For the forward direction, evidently $\dom(g) = D$ and $\range(g) = R$. Since
				$f$ preserves input to output, we have $\pi_{\dom(f)}(g(m')) =
				\pi_{\dom(f)}(f(m)) = \unit(m')$ so $g$ preserves input to output and $g$ is a
				Markov kernel. We claim that $g \sqsubseteq f$. First, consider $g \oplus
				\unit_{\dom(f) \setminus D}$; write $D' = \dom(f) \setminus D$. For any $m \in
				\R{\dom(f)}$, we have:
				\begin{align*}
					\pi_{D' \cup R}(f(m)) &= \pi_{R}(f(m)) \otimes \pi_{D'}(f(m)) \\
			&= g(m^D) \otimes \unit_{D'}(m^{D'}) \\
			&= (g \oplus \unit_{D'})(m) .
				\end{align*}
				So by \cref{lemma:condfirst}, for every $m \in \R{\dom(f)}$ there exists a family
				of kernels $g'_m : \R{D' \cup R} \to \DR{\range(f)}$ such that
				\[
					f(m) = \dbind((g \oplus \unit_{D'})(m), g'_m)
				\]
				Defining $g'(m) \triangleq g'_{m^{\dom(f)}}(m)$, we have:
				\[
					f(m) = ((g \oplus \unit_{D'}) \odot g') (m)
				\]
				and so $g \sqsubseteq f$.
			\end{proof}

			We prove that all assertions in the restricted logic \RLOGIC satisfy
			restriction.

			\restriction*
			\begin{proof}
				The reverse direction is immediate from persistence. For the forward
				direction, we argue by induction with a stronger hypothesis. If $f
				\models P$, we call a state $f'$ a \emph{witness} of $f \models P$
				if $f' \sqsubseteq f$, $\SFV(P) \subseteq \range(f') \subseteq
				\FV(P)$, $\dom(f') \subseteq \FFV(P)$, and $f' \models P$.  We show
				that $f \models P$ implies that there is a witness $f' \models P$,
				by induction on $P$.

				\begin{description}[leftmargin=*]
					\item [Case $\kp{D}{R}$:]
We will use two basic facts, both following from the form of the
domain and range assertions:
\begin{enumerate}
	\item If $m \models_d D$, then $dom(m) = \FV(D)$.
	\item If $\mu \models_r R$, then $dom(\mu) \supseteq \FV(D)$.
\end{enumerate}
$f \models \kp{D}{R}$ implies that there exists
$f' \sqsubseteq f$ such that
for any $m \in M_d$ such that $m \models_d D$, $f'(m)$ is defined and $f'(m) \models_r R$.

Let $T = \range(f') \cap (\FV(D) \cup \FV(R))$. We claim that $\pi_T f'$ is the desired witness for $f \models P$.
\begin{itemize}
	\item $\pi_T f'$ is defined and  $\pi_T f' \sqsubseteq f$ because:
		\begin{align*}
			\dom(f') &= dom(m) \tag{for any $m \in M_d$ such that $m \models_d D$}\\
&= \FV(D) \notag \\
&\subseteq T \notag .
		\end{align*}
		Thus $\pi_T f'$ is defined, and $\pi_T  f' \sqsubseteq f' \sqsubseteq f$.
	\item $\range(\pi_T f') = T \subseteq \FV(D) \cup \FV(R) = \FV(P)$.
	\item $\pi_T f' \models \kp{D}{R}$: For any
		$m \in M_d$ such that $m \models_d D$, $f'(m)$ is a
		distribution. Based on the restriction theorem for
		probabilistic BI, $\pi_{\FV(R) \cap \range(f')} (f'(m))
		\models R$ too. Since $T \supseteq \FV(R) \cap \range(f')$,
		persistence in $M_r$, implies $\pi_{T} (f'(m)) \models R$.
		By definition of
		marginalization on kernels, $(\pi_{T} f') (m) = \pi_{T} (f' (m))$. Since
		$(\pi_{T} f') (m) \models R$, we have $\pi_{T} f' \models \kp{D}{R}$ as well.
	\item  $\FFV(P) = \FV(D)$, so
			$\dom(\pi_T f') = dom(m) = \FV(D) = \FFV(P)$.
	\item  $\SFV(P) = \FV\kp{D}{R} = \FV(D) \cup \FV(R)$, so
		\begin{align*}
			\range(\pi_T f') &\supseteq dom((\pi_T f')(m))\tag{for any $m \in M_d$ such that $m \models_d D$}\\
		&\supseteq \FV(D) \cup \FV(R) \tag{By $(\pi_T f')(m) \models R$} \\
		&= \SFV(P) .
		\end{align*}
\end{itemize}
so $\pi_T f'$ is a desired witness for $f \models P$.
					\item [Case $Q \land R$:]
Assuming $\SFV(Q) = \FV(Q) = \SFV(R) = \FV(R)$.
By definition, $f \models Q \land R$ implies that $f \models Q$ and $f \models R$.
By induction,	there exists $f' \sqsubseteq f$ such that $\SFV(Q)
= \range(f') = \FV(Q)$, $\dom(f') \subseteq \FFV(Q)$, and $f'
\models Q$, and there exists $f'' \sqsubseteq f$ such that
$\SFV(R) = \range(f'') = \FV(R)$, $\dom(f'') \subseteq \FFV(R)$
and $f'' \models R$.
Thus, $\range(f') = \range(f'')$.

Note that $\dom(f') = \dom(f) \cap \range(f')$ because in our models,
$f' \sqsubseteq f$ implies that there exists $S$  and some $v$ such that $f = (f' \oplus \eta_{S}) \odot v$, and we can make $S$ disjoint of $\dom(f')$ and $\range(f')$ wolog.
Then, $\dom(f) = \dom(f' \oplus S) = \dom(f') \cup S$,
and $\range(f') = \range(f' \oplus S) \setminus S$,
so $\dom(f) \cup \range(f') \subseteq \dom(f')$.
Meanwhile, since $\dom(f') \subseteq \dom(f)$ and $\dom(f') \subseteq \range(f')$, $\dom(f') \subseteq \dom(f) \cap \range(f')$.
So $\dom(f') = \dom(f) \cap \range(f')$.
Similarly, $\dom(f'') \subseteq \dom(f) \cap \range(f'')$, so
$\range(f') = \range(f'')$ implies that $\dom(f') = \dom(f')$.

Since $\dom(f') = \dom(f'')$ and
$\range(f') = \range(f'')$, \cref{prop:subkernel} implies that
$f' = f''$. This is the desired witness: $f' = f'' \models Q$
and $f' = f'' \models R$.
					\item [Case $Q \lor R$:]
$f \models Q \lor R$ implies that $f \models Q$ or $f \models
R$.

                        Without loss of generality, suppose $f \models Q$.
By induction,	there exists $f' \sqsubseteq f$ such that $\SFV(Q)
\subseteq \range(f') \subseteq \FV(Q)$, $\dom(f') \subseteq
\FFV(Q)$. Then:
\begin{align*}
	\range(f') &\subseteq \FV(Q) \cup \FV(R) = \FV(P) \\
	\range(f') &\supseteq \SFV(Q) \cap \SFV(R) =\SFV(P) \\
	\dom(f') &\subseteq \FV(Q) \cup \FV(R) = \FFV(P) .
\end{align*}
Thus, $f'$ is a desired witness.
					\item [Case $Q \depand R$:]
Assuming $\FFV(R) \subseteq \SFV(Q)$.

$f \models Q \depand R$ implies that there exists $f_1, f_2$ such that
$f_1 \odot f_2 = f$, $f_1 \models Q$, and $f_2 \models R$. $f_1 \odot f_2$ is defined so $\range(f_1) = \dom(f_2)$.
By induction, there exists $f_1' \sqsubseteq f_1$ such that $f_1' \models Q$, $\SFV(Q) \subseteq \range(f_1') \subseteq \FV(Q)$ and $\dom(f_1') \subseteq \FFV(Q)$, and  there exists $f_2' \sqsubseteq f_2$ such that $f_2' \models Q$, $\SFV(R) \subseteq \range(f_2') \subseteq \FV(R)$, and $\dom(f_2') \subseteq \FFV(R)$.

Now,$\wh{f} = f_1' \odot (f_2' \oplus \unit_{\range(f_1') \setminus \dom(f_2')} )$ is defined because $\dom(f_2') \subseteq \FFV(R) \subseteq \SFV(Q) \subseteq \range(f_1')$. Then, we have
\begin{align*}
	\wh{f} &\models Q \depand R \\
	\range(\wh{f}) &= \range(f_1') \cup \range(f_2') \subseteq \FV(Q) \cup \FV(R) = \FV(P) \\
	\range(\wh{f}) &= \range(f_1') \cup \range(f_2') \supseteq \SFV(Q) \cup \SFV(R) = \SFV(P) \\
	\dom(\wh{f}) &= \dom(f_1') \subseteq \FFV(Q) = \FFV(P) .
\end{align*}
$f_1' \sqsubseteq f$, $f_2' \oplus \unit_{\range(f_1') \setminus \dom(f_2')}  \oplus  \sqsubseteq f_2$, so by~\cref{odotdownwards},
$\wh{f} = f_1' \odot (f_2' \oplus \unit_{\range(f_1') \setminus \dom(f_2')} ) \sqsubseteq f_1 \odot f_2 = f$.

Thus, $\wh{f}$ is a desired witness.
%tricky example 1: $\pair{\emptyset}{\emptyset} \depand \pair{x}{x} $, 	but we don't have $w \models P$ implies $\dom(w) \subseteq \FV(P)$.

%Tricky example 2: for $\depand$ if the marginalization does not include everything in the domain: say $\kp{b}{\mathbf{U}_b} \depand \kp{a}{x = a}$ is satisfied by $w = (\unit_b \oplus (c \mapsto D(c, a))) \odot ((a \mapsto D(x = a)) \oplus \unit_{b,c})$.
%
					\item [Case $Q \sepand R$:]
$f \models Q \sepand R$ implies that there exists $f_1, f_2$ such that
$f_1 \oplus f_2 \sqsubseteq f$, $f_1 \models Q$, and $f_2 \models R$.

By induction, there exists $f_1' \sqsubseteq f_1$ such that $f_1' \models Q$, $\SFV(Q) \subseteq \range(f_1') \subseteq \FV(Q)$ and $\dom(f_1') \subseteq \FFV(Q)$, and  there exists $f_2' \sqsubseteq f_2$ such that $f_2' \models Q$, $\SFV(R) \subseteq \range(f_2') \subseteq \FV(R)$, and $\dom(f_2') \subseteq \FFV(R)$.
By downwards closure of $\oplus$, $f_1' \oplus f_2'$ is defined
and $f_1' \oplus f_2' \sqsubseteq f_1 \oplus f_2 \sqsubseteq f$.
We have $f_1' \oplus f_2' \models Q \sepand R$, and
\begin{align*}
	\range(f_1' \oplus f_2') &= \range(f_1') \cup \range(f_2') \subseteq \FV(Q) \cup \FV(R) = \FV(P) \\
	\range(f_1' \oplus f_2') &= \range(f_1') \cup \range(f_2') \supseteq \SFV(Q) \cup \SFV(R) = \SFV(P) \\
	\dom(f_1' \oplus f_2') &= \dom(f_1') \cup \dom(f_2') \subseteq \FFV(Q) \cup \FFV(R) = \FFV(P) .
\end{align*}
Thus, $f_1' \oplus f_2'$ is a desired witness.
				\end{description}
			\end{proof}

			\subsection{Section \ref{sec:cpsl_logic}, \SYSTEM: Omitted Details}
			\label{app:cpsl}

			To prove soundness for \SYSTEM (\Cref{thm:cpslsound}), we rely on a few lemmas about program
			semantics.

			\begin{lemma} \label{lem:assn-sem-ker}
				Suppose that $e$ is an expression not containing $x$, and let $\mu \in
				\DR{\Var}$. Then:
				\[
					f_{\denot{\Assn{x}{e}} \mu}
					= f_{\mu}
					\odot (m \mapsto \dunit(m^{\Var \setminus \{ x \}}))
					\odot ((m_1 \mapsto \dunit(m_1 \cup (x \mapsto \denot{e}(m_1))))
					\oplus (m_2 \mapsto \dunit(m_2)))
				\]
				where $m_1 \in \R{\Var \setminus \{ x \}}$ and $m_2 \in \R{\Var \setminus \{ x
				\} \setminus \FV(e)}$.
			\end{lemma}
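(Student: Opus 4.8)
The plan is to unfold both sides as explicit distributions over $\R{\Var}$ and verify that they agree pointwise. Abbreviate the three kernels on the right as $k_1 \defeq (m \mapsto \dunit(m^{\Var \setminus \{x\}}))$, $k_2 \defeq (m_1 \mapsto \dunit(m_1 \cup (x \mapsto \denot{e}(m_1))))$, and $k_3 \defeq (m_2 \mapsto \dunit(m_2))$, so that the claimed identity reads $f_{\denot{\Assn{x}{e}}\mu} = f_\mu \odot k_1 \odot (k_2 \oplus k_3)$. First I would record the types: $k_1 \colon \R{\Var} \to \DR{\Var \setminus \{x\}}$ deletes $x$, $k_2 \colon \R{\Var \setminus \{x\}} \to \DR{\Var}$ reinstates $x$ with the value of $e$, and $k_3$ is a sub-identity with $\dom(k_3) = \range(k_3) = \Var \setminus \{x\} \setminus \FV(e)$. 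A quick check of the condition for $\oplus$ shows $k_2 \oplus k_3$ is defined, with type $\R{\Var \setminus \{x\}} \to \DR{\Var}$, so the whole composite has type $\R{\emptyset} \to \DR{\Var}$, matching the left-hand side. Since $\dom(k_2)$ already contains $\dom(k_3)$ and $\range(k_2)$ already contains $\range(k_3)$, unfolding the definition of $\oplus$ shows $(k_2 \oplus k_3)(d)(m) = 1$ exactly when $m = d \cup (x \mapsto \denot{e}(d))$; that is, $k_2 \oplus k_3 = k_2$, so the padding component $k_3$ can be discarded.

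Next I would compute the composite via Kleisli composition. A point to flag here: $k_1$ does \emph{not} preserve its input (it drops $x$), so the simplified product formula \cref{eq:simplemu} for $\MD$-kernels does not apply, and I would instead carry the general Kleisli sum \cref{eq:odot} through both steps. The first composition gives $(f_\mu \odot k_1)(\empmem)(m_1) = \sum_{m \,:\, m^{\Var \setminus \{x\}} = m_1} \mu(m) = (\pi_{\Var \setminus \{x\}}\mu)(m_1)$, i.e.\ $f_\mu \odot k_1$ is the lift of the marginal of $\mu$ on $\Var \setminus \{x\}$. Composing once more with $k_2$ then yields, for $m \in \R{\Var}$,
\[
(f_{\mu} \odot k_1 \odot k_2)(\empmem)(m) =
\begin{cases}
(\pi_{\Var \setminus \{x\}}\mu)(m^{\Var \setminus \{x\}}) & \text{if } m(x) = \denot{e}(m^{\Var \setminus \{x\}}),\\
0 & \text{otherwise.}
\end{cases}
\]

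On the other side, unfolding the assignment clause of \cref{fig:semantics} gives $(\denot{\Assn{x}{e}}\mu)(m) = \sum_{m'} \mu(m') \cdot \dunit(m'[x \mapsto \denot{e}(m')])(m)$, where the indicator is $1$ exactly when $m$ agrees with $m'$ off $x$ and $m(x) = \denot{e}(m')$. The one genuinely non-routine step—and the only place the hypothesis $x \notin \FV(e)$ is used—is to observe that $\denot{e}(m')$ depends only on $m'^{\Var \setminus \{x\}}$; combined with $m'^{\Var \setminus \{x\}} = m^{\Var \setminus \{x\}}$ (forced by the indicator), this rewrites the constraint as $m(x) = \denot{e}(m^{\Var \setminus \{x\}})$, a condition on $m$ alone, after which the sum over $m'$ with fixed restriction off $x$ collapses to $(\pi_{\Var \setminus \{x\}}\mu)(m^{\Var \setminus \{x\}})$. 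This produces exactly the case expression above, and since $f_{\denot{\Assn{x}{e}}\mu}(\empmem) = \denot{\Assn{x}{e}}\mu$ by definition of the lift, the two sides coincide. The calculation is otherwise mechanical; the only real care needed is remembering that $k_1$ lies outside $\MD$ so the convenient product formula is unavailable, and correctly invoking $x \notin \FV(e)$ to decouple $\denot{e}$ from the old value of $x$, which is precisely what lets the inner marginalization go through.
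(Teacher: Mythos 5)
Your proof is correct. Note that the paper itself states this lemma without proof (it is invoked only in the \textsc{Assn} case of the soundness theorem for \SYSTEM), so there is no official argument to compare against; your direct pointwise Kleisli computation is exactly the argument one would expect. You correctly identify the two points where care is needed: first, that $m \mapsto \dunit(m^{\Var \setminus \{x\}})$ is not input-preserving, so the simplified product formula \eqref{eq:simplemu} is unavailable and the general composition \eqref{eq:odot} must be used; second, that $x \notin \FV(e)$ is what lets $\denot{e}(m')$ be rewritten as a function of $m^{\Var\setminus\{x\}}$ so the sum over preimages collapses to the marginal $\pi_{\Var\setminus\{x\}}\mu$. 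One cosmetic remark: when the paper later re-uses this decomposition in the soundness proof it types the middle kernel as $m_1 \in \R{\FV(e)}$ rather than $m_1 \in \R{\Var\setminus\{x\}}$ as in the lemma statement; this is immaterial for the identity you proved, since under either typing the parallel composite $k_2 \oplus k_3$ unfolds to the same function $d \mapsto \dunit(d \cup (x \mapsto \denot{e}(d)))$ on $\R{\Var\setminus\{x\}}$, which your padding observation already establishes.
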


			\begin{lemma} \label{lem:samp-sem-ker}
				Suppose that $d$ is a distribution expression not containing $x$, and let $\mu
				\in \DR{\Var}$. Then:
				\[
					f_{\denot{\Rand{x}{d}} \mu}
					= f_{\mu}
					\odot (m \mapsto \dunit(m^{\Var \setminus \{ x \}}))
					\odot ((\denot{d} \odot (v \mapsto [x \mapsto v])) \oplus (m_1 \mapsto \dunit(m_1)) \oplus (m_2 \mapsto \dunit(m_2)))
				\]
				where $m_1 \in \R{\Var \setminus \{ x \}}$ and $m_2 \in \R{\Var \setminus \{ x
				\} \setminus \FV(d)}$, and $\denot{d} : \R{\FV(d)} \to \DD(\Val)$.
			\end{lemma}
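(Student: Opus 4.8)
The plan is to reduce the claimed kernel identity to an equality of distributions and verify it by a direct computation using the monad laws. Since both sides are Markov kernels with domain $\R{\emptyset}$, it suffices to evaluate each at the empty memory $\empmem$ and show that the two resulting distributions over $\R{\Var}$ assign the same probability to every target memory $m \in \R{\Var}$.

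First I would unfold the left-hand side via the program semantics of $\Rand{x}{d}$ (the generalization of the $\Bern_p$ clause in \Cref{fig:semantics} to an arbitrary distribution expression $d$), namely $\denot{\Rand{x}{d}}\mu = \dbind(\mu, m' \mapsto \dbind(\denot{d}(m'), v \mapsto \dunit(m'[x \mapsto v])))$. Computing the probability of a target $m$ and noting that $\dunit(m'[x \mapsto v])$ hits $m$ exactly when $m'$ agrees with $m$ off $x$ and $v = m(x)$, the inner sum collapses. Because $x \notin \FV(d)$, the distribution $\denot{d}(m')$ depends only on $m'^{\FV(d)} = m^{\FV(d)}$ and therefore factors out of the remaining sum over $m'$, leaving the closed form $\denot{d}(m^{\FV(d)})(m(x)) \cdot (\pi_{\Var \setminus \{x\}}\mu)(m^{\Var \setminus \{x\}})$.

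Next I would unfold the right-hand side. Using the definition of sequential composition~\eqref{eq:odot} together with the simplified formula~\eqref{eq:simplemu}, the prefix $f_\mu \odot (m \mapsto \dunit(m^{\Var \setminus \{x\}}))$ is the pushforward of $\mu$ along the projection that forgets $x$, i.e.\ the lifted marginal $f_{\pi_{\Var \setminus \{x\}}\mu}$. It then remains to compose this with the third, resampling kernel $k \colon \R{\Var \setminus \{x\}} \to \DR{\Var}$. Unfolding the parallel composition $\oplus$ as an independent product of distributions over disjoint ranges shows that $k$, applied to a memory $n \in \R{\Var \setminus \{x\}}$, draws $v \sim \denot{d}(n^{\FV(d)})$ and returns $n \otimes (x \mapsto v)$. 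A final application of Kleisli composition then reproduces exactly the closed form derived for the left-hand side, so the two kernels agree.

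The main obstacle is conceptual rather than computational: the sampling command overwrites any previous value of $x$, so its semantics is insensitive to the incoming value of $x$, and the decomposition encodes this by first discarding $x$ through the projection kernel $(m \mapsto \dunit(m^{\Var \setminus \{x\}}))$ and only then re-sampling. The delicate point is to check that the parallel composition $\oplus$ of the resampling kernel $\denot{d} \odot (v \mapsto [x \mapsto v])$ with the identity kernels reassembles the full memory correctly: the variables $\FV(d)$ read by $\denot{d}$ must be precisely the ones carried through unchanged by the identity components, and the overlapping domains must be handled consistently by $\oplus$ (the deterministic identity factors agree on their overlap, so their independent product is well defined and reduces to the identity on $\Var \setminus \{x\}$). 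Once this is in place the composite computation is routine, and everything follows from the monad laws~\eqref{eq:monad} and the definitions of $\odot$ and $\oplus$. The companion deterministic statement, \Cref{lem:assn-sem-ker}, is proved analogously, replacing the inner $\dbind(\denot{d}(m'), -)$ by the deterministic update $x \mapsto \denot{e}(m')$.
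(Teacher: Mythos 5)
Your proposal is correct, but there is no paper proof to compare it against: the paper states \Cref{lem:samp-sem-ker} (and its companion \Cref{lem:assn-sem-ker}) without proof and simply invokes it in the \textsc{Samp} case of the soundness theorem, so your computation supplies an argument the authors left implicit. Your route is the natural one: both sides are kernels out of $\R{\emptyset}$, so it suffices to compare the two distributions at $\empmem$; the left side collapses, using $x \notin \FV(d)$, to the closed form $\denot{d}(m^{\FV(d)})(m(x)) \cdot (\pi_{\Var \setminus \{x\}}\mu)(m^{\Var \setminus \{x\}})$; and the right side reproduces it once you observe that the prefix $f_\mu \odot (m \mapsto \dunit(m^{\Var \setminus \{x\}}))$ is the lifted marginal $f_{\pi_{\Var \setminus \{x\}}\mu}$ and that the final parallel composite re-samples $x$ from $\denot{d}$ while carrying $\Var \setminus \{x\}$ through unchanged.

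Two technical points deserve care, and you flag the second one yourself. First, \eqref{eq:simplemu} does not literally apply to the prefix, because the projection kernel $m \mapsto \dunit(m^{\Var \setminus \{x\}})$ does not preserve its input (its range omits $x$); the raw Kleisli formula \eqref{eq:odot} is what you actually need, and it already identifies the prefix with the pushforward of $\mu$ along the projection. Second, the component $\denot{d} \odot (v \mapsto [x \mapsto v])$ also fails input preservation, and its pairing with the identity kernels violates the official side condition for $\oplus$ on $\MD$ (its domain overlap with the identity component is $\FV(d)$ while the range overlap is $\emptyset$), so the parallel composition in the statement must be read as the defining product formula $(f \oplus g)(d)(m) = f(d^S)(m^T) \cdot g(d^U)(m^V)$ rather than as an instance of the frame operation; the paper itself is loose on exactly this point, remarking in the soundness proof that ``the components of $j$ do not preserve input to output'' even though the composite does. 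With those readings fixed, your calculation goes through and establishes the lemma.
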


                  The rule \textsc{Frame} relies on simple syntactic conditions
                  for approximating which variables \emph{may} be read, which
                  variables \emph{must} be written before they are read, and
                  which variables \emph{may} be modified.

                  \begin{definition}\label{def:var-cond}
                    $\RV, \WV, \MV$ are defined as follows:
                    \begin{mathpar}
                      \RV(\Assn{x}{e}) \defeq \FV(e) \and
                      \RV(\Rand{x}{d}) \defeq \FV(d) \\
                      \RV(\Seq{c}{c'}) \defeq \RV(c) \cup (\RV(c') \setminus \WV(c)) \and
                      \RV(\Cond{b}{c}{c'}) \defeq \FV(b) \cup \RV(c) \cup \RV(c')
                    \end{mathpar}
                    \hrule
                    \begin{mathpar}
                      \WV(\Assn{x}{e}) \defeq \{ x \} \setminus \FV(e) \and
                      \WV(\Rand{x}{d}) \defeq \{ x \} \setminus \FV(d) \\
                      \WV(\Seq{c}{c'}) \defeq \WV(c) \cup (\WV(c') \setminus \RV(c)) \and
                      \WV(\Cond{b}{c}{c'}) \defeq (\WV(c) \cap \WV(c')) \setminus \FV(b)
                    \end{mathpar}
                    \hrule
                    \begin{mathpar}
                      \MV(\Assn{x}{e}) \defeq \{ x \} \and
                      \MV(\Rand{x}{d}) \defeq \{ x \} \\
                      \MV(\Seq{c}{c'}) \defeq \MV(c) \cup \MV(c') \and
                      \MV(\Cond{b}{c}{c'}) \defeq \MV(c) \cup \MV(c')
                    \end{mathpar}
                  \end{definition}

                  Other analyses are possible, so long as non-modified variables are preserved
                  from input to output, and output modified variables depend only on input read
                  variables.

                  \begin{lemma}[Soundness for \RV, \WV, \MV~\citep{barthe2019probabilistic}]\label{lem:fv}
                    Let $\mu' = \denot{c}\mu$, and let $R = \RV(c), W = \WV(c),
                    C = \Var \setminus \MV(c)$. Then:
                    \begin{enumerate}
                      \item Variables outside of $\MV(c)$ are not modified:
                        $\pi_{C}(\mu') =
                        \pi_{C}(\mu)$.
                      \item The sets $R$ and $W$ are disjoint.
                      \item There exists $f : \R{R} \to \DR{\MV(c)}$ with
                        $\mu' = \dbind(\mu, m \mapsto f(\pi_{R}(m)) \otimes
                        \dunit(\pi_{C}(m)))$.
                    \end{enumerate}
                  \end{lemma}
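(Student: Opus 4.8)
The plan is to prove all three claims simultaneously by structural induction on the command $c$, since the sequencing and conditional cases will need the inductive hypotheses (IH) for all three parts at once. The three invariants to maintain are: non-modified variables are preserved ($\pi_C(\mu') = \pi_C(\mu)$), reads and guaranteed-writes are disjoint, and the modified variables are produced by a kernel that reads only $R = \RV(c)$.

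For the base cases I would read the witnessing kernel directly off the explicit decompositions already available. For $\Skip$ all three sets are empty and $\mu' = \mu$, so everything holds trivially with $f = \dunit$. For $\Assn{x}{e}$ and $\Rand{x}{d}$ the modified set is $\{x\}$ and $C = \Var \setminus \{x\}$; the semantics of \Cref{fig:semantics} only rebinds $x$, giving claim (1), and claim (2) holds because $\WV$ is defined to subtract $\FV(e)$ (resp. $\FV(d)$) from $\{x\}$. For claim (3), \Cref{lem:assn-sem-ker} and \Cref{lem:samp-sem-ker} present $\mu'$ as $f_\mu$ post-composed with a kernel that writes $x$ using only $\FV(e)$ (resp. draws $x$ from $d$ using only $\FV(d)$) while carrying every other variable along as a Dirac on its input; taking $f\colon \R{\FV(e)} \to \DR{\{x\}}$ to be $m \mapsto \dunit(x \mapsto \denot{e}(m))$ (resp. $m \mapsto \dbind(\denot{d}(m), v \mapsto \dunit(x \mapsto v))$) puts the output into the required product form $f(\pi_R(m)) \otimes \dunit(\pi_C(m))$.

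For the sequencing case $\Seq{c}{c'}$, set $\mu'' = \denot{c}\mu$ and $\mu' = \denot{c'}\mu''$. Claim (1) follows by applying the IH to $c$ then $c'$ and projecting onto $C = \Var \setminus (\MV(c) \cup \MV(c'))$. Claim (2) is pure set algebra: writing $A = \RV(c)$, $B = \WV(c)$, $A' = \RV(c')$, $B' = \WV(c')$ with $A \cap B = A' \cap B' = \emptyset$ from the IH, I would verify $(A \cup (A' \setminus B)) \cap (B \cup (B' \setminus A)) = \emptyset$, the only nonobvious term being $(A' \setminus B) \cap (B' \setminus A) \subseteq A' \cap B' = \emptyset$. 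For claim (3) I would Kleisli-compose the two witnessing kernels supplied by the IH and re-factor: reads of $c'$ that were written by $c$ are supplied by the first kernel, while reads of $c'$ not written by $c$ must have been present in the input, which is exactly $\RV(\Seq{c}{c'}) = \RV(c) \cup (\RV(c') \setminus \WV(c))$; \Cref{lemma:condfirst} then lets me marginalize and reassemble the composite into the stated form.

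The conditional case $\Cond{b}{c}{c'}$ is the main obstacle. By the semantics, $\mu' = \dconv{p}{(\denot{c}(\dcond{\mu}{b = \ktt}))}{(\denot{c'}(\dcond{\mu}{b = \kff}))}$ with $p = \mu(\denot{b = \ktt})$. Claim (1) holds because $C \subseteq (\Var \setminus \MV(c)) \cap (\Var \setminus \MV(c'))$, so each branch preserves the $C$-marginal of its conditioned input, and the convex combination reassembles $\pi_C(\mu)$ using that conditioning and convex combination are mutually inverse, $\mu = \dconv{\mu(S)}{(\dcond{\mu}{S})}{(\dcond{\mu}{\overline{S}})}$. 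Claim (2) again reduces to set algebra, using that $\WV$ subtracts $\FV(b)$ and the branch-wise disjointness from the IH. For claim (3) I would construct $f$ by casing on the value of the guard $b$---legitimate since $\FV(b) \subseteq \RV(\Cond{b}{c}{c'})$---dispatching to the branch kernels given by the IH and padding each to a kernel over the full modified set $\MV(c) \cup \MV(c')$. The delicate step, which I expect to fight with hardest, is establishing that the resulting joint distribution on the modified variables depends only on the reads $R$: this is precisely where the branch-wise soundness of \RV{} must be combined with the conditioning on $b$, and it is the reason the read/write/modify analysis is designed with the particular set operations it uses.
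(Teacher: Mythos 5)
Your instinct to prove all three claims by simultaneous structural induction is the right kind of argument, and it is worth noting that the paper itself offers no proof to compare against: \cref{lem:fv} is imported from PSL~\citep{barthe2019probabilistic} by citation only. Your base cases, claims (1)--(2) in every case, and the overall shape of the sequencing case are essentially sound, modulo one slip in sequencing: the reads of $c'$ must be split along $\MV(c)$, not $\WV(c)$. A variable in $\RV(c') \cap (\MV(c) \setminus \WV(c))$ may have had its value changed by $c$, so it must be sourced from the first witnessing kernel's output, not ``from the input'' --- even though it also lies in $\RV(\Seq{c}{c'})$ and is therefore visible there. The correct statement is that $\RV(c') \setminus \MV(c) \subseteq \RV(c') \setminus \WV(c) \subseteq \RV(\Seq{c}{c'})$, which is what makes the unmodified reads available; the possibly-modified reads come from the kernel given by the inductive hypothesis.

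The genuine gap is the conditional case of claim (3), exactly the step you defer as ``delicate.'' It is not merely delicate: with \cref{def:var-cond} as printed, claim (3) is \emph{false}, so no completion of your sketch exists. Your padding step needs, for a variable $v \in \MV(c) \mathbin{\triangle} \MV(c')$ in the branch where $v$ is untouched, the \emph{input} value of $v$; but such $v$ need not belong to $\RV(\Cond{b}{c}{c'}) = \FV(b) \cup \RV(c) \cup \RV(c')$. Concretely, take $\Var = \{b,x,z\}$, $c_0 = \Cond{b}{\Assn{x}{\ktt}}{\Skip}$, and let $\mu$ make $b$ and $z$ independent fair coins with $x = z$. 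Then $R = \{b\}$, $\MV(c_0) = \{x\}$, $C = \{b,z\}$, and $\mu' = \denot{c_0}\mu$ puts mass $\tfrac14$ on each of $(b,x,z) \in \{(\ktt,\ktt,\ktt),(\ktt,\ktt,\kff),(\kff,\ktt,\ktt),(\kff,\kff,\kff)\}$. Evaluating the required identity $\mu' = \dbind(\mu, m \mapsto f(\pi_R(m)) \otimes \dunit(\pi_C(m)))$ at the outcomes $(\kff,\ktt,\ktt)$ and $(\kff,\kff,\kff)$ --- for each of which exactly one input memory contributes, since $x = z$ under $\mu$ --- forces $f(b{=}\kff)(x{=}\ktt) = 1$ and $f(b{=}\kff)(x{=}\kff) = 1$, so $f(b{=}\kff)$ would have total mass $2$, a contradiction. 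The intuitive failure is that in the $\Skip$ branch $x$ stays correlated with $z \in C \setminus R$, a correlation the product form cannot express. The repair is to strengthen $\RV$ for conditionals so that every variable modified in one branch but not surely written in the other counts as read, e.g.\ adding the term $(\MV(c) \cup \MV(c')) \setminus (\WV(c) \cap \WV(c'))$; the cited PSL development relies on a definition of this stronger form, and \cref{def:var-cond} appears to have dropped that term in transcription. With the strengthened definition (or with $\MV(c) \mathbin{\triangle} \MV(c') \subseteq R$ as an explicit hypothesis), your casing-on-$b$-plus-padding construction does go through, because every variable of $\MV(\Cond{b}{c}{c'})$ whose final value can depend on its initial value then becomes visible to $f$ through $\pi_R$; note also that claim (2) survives the strengthening, since $\WV(\Cond{b}{c}{c'}) \subseteq \WV(c) \cap \WV(c')$ is disjoint from the added term.
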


			We recall the definition of validity in \SYSTEM.

			\cpslvalid*

			Now, we are ready to prove soundness of \SYSTEM.

			\cpslsound*

			\begin{proof}
				By induction on the derivation. Throughout, we write $\mu : \DR{\Var}$ for the
				input and $f : \R{\emptyset} \to \DR{\Var}$ for the lifted input, and we
				assume that $f$ satisfies the pre-condition of the conclusion.

				\begin{description}[leftmargin=*]
					\item[Case: \textsc{Assn}.]
By restriction (\cref{thm:restriction}), there exists $k_1 \sqsubseteq f$ such that $\FV(e) \subseteq
S\FV(P) \subseteq \range(k_1) \subseteq \FV(P)$; let $K = \range(k_1)$. Since
$f$ has empty domain, we have $f = k_1 \odot k_2$ for some $k_2 : \R{K}
\to \DR{\Var}$. Let $f' =f_{\denot{\Assn{x}{e}}\mu}$ be the lifted
output. By \cref{lem:assn-sem-ker} and associativity, we have:
\begin{align*}
	f' &= f
	\odot (m \mapsto \dunit(m^{\Var \setminus \{ x \}}))
	\odot ((m_1 \mapsto \dunit(m_1 \cup (x \mapsto \denot{e}(m_1))))
	\oplus (m_2 \mapsto \dunit(m_2))) \\
				&= \underbrace{k_1 \odot k_2 \odot (m \mapsto \dunit(m^{\Var \setminus \{ x \}}))}_{j}
				\odot (\underbrace{m_1 \mapsto \dunit(m_1 \cup (x \mapsto \denot{e}(m_1)))}_{j_1}
				\oplus \underbrace{m_2 \mapsto \dunit(m_2)}_{j_2})
\end{align*}
where $m : \R{\Var}$, $m_1 : \R{\FV(e)}$, and $m_2 : \R{\Var \setminus
\FV(e) \setminus \{ x \} }$. Note that even though the components of $j$ do
not preserve input to output, $j$ itself does preserve input to output;
$j_1$ and $j_2$ also evidently have this property. Now since $k
\sqsubseteq j$ and $k_1 \models P$, we have $j \models P$. Since $j_1
\sqsubseteq j_1 \oplus j_2$ and $j_1 \models \kp{\FV(e)}{x = e}$, we have
$j_1 \oplus j_2 \models \kp{\FV(e)}{x = e}$ as well. Thus, we conclude $f'
\models P \depand \kp{\FV(e)}{x = e}$.
					\item[Case: \textsc{Samp}.]
By restriction (\cref{thm:restriction}), there exists $k_1 \sqsubseteq f$ such that $\FV(d) \subseteq
S\FV(P) \subseteq \range(k_1) \subseteq \FV(P)$; let $K = \range(k_1)$. Since
$f$ has empty domain, we have $f = k_1 \odot k_2$ for some $k_2 : \R{K}
\to \DR{\Var}$. Let $f' = f_{\denot{\Assn{x}{e}}\mu}$ be the lifted
output. By \cref{lem:samp-sem-ker} and associativity, we have:
\begin{align*}
	f' &= f
	\odot (m \mapsto \dunit(m^{\Var \setminus \{ x \}}))
	\odot ((\denot{d} \odot (v \mapsto [x \mapsto v])) \oplus (m_1 \mapsto \dunit(m_1)) \oplus (m_2 \mapsto \dunit(m_2))) \\
				&= \underbrace{k_1 \odot k_2 \odot (m \mapsto \dunit(m^{\Var \setminus \{ x \}}))}_{j}
				\odot (\underbrace{(\denot{d} \odot (v \mapsto [x \mapsto v])) \oplus (m_1 \mapsto \dunit(m_1))}_{j_1}
				\oplus \underbrace{m_2 \mapsto \dunit(m_2)}_{j_2})
\end{align*}
where $m : \R{\Var}$, $\denot{d} : \R{\FV(d)} \to \DR{\Val}$, $m_1 :
\R{\FV(d)}$, and $m_2 : \R{\Var \setminus \FV(d) \setminus \{ x \} }$. Note
that even though the components of $j$ do not preserve input to output,
$j$ itself does preserve input to output; $j_1$ and $j_2$ also evidently
have this property. Now since $k \sqsubseteq j$ and $k_1 \models P$, we
have $j \models P$. Since $j_1 \sqsubseteq j_1 \oplus j_2$ and $j_1
\models \kp{\FV(d)}{x \sim d}$, we have $j_1 \oplus j_2 \models
\kp{\FV(d)}{x \sim d}$ as well. Thus, we conclude $f' \models P \depand
\kp{\FV(d)}{x \sim d}$.
					\item[Case: \textsc{Skip}.]
Trivial.
					\item[Case: \textsc{Seqn}.]
Trivial.
					\item[Case: \textsc{DCond}.]
Since all assertions are in \RLOGIC, we have $\FFV(P) \subseteq
\SFV\kp{\Exact{\emptyset}}{\Dist[b]} = \{ b \}$. Since $f \models
\kp{\Exact{\emptyset}}{\Dist[b]}$, there exists $k_1, k_2$ such that $k_1
\odot k_2 = f$, with $k_1 \models
\kp{\Exact{\emptyset}}{\Dist[b]}$ and $k_2 \models P$.

By restriction (\cref{thm:restriction}), there exists $j_1$ such that $j_1 \sqsubseteq k_1$ and
\begin{align*}
	\dom(j_1) &\subseteq \FFV\kp{\Exact{\emptyset}}{\Dist[b]} = \emptyset \\
	\{ b \} &= \SFV\kp{\Exact{\emptyset}}{\Dist[b]}
	\subseteq \range(j_1)
	\subseteq \FV\kp{\Exact{\emptyset}}{\Dist[b]}
	= \{ b \} .
\end{align*}

By restriction (\cref{thm:restriction}), there exists $j_2$ such that $j_2
\sqsubseteq k_2$ and $j_2 \models P$, and $\dom(j_2) \subseteq \FFV(P)
\subseteq \SFV\kp{\Exact{\emptyset}}{\Dist[b]} = \{ b \}$. Since
$\dom(k_2) = \range(k_1) \supseteq \{ b \}$, we may assume without loss of
generality that $j_2 \models P$, $j_2 \sqsubseteq k_2$, and $\dom(j_2) =
\{ b \}$.  Thus $j_1 \odot j_2$ is defined, and so $j_1 \odot j_2
\sqsubseteq k_1 \odot k_2 \sqsubseteq f$ by \cref{odotdownwards}.

By \cref{lemma:condfirst}, there exists $j : \R{\range(j_2)} \to \DR{\Var}$
such that $j_1 \odot (j_2 \odot j) = (j_1 \odot j_2) \odot j = f$. Since
$j_2 \sqsubseteq j_2 \odot j$, we have $j_2 \odot j \models P$. Thus, we
may assume without loss of generality that $\range(j_2) = \Var$ and $j_1
\odot j_2 = f = \overline{\mu}$.

Let $l_{\ktt}, l_{\kff} : \R{\emptyset} \to \DR{b}$ be
defined by $l_{\ktt}(\empmem) = \dunit{[b = \ktt]}$ and
$l_{\kff}(\empmem) = \dunit{[b = \kff]}$; evidently, $l_{\ktt}
\models \kp{\Exact{\emptyset}}{b = \ktt}$ and $l_{\kff} \models
\kp{\Exact{\emptyset}}{b = \kff}$. Now, we have:
\begin{align*}
	f_{\dcond{\mu}{\denot{b = \ktt}}} &= l_{\ktt} \odot j_2
	\\
	f_{\dcond{\mu}{\denot{b = \kff}}} &= l_{\kff} \odot j_2
\end{align*}
where each equality holds if the left side is defined. Regardless of
whether the conditional distributions are defined, we always have:
\begin{align*}
	l_{\ktt} \odot j_2 &\models \kp{\Exact{\emptyset}}{b = \ktt} \depand P
	\\
	l_{\kff} \odot j_2 &\models \kp{\Exact{\emptyset}}{b = \kff} \depand P .
\end{align*}
Since both of these kernels have empty domain, we have $l_{\ktt} \odot j_2
= \overline{\nu_{\ktt}}$ and $l_{\kff} \odot j_2 = \overline{\nu_{\kff}}$
for two distributions $\nu_{\ktt}, \nu_{\kff} \in \DR{\Var}$. By
induction, we have:
\begin{align*}
	f_{\denot{c}\nu_{\ktt}} &\models \kp{\Exact{\emptyset}}{b = \ktt} \depand \kp{b : b = \ktt}{Q_1}
	\\
	f_{\denot{c}\nu_{\kff}} &\models \kp{\Exact{\emptyset}}{b = \kff} \depand \kp{b : b = \kff}{Q_2} .
\end{align*}
By similar reasoning as for the pre-conditions, there exists $k_1', k_2' :
\R{b} \to \DR{\Var}$ such that $k_1' \models \kp{b : b = \ktt}{Q_1}$ and $k_2'
\models \kp{b : b = \kff}{Q_2}$,
and:
\begin{align*}
	f_{\denot{c}\nu_{\ktt}} &= l_{\ktt} \odot k_1'
	\qquad \qquad
	f_{\denot{c}\nu_{\kff}} = l_{\kff} \odot k_2' .
\end{align*}
Let $k' : \R{b} \to \DR{\Var}$ be the composite kernel defined as follows:
\[
	k'([b \mapsto v]) \triangleq
	\begin{cases}
		k_1'([b \mapsto \ktt]) &: v = \ktt \\
		k_2'([b \mapsto \kff]) &: v = \kff
	\end{cases} .
\]
By assumption, $k' \models (\kp{b : b = \ktt}{Q_1} \land \kp{b : b = \kff}{Q_2})$.
Now, let $p \triangleq \mu(\denot{b = \ktt})$ be the probability of taking
the first branch. Then we can conclude:
\begin{align*}
	f_{\denot{\Cond{b}{c}{c'}}\mu}
		&= f_{\dconv{p}
			{\denot{c} (\dcond{\mu}{\denot{b = \ktt}})}
		{\denot{c'} (\dcond{\mu}{\denot{b = \ktt}})}}
          \\
        &= f_{\dconv{p}
          {\denot{c}\nu_{\ktt}}
          {\denot{c}\nu_{\kff}}}
          \\
        &= \kconv{p}
          {f_{\denot{c}\nu_{\ktt}}}
          {f_{\denot{c}\nu_{\kff}}}
          \\
        &= \kconv{p} {(l_{\ktt} \odot k_1')} {(l_{\kff} \odot k_2')}
          \\
        &= \kconv{p} {(l_{\ktt} \odot k')} {(l_{\kff} \odot k')}
          \\
        &= (\kconv{p} {l_{\ktt}} {l_{\kff}}) \odot k'
          \\
        &\models \kp{\Exact{\emptyset}}{\Dist[b]} \depand (\kp{b : b = \ktt}{Q_1} \land \kp{b : b = \kff}{Q_2}) .
      \end{align*}
      Above, $\kconv{p}{k_1}{k_2}$ lifts the convex combination operator from
      distributions to kernels from $\R{\emptyset}$. We show the last equality
      in more detail. For any $r \in \R{\Var}$:
      \begin{align*}
        &\kconv{p} {(l_{\ktt} \odot k')} {(l_{\kff} \odot k')}(\empmem)(r)\\
        &= p \cdot (l_{\ktt} \odot k')(\empmem)(r) + (1 - p) \cdot (l_{\kff} \odot k')(\empmem)(r) \\
        &= p \cdot (l_{\ktt} \odot k')(\empmem)(r) + (1 - p) \cdot (l_{\kff} \odot k')(\empmem)(r) \\
        &= p \cdot l_{\ktt}(\empmem)(b \mapsto \ktt) \cdot k'(b \mapsto \ktt)(r)
        + (1 - p) \cdot l_{\kff}(\empmem)(b \mapsto \kff) \cdot k'(b \mapsto \kff)(r) \\
        &= ((\kconv{p}{l_{\ktt}}{l_{\kff}}) \odot k')(\empmem)(r) .
      \end{align*}
      where the penultimate equality holds because $l_{\ktt}$ and $l_{\kff}$ are
      deterministic.
    % \item[Case: \textsc{True}.]
    %   Trivial.
    \item[Case: \textsc{Weak}.]
      Trivial.
    % \item[Case: \textsc{Conj}.]
    %   Trivial.
    % \item[Case: \textsc{Disj}.]
    %   Trivial.
    %\item[Case: \textsc{DCase}.]
    %  Essentially the same as \textsc{DCond}, using linearity of the program
    %  semantics~\citep{kozen81}: for any $\mu \in \DR{\Var}$, we have
    %  %
    %  \begin{align*}
    %    \denot{c}\mu
    %    &= \denot{c} (\dconv{p} {\dcond{\mu}{\denot{b = \ktt}}} {\dcond{\mu}{\denot{b = \kff}}})
    %    \\
    %    &= \dconv{p} {\denot{c}(\dcond{\mu}{\denot{b = \ktt}})} {\denot{c}(\dcond{\mu}{\denot{b = \kff}})} ,
    %  \end{align*}
    %  %
    %  where $p \triangleq \mu(\denot{b = \ktt})$.
    % \item[Case: \textsc{Const}.]
    %   Let $f' = \overline{\denot{c}\mu}$ be the lifted output.  By restriction
    %   (\cref{thm:restriction}), there exists $k \sqsubseteq f$ such that $k
    %   \models R$ and $\range(k) \subseteq \FV(R)$; let $K = \range(k)$. Since
    %   the free variables of $R$ are not modified by $c$, we have $\pi_{K}\mu =
    %   \pi_{K}\denot{c}\mu$. By \cref{lemma:condfirst}, there exists $k_1', k_2'$ such
    %   that $f' = k_1' \odot k_2'$ and $\range(k_1') = K$. By uniqueness $k_1'$
    %   must be equal to $k$, and so $f' \models R$. But by induction, we also
    %   have $f' \models Q$. So, $f' \models Q \land R$ as desired.
    \item[Case: \textsc{Frame}.]
      The proof for this case follows the argument for \textsc{Frame} rule in
      PSL, with a few minor changes.

      There exists $k_1, k_2$ such that $k_1 \oplus k_2 \sqsubseteq f$, and $k_1
      \models P$ and $k_2 \models R$; let $S_1 \triangleq \range(k_1)$, and note
      that $\RV(c) \subseteq S_1$ by the last side-condition.  By restriction
      (\cref{thm:restriction}), there exists $k_2' \sqsubseteq k_2$ such that
      $k_2' \models R$ and $\range(k_2') \subseteq \FV(R)$; let $S_2 \triangleq
      \range(k_2')$. Since $k_1$ and $k_2$ have empty domains, $S_1$ and $S_2$
      must be disjoint. Let $S_3 = \Var \setminus S_2 \setminus S_1$. Since
      $\WV(c)$ is disjoint from $S_2$ by the first side-condition, we have $\WV(c)
      \subseteq S_1 \cup S_3$.

      Let $f' = f_{\denot{c}\mu}$ be the lifted output. By induction, we
      have $f' \models Q$; by restriction (\cref{thm:restriction}), there
      exists $k_1' \sqsubseteq f'$ such that $\range(k_1') \subseteq \FV(Q)$ and
      $k_1' \models Q$. By the third side condition, $\RV(c) \subseteq \SFV(P)
      \subseteq S_1$.

      By soundness of $\RV$ and $\WV$ (\cref{lem:fv}), all variables in $\WV(c)$
      must be written before they are read and there is a function $F : \R{S_1}
      \to \DR{\WV(c) \cup S_1}$ such that:
      \[
        \pi_{\WV(c) \cup S_1} \denot{c} \mu = \dbind(\mu, m \mapsto F(m^{S_1})) .
      \]
      Since $S_2 \subseteq \FV(R)$, variables in $S_2$ are not in $\MV(c)$ by the
      first side-condition, and $S_2$ is disjoint from $\WV(c) \cup S_1$.
      By soundness of $\MV$, we have:
      \[
        \pi_{\WV(c) \cup S_1 \cup S_2} \denot{c}\mu
        = \dbind(\pi_{\WV(c) \cup S_1 \cup S_2} \mu, F \oplus \dunit)
      \]
      where $\dunit : \R{\WV(c) \cup S_2} \to \DR{\WV(c) \cup S_2}$.

      Since $S_1$ and $S_2$ are independent in $\mu$, we know that $S_1 \cup
      \WV(c)$ and $S_2$ are independent in $\denot{c}\mu$. Hence:
      \[
       f_{\pi_{S_1 \cup \WV(c)}\denot{c}\mu}
        \oplus f_{\pi_{S_2}\denot{c}\mu}
        \sqsubseteq f' .
      \]
      By induction, $f' \models Q$. Furthermore, $\FV(Q) \subseteq \SFV(P) \cup
      \WV(c) \subseteq S_1 \cup \WV(c)$ by the second side-condition.  By
      restriction (\cref{thm:restriction}), $f_{\pi_{S_1 \cup
      \WV(c)}\denot{c}\mu} \models Q$.  Furthermore, $\pi_{S_2}\denot{c}\mu =
      \pi_{S_2} \mu$, so $\pi_{S_2}\denot{c}\mu \models R$ as well. Thus, $f'
      \models Q \sepand R$ as desired.
    % \item[Case: \textsc{DCond*}.]
    %   \jh{UNSOUND}
    % \item[Case: \textsc{DCase*}.]
    %   \jh{UNSOUND}
  \end{description}
\end{proof}

\subsection{Section~\ref{sec:cpsl_ex}, proving CI: omitted proofs}
\label{app:extraaxioms}

\RLOGICAx*
\begin{proof} We prove them one by one.
	\begin{description}[leftmargin=*]
		\item [\ref{ax:dep2sep}]
			We want to show that when $(P \depand Q) \depand R$, $P \depand (Q \sepand R)$ are both formula in $\RLOGIC$,
			$f \models  (P \depand Q ) \depand R$ implies $f \models P \depand (Q \sepand R)$.

		By proof system of \LOGIC, $f \models  (P \depand Q ) \depand R$ implies that
		$f \models  P \depand \big(Q \depand R\big)$.
		While $ P \depand \big(Q \depand R\big)$ may not satisfy the restriction property, that is okay because
		we will only used conditions guaranteed by the fact that
		$(P \depand Q) \depand R$, $P \depand (Q \sepand R) \in \Frestrict$.
	In particular, we rely on $P, Q, R$ each satisfies restriction, and
	\(\FFV(Q \sepand R) \subseteq \SFV(P),\)
	which implies that
	\begin{align}
		\FFV(R) \subseteq \FFV(Q \sepand R) \subseteq \SFV(P) \label{FFVRinSFVP}
	\end{align}

	$f \models  P \depand \big(Q \depand R\big)$ implies there exists $f_p, f_q, f_r$ such that $f_p \models P$, $f_q \models Q$, and  $f_r \models R$, and $f_p \odot (f_q \odot f_r) = f$.

By restriction property~\cref{thm:restriction}, $f_q \models Q$ implies that there exists $f_q' \sqsubseteq f_q$ such that $\SFV(Q) \subseteq \range(f_q') \subseteq \FV(Q)$ and $\dom(f_q') \subseteq\FFV(Q)$. $f_q' \sqsubseteq f_q$ so there exists $v, T$ such that $f_q = (f_q' \oplus_k \unit_T) \odot v$.

	Similarly, $f_r \models R$, by~\cref{thm:restriction}, there exists $f_r' \sqsubseteq f_r$ such that $\SFV(R) \subseteq \range(f_r') \subseteq \FV(R)$ and $\dom(f_r') \subseteq\FFV(R)$.	$f_r' \sqsubseteq f_r$ so there exists $u, S$ such that $f_r = (f_r' \oplus_k \unit_S) \odot u$.

Now,	we claim that $\FFV(R) \subseteq \dom(f_q' \oplus \unit_{T})$:

By~\cref{thm:restriction}
	$f_p \models P$ implies that there exists $f_p' \sqsubseteq f_p$ such that $\SFV(P) \subseteq \range(f_p') \subseteq \FV(P)$,
	$\dom(f_p') \subseteq F\FV(P),\, \text{and } f_p' \models P. $
	Thus, 	$\SFV(P) \subseteq \range(f_p) = \dom(f_q).$

	Recall that $\FFV(R) \subseteq \SFV(P)$, so $\FFV(R) \subseteq \dom{f_q} = \dom{f_q' \oplus \unit_T}$.

	As a corollary, we have $\dom(f_r') \subseteq \FFV(R) \subseteq \dom(f_q' \oplus \unit_T) \subseteq \dom(v)$, and $\dom(f_r') \subseteq \FFV(R) \subseteq \dom(f_q' \oplus \unit_T)$.
	Then,
	\begin{align*}
		f_q \odot f_r &=  \big( (f_q' \oplus \unit_T) \odot v \big) \odot \big( (f_r' \oplus \unit_S) \odot u \big) \\
				&=  (f_q' \oplus \unit_T) \odot \big( v \odot (f_r' \oplus \unit_S) \big) \odot u \tag{By standard associativity of $\odot$}\\
				&=  (f_q' \oplus \unit_T) \odot ( f_r' \oplus v  ) \odot u
				\tag{By~\cref{odot2oplus} and $\dom(f_r') \subseteq \dom(v)$} \\
				&=  (f_q' \oplus \unit_T) \odot ( (f_r' \odot \unit_{\range(f_r')}) \oplus ( \unit_{\dom(v)} \odot v) \odot u\\
				&=  (f_q' \oplus \unit_T) \odot ( f_r' \oplus \unit_{\dom(v)}) \odot (\unit_{\range(f_r')} \oplus v) \odot u \tag{$\heartsuit$} \\
				&= ((f_q' \oplus \unit_T) \oplus f_r') \odot ( v \oplus \unit_{\range(f_r')}) \odot u
				\tag{$\dagger$} \\
				&= ((f_q' \oplus \unit_T) \odot v) \oplus (f_r' \odot \unit_{\range(f_r')}) \odot u \tag{$\heartsuit$} \\
				&= f_q \oplus f_r
	\end{align*}
	where $\dagger$ follows from ~\cref{odot2oplus}, $\dom(f_r') \subseteq \dom(f_q' \oplus \unit_T)$ and exact commutativity,
	$\heartsuit$ follows from ~\cref{revexeq} and~\cref{exch_revex}.

	Thus,
	\(f_q \odot f_r \models Q \sepand R \).
	And by satisfaction rules,
	\begin{align*}
		f 	\models P \depand (Q \sepand R)
	\end{align*}
%-----------------------
\item [\ref{ax:dep2sep2}]
	We want to show that under the special condition $\FFV(Q) = \emptyset$,
	$f \models P \depand Q$ implies that $f \models P \sepand Q$.

					If $f \models P \depand Q$, then there exists $f_p, f_q$ such that $f_p \odot f_q = f$ and $f_p \models P$, $f_q \models Q$.

By restriction property~\cref{thm:restriction}, 	$f_q \models Q$ implies that there exists $f_q' \sqsubseteq f_q$ such that $\SFV(Q) \subseteq \range(f_q') \subseteq \FV(Q)$ and $\dom(f_q') \subseteq\FFV(Q)$. $f_q' \sqsubseteq f_q$ so there exists $v, T$ such that $f_q = (f_q' \oplus_k \unit_T) \odot v$.

Since $\dom(f_q') \subseteq\FFV(Q) $ and $\FFV(Q) = \emptyset$,
it must $\dom(f_q') = \emptyset$, and thus no matter what the domain of $f_p$ is,
$\dom(f_q') \subseteq \dom(f_p)$.
Thus,
\begin{align*}
	f_p \odot f_q &= f_p \odot (f_q' \oplus \unit_T) \odot v \\
			&= (f_p \oplus f_q') \oplus v
			\tag{By~\cref{odot2oplus} and $\dom(f_q') \subseteq \dom(f_p)$}
\end{align*}
Thus, $f_p \oplus f_q' \sqsubseteq f_p \odot f_q = f$.
By satisfaction rules, $f_p \models P$ and $f_q' \models Q$ implies that
$f_p \oplus f_q' \models P \sepand Q$. Thus, by persistence,
$f \models P \sepand Q$

%-----------------------
\item [\ref{ax:padding}]
	We want to show that when $ P \depand Q$, $P \depand (Q \sepand \pair{S}{S})$
	are both in $\Frestrict$, $f \models P \depand Q$ implies $f \models P \depand (Q \sepand \pair{S}{S})$.

One key guarantee we rely on from the grammar of $\Frestrict$ is that
	\[\FFV(Q) \cup S = \FFV (Q \sepand \pair{S}{S})  \subseteq \SFV(P). \]

				When $f \models P \depand Q$, there exists $f_p, f_q$ such that $f_p \odot f_q = f$ and $f_p \models P$, $f_q \models Q$,

	By~\cref{thm:restriction}, $f_p \models P$ implies that there exists $f_p' \sqsubseteq f_p$ such that $\SFV(P) \subseteq \range(f_p') \subseteq \FV(P)$,
	$\dom(f_p') \subseteq F\FV(P),\, \text{and } f_p' \models P. $
	By the fact that $f_p \odot f_q$ is defined,
	and that the definition of preorder in our concrete models, $f_p' \sqsubseteq f_p$ implies
	\[ \dom(f_q) = \range(f_p) \supseteq \range(f_p') \supseteq \SFV(P) \supseteq S \]
	Since $f_q$ preserves input,  $S \subseteq \dom(f_q)$ implies that $f_q = f_q \oplus \unit_{S}$, and thus \(f_p \odot f_q = f_p \odot (f_q \oplus \unit_{S}) \).

	Note that $\unit_S \models \kp{\Exact{S}}{\Dist[S]}$, and $f_q \models Q$.
	Thus, $f_q \oplus \unit_S \models Q \sepand \kp{S}{\Dist[S]}$.
	Since $f_p \models P$, it follows that
	\[f_p \odot (f_q \oplus \unit_S) \models P \depand (Q \sepand \kp{S}{\Dist[S]})\]
	Since	$f = f_p \odot f_q = f_p \odot (f_q \oplus \unit_{S})$,
	\[f \models P \depand (Q \sepand \kp{S}{\Dist[S]})\]
%-----------------------
		\item [\ref{ax:exch}]
			We want to show that
			when $	(P \sepand Q) \depand (R \sepand S) $ and $(P \depand R) \sepand (Q \depand S)$ are both formula in $\Frestrict$,
			 $f \models 	(P \sepand Q) \depand (R \sepand S) $ implies
$f \models (P \sepand R) \sepand (Q \sepand S)$.

The key properties that being in $\Frestrict$ guarantees us are that
\begin{align*}
	&\FFV(R) \subseteq \SFV(P) \qquad \FFV(S) \subseteq \SFV(Q) \\
	&\FFV(R \sepand S) = \FFV(R) \cup \FFV(S) \subseteq \SFV(P \sepand Q) = \SFV(P) \cup \SFV(Q)
\end{align*}
			If $f \models (P \sepand Q) \depand (R \sepand S) $, then there exists $f_1, f_2$ such that $f_1 \odot f_2 = f$, $f_1 \models P \sepand Q$, $f_2 \models R \sepand S$. That is, there exist $u_1, v_1$ such that $u_1 \oplus v_1 \sqsubseteq f_1$, $u_1 \models P$, and $v_1 \models Q$; there exist $u_2, v_2$ such that $u_2 \oplus v_2 \sqsubseteq f_2$, $u_2 \models R$, $v_2 \models S$.

		By~\cref{thm:restriction},
		\begin{itemize}
			\item $u_1 \models P$ implies there exists $u_1' \sqsubseteq u_1$ such that $\SFV(P) \subseteq \range(u_1') \subseteq \FV(P)$, $\dom(u_1') \subseteq \FFV(P)$, and $u_1' \models P. $

			\item $v_1 \models Q$ implies there exists $v_1' \sqsubseteq v_1$ such that $\SFV(Q) \subseteq \range(v_1') \subseteq \FV(Q)$, $\dom(v_1') \subseteq \FFV(Q)$, and $v_1' \models Q. $

			\item $u_2 \models R$ implies there exists $u_2' \sqsubseteq u_2$ such that $\SFV(R) \subseteq \range(u_2') \subseteq \FV(R)$, $\dom(u_2') \subseteq \FFV(R)$, and $u_2' \models R. $

			\item $v_2 \models S$ implies there exists $v_2' \sqsubseteq v_2$ such that $\SFV(S) \subseteq \range(v_2') \subseteq \FV(S)$, $\dom(v_2') \subseteq \FFV(S)$, and $v_2' \models S. $
		\end{itemize}
			By Downwards closure property of $\oplus$, $u_2' \oplus v_2'$ is defined and $u_2' \oplus v_2' \sqsubseteq u_2 \oplus v_2 \sqsubseteq f_2$.
		Say that $f_1 =  (u_1 \oplus v_1 \oplus \unit_{S_1}) \odot h_1$, $f_2 =  (u_2' \oplus v_2' \oplus \unit_{S_2}) \odot h_2$.
	Also,
		\begin{align*}
			\dom(u_2' \oplus v_2') &= \dom(u_2') \cup \dom(v_2') \subseteq \FFV(R) \cup \FFV(S)  \subseteq \SFV(P) \cup \FFV(Q) \\
&\subseteq \range(u_1') \cup \range(v_1') \subseteq \range(u_1) \cup \range(v_1) = \range(u_1 \oplus v_1)
		\end{align*}

		Then
		\begin{align*}
			f_1 \odot f_2 &=(u_1 \oplus v_1 \oplus \unit_{S_1}) \odot h_1 \odot (u_2' \oplus v_2' \oplus \unit_{S_2}) \odot h_2 \\
					&= (u_1 \oplus v_1 \oplus \unit_{S_1}) \odot ((u_2' \oplus v_2' ) \oplus h_1) \odot h_2
					\tag{$\heartsuit$} \\
					&= (u_1 \oplus v_1 \oplus \unit_{S_1}) \odot ((u_2' \oplus v_2') \odot \unit_{\range(u_2' \oplus v_2')}) \oplus (\unit_{\dom(h_1)} \odot h_1) \odot h_2 \\
					&= (u_1 \oplus v_1 \oplus \unit_{S_1}) \odot (u_2' \oplus v_2' \oplus \unit_{\dom(h_1)})
					 \odot (\unit_{\range(u_2' \oplus v_2')} \oplus h_1) \odot h_2 \tag{$\dagger$} \\
					&= (u_1 \oplus v_1 \oplus \unit_{S_1}) \odot (u_2' \oplus v_2' \oplus \unit_{\range(u_1 \oplus v_1)} \oplus \unit_{S_1})
					\odot ( \unit_{\range(u_2' \oplus v_2')} \oplus h_1) \odot h_2  \\
					&= \big( ((u_1 \oplus v_1) \odot (u_2' \oplus v_2' \oplus \unit_{\range(u_1 \oplus v_1)} )) \oplus \unit_{S_1}\big) \odot (\unit_{\range(u_2' \oplus v_2')} \oplus h_1) \odot h_2 \tag{$\dagger$} \\
					&= \big((u_1 \odot (u_2' \oplus \unit_{\range(u_1)})) \oplus (v_1 \odot (v_2' \oplus \unit_{\range(v_1)})) \oplus \unit_{S_1} \big) \\
					&\odot  (\unit_{\range(u_2' \oplus v_2')} \oplus h_1) \odot h_2  \tag{$\dagger$ and exact commutativity, associativity}
		\end{align*}
		where		$\heartsuit$ follows from~\cref{odot2oplus}, $\dom(u_2' \oplus v_2') \subseteq \range(u_1 \oplus v_1) \subseteq \dom(h_1)$,
		and  $\dagger$ follows from~\cref{revexeq} and~\cref{exch_revex}.

		Thus, $ (u_1 \odot (u_2' \oplus \unit_{\range(u_1)})) \oplus (v_1 \odot (v_2' \oplus \unit_{\range(v_1)})) \sqsubseteq f_1 \odot f_2$.
 Recall that $u_2' \models R$. By persistence, $u_2' \oplus \unit_{\range(u_1)} \models R$. Similarly, $v_2' \models S$, so by persistence, $v_2' \oplus \unit_{\range(v_1)} \models S$. 	Therefore,
		\[ (u_1 \odot (u_2' \oplus \unit_{\range(u_1)})) \oplus (v_1 \odot (v_2' \oplus \unit_{\range(v_1)})) \models (P \depand R) \sepand (Q \depand S)\]

		Then, by persistence, $f \models  (P \depand R) \sepand (Q \depand S)$.

	\end{description}
\end{proof}

\APAx*
\begin{proof}
	We prove it one by one.
	\begin{description}[leftmargin=*]
		\item[\ref{ax:revpar}]
			Given any $f \models \kp{S}{\Dist[A] \sepand \Dist[B]}$, by satisfaction rules and semantic of atomic propositions, there exists
	$f' \sqsubseteq f$ such that for all $m \in M_d$ such that $m \models_d \Exact{S}$, $f'(m) \models_r \Dist[A] \sepand \Dist[B] $.

	Since $f'(m)$ is defined and  $f'(m) \models_r \Dist[A] \sepand \Dist[B] $,
	it follows that $\dom(f') = S$ and $\range(f') \supseteq S \cup A \cup B$.
	Thus, we can define $f_1 = \pi_{S \cup A} f'$, $f_2 = \pi_{S \cup B} f'$.
	Note that  $f_1 \models \pair{S}{A}$, $f_2 \models \pair{S}{B}$.
	Also, because $A \cap B \subseteq S$,
	\[\range(f_1) \cap \range(f_2) = ( S \cup A) \cap (S \cup B) = S, \]
and thus $f_1 \oplus f_2$ is defined.
We now want to show that $f_1 \oplus f_2 \sqsubseteq f$.

Note	$f'(m) \models_r \Dist[A] \sepand \Dist[B]$ implies that
	there exists $\mu_1, \mu_2$ such that $\mu_1 \oplus_r \mu_2 \sqsubseteq f'(m)$, and  $dom(\mu_1) \supseteq A$, $dom(\mu_2) \supseteq B$.
	Since $f'$ preserves input on its domain $S$, $\pi_S f'(m) = \unit(m)$, so
	$(\mu_1 \oplus_r \unit(m)) \oplus_r (\mu_2 \oplus_r \unit(m)) \sqsubseteq
	f'(m) \oplus_r \unit(m) \oplus_r \unit(m)= f'(m)$ too.
	Let $\mu_1' = \pi_{A \cup S} (\mu_1 \oplus_r \unit(m)) $ and $\mu_2' = \pi_{B \cup S} (\mu_2 \oplus_r \unit(m)) $. Then due to Downwards closure in $M_d$,
	$\mu_1' \oplus_r \mu_2'$ will also be defined, and
	\[\mu_1' \oplus_r \mu_2' \sqsubseteq (\mu_1 \oplus_r \unit(m)) \oplus_r (\mu_2 \oplus_r \unit(m)) \sqsubseteq f'(m) ,\]
	which implies that $\mu_1' \oplus_r \mu_2'= \pi_{S \cup A \cup B} f'(m)$.
In the range model, this means that
$\mu_1' = \pi_{S \cup A} f'(m)$, $\mu_2' = \pi_{S \cup B} f'(m)$.

Then
	for any $m' \in \R{S}$, any $r \in \R{A \cup B \cup S}$,
	\begin{align*}
		(\pi_{S \cup A \cup B} f') (m')(r) &= (\pi_{S \cup A \cup B} f'(m')) (r) = \mu_1' \oplus_r \mu_2'(r) =\mu_1'(r^{S \cup A}) \cdot \mu_2'(r^{S \cup B}) \\
		(f_1 \oplus f_2) (m') (r)& = f_1(m')(r^{S \cup A}) \cdot f_2(m')(r^{S \cup B})\\&=  (\pi_{S \cup A} f') (m') (r^{S \cup A}) \cdot (\pi_{S \cup B} f' (m') (r^{S \cup B}) \\
																											&=  \mu_1' (r^{S \cup A}) \cdot \mu_2'(r^{S \cup B}) \
	\end{align*}

Thus, $f_1 \oplus f_2 = 		\pi_{S \cup A \cup B} f'$, which implies that $f_1 \oplus f_2 \sqsubseteq f$.
By their types,  $f_1 \oplus f_2  \models \pair{S}{A} \sepand \pair{S}{B}$.

By persistence, $f \models \pair{S}{A} \sepand \pair{S}{B}$.

		\item[\ref{ax:unionran}] Obvious from the semantics of atomic proposition and the range logic.
		\item[\ref{ax:atomseq}]
				Given any $f \models \pair{A}{B} \depand \pair{B}{C}$, by satisfaction rules and semantic of atomic propositions, there exists
		\begin{itemize}
			\item $f_1, f_2$ such that $f_1 \odot f_2 = f$;
			\item $f_1' \sqsubseteq f_1$ such that for any $m \in M_d$ such that $m \models_d \Exact{A}$, $f_1' (m) \models_r \Dist[B]$.
			\item $f_2' \sqsubseteq f_2$ such that for any $m \in M_d$ such that $m \models_d \Exact{B}$, $f_2' (m) \models_r \Dist[C]$.
		\end{itemize}
		Note that $f_1' (m) \models_r \Dist[B]$ implies that $B \subseteq \range(f'_1)$, so
		$ \pi_{B} f_1'$ is defined. Let $f_1 '' = \pi_{B} f_1'$.

		Note that for any $m \in M_d$ such that $m \models_d \Exact{A}$, $f_1'' (m) \models_r \Dist[B]$ too, so $f'' \models \pair{A}{B}$ too. Also, by transitivity, $f_1'' \sqsubseteq f_1' \sqsubseteq f_1$.

		Say $f_1 = (f_1'' \oplus \eta_{S_1}) \odot v_1$, $f_2 = (f_2' \oplus \eta_{S_2}) \odot v_2$, then since $\range(f_1'') = B = \dom(f_2')$,
		\begin{align*}
			f_1 \odot f_2 &= (f_1 '' \oplus \eta_{S_1}) \odot v_1 \odot (f_2' \oplus \eta_{S_2}) \odot v_2 \\
					&= (f_1 '' \oplus \eta_{S_1}) \odot ( f_2' \oplus v_1) \odot v_2 \tag{By~\cref{odot2oplus} and $\dom(f_2') = B = \range(f_1'') \subseteq \dom(v_1)$} \\
					&= (f_1 '' \oplus \eta_{S_1}) \odot (f_2' \oplus \eta_{\dom(v_1)}) \odot (v_1 \oplus \eta_{\range(f_1)}) \odot v_2 \tag{By~\cref{oplus2odot}}\\
					&= (f_1 '' \oplus \eta_{S_1}) \odot (f_2' \oplus \eta_{S}) \odot (v_1 \oplus \eta_{\range(f_1)}) \odot v_2 \\
					&= ((f_1 '' \odot f_2') \oplus \eta_{S_1}) \odot (v_1 \oplus \eta_{\range(f_1)}) \odot v_2
		\end{align*}
		So $f_1'' \odot f_2' \sqsubseteq f_1 \odot f_2 = f$.

		$f_1''\colon \R{A} \to \DR{B}$, $f_2' \colon \R{B} \to \DR{\range(f_2')}A$, so $f_1'' \odot f_2' \colon \R{A} \to \DR{\range(f_2')}$.
		Since $\range(f_2') \supseteq C$, it follows that $f_1'' \odot f_2' \models \pair{A}{C}$, and thus $f \models \pair{A}{C}$ by persistence.
		\item[\ref{ax:repeatdom}]
			If $f \models \pair{A}{B}$, then there must exists $f' \sqsubseteq f$ such that for all $m \in M_d$ such that $m \models \Exact{A}$, $f'(m) \models_r \Dist[B]$.

			Given any witness $f'$, $f' = \unit_{\R{A}} \odot f'$, and also $f' \models_r \pair{A}{B}$.

			Note that $\unit_{\R{A}} \models_r \pair{A}{A}$,
			so $f' = \unit_{\R{A}} \odot f' \models \pair{A}{A} \depand \pair{A}{B}$.

		\item[\ref{ax:repeatran}]
			Analogous as the~\ref{ax:repeatdom} case, except that now using the fact $f' = f' \odot \unit_{\R{B}}$ for any $f' \colon \R{A} \to \DR{B}$.
	\end{description}
\end{proof}

            \subsection{Common properties of models \MD and \MP}
            \label{app:common}
            We define a more general class of models, parametric on a monad $\T$, which encompasses both our concrete models $\MP$ and $\MD$. We will call them $\T$-models and use their properties to simplify proofs of certain  properties of $\MD$ and $\MP$.
\begin{definition}[$\T$-models]\label{def:tmodel}
	We say that $(M, \sqsubseteq,\oplus,\odot, M)$ is a $\T$-model if it  satisfies the following conditions.
            \begin{enumerate}
              \item $M$ consists of all maps of the type $\m{S} \rightarrow \TR{S \cup U}$, where $S,U $ are finite subsets of $\Var$.
              \item All $m\in M$ preserve the input $m\colon  \R{S} \rightarrow \TR{S \cup U}$ is in $M$ only if $\pi_{S} m = \unit_{S}$;
              \item $\odot$ is defined to be the Kleisli composition associated with $\T$;
              \item $\oplus$ is deterministic and partial: $f \oplus g$ is defined when $\range(f) \cap \range(g)  =\dom(f) \cap \dom(g)$;
              \item $\oplus$ satisfies standard associativity: when both  $(f \oplus g) \oplus h$ and $f \oplus (g \oplus h)$ are defined, $(f \oplus g) \oplus h = f \oplus (g \oplus h)$;
		\item When $f \oplus g$ are $g \oplus f$ are both defined, $f \oplus g = g \oplus f$.
              \item For any $f\colon  \R{A} \rightarrow \TR{A \cup X} \in M$, and any $S \subseteq A$,
                \begin{align}
					f \oplus \unit_{S} = f 	\tag{Padding equality}\label{paddingdom}
                .\end{align}
              \item When both $(f_1 \oplus f_2) \odot (f_3 \oplus f_4)$ and $(f_1 \odot f_3) \oplus (f_2 \odot f_4)$ are defined,
                \begin{align}
                  \label{revexeq}
                  (f_1 \oplus f_2) \odot (f_3 \oplus f_4) &=(f_1 \odot f_3) \oplus (f_2 \odot f_4) \tag{Exchange equality}
                \end{align}
			\item $M$ is closed under $\oplus$ and $\odot$;
              \item For $ f, g \in M$, $f \sqsubseteq g$ if and only if there exist $v \in M$ and some finite set $S$ such that,
                \begin{align}
					g = (f \oplus \unit_{S}) \odot v \label{M:preorder}
                \end{align}
            \end{enumerate}
					\end{definition}

Below, we prove properties $\T$-models, which would be common properties of $\MD$ and $\MP$. Two main results are  that all $\T$-models are \LOGIC frames (\Cref{Mframeaxioms}).

\begin{lemma}[Standard associativity of $\oplus$]
	\label{standardassoc}
	For any $f_1, f_2, f_3 \in M$, $(f_1 \oplus f_2) \oplus f_3$ is defined if and only if $f_1 \oplus (f_2 \oplus f_3)$ is defined and they are equal.
\end{lemma}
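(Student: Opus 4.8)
The plan is to split the biconditional into two independent facts. When both $(f_1 \oplus f_2) \oplus f_3$ and $f_1 \oplus (f_2 \oplus f_3)$ are defined, their equality is exactly condition~(5) of \Cref{def:tmodel}, so it remains only to show that the two parenthesizations are defined under the \emph{same} circumstances. Throughout I write $\dm{i} = \dom(f_i)$ and $\rg{i} = \range(f_i)$, and I use that each $f_i \in M$ preserves its input, so $\dm{i} \subseteq \rg{i}$. Since $\oplus$ is parallel composition, $f \oplus g$ is defined iff $\range(f) \cap \range(g) = \dom(f) \cap \dom(g)$ (condition~(4)), and when defined it satisfies $\dom(f \oplus g) = \dom(f) \cup \dom(g)$ and $\range(f \oplus g) = \range(f) \cup \range(g)$.

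First I would unfold the two definedness conditions. The left term is defined iff (A1) $\rg{1} \cap \rg{2} = \dm{1} \cap \dm{2}$ and (A2) $(\rg{1} \cup \rg{2}) \cap \rg{3} = (\dm{1} \cup \dm{2}) \cap \dm{3}$; the right term is defined iff (B1) $\rg{2} \cap \rg{3} = \dm{2} \cap \dm{3}$ and (B2) $\rg{1} \cap (\rg{2} \cup \rg{3}) = \dm{1} \cap (\dm{2} \cup \dm{3})$. The key claim is that each of these pairs is equivalent to the single symmetric statement that \emph{all three} pairwise equalities $\rg{i} \cap \rg{j} = \dm{i} \cap \dm{j}$ hold. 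Establishing this claim immediately finishes the proof, since the symmetric statement is manifestly independent of the parenthesization.

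The nontrivial direction is extracting the pairwise equalities from (A1) together with (A2). Distributivity rewrites (A2) as $(\rg{1} \cap \rg{3}) \cup (\rg{2} \cap \rg{3}) = (\dm{1} \cap \dm{3}) \cup (\dm{2} \cap \dm{3})$, and $\dm{i} \subseteq \rg{i}$ gives the inclusions $\dm{i} \cap \dm{3} \subseteq \rg{i} \cap \rg{3}$, so only the reverse inclusions require work. Here is the crux: given $x \in \rg{1} \cap \rg{3}$, equation (A2) places $x$ in $\dm{1} \cap \dm{3}$ or in $\dm{2} \cap \dm{3}$; in the latter case $x \in \rg{1}$ and $x \in \dm{2} \subseteq \rg{2}$ force $x \in \rg{1} \cap \rg{2} = \dm{1} \cap \dm{2}$ by (A1), hence $x \in \dm{1}$, and since $x \in \dm{3}$ in both cases we conclude $x \in \dm{1} \cap \dm{3}$. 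This yields $\rg{1} \cap \rg{3} = \dm{1} \cap \dm{3}$, and the symmetric argument swapping the roles of $1$ and $2$ yields $\rg{2} \cap \rg{3} = \dm{2} \cap \dm{3}$; with (A1) all three pairwise equalities hold. Conversely, the three pairwise equalities give (A1) directly and recover (A2) by distributivity. An identical argument, now with $f_1$ in the pivot position, shows (B1) together with (B2) is likewise equivalent to the three pairwise equalities. The main obstacle is precisely this pull-back step, where one must combine $\dm{i} \subseteq \rg{i}$ with the already-available equality (A1) to force an element of a range-overlap back into the domains; the remaining manipulations are routine set algebra, and the equality of the two resulting kernels is handed to us by condition~(5).
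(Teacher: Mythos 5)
Your proof is correct, and its skeleton matches the paper's: unfold both definedness conditions into set equalities, prove the two pairs equivalent, then invoke condition~(5) of \cref{def:tmodel} for the equality of the two kernels. Where you genuinely depart from the paper is in how the equivalence of the definedness conditions is established. The paper proves $(\mathrm{A1},\mathrm{A2}) \Rightarrow (\mathrm{B1},\mathrm{B2})$ and the converse directly, via two chains of algebraic inclusions (distributing intersections over unions and repeatedly inserting $\dm{i} \subseteq \rg{i}$ and the hypothesized equalities); these chains are correct but somewhat opaque, and the two directions must each be verified separately. You instead interpose a parenthesization-independent invariant: each pair of conditions is equivalent to the symmetric statement that \emph{all three} pairwise equalities $\rg{i} \cap \rg{j} = \dm{i} \cap \dm{j}$ hold, proved by an element-chasing argument whose crux (an element of $\rg{1} \cap \rg{3}$ landing in $\dm{2} \cap \dm{3}$ gets forced into $\dm{1}$ via (A1)) is sound -- I checked both case splits and the converse direction by distributivity. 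This buys you two things the paper's proof lacks: the equivalence of the two parenthesizations becomes automatic (both reduce to the same symmetric condition, so no second direction needs separate verification), and the invariant makes transparent \emph{why} associativity holds -- definedness of any nested $\oplus$-expression over $f_1, f_2, f_3$ is really a property of the unordered triple. The cost is negligible; if anything your argument generalizes more readily to $n$-fold parallel compositions.
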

\begin{proof}
	$(f_1 \oplus f_2) \oplus f_3$ is defined if and only if $\rg{1} \cap \rg{2} = \dm{1} \cap \dm{2}$ and $(\rg{1} \cup \rg{2}) \cap \rg{3} = (\dm{1} \cup \dm{2}) \cap \dm{3}$.

	$f_1 \oplus (f_2 \oplus f_3)$ is defined if and only if
	$\rg{2} \cap \rg{3} =  \dm{2} \cap \dm{3}$ and $\rg{1} \cap (\rg{2} \cup \rg{3}) = \dm{1} \cap (\dm{2} \cup \dm{3})$.
	Thus, to show that $(f_1 \oplus f_2) \oplus f_3$ is defined
	if and only if $f_1 \oplus (f_2 \oplus f_3)$ is defined,
	it suffices to show that
	\begin{align}
		\rg{1} \cap \rg{2} &= \dm{1} \cap \dm{2}  \label{concrete:cond1}\\
		(\rg{1} \cup \rg{2}) \cap \rg{3} &= (\dm{1} \cup \dm{2}) \cap \dm{3} \label{concrete:cond2}
	\end{align}
	if and only if
	\begin{align}
		\rg{2} \cap \rg{3} &= \dm{2} \cap \dm{3} \label{concrete:cond3} \\
		\rg{1} \cap (\rg{2} \cup \rg{3}) &= \dm{1} \cap (\dm{2} \cup \dm{3}) \label{concrete:cond4}
	\end{align}

		We show that \cref{concrete:cond3} and \cref{concrete:cond4} follows from
		\cref{concrete:cond1} and \cref{concrete:cond2}:

	Recall that $\dm{1} \subseteq \rg{1}$, $\dm{2} \subseteq \rg{2}$, $\dm{3} \subseteq \rg{3}$, so
	\begin{itemize}
		\item \cref{concrete:cond3} follows from $\dm{2} \cap \dm{3} \subseteq \rg{2} \cap \rg{3}$ and $\dm{2} \cap \dm{3} \supseteq \rg{2} \cap \rg{3}$,
			which holds because
			\begin{align}
				\rg{2} \cap \rg{3} &= \rg{2} \cap (\rg{2} \cap \rg{3}) \subseteq \rg{2} \cap ((\rg{1} \cup \rg{2}) \cap \rg{3}) \notag \\
					&= \rg{2} \cap ((\dm{1} \cup \dm{2}) \cap \dm{3}) = \rg{2} \cap (\dm{1} \cap \dm{3}) \tag{By \cref{concrete:cond2}} \\
					&\subseteq (\rg{2} \cap \dm{1}) \cap \dm{3} \subseteq (\rg{2} \cap \rg{1}) \cap \dm{3} \tag{By $\dm{1} \subseteq \rg{1}$}\\
					&= (\dm{2} \cap \dm{1}) \cap \dm{3} \subseteq \dm{2} \cap \dm{3} \tag{By \cref{concrete:cond1}}
			\end{align}

		\item \cref{concrete:cond4} follows from $(\dm{1} \cup \dm{2}) \cap \dm{3} \subseteq (\rg{1} \cup \rg{2} ) \cap \rg{3}$ and
			$(\dm{1} \cup \dm{2}) \cap \dm{3} \supseteq (\rg{1} \cup \rg{2} ) \cap \rg{3}$, which holds because
			\begin{align*}
				\rg{1} \cap (\rg{2} \cup \rg{3}) &= (\rg{1} \cap \rg{2}) \cup (\rg{1} \cap \rg{3}) \subseteq (\rg{1} \cap \rg{2}) \cup (\rg{1} \cap (\rg{1} \cup \rg{2})\cap \rg{3}) \\
	&= (\dm{1} \cap \dm{2}) \cup (\rg{1} \cap (\dm{1} \cup \dm{2}) \cap \dm{3}) \tag{By \cref{concrete:cond1} and \cref{concrete:cond2}} \\
	&= (\dm{1} \cap \dm{2}) \cup ((\rg{1} \cap \dm{1} \cap \dm{3}) \cup (\rg{1} \cap \dm{2} \cap \dm{3}))  \\
	&\subseteq (\dm{1} \cap \dm{2}) \cup ((\dm{1} \cap \dm{3}) \cup (\rg{1} \cap \rg{2} \cap \dm{3})) \tag{By $\dm{2} \subseteq \rg{2}$} \\
	&\subseteq (\dm{1} \cap \dm{2}) \cup ((\dm{1} \cap \dm{3}) \cup (\dm{1} \cap \dm{2} \cap \dm{3})) \tag{By \cref{concrete:cond1}} \\
	&\subseteq (\dm{1} \cap \dm{2}) \cup (\dm{1} \cap \dm{3}) = \dm{1} \cap (\dm{2} \cup \dm{3})
			\end{align*}
	\end{itemize}

	We show that \cref{concrete:cond1} and \cref{concrete:cond2} follows from
		\cref{concrete:cond3} and \cref{concrete:cond4}:

	\begin{itemize}
		\item \cref{concrete:cond1} follows from $\dm{1} \cap \dm{2} \subseteq \rg{1} \cap \rg{2}$ and $\dm{1} \cap \dm{2} \supseteq \rg{1} \cap \rg{2}$,
			which holds because
			\begin{align*}
				\rg{1} \cap \rg{2} &= \rg{1} \cap (\rg{2} \cup \rg{3}) \cap \rg{2} = \dm{1} \cap (\dm{2} \cup \dm{3}) \cap \rg{2} \tag{By \cref{concrete:cond3}} \\
					&= \dm{1} \cap ((\dm{2} \cap \rg{2}) \cup (\dm{3} \cap \rg{2})) = \dm{1} \cap (\dm{2} \cup (\dm{3} \cap \rg{2})) \\
					&\subseteq \dm{1} \cap (\dm{2} \cup (\rg{1} \cap \rg{2})) \tag{By $\dm{2} \subseteq \rg{1}$} \\
					& = \dm{1} \cap (\dm{2} \cup (\dm{1} \cap \dm{2})) \tag{By \cref{concrete:cond3}} \\
					& = \dm{1} \cap \dm{2} \end{align*}
				\item \cref{concrete:cond2} follows from $(\dm{1} \cup \dm{2}) \cap \dm{3} \subseteq (\rg{1} \cup \rg{2}) \cap \rg{3}$ and
					$(\dm{1} \cup \dm{2}) \cap \dm{3} \supseteq (\rg{1} \cup \rg{2}) \cap \rg{3}$, which holds because
					\begin{align*}
(\rg{1} \cup \rg{2}) \cap \rg{3} &= (\rg{1} \cap \rg{3}) \cup (\rg{2} \cap \rg{3}) \\
			&= (\rg{1} \cap (\rg{2} \cup \rg{3}) \cap \rg{3}) \cup (\rg{2} \cap \rg{3}) \\
			&= (\dm{1} \cap (\dm{2} \cup \dm{3}) \cap \rg{3}) \cup (\dm{2} \cap \dm{3}) \tag{By \cref{concrete:cond4}}\\
			&= (\dm{1} \cap ((\dm{2} \cap \rg{3}) \cup (\dm{3} \cap \rg{3}))) \cup (\dm{2} \cap \dm{3}) \\
			&\subseteq (\dm{1} \cap ((\rg{2} \cap \rg{3}) \cup \dm{3})) \cup (\dm{2} \cap \dm{3}) \tag{By $\dm{2} \subseteq \rg{2}$, $\dm{3} \subseteq \rg{3}$} \\
			&=  (\dm{1} \cap ((\dm{2} \cap \dm{3}) \cup \dm{3})) \cup (\dm{2} \cap \dm{3}) \tag{By \cref{concrete:cond3}} \\
			&=  (\dm{1} \cap \dm{3}) \cup (\dm{2} \cap \dm{3}) = (\dm{1} \cup \dm{2}) \cap \dm{3}       \end{align*}
				\end{itemize}

				Thus, \cref{concrete:cond1} and \cref{concrete:cond2} hold if and only if \cref{concrete:cond3} and \cref{concrete:cond4} hold. Therefore, $(f_1 \oplus f_2) \oplus f_3$ is defined if and only if $f_1 \oplus (f_2 \oplus f_3)$ is defined and by \Cref{def:tmodel}(5) they are equal.%  They both would have type $\R{\dm{1} \cup \dm{2} \cup \dm{3}} \rightarrow \DR{\rg{1} \cup \rg{2} \cup \rg{3}}$.
\end{proof}

\begin{lemma} [Reflexivity and transitivity of order]
\label{Md:order}
For any $\T$-model $M$,	the order $\sqsubseteq$ defined in $M$ is transitive and reflexive.
\end{lemma}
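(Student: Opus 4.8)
The plan is to unfold the definition of $\sqsubseteq$ from \Cref{def:tmodel}(10): $f \sqsubseteq g$ holds exactly when $g = (f \oplus \unit_S) \odot v$ for some finite set $S$ and some $v \in M$. Reflexivity is immediate: taking $S = \emptyset$ and $v = \unit_{\range(f)}$, the Padding equality \ref{paddingdom} gives $f \oplus \unit_\emptyset = f$ (since $\emptyset \subseteq \dom(f)$), and the monad right-unit law gives $f \odot \unit_{\range(f)} = f$, so $f = (f \oplus \unit_\emptyset) \odot \unit_{\range(f)}$ witnesses $f \sqsubseteq f$.

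For transitivity, suppose $f \sqsubseteq g$ and $g \sqsubseteq h$, witnessed by $g = (f \oplus \unit_{S_1}) \odot v_1$ and $h = (g \oplus \unit_{S_2}) \odot v_2$. Substituting the first equation into the second, the goal is to massage $g \oplus \unit_{S_2} = \big((f \oplus \unit_{S_1}) \odot v_1\big) \oplus \unit_{S_2}$ into the shape $(f \oplus \unit_{S}) \odot v'$. I would write $\unit_{S_2} = \unit_{S_2} \odot \unit_{S_2}$ and apply the Exchange equality \ref{revexeq} with $f_1 = f \oplus \unit_{S_1}$, $f_3 = v_1$, and $f_2 = f_4 = \unit_{S_2}$, obtaining $g \oplus \unit_{S_2} = \big((f \oplus \unit_{S_1}) \oplus \unit_{S_2}\big) \odot (v_1 \oplus \unit_{S_2})$. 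Then standard associativity of $\oplus$ (\Cref{standardassoc}) together with $\unit_{S_1} \oplus \unit_{S_2} = \unit_{S_1 \cup S_2}$ rewrites the left factor as $f \oplus \unit_{S_1 \cup S_2}$, and associativity of Kleisli composition re-brackets $h = (f \oplus \unit_{S_1 \cup S_2}) \odot \big((v_1 \oplus \unit_{S_2}) \odot v_2\big)$. Setting $S = S_1 \cup S_2$ and $v = (v_1 \oplus \unit_{S_2}) \odot v_2$, which lies in $M$ by closure under $\oplus$ and $\odot$ (\Cref{def:tmodel}(9)), this exhibits $f \sqsubseteq h$.

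The main obstacle is definedness. A $\T$-model provides only the weak form of Exchange equality (\ref{revexeq})---both sides defined implies equal---so I cannot conclude that $\big((f \oplus \unit_{S_1}) \oplus \unit_{S_2}\big) \odot (v_1 \oplus \unit_{S_2})$ is defined merely from the definedness of $g \oplus \unit_{S_2}$; I must verify it by hand through domain/range bookkeeping. The crux is the identity $(\range(f) \cup S_1) \cap S_2 = (\dom(f) \cup S_1) \cap S_2$: input-preservation gives $\dom(f) \subseteq \range(f)$, while definedness of $g \oplus \unit_{S_2}$ forces $\range(v_1) \cap S_2 = (\dom(f) \cup S_1) \cap S_2$, and $\range(f) \subseteq \dom(v_1) \subseteq \range(v_1)$ pins $\range(f) \cap S_2$ inside $(\dom(f) \cup S_1) \cap S_2$. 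With this identity, both $(f \oplus \unit_{S_1}) \oplus \unit_{S_2}$ and $v_1 \oplus \unit_{S_2}$ are defined, their range and domain match (each being $\range(f) \cup S_1 \cup S_2$), so the Kleisli composite is defined and the weak Exchange equality applies. The remaining steps---$\unit_{S_1} \oplus \unit_{S_2} = \unit_{S_1 \cup S_2}$ and the re-bracketings---are routine consequences of the definition of parallel composition on unit maps and the monad laws.
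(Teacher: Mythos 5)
Your proof is correct and follows the paper's own argument almost step for step: the same reflexivity witness $(S=\emptyset$, $v=\unit_{\range(f)})$, and the same transitivity computation via substitution, the trick $\unit_{S_2} = \unit_{S_2} \odot \unit_{S_2}$, the Exchange equality, reassociation, and closure of $M$ under $\oplus$ and $\odot$. The only difference is that the definedness bookkeeping you carry out by hand is exactly what the paper packages as \cref{exch_revex} (definedness of $(f_1 \odot f_3) \oplus (f_2 \odot f_4)$ implies definedness of $(f_1 \oplus f_2) \odot (f_3 \oplus f_4)$), which the paper's proof simply cites instead of rederiving, so your extra work is sound but redundant given that proposition.
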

\begin{proof}
Let $x \colon \R{A} \rightarrow \TR{X} \in M$, $S = \emptyset$,  $v = \unit_{X}$.  Then
	\begin{align*}
	(x \oplus \unit_{S}) \odot v &= 	(x \oplus \unit_{\emptyset}) \odot \unit_{X}\\
																														&= x \odot \unit_{X} \tag{By~\cref{paddingdom}}\\
																														&= x \tag{By~\cref{def:tmodel}(3)}
	\end{align*}
Thus, by \Cref{M:preorder} we have $x \sqsubseteq x$, and the order is reflexive.

For any $x, y , z \in M$, if $x \sqsubseteq y$ and $y \sqsubseteq z$, then by definition of $\sqsubseteq$, there exist $S_1$ and  $v_1$ such that
\(	y = (x \oplus \unit_{S_1}) \odot v_1 \), and there exist $S_2$ and  $v_2$ such that \(z = (y \oplus \unit_{S_2}) \odot v_2\).

We can now calculate:
	\begin{align*}
		z   &= (y \oplus \unit_{S_2}) \odot v_2\\
		&= (((x \oplus \unit_{S_1}) \odot v_1) \oplus \unit_{S_2}) \odot v_2 \\
		&= (((x \oplus \unit_{S_1}) \odot v_1) \oplus (\unit_{S_2} \odot \unit_{S_2})) \odot v_2 \\
		&= (x \oplus \unit_{S_1} \oplus \unit_{S_2}) \odot (v_1 \oplus \unit_{S_2}) \odot v_2 \tag{By~\ref{revexeq} and~\cref{exch_revex}} \\
		&= (x \oplus \unit_{S_1 \cup S_2}) \odot ((v_1 \oplus \unit_{S_2}) \odot v_2)
		\end{align*}
 $\M$ is closed under $\oplus$, $\odot$, so $(v_1 \oplus \unit_{S_2}) \odot v_2 \in \M$. Thus, we can instantiate \Cref{M:preorder} with $S=S_1 \cup S_2$ and $v=(v_1 \oplus \unit_{S_2}) \odot v_2$ obtaining $x \sqsubseteq z$. So the order is transitive.
\end{proof}

            \begin{proposition}
              \label{exch_revex}
		For any $\T$-model M, states \(f_1, f_2, f_3, f_4 \) in $M$,
               $(f_1 \odot f_3) \oplus (f_2 \odot f_4)$ is defined
               implies $(f_1 \oplus f_2) \odot (f_3 \oplus f_4)$ is also defined.
               The converse does not always hold, but if $f_1 \odot f_3$
               and $f_2 \odot f_4$ are defined, then $(f_1 \oplus f_2) \odot (f_3
               \oplus f_4)$ is defined implies $(f_1 \odot f_3) \oplus (f_2 \odot
               f_4)$ is defined too.
            \end{proposition}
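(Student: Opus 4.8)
The plan is to reduce every instance of ``defined'' to an equation between the domains and ranges of the four kernels, after which the proposition follows by elementary set manipulation. Throughout write $\dm{i} = \dom(f_i)$ and $\rg{i} = \range(f_i)$; since every state in a $\T$-model preserves its input (condition (2) of \Cref{def:tmodel}), we have $\dm{i} \subseteq \rg{i}$ for each $i$. Recalling that $\odot$ is Kleisli composition, defined exactly when the range of the first arrow equals the domain of the second, and that $f \oplus g$ is defined exactly when $\range(f) \cap \range(g) = \dom(f) \cap \dom(g)$, together with the identities $\dom(f \odot g) = \dom(f)$, $\range(f \odot g) = \range(g)$, $\dom(f \oplus g) = \dom(f) \cup \dom(g)$ and $\range(f \oplus g) = \range(f) \cup \range(g)$, I would first record two characterizations. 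The expression $(f_1 \odot f_3) \oplus (f_2 \odot f_4)$ is defined iff $\rg{1} = \dm{3}$, $\rg{2} = \dm{4}$ and $\rg{3} \cap \rg{4} = \dm{1} \cap \dm{2}$; the expression $(f_1 \oplus f_2) \odot (f_3 \oplus f_4)$ is defined iff $\rg{1} \cap \rg{2} = \dm{1} \cap \dm{2}$, $\rg{3} \cap \rg{4} = \dm{3} \cap \dm{4}$ and $\rg{1} \cup \rg{2} = \dm{3} \cup \dm{4}$.

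For the forward implication I would assume the three equations characterizing the first expression and verify the three characterizing the second. The condition $\rg{1} \cup \rg{2} = \dm{3} \cup \dm{4}$ is immediate from $\rg{1} = \dm{3}$ and $\rg{2} = \dm{4}$. For $\rg{1} \cap \rg{2} = \dm{1} \cap \dm{2}$ the inclusion $\dm{1} \cap \dm{2} \subseteq \rg{1} \cap \rg{2}$ uses only input preservation, while the reverse inclusion follows because $\rg{1} = \dm{3} \subseteq \rg{3}$ and $\rg{2} = \dm{4} \subseteq \rg{4}$ give $\rg{1} \cap \rg{2} \subseteq \rg{3} \cap \rg{4} = \dm{1} \cap \dm{2}$. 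Finally $\rg{3} \cap \rg{4} = \dm{3} \cap \dm{4}$ is obtained by chaining $\rg{3} \cap \rg{4} = \dm{1} \cap \dm{2} = \rg{1} \cap \rg{2} = \dm{3} \cap \dm{4}$, the last equality using $\rg{1} = \dm{3}$ and $\rg{2} = \dm{4}$. For the partial converse I would assume $f_1 \odot f_3$ and $f_2 \odot f_4$ are defined---that is, $\rg{1} = \dm{3}$ and $\rg{2} = \dm{4}$---together with the three equations for $(f_1 \oplus f_2) \odot (f_3 \oplus f_4)$, and I only need to recover the missing equation $\rg{3} \cap \rg{4} = \dm{1} \cap \dm{2}$, which comes from the chain $\rg{3} \cap \rg{4} = \dm{3} \cap \dm{4} = \rg{1} \cap \rg{2} = \dm{1} \cap \dm{2}$.

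To show the converse can genuinely fail without the extra hypothesis, I would exhibit a small counterexample: take $f_1 \colon \R{\emptyset} \to \TR{\{a\}}$, $f_2 \colon \R{\emptyset} \to \TR{\{b\}}$ and $f_3, f_4 \colon \R{\{a,b\}} \to \TR{\{a,b\}}$ (necessarily $\unit_{\R{\{a,b\}}}$) for distinct variables $a, b$. Then $f_1 \oplus f_2$ has domain $\emptyset$ and range $\{a,b\}$, while $f_3 \oplus f_4$ has domain and range $\{a,b\}$, so $(f_1 \oplus f_2) \odot (f_3 \oplus f_4)$ is defined; but $\rg{1} = \{a\} \neq \{a,b\} = \dm{3}$, so $f_1 \odot f_3$---and hence $(f_1 \odot f_3) \oplus (f_2 \odot f_4)$---is undefined.

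The main obstacle is not depth but bookkeeping: the entire argument hinges on translating the definedness side-conditions into the correct set equations and on remembering to invoke input preservation ($\dm{i} \subseteq \rg{i}$) in exactly the one inclusion of the forward direction where it is indispensable. Once those translations are pinned down, every remaining step is routine set algebra, and the asymmetry between the two directions becomes transparent---the forward direction can \emph{derive} the pointwise matchings $\rg{1} = \dm{3}$, $\rg{2} = \dm{4}$ from the union condition only with the help of the stronger intersection hypothesis, so dropping the assumption that the individual Kleisli composites are defined is precisely what breaks the converse.
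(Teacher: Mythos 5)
Your proof is correct and takes essentially the same route as the paper's: both arguments reduce every definedness claim to equations between the domains $\dm{i}$ and ranges $\rg{i}$, then chain those equations, invoking input preservation ($\dm{i} \subseteq \rg{i}$) for the one inclusion that needs it. The only addition is your explicit counterexample witnessing the failure of the unrestricted converse, which the paper asserts without exhibiting; your example is correct and a nice supplement.
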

\begin{proof}
	We prove each direction individually:
	\begin{itemize}
		\item Given $(f_1 \odot f_3) \oplus (f_2 \odot f_4)$ is defined, it must that $\rg{1} = \dm{3},\, \rg{2} = \dm{4},$ and $\rg{3} \cap \rg{4} = \dm{1} \cap \dm{2}$.
			Thus, $\rg{1} \cap \rg{2} = \dm{3} \cap \dm{4} \subseteq \rg{3} \cap \rg{4} = \dm{1} \cap \dm{2}$, ensuring that $f_1 \oplus f_2$ is defined;
			\newline
			$\rg{3} \cap \rg{4} = \dm{1} \cap \dm{2} \subseteq \rg{1} \cap \rg{2} = \dm{3} \cap \dm{4}$, ensuring that $f_3 \oplus f_4$ is defined;
			\newline
			$\range(f_1 \oplus f_2) = \rg{1} \cup \rg{2} = \dm{3} \cup \dm{4} = \dom(f_3 \oplus f_4)$, ensuring $(f_1 \oplus f_2) \odot (f_3 \oplus f_4)$ is defined.

		\item Given $f_1 \odot f_3$ and $f_2 \odot f_4$ are defined, $(f_1 \odot f_3) \oplus (f_2 \odot f_4)$ is defined if $\rg{3} \cap \rg{4} = \dm{1} \cap \dm{2}$. When $(f_1 \oplus f_2) \odot (f_3 \oplus f_4)$ is defined,
			\begin{align*}
				\rg{3} \cap \rg{4} &= \dm{3} \cap \dm{4} \tag{Because $f_3 \oplus f_4$ is defined}\\
					&= \rg{1} \cap \rg{2} \tag{Because $f_1 \odot f_3$ and $f_2 \odot f_4$ are defined}\\
					&= \dm{1} \cap \dm{2} \tag{Because $f_1 \oplus f_2$ is defined}
			\end{align*}
			So $(f_1 \odot f_3) \oplus (f_2 \odot f_4)$ is also defined. \qedhere
	\end{itemize}
\end{proof}

\begin{lemma}[ $\odot$ elimination]
\label{odot2oplus}
For any $\T$-model M, and $f, g \in M$, if $f \odot (g \oplus \unit_X) $ is defined and $\dom(g) \subseteq \dom(f)$, then
$f \odot (g \oplus \unit_X) = g \oplus f$.
\end{lemma}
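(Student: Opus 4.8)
The plan is to massage the parallel composition $g \oplus f$ into the sequential form $f \odot (g \oplus \unit_{X})$ using the Exchange Equality~(\ref{revexeq}) and Padding Equality~(\ref{paddingdom}) laws available in any $\T$-model, rather than computing with the underlying monad. Write $f \colon \R{A} \to \TR{A \cup Y}$ and $g \colon \R{B} \to \TR{B \cup Z}$, so that $\dom(g) = B \subseteq A = \dom(f)$. Before the main argument I would record two definedness facts. Since $f \odot (g \oplus \unit_{X})$ is defined, $\range(f) = \dom(g \oplus \unit_{X}) = B \cup X$; and since $g \oplus \unit_{X}$ is defined, $Z \cap X \subseteq B$. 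These together give $\range(g) \cap \range(f) = (B \cup Z) \cap (B \cup X) = B = \dom(g) \cap \dom(f)$, so $g \oplus f$ is itself defined --- a fact I will need in order to invoke \cref{exch_revex}.

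The core of the argument is a single backwards application of \ref{revexeq}. Using the Kleisli identity laws (valid because $\odot$ is Kleisli composition, \cref{def:tmodel}(3)), I rewrite
\[
  g \oplus f = (\unit_{B} \odot g) \oplus (f \odot \unit_{\range(f)}).
\]
Because $\unit_{B} \odot g = g$ and $f \odot \unit_{\range(f)} = f$ are defined and their parallel composition $g \oplus f$ is defined, \cref{exch_revex} guarantees that $(\unit_{B} \oplus f) \odot (g \oplus \unit_{\range(f)})$ is defined; then \ref{revexeq} yields
\[
  g \oplus f = (\unit_{B} \oplus f) \odot (g \oplus \unit_{\range(f)}).
\]
The first factor simplifies by commutativity (\cref{def:tmodel}(6)) and \ref{paddingdom}: since $B = \dom(g) \subseteq \dom(f)$, we obtain $\unit_{B} \oplus f = f \oplus \unit_{B} = f$.

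It then remains to show $g \oplus \unit_{\range(f)} = g \oplus \unit_{X}$, i.e.\ $g \oplus \unit_{B \cup X} = g \oplus \unit_{X}$. Here I would use the unit-splitting identity $\unit_{B \cup X} = \unit_{B} \oplus \unit_{X}$. This is not one of the listed frame conditions, but it follows cheaply: $M$ is closed under $\oplus$ (\cref{def:tmodel}(9)), so $\unit_{B} \oplus \unit_{X}$ lies in $M$ and hence preserves input (\cref{def:tmodel}(2)); as both its domain and range equal $B \cup X$, input-preservation forces it to coincide with $\unit_{B \cup X}$. Substituting, reassociating via \cref{standardassoc}, and applying \ref{paddingdom} once more (with $B = \dom(g)$) gives
\[
  g \oplus \unit_{B \cup X} = g \oplus (\unit_{B} \oplus \unit_{X}) = (g \oplus \unit_{B}) \oplus \unit_{X} = g \oplus \unit_{X},
\]
and combining the displays establishes $f \odot (g \oplus \unit_{X}) = g \oplus f$.

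I expect the main obstacle to be bookkeeping rather than conceptual: each invocation of \ref{revexeq}, \cref{exch_revex}, \cref{standardassoc}, and \ref{paddingdom} is a conditional equality in the partial setting, so the corresponding definedness side-conditions must be discharged carefully and in the right order (in particular, establishing that $g \oplus f$ is defined before appealing to Exchange Equality). The one genuinely non-mechanical point is the unit-splitting identity, which has to be argued from input-preservation rather than quoted as an axiom.
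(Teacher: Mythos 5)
Your proof is correct and takes essentially the same route as the paper's: one pivotal application of the Exchange Equality~(\ref{revexeq}), with the units managed by Padding~(\ref{paddingdom}), commutativity, and the Kleisli identity laws. The only differences are directional bookkeeping — you compute from $g \oplus f$ toward $f \odot (g \oplus \unit_X)$, which forces you to establish definedness of $g \oplus f$ up front and lets you invoke the hypothesis-free half of \cref{exch_revex}, and you supply the small unit-splitting fact $\unit_{B \cup X} = \unit_B \oplus \unit_X$, whereas the paper runs the computation left-to-right, absorbs $\unit_X$ by padding, and obtains definedness of $g \oplus f$ as a byproduct of the conditional converse in \cref{exch_revex}.
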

\begin{proof}Let $f\colon \m{S} \to \TR{S\cup T}$ and $g\colon \m{U} \to \TR{U\cup V}$ be in $M$.
	When $U \subseteq S$,
\begin{align*}
&f \odot (g \oplus \unit_X)\\
  &= (f \oplus \unit_{U}) \odot (g \oplus \unit_X \oplus \unit_{S\cup T}) \tag{By~\ref{paddingdom}} \\
  &= (\unit_{U} \oplus f) \odot (g \oplus \unit_X \oplus \unit_{S\cup T}) \tag{By commutativity} \\
		&= (\unit_{U} \oplus f) \odot (g \oplus \unit_{S\cup T}) \tag{$\dagger$}\\
 &= (\unit_{U} \odot g) \oplus (f \odot  \unit_{S\cup T})\tag{By~\cref{exch_revex} and~\ref{revexeq}}\\
 &= g \oplus f \qedhere
\end{align*}
where $\dagger$ follows from $X \subseteq S \cup T$, which holds as
$f \odot (g \oplus \unit_X)$ defined implies $S \cup T = X \cup U$.
\end{proof}

\begin{lemma}[Converting $\oplus$ to $\odot$]
	\label{oplus2odot}
For any $\T$-model M, let $f\colon \m{S} \to \TR{S\cup T}$ and $g\colon \m{U} \to \TR{U\cup V}$ be in $M$. If $f \oplus g$ is defined, then $f \oplus g = (f \oplus \unit_{U}) \odot (\unit_{S\cup T} \oplus g)$.
\end{lemma}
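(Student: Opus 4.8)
The plan is to obtain the identity as a direct consequence of the Exchange equality \ref{revexeq}, after first dispensing with the bookkeeping needed to see that every term is defined. Concretely, I would instantiate \ref{revexeq} with $f_1 = f$, $f_2 = \unit_U$, $f_3 = \unit_{S \cup T}$ and $f_4 = g$, so that the left-hand side is precisely $(f \oplus \unit_U) \odot (\unit_{S \cup T} \oplus g)$ while the right-hand side becomes $(f \odot \unit_{S \cup T}) \oplus (\unit_U \odot g)$.

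First I would check the definedness conditions, all of which follow from the hypothesis that $f \oplus g$ is defined, i.e.\ $(S \cup T) \cap (U \cup V) = S \cap U$. From the chain $S \cap U \subseteq (S \cup T) \cap U \subseteq (S \cup T) \cap (U \cup V) = S \cap U$ I get $(S \cup T) \cap U = S \cap U$, which is exactly the condition for $f \oplus \unit_U$ to be defined; this identity, together with $(S \cup T) \cap (U \cup V) = S \cap U$, also shows $\unit_{S \cup T} \oplus g$ is defined. The ranges and domains then line up, $\range(f \oplus \unit_U) = S \cup T \cup U = \dom(\unit_{S \cup T} \oplus g)$, so the composite on the left is defined, with domain $S \cup U$ and range $S \cup T \cup U \cup V$, matching the type of $f \oplus g$. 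Since $f \odot \unit_{S \cup T}$ and $\unit_U \odot g$ are each individually defined (their relevant ranges and domains coincide), the converse direction of \Cref{exch_revex} then guarantees that $(f \odot \unit_{S \cup T}) \oplus (\unit_U \odot g)$ is defined, so \ref{revexeq} legitimately applies.

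The last step is to simplify the right-hand side using that $\unit$ is a two-sided identity for Kleisli composition, which holds since $\odot$ is Kleisli composition (\Cref{def:tmodel}(3)) and the monad laws \eqref{eq:monad} give $\bind(\mu)(\unit) = \mu$ and $\bind(\unit(x))(f) = f(x)$. Hence $f \odot \unit_{S \cup T} = f$ by the right-unit law and $\unit_U \odot g = g$ by the left-unit law, so $(f \odot \unit_{S \cup T}) \oplus (\unit_U \odot g) = f \oplus g$, which is the claim.

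I expect the argument itself to be short; the only genuine obstacle is the partiality bookkeeping, namely verifying the intersection conditions that make $f \oplus \unit_U$, $\unit_{S \cup T} \oplus g$, and their composite simultaneously defined, so that \ref{revexeq} and \Cref{exch_revex} can be invoked. Everything past that point is immediate from the unit laws and the already-established Exchange equality.
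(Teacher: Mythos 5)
Your proposal is correct and takes essentially the same approach as the paper: both instantiate the Exchange equality \ref{revexeq} with the quadruple $(f,\unit_U,\unit_{S\cup T},g)$, use \Cref{exch_revex} for definedness, and finish with the Kleisli unit laws. The only difference is bookkeeping order—the paper applies the unit laws first, so that $(f \odot \unit_{S\cup T}) \oplus (\unit_U \odot g) = f \oplus g$ is defined by hypothesis and the forward direction of \Cref{exch_revex} immediately gives definedness of the composite, sparing the explicit intersection computations you carry out.
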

\begin{proof}
 \begin{align*}
 f \oplus g &= (f \odot \unit_{S\cup T}) \oplus (\unit_{U} \odot g) \\
 		&=(f \oplus \unit_{U}) \odot (\unit_{S\cup T} \oplus g) \tag{By~\cref{exch_revex} and~\ref{revexeq}}
 \end{align*}
\end{proof}

\begin{lemma}[Quasi-Downwards-closure of $\odot$]
	\label{odotdownwards}
	For any $\T$-model M, and \(f, g, h, i \in \M\),
	if $f \sqsubseteq h$, $g \sqsubseteq i$, and $f \odot g$, $h \odot i$ are all defined, then $f \odot g \sqsubseteq h \odot i$.
\end{lemma}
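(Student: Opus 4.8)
The plan is to unfold the two order hypotheses into witnessing decompositions and then massage $h \odot i$ into the canonical form demanded by the definition of $\sqsubseteq$. By the preorder definition~\ref{M:preorder} there are finite sets $S_1, S_2$ and kernels $v_1, v_2 \in \M$ with $h = (f \oplus \unit_{S_1}) \odot v_1$ and $i = (g \oplus \unit_{S_2}) \odot v_2$. Since $\odot$ never alters the domain, any witness of the form $((f \odot g) \oplus \unit_S) \odot w$ for $f \odot g \sqsubseteq h \odot i$ must satisfy $\dom(f) \cup S = \dom(h \odot i) = \dom(f) \cup S_1$, which tells us to take $S = S_1$. So the real content is to show that $(f \odot g) \oplus \unit_{S_1}$ arises as a $\odot$-prefix of $h \odot i$.

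First I would use associativity of $\odot$ to write $h \odot i = (f \oplus \unit_{S_1}) \odot (v_1 \odot i)$. The naive hope that $f \odot g$ already appears (suitably padded) as a prefix fails, and this is the main obstacle: in $h \odot i = (f \oplus \unit_{S_1}) \odot v_1 \odot (g \oplus \unit_{S_2}) \odot v_2$ the factor $g$ sits \emph{after} the connecting kernel $v_1$, and the two do not commute in general. The observation that dissolves this obstacle is a domain inclusion: since $f \odot g$ is defined, $\dom(g) = \range(f) \subseteq \range(f) \cup S_1 = \dom(v_1)$. This is precisely the hypothesis of the $\odot$-elimination lemma (\cref{odot2oplus}), which therefore gives $v_1 \odot (g \oplus \unit_{S_2}) = g \oplus v_1$, and hence $v_1 \odot i = (g \oplus v_1) \odot v_2$.

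With $g$ now exposed inside a parallel composition, I would peel it back off as a sequential prefix using \cref{oplus2odot}, obtaining $g \oplus v_1 = (g \oplus \unit_{\dom(v_1)}) \odot (\unit_{\range(g)} \oplus v_1)$. Because $\dom(v_1) = \dom(g) \cup S_1$, Padding equality~\ref{paddingdom} absorbs the redundant copy of $\dom(g)$, rewriting $g \oplus \unit_{\dom(v_1)} = g \oplus \unit_{S_1}$. Setting $w \defeq (\unit_{\range(g)} \oplus v_1) \odot v_2$, which lies in $\M$ by closure under $\oplus$ and $\odot$, this yields $v_1 \odot i = (g \oplus \unit_{S_1}) \odot w$.

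Finally, substituting back and regrouping the two leading padded factors by the Exchange equality~\ref{revexeq},
\[
  h \odot i = (f \oplus \unit_{S_1}) \odot (g \oplus \unit_{S_1}) \odot w = \big((f \odot g) \oplus \unit_{S_1}\big) \odot w,
\]
which is exactly the decomposition witnessing $f \odot g \sqsubseteq h \odot i$. The remaining steps are routine bookkeeping: verifying that every $\oplus$ and $\odot$ above is defined (each follows from the definedness of $h$, $i$, $f \odot g$, and $h \odot i$ together with \cref{exch_revex}), and noting that $S_1, S_2$ may be taken disjoint from the active variable sets so that Padding equality applies cleanly. I expect the single application of \cref{odot2oplus} to be the only genuinely delicate point; the rest is a guided rewrite using the $\T$-model axioms.
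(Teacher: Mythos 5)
Your proposal is correct and follows essentially the same route as the paper's proof: the same witnessing decompositions $h = (f \oplus \unit_{S_1}) \odot v_1$, $i = (g \oplus \unit_{S_2}) \odot v_2$, the same key observation $\dom(g) = \range(f) \subseteq \dom(v_1)$ enabling \cref{odot2oplus}, then \cref{oplus2odot}, Padding equality, and the Exchange equality to produce $h \odot i = ((f \odot g) \oplus \unit_{S_1}) \odot w$. The only differences are presentational (your motivating remarks about domains and the explicit naming of $w$).
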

\begin{proof}
	Since $f \sqsubseteq h$, $g \sqsubseteq i$, there must exist sets $S_1, S_2$ and $v_1, v_2 \in M$ such that $h = (f \oplus \unit_{S_1}) \odot v_1$,  $i = (g \oplus \unit_{S_2}) \odot v_2$.
	$f \odot g$ is defined, so $\dom(g) = \range(f) \subseteq \range(f \oplus \unit_{S_1}) = \dom(v_1)$.
	Thus,
	\begin{align*}
	h\odot i
	&= (f \oplus \unit_{S_1}) \odot v_1 \odot  (g \oplus \unit_{S_2}) \odot v_2 \\
	&=(f \oplus \unit_{S_1}) \odot (g \oplus v_1) \odot v_2 \tag{By~\cref{odot2oplus} and $\dom(g) \subseteq \dom(v_1)$}\\
	&=(f \oplus \unit_{S_1}) \odot (g \oplus \unit_{\dom(v_1)}) \odot (\unit_{\range(g)} \oplus v_1) \odot v_2 \tag{By~\cref{oplus2odot}}\\
	&=(f \oplus \unit_{S_1}) \odot (g \oplus \unit_{ S_1}) \odot (\unit_{\range(g)} \oplus v_1) \odot v_2 \tag{$\dagger$}\\
	&=((f  \odot g) \oplus (\unit_{S_1} \odot \unit_{ S_1})) \odot (\unit_{\range(g)} \oplus v_1) \odot v_2 \tag{$\heartsuit$}\\
	&=((f  \odot g) \oplus \unit_{ S_1}) \odot (\unit_{\range(g)} \oplus v_1) \odot v_2
	\end{align*}
	where $\dagger$ follows from  $\dom(g) = \range(f)$ and~\cref{paddingdom},
	and $\heartsuit$ follows from~\cref{exch_revex} and~\ref{revexeq}.

	Therefore, $f  \odot g \sqsubseteq h \odot i$.
\end{proof}

\begin{lemma}
\label{Mframeaxioms}
Any $\T$-model $M$ is in $\LOGIC$.
%satisfies the \LOGIC frame axioms ($\oplus$ Down-closed), ($\oplus$ Associativity), ($\odot$ Associativity 1), ($\odot$ Associativity 2), ($\odot$-Unit $\text{Existence}_{\text{L}}$), ($\odot$-Unit $\text{Existence}_{\text{R}}$), (Unit closure) and (Reverse Exchange).
\end{lemma}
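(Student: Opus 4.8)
The plan is to verify, one by one, each of the twelve frame conditions of \Cref{fig:dibi_rules} for an arbitrary $\T$-model $M = (M, \sqsubseteq, \oplus, \odot, M)$, using the $\T$-model axioms of \Cref{def:tmodel} together with the auxiliary lemmas already established (\Cref{standardassoc}, \Cref{Md:order}, \Cref{exch_revex}, \Cref{odot2oplus}, \Cref{oplus2odot}, \Cref{odotdownwards}). Throughout I read ``$c \in a \oplus b$'' as ``$a \oplus b$ is defined and equals $c$'' (and likewise for $\odot$), since both operations are deterministic partial functions, and I use that the unit set is all of $M$.

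A large block of the conditions is essentially immediate. ($\oplus$ Commutativity) is axiom (6); ($\oplus$ Associativity) follows from \Cref{standardassoc} plus commutativity; ($\odot$ Associativity) is just associativity of Kleisli composition (axiom (3)). Because $E = M$, (Unit Closure) is trivial, and the three unit-existence conditions are witnessed by identity kernels: $\unit_{\emptyset}$ for $\oplus$ via \ref{paddingdom}, and $\unit_{\dom(x)}$, $\unit_{\range(x)}$ for the two $\odot$ cases via the Kleisli unit laws. ($\odot$ Coherence$_{\mathrm R}$) is immediate from the preorder \Cref{M:preorder}: if $x = y \odot e$ then $x = (y \oplus \unit_{\emptyset}) \odot e$ by \ref{paddingdom}, so $y \sqsubseteq x$. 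The two remaining ``algebraic'' conditions exploit Exchange Equality \ref{revexeq} and \Cref{exch_revex}: for (Reverse Exchange), from $x = (y_1 \odot y_2) \oplus (z_1 \odot z_2)$, \Cref{exch_revex} gives that $(y_1 \oplus z_1) \odot (y_2 \oplus z_2)$ is defined, and \ref{revexeq} shows it equals $x$, so $u = y_1 \oplus z_1$ and $v = y_2 \oplus z_2$ work; ($\oplus$ Unit Coherence) uses the identity $y \oplus e = (y \oplus \unit_{\dom(e)}) \odot (e \oplus \unit_{\range(y)})$ (the calculation in the proof sketch of \Cref{MdisLOGIC}), which directly exhibits $y \sqsubseteq y \oplus e$.

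The real content lies in the two ``Closed'' conditions. For ($\odot$ Up-Closed), given $z = x \odot y$ and $z' = (z \oplus \unit_S) \odot v \sqsupseteq z$ with $S$ fresh, I take $x' = x \oplus \unit_S$ and $y' = (y \oplus \unit_S) \odot v$; then $x \sqsubseteq x'$ and $y \sqsubseteq y'$ hold by construction, and the computation $x' \odot y' = ((x \oplus \unit_S)\odot(y \oplus \unit_S)) \odot v = ((x \odot y) \oplus \unit_S) \odot v = z'$ — using associativity of $\odot$, then \ref{revexeq} with definedness supplied by \Cref{exch_revex} — closes the case. For ($\oplus$ Down-Closed) I reduce to a one-sided congruence: if $x' \sqsubseteq x$ and both $x \oplus y$ and $x' \oplus y$ are defined, then $x' \oplus y \sqsubseteq x \oplus y$. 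Writing $x = (x' \oplus \unit_{S_1}) \odot v_1$ and $y = y \odot \unit_{\range(y)}$ and applying \ref{revexeq} yields $x \oplus y = ((x' \oplus \unit_{S_1}) \oplus y) \odot (v_1 \oplus \unit_{\range(y)}) = ((x' \oplus y) \oplus \unit_{S_1}) \odot (v_1 \oplus \unit_{\range(y)})$, which is exactly a preorder witness for $x' \oplus y \sqsubseteq x \oplus y$. Combining this with commutativity (axiom (6)) to shrink the other argument, and transitivity of $\sqsubseteq$ (\Cref{Md:order}), gives $x' \oplus y' \sqsubseteq x \oplus y$.

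I expect the bookkeeping of \emph{definedness} in the Down-Closed case to be the main obstacle. Two points need care. First, I must argue that $x' \oplus y'$ is actually defined, i.e.\ $\range(x') \cap \range(y') = \dom(x') \cap \dom(y')$; the nontrivial inclusion uses that any shared range variable lies in $\range(x) \cap \range(y) = \dom(x) \cap \dom(y) \subseteq \dom(x)$, hence is a genuine preserved domain variable rather than one freshly created by $v_1$. Second, to legitimately apply \ref{revexeq} I must check that both factors $(x' \oplus \unit_{S_1}) \oplus y$ and $v_1 \oplus \unit_{\range(y)}$ are defined; this hinges on first normalizing the witness so that $S_1$ is disjoint from $\range(x')$ (achievable without loss of generality, since the part of $S_1$ lying in $\dom(x')$ is absorbed by \ref{paddingdom}), after which the same shared-variable computation shows the relevant range/domain overlaps all collapse to $\dom(x) \cap \dom(y)$. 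Once these definedness facts are in place, the equalities are routine rewrites and the result follows.
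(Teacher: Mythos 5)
Your proof is correct and takes essentially the same route as the paper's: the same case analysis over the frame conditions, the same witnesses for the unit, coherence, and ($\odot$ Up-Closed) conditions, and the same use of Exchange Equality (\ref{revexeq}) together with \cref{exch_revex} to discharge (Reverse Exchange) and ($\oplus$ Down-Closed). The only difference is bookkeeping: you split ($\oplus$ Down-Closed) into a one-sided congruence applied twice via commutativity and transitivity, and you argue definedness of $x' \oplus y'$ by hand, whereas the paper's single calculation obtains that definedness for free from \cref{exch_revex} combined with the definedness-preserving forms of associativity (\cref{standardassoc}) and commutativity.
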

\begin{proof} The axioms that we need to check are the follows.
	\begin{description}
	\item [$\oplus$ Down-Closed]
	\label{downwards_oplus}
	We want to show that for any $x',x,y',y \in M$, if $x' \sqsubseteq x$ and $y'\sqsubseteq y$ and $x \oplus y = z$, then $x' \oplus y'$ is defined, and $x'\oplus y' = z' \sqsubseteq z$.

	Since $x' \sqsubseteq x$ and $y' \sqsubseteq y$, there exist sets $S_1, S_2$, and $v_1, v_2 \in M$ such that
	$x = (x'\oplus \unit_{S_1} ) \odot v_1$, and
	$y = (y' \oplus \unit_{S_2} ) \odot v_2$.
	Thus,
	\begin{align*}
		x \oplus y &= ((x' \oplus \unit_{S_1} ) \odot v_1) \oplus ((y' \oplus \unit_{S_2} ) \odot v_2)\\
		               &=  \big((x' \oplus \unit_{S_1}) \oplus (y' \oplus \unit_{S_2}) \big) \odot ( v_1 \oplus v_2) \tag{By~\cref{exch_revex} and~\ref{revexeq}}\\
			      &=  \big((x' \oplus y') \oplus (\unit_{S_1} \oplus \unit_{S_2})   \big) \odot ( v_1 \oplus v_2) \tag{By commutativity and associativity}\\
			       &=  \big((x' \oplus y') \oplus (\unit_{S_1\cup S_2})   \big) \odot ( v_1 \oplus v_2)
	\end{align*}
	This derivation proved that $x' \oplus y'$ is defined,
	and $x' \oplus y' \sqsubseteq x \oplus y = z$.

\item [($\odot$ Up-Closed)]
	We want to show that for any $z',z,x,y \in M$, if $z = x \odot y$ and  $z' \sqsupseteq z$, then there exists $x', y'$ such that $x' \sqsupseteq x$, $y' \sqsupseteq y$, and $z' = x' \odot y'$.

	Since $z' \sqsupseteq z$, there exist set $S$, and $v \in M$ such that
	$z' = (z \oplus \unit_{S} ) \odot v$. Thus,
	\begin{align*}
		z' &=  (z \oplus \unit_{S} ) \odot v \\
				&=  ((x \odot y) \oplus \unit_{S} ) \odot v \\
				&=  ((x \odot y) \oplus (\unit_{S} \odot \unit_{S})) \odot v \\
				&=  ((x \oplus \unit_{S}) \odot (y \oplus \unit_{S})) \odot v \tag{By~\cref{exch_revex} and~\ref{revexeq}} \\
				&=  (x \oplus \unit_{S}) \odot ((y \oplus \unit_{S}) \odot v) \tag{By standard associativity of $\odot$}
	\end{align*}
	Thus, for $x' = x \oplus \unit_S$ and $y' = (y \oplus \unit_S) \odot v$,
	$z' = x' \odot y'$.
\item [ ($\oplus$ Commutativity) ]
	We want to show that $z = x \oplus y$ implies that $z = y \oplus x$.
By definition of $T$-models:
first, $x \oplus y$ is defined iff $\range(x) \cap \range(y) = \dom(x) \cap \dom(y)$ iff $y \oplus x$ is defined;
second, when $x \oplus y$ and $y \oplus x$ are both defined, they are equal. Thus, $\oplus$ commutativity frame condition is satisfied.
\item [ ($\oplus$ Associativity) ]
			Since $\oplus$ is deterministic and partial,the associativity of $\oplus$ frame axiom reduces to ~\cref{standardassoc}.
\item [($\oplus$ Unit existence)] We want to show that for any $x \in M$, there exists $e \in E$ such that $x = e \oplus x$.
	For any $x: \R{A} \rightarrow \DR{B}$,  $x \oplus \unit_{\R{\emptyset}}$ is defined because $B \cap \emptyset = \emptyset = A \cap \emptyset$,
	and by~\cref{paddingdom}, $(x \oplus \unit_{\R{\emptyset}}) = x$.
	Also, $\unit_{\R{\emptyset}} \in E = M$.
	So $e = \unit_{\R{\emptyset}}$ serves as the unit under $\oplus$ for any $x$.
\item [($\oplus$ Unit Coherence)]
        We want to show that for any $y \in M$, $e \in E = M$,
								if $x = y \oplus e$,
        then $x \sqsupseteq y$.
				\begin{align*}
					x = y \oplus e &= (y \odot \unit_{\range(y)}) \oplus (\unit_{\dom(e)} \odot e) \\
																				&= (y \oplus \unit_{\dom(e)}) \odot (\unit_{\range(y)} \oplus e) \tag{By~\cref{revexeq}}\\
																				&= (y \oplus \unit_{\dom(e)}) \odot (e \oplus \unit_{\range(y)}) \tag{$\oplus$ Commutativity}
				\end{align*}
				Thus, $x \sqsupseteq y$.
\item[ ($\odot$ Associativity)]
	Since $\odot$ is deterministic and partial, the associativity of $\odot$ frame axiom reduces to the standard associativity.
	Kleisli composition satisfies standard associativity, so $\odot$ also satisfies standard associativity.

\item [($\odot$ Unit $\text{Existence}_{\text{L and R}}$)] Since $\odot$ is the Kleisli composition, for any morphism $x: \R{A} \rightarrow \DR{B}$, $\unit_{\R{A}}$ is the left unit, and $\unit_{\R{B}}$ is the right unit. For all $S$, $\unit_{\R{S}} \in M = E$. Thus, for any $x \in M$, there exists $e \in E$ such that $e \odot x = x$, and there exists $e' \in E$ such that $x \odot e' = x$.
\item [($\odot$ $\text{Coherence}_{R}$)]
For any $y \in M, e \in E = M$ such that $x = y \odot e$, we want to show that $x \sqsupseteq y$. We just proved that $(y \oplus \unit_{\R{\emptyset}})= y$ for any $y$, so $x = y \odot e = (y \oplus \unit_{\R{\emptyset}}) \odot e$, and $x \sqsubseteq y$ as desired.
			\item [(Unit closure)] We want to show that for any $e \in E$ and $e' \sqsupseteq e$, $e' \in E$. This is evident because $E = M$ and $M$ is closed under $\oplus$ and $\odot$.
			\item [(Reverse exchange)]
				Given $x = y \oplus z$ and $y = y_1 \odot y_2$, $z = z_1 \odot z_2$, we want to show that there exists $u = y_1 \oplus z_1$, $v = y_2 \oplus z_2$, and $x = u \odot v$.

			After substitution, we get $(y_1 \odot y_2) \oplus (z_1 \odot z_2) = y \oplus z = x$.
By~\ref{revexeq} and~\cref{exch_revex},  when $(y_1 \odot y_2) \oplus (z_1 \odot z_2)$ is defined, $ (y_1 \oplus z_1) \odot ( y_2 \odot  z_2) $ is also defined, and
			\(
			(y_1 \odot y_2) \oplus (z_1 \odot z_2) = (y_1 \oplus z_1) \odot ( y_2 \oplus  z_2)
			\).
Thus $ (y_1 \oplus z_1) \odot ( y_2 \oplus  z_2)  = y \oplus z = x $, and thus $u = y_1 \oplus z_1$, $v = y_2 \oplus z_2$ completes the proof. \qedhere
\end{description}
\end{proof}

\begin{lemma}	[Classical flavor in intuitionistic model]
	\label{findexact}
	For any $\T$-model M such that Disintegration holds (see~\cref{lemma:condfirst} and~\cref{Mp:lemma:condfirst}), and $f \in M$,
		\[
		f \models  \pair{\emptyset}{Z} \depand (\pair{Z}{X} \sepand \pair{Z}{Y})
		\]
if and only if there exist $g, h,i \in M$, such that $g\colon \R\emptyset \rightarrow \TR{Z}$, $h\colon \R{Z} \rightarrow \TR{Z \cup X}$, $i\colon \R{Z} \rightarrow \TR{Z \cup Y}$, and $g \odot (h \oplus i) \sqsubseteq f$.
\end{lemma}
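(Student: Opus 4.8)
The plan is to prove the two implications separately, with the backward direction being a direct reading of the semantics and the forward direction carrying all the content. Throughout I work in the deterministic setting, so membership ($\in$) in the satisfaction clauses for $\depand$ and $\sepand$ becomes equality, and I freely use the atomic clause: $f \models \pair{A}{B}$ yields a subkernel $f' \sqsubseteq f$ with $\dom(f') = A$ and $\range(f') \supseteq B$, and since $f'$ preserves input, in fact $\range(f') \supseteq A \cup B$.

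For the backward direction, suppose $g, h, i$ of the stated types satisfy $g \odot (h \oplus i) \sqsubseteq f$. Reading off domains and ranges, $g \models \pair{\emptyset}{Z}$ (witnessed by $g$ itself), $h \models \pair{Z}{X}$, and $i \models \pair{Z}{Y}$. The composite being defined forces $h \oplus i$ to be defined with $\dom(h \oplus i) = Z$, so $h \oplus i$ witnesses $h \oplus i \models \pair{Z}{X} \sepand \pair{Z}{Y}$, whence $g \odot (h \oplus i) \models \indep{Z}{X}{Y}$ by the clauses for $\depand$ and $\sepand$. The Persistence Lemma then lifts this to $f$, since $g \odot (h \oplus i) \sqsubseteq f$.

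For the forward direction, I unfold satisfaction of $\indep{Z}{X}{Y}$: there are $a, b$ with $f = a \odot b$, $a \models \pair{\emptyset}{Z}$, and $b \models \pair{Z}{X} \sepand \pair{Z}{Y}$; the latter gives $b \sqsupseteq c \oplus d$ with $c \models \pair{Z}{X}$ and $d \models \pair{Z}{Y}$. Applying the atomic clause yields subkernels $a' \sqsubseteq a$, $c' \sqsubseteq c$, $d' \sqsubseteq d$ with domains $\emptyset, Z, Z$ and ranges containing $Z$, $Z \cup X$, $Z \cup Y$ respectively. I then marginalize to the exact ranges, setting $g := \pi_Z a'$, $h := \pi_{Z \cup X} c'$, $i := \pi_{Z \cup Y} d'$, which now have precisely the required types $\R\emptyset \to \TR{Z}$, $\R{Z} \to \TR{Z \cup X}$, $\R{Z} \to \TR{Z \cup Y}$. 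Using Disintegration, marginalizing a kernel to a range containing its domain produces a substate, so $g \sqsubseteq a'$, $h \sqsubseteq c'$, $i \sqsubseteq d'$, and hence $g \sqsubseteq a$, $h \sqsubseteq c$, $i \sqsubseteq d$ by transitivity of $\sqsubseteq$.

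It remains to assemble $g \odot (h \oplus i) \sqsubseteq f$. Since $c \oplus d$ is defined, the ($\oplus$ Down-Closed) frame axiom applied to $c \sqsupseteq c'$ and $d \sqsupseteq d'$ gives that $c' \oplus d'$ is defined with $c' \oplus d' \sqsubseteq c \oplus d$; applying it again to $c' \sqsupseteq h$ and $d' \sqsupseteq i$ gives that $h \oplus i$ is defined with $h \oplus i \sqsubseteq c' \oplus d' \sqsubseteq c \oplus d \sqsubseteq b$. (As a byproduct, the range overlap of $c' \oplus d'$ equals its domain overlap $Z$, forcing $X \cap Y \subseteq Z$.) Because $\dom(h \oplus i) = Z = \range(g)$, the composite $g \odot (h \oplus i)$ is defined, so the Quasi-Downwards-closure of $\odot$ (\cref{odotdownwards}), applied to $g \sqsubseteq a$ and $h \oplus i \sqsubseteq b$, yields $g \odot (h \oplus i) \sqsubseteq a \odot b = f$. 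The main obstacle is exactly this forward bookkeeping: the witnesses from the satisfaction clauses carry extraneous variables, and the work is to trim them to exact types while preserving the $\sqsubseteq$-relations needed to rebuild a substate of $f$. This is where Disintegration (to realize marginalizations as substates) and the order–operation interaction lemmas (($\oplus$ Down-Closed) and \cref{odotdownwards}) do the heavy lifting.
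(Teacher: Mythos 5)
Your proof is correct, and its skeleton matches the paper's own: the backward direction is persistence, and the forward direction unfolds the satisfaction clauses, trims the three witnesses to exactly-typed kernels by marginalization (with Disintegration supplying that $\pi_V k \sqsubseteq k$ when $\dom(k) \subseteq V \subseteq \range(k)$), and uses ($\oplus$ Down-Closed) to keep the parallel composition a substate of the second factor. The only divergence is the final assembly step: the paper re-derives $g \odot (h \oplus i) \sqsubseteq f$ by an explicit inline rewriting of $f_1 \odot f_2$ using \cref{odot2oplus}, \cref{oplus2odot}, padding, and the Exchange equality, whereas you invoke the Quasi-Downwards-closure of $\odot$ (\cref{odotdownwards}) applied to $g \sqsubseteq a$, $h \oplus i \sqsubseteq b$, noting both composites are defined. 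Since \cref{odotdownwards} is proved earlier in the appendix by essentially the same calculation the paper inlines here, your citation is legitimate and yields a shorter, cleaner argument; the paper's explicit computation buys only a concrete factored form of $f$, which the lemma statement does not need. Your two-stage use of ($\oplus$ Down-Closed) could equally be collapsed to one application via transitivity of $\sqsubseteq$, as the paper does, but this is cosmetic.
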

\begin{proof}
The backwards direction trivially follows from persistence. We detail the proof for the forward direction here. Suppose
$f \models \pair{\emptyset}{Z} \depand (\pair{Z}{X} \sepand \pair{Z}{Y})$. Then, there exist $f_1, f_2, f_3, f_4$ such that $f_1 \odot f_2 = f$, $f_3 \oplus f_4 \sqsubseteq f_2$, $f_1 \models \pair{\emptyset}{Z}$, $f_3 \models \pair{Z}{X}$ and $f_4 \models \pair{Z}{Y}$.
\begin{itemize}
\item	$f_1 \models \pair{\emptyset}{Z}$ implies that there exists $f_1'' \sqsubseteq f_1$ such that $\dom(f_1'') = \emptyset$, and $\range(f_1'') \supseteq Z$.
Let $f_1' = \pi_{Z} f_1''$. Note that $f_1'\colon \R\emptyset \rightarrow \TR{Z}$ and  $f_1' \sqsubseteq f_1'' \sqsubseteq f_1$.
Hence, there exists some set $S_1$ and $v_1 \in M$ such that
$f_1 = (f_1' \oplus \unit_{S_1}) \odot v_1$.
\item $f_3 \models \pair{Z}{X}$ implies that there exists $f_3'' \sqsubseteq f_3$ such that $\dom(f_3'') = Z$, and $\range(f_3'') \supseteq X$. Define $f_3' = \pi_{Z \cup X} f_3''$. Then $f_3' \sqsubseteq f_3'' \sqsubseteq f_3$, and $f_3'\colon \R{Z} \rightarrow \TR{X\cup Z}$.
\item $f_4 \models \pair{Z}{Y}$ implies that there exists $f_4'' \sqsubseteq f_4$ such that $\dom(f_4'') = Z$, and $\range(f_4'') \supseteq Y$.
	Define $f_4' = \pi_{Z \cup Y} f_4''$ and note that $f_4'\colon \R{Z} \rightarrow \TR{Y\cup Z}$.
\item
	By Downwards closure of $\oplus$ (\Cref{downwards_oplus}), having $f_3 \oplus f_4$ defined implies that $f_3' \oplus f_4'$ is also defined and $f_3' \oplus f_4' \sqsubseteq f_3 \oplus f_4 \sqsubseteq f_2$.
	%(There is no need to check that $(Z \cup X) \cap (Z \cup Y) = Z$, since $f$ satisfies the formula above implicitly implies that.)
	Thus, there exists some $v_2 \in M$ and finite set $S_2$ such that  $f_2 = (f_3' \oplus f_4' \oplus \unit_{S_2}) \odot v_2$.
\end{itemize}
Using these observations, we can now calculate and show that $f_1'  \odot (f_3' \oplus f_4'  \oplus  \unit_{Z}) \sqsubseteq f_1 \oplus f_2$:
{\small	\begin{align*}
		&f_1 \odot f_2 \\
		&= (f_1' \oplus \unit_{S_1}) \odot v_1 \odot (f_3' \oplus f_4' \oplus \unit_{S_2}) \odot v_2 \\
		&= (f_1' \oplus \unit_{S_1}) \odot \big(  f_3' \oplus f_4' \oplus  v_1 \big) \odot v_2 \tag{By~\cref{odot2oplus} and $\dom(f_3' \oplus f_4') = Z \subseteq \range(f_1' \oplus \unit_{S_1})$ }\\
		&= (f_1' \oplus \unit_{S_1}) \odot \big(  (f_3' \oplus f_4'  \oplus  \unit_{\dom(v_1)}) \odot (\unit_{X \cup Y\cup Z} \oplus v_1) \big) \odot v_2 \tag{By~\cref{oplus2odot}}\\
		&= (f_1' \oplus \unit_{S_1}) \odot (f_3' \oplus f_4'  \oplus  \unit_{Z} \oplus \unit_{S_1}) \odot (\unit_{X \cup Y\cup Z} \oplus v_1) \odot v_2\tag{By $\dom(v_1) = Z \cup S_1$} \\
		&= \big( (f_1'  \odot (f_3' \oplus f_4'  \oplus  \unit_{Z})) \oplus (\unit_{S_1} \odot \unit_{S_1}) \big) \odot (\unit_{X \cup Y\cup Z} \oplus v_1) \odot v_2\tag{By~\cref{revexeq} and \cref{exch_revex}}\\
		&= \big( (f_1'  \odot (f_3' \oplus f_4'  \oplus  \unit_{Z})) \oplus \unit_{S_1} \big) \odot (\unit_{X \cup Y\cup Z} \oplus v_1) \odot v_2 \\
		&= \big( (f_1'  \odot (f_3' \oplus f_4')) \oplus \unit_{S_1} \big) \odot (\unit_{X \cup Y\cup Z} \oplus v_1) \odot v_2 \tag{Because $f_3', f_4'$ preserves input on $Z$}
	\end{align*}
	}
To finish, take $g\! =\! f_1'\colon \R\emptyset \rightarrow \TR{Z}$, $h\! =\!  f_3' \colon \R{Z} \rightarrow \TR{Z \cup X}$, $i \! =\!  f_4' \colon \R{Z} \rightarrow \TR{Z \cup Y}$, and note that $g\odot (h\oplus i) = f_1'  \odot (f_3' \oplus f_4') \sqsubseteq f_1 \oplus f_2 \sqsubseteq f$.
\end{proof}

\begin{lemma}[Uniqueness]
\label{uniqueness}
For any $\T$-model $M$,  $f,g \colon \R{X} \to \TR{X\cup Y}$ in $M$, and arbitrary $h\in M$, if $f \sqsubseteq h$ and $g \sqsubseteq h$, then $f=g$.
\end{lemma}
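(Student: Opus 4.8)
The plan is to show that $f$ is completely determined by $h$ together with its type $\R{X}\to\TR{X\cup Y}$, so that any two such kernels below $h$ must coincide.

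First I would unfold the definition of the preorder (\cref{def:tmodel}): from $f\sqsubseteq h$ and $g\sqsubseteq h$ there are finite sets $R_f,R_g\subseteq\Var$ and $v_f,v_g\in M$ with $h=(f\oplus\unit_{\R{R_f}})\odot v_f$ and $h=(g\oplus\unit_{\R{R_g}})\odot v_g$. I would then normalise the padding sets. Since $f\oplus\unit_{\R{R_f}}$ is defined, its definedness condition forces $(X\cup Y)\cap R_f=X\cap R_f$, and by Padding equality (as $R_f\cap X\subseteq X=\dom(f)$) one may replace $R_f$ by $R_f\setminus X$ without changing $f\oplus\unit_{\R{R_f}}$. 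Thus without loss of generality $R_f$ is disjoint from $X\cup Y$, and reading off domains gives $R_f=\dom(h)\setminus X$. The identical computation applies to $g$, and because $f,g$ have the \emph{same} type $\range(f)=\range(g)=X\cup Y$, the normalisation produces the same set: $R_f=R_g=\dom(h)\setminus X=:R$, with $R\cap(X\cup Y)=\emptyset$.

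Next I would appeal to Disintegration (\cref{lemma:condfirst} for $\MD$, \cref{Mp:lemma:condfirst} for $\MP$). Writing $F:=f\oplus\unit_{\R R}$, this kernel has range $X\cup Y\cup R$ and satisfies $h=F\odot v_f$, so the forward direction of Disintegration yields $\pi_{X\cup Y\cup R}\,h=F=f\oplus\unit_{\R R}$. The same argument applied to the $g$-decomposition gives $\pi_{X\cup Y\cup R}\,h=g\oplus\unit_{\R R}$. Since both left-hand sides are the same marginal of the fixed $h$, I obtain the key equality $f\oplus\unit_{\R R}=g\oplus\unit_{\R R}$. Finally I would cancel the independent padding $\unit_{\R R}$: because $R$ is disjoint from $\range(f)=X\cup Y$, marginalising the $R$-component of $f\oplus\unit_{\R R}$ returns $f$. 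Concretely, for each $d\in\R X$ and any $r\in\R R$, evaluating the explicit definition of $\oplus$ (\cref{MdisLOGIC}, \cref{thm:relframe}) gives $(f\oplus\unit_{\R R})(d\otimes r)=f(d)\otimes\unit_{\R R}(r)$, whose marginalisation to $X\cup Y$ collapses the point-mass on $r$ and leaves exactly $f(d)$. Running the same computation with $g$ and using $f\oplus\unit_{\R R}=g\oplus\unit_{\R R}$ shows $f(d)=g(d)$ for all $d$, hence $f=g$.

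I expect this last cancellation step to be the main obstacle: it is the only place where the abstract $\T$-model axioms do not immediately suffice, since ``cancelling'' a unit padding on \emph{fresh} variables is not an instance of Padding equality (which only discards variables already present in the domain). The cleanest remedy is to carry out the short marginalisation computation in each concrete model, using that $\oplus$ there forms independent products and that $\pi$ sums (respectively, joins) out the $R$-coordinates of the point-mass $\unit_{\R R}(r)$; alternatively one phrases it once through the monad's strength so that it covers every $\T$-model in which Disintegration holds.
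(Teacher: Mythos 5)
Your proof is correct and follows essentially the same route as the paper's: unfold the preorder, identify both padded kernels $f\oplus\unit_{S_1}$ and $g\oplus\unit_{S_2}$ with the same marginal $\pi_{X\cup Y\cup\dom(h)}\,h$ of the fixed $h$ (disintegration), and then cancel the padding. If anything, you are more careful than the paper at the two points it glosses over: you normalise the two padding sets to a common $R$ disjoint from $X\cup Y$, and you justify the final cancellation by an explicit marginalisation computation in the concrete models, whereas the paper simply asserts by contraposition that $f\neq g$ would force $f\oplus\unit_{S_1}\neq g\oplus\unit_{S_2}$.
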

		\begin{proof}
			$f \sqsubseteq h$ implies that there exists $v_1, {S_1}$ such that $(f \oplus \unit_{S_1}) \odot v_1 = h$;
			$g \sqsubseteq h$ implies that there exists $v_2, {S_2}$ such that $(g \oplus \unit_{S_2}) \odot v_2 = h$. Take $h\colon \m{W} \rightarrow \TR{Z\cup W}$,
			and then
			\begin{align*}
				f \oplus \unit_{S_1} &= \pi_{\range(f \oplus \unit_{S_1})} h =  \pi_{X \cup Y \cup \dom(h)} h \\
				g \oplus \unit_{S_2} &= \pi_{\range(g \oplus \unit_{S_2})} h =  \pi_{X \cup Y \cup \dom(h)} h
			\end{align*}
			Thus, $ f \oplus \unit_{S_1} = g \oplus \unit_{S_2}$. Now, suppose $f \neq g$. This would imply $f \oplus \unit_{S_1} \neq g \oplus \unit_{S_2}$ which is a contradiction. Thus, $f=g$.
		\end{proof}
\fi
\end{document}